\documentclass[11pt,a4paper,reqno]{amsart}
\usepackage{amscd}
\usepackage{amssymb}
\usepackage{amsthm}
\usepackage[centering,text={15.5cm,22cm}]{geometry}
\usepackage{graphicx}
\usepackage{color}
\usepackage[all]{xy}
\usepackage{mathrsfs}
\definecolor{amaranth}{rgb}{0.9, 0.17, 0.31}
\usepackage[colorlinks=true,citecolor=amaranth,linkcolor=black]{hyperref}
\usepackage{mathtools}

\setcounter{tocdepth}{3}

\newtheorem{theorem}{Theorem}[section]
\newtheorem{lemma}[theorem]{Lemma}

\newtheoremstyle{defstyle}
  {.6em} 
  {.1em} 
  {} 
  {} 
  {\bfseries} 
  {.} 
  {.5em} 
  {} 
\theoremstyle{defstyle} \newtheorem{definition}[theorem]{Definition}

\newtheorem{construction}[theorem]{Construction}

\newtheorem{example}[theorem]{Example}

\theoremstyle{remark}
\newtheorem{remark}[theorem]{Remark}

\numberwithin{equation}{section}

\newcommand{\ZZ} {\mathbb{Z}}
\newcommand{\QQ} {\mathbb{Q}}
\newcommand{\RR} {\mathbb{R}}
\newcommand{\CC} {\mathbb{C}}
\newcommand{\PP} {\mathbb{P}}

\newcommand{\cO} {\mathcal{O}}

\newcommand {\fod}  {\mathfrak{d}}

\newcommand{\cX}{\mathcal{X}}
\newcommand{\cP}{\mathcal{P}}

\newcommand{\hooklongrightarrow}{\lhook\joinrel\longrightarrow}

\begin{document}

\title[Non-toric brane webs, Calabi--Yau $3$-folds, and 5d SCFTs]{\small Non-toric brane webs, Calabi--Yau 3-folds, and 5d SCFT\lowercase{s}}

\author[V.\,Alexeev]{Valery Alexeev}
\address{University of Georgia, Department of Mathematics, Athens, GA 30605}
\email{Valery.Alexeev@uga.edu}

\author[H.\,Arg\"uz]{H\"ulya Arg\"uz}
\address{The Mathematical Institute, University of Oxford, Oxford, OX2 6GG, UK}
\email{hulya.arguz@maths.ox.ac.uk}

\author[P.\,Bousseau]{Pierrick Bousseau}
\address{The Mathematical Institute, University of Oxford, Oxford, OX2 6GG, UK}
\email{pierrick.bousseau@maths.ox.ac.uk}

\begin{abstract} 
We study webs of 5-branes with 7-branes in Type IIB string theory from a geometric perspective. Mathematically, a web of 5-branes with 7-branes is a tropical curve in $\RR^2$ with focus-focus singularities introduced.
To any such a web $W$, we attach a log Calabi--Yau surface $(Y,D)$ with a line bundle $L$. 
We then describe supersymmetric webs, which are webs defining 5d superconformal field theories (SCFTs), in terms of the geometry of $(Y,D,L)$. 
We also introduce particular supersymmetric webs called ``consistent webs", and show that any 5d SCFT defined by a supersymmetric web can be obtained from a consistent web by adding free hypermultiplets. 
Using birational geometry of degenerations of log Calabi--Yau surfaces, we provide an algorithm to test the consistency of a web in terms of its dual polygon. 
Moreover, for a consistent web $W$, we provide an algebro-geometric construction of the mirror $\mathcal{X}^{\mathrm{can}}$ to $(Y,D,L)$, as a non-toric canonical 3-fold singularity, and show that M-theory on $\mathcal{X}^{\mathrm{can}}$ engineers the same 5d SCFT as $W$. 
We also explain how to derive explicit equations for $\mathcal{X}^{\mathrm{can}}$ using scattering diagrams, encoding disk worldsheet instantons in the A-model, or equivalently the BPS states of an auxiliary rank one 4d $\mathcal{N}=2$ theory.
\end{abstract}

\maketitle

\setcounter{tocdepth}{1}
\tableofcontents
\section{Introduction}

\subsection{Background and context}
One of the most remarkable predictions of string/M-theory is the existence of 5-dimensional superconformal field theories (5d SCFTs) \cite{seiberg1996five}. Constructing such SCFTs has been of significant interest, and two powerful approaches for doing so are provided by M-theory on canonical $3$-fold singularities on one hand \cite{MR4069495, MR4090085, MR4176441, douglas_katz_vafa, intriligator, morrison_seiberg, xie_yau}, and intersecting branes in Type IIB string theory on the other \cite{aharony1998webs}. A natural question is to compare these two approaches. The aim of this paper is to address this question: given a 5d SCFT engineered by a configuration of branes in Type IIB string theory, can we obtain the same 5d SCFT from M-theory on a dual canonical 3-fold singularity? Conversely, can we give an algebro-geometric description of the class of canonical 3-fold singularities that admit a dual Type IIB string description in terms of intersecting branes?

The answer to these questions is well-known when the configuration of branes in Type IIB string theory is just a web of 5-branes, mathematically described by a tropical curve in $\RR^2$ \cite{aharony1998webs, leung_vafa}. 
In that setting, the dual canonical 3-fold singularity is a Gorenstein toric affine Calabi--Yau 3-fold, which has an explicit combinatorial description, obtained from the web of 5-branes. 
Moreover, every Gorenstein toric affine canonical 3-fold singularity arises this way. 
However, answering these questions is significantly more challenging, when considering $5$-branes together with $7$-branes, mathematically described by a tropical curve in $\RR^2$ with focus-focus singularities introduced \cite{BBT, DeWolfe}. In this situation, while the existence of non-toric dual canonical 3-fold singularities is still predicted via string dualities \cite{BBT}, giving a precise algebro-geometric construction of such singularities has been a longstanding open question -- see \cite{arias2024geometry, bourget2023generalized} for recent progress in this direction. The present paper provides a complete answer to this question, building on recent mathematical advances on mirror symmetry for log Calabi--Yau surfaces. 

\subsection{Outline of the paper}

\subsubsection{Webs of 5-branes and toric mirror symmetry}
In \S\ref{section_webs_5_toric}, we review the known construction of the M-theory dual toric canonical 3-fold singularity $\overline{\mathcal{X}}^{\mathrm{can}}$ to an asymptotic web of 5-branes $\overline{W}^{\mathrm{asym}}$ \cite{aharony1998webs, leung_vafa}. In particular, we explain how to view $\overline{\mathcal{X}}^{\mathrm{can}}$ as the mirror to the polarized toric surface $(\overline{Y},\overline{D},\overline{L})$ with momentum polytope $\overline{P}$ dual to $\overline{W}^{\mathrm{asym}}$, where $\overline{Y}$ is a projective toric surface, $\overline{D}$ is the toric boundary divisor, and $\overline{L}$ is an ample line bundle on $\overline{Y}$. Mirrors to crepant resolutions of $\overline{\mathcal{X}}^{\mathrm{can}}$ are then given by maximal polarized toric degenerations of $(\overline{Y},\overline{D},\overline{L})$ associated to webs $\overline{W}$ with asymptotic web $\overline{W}^{\mathrm{asym}}$ -- see Figure \ref{Fig18}. 
In this paper, we provide a natural generalization of this mirror construction to the case of webs of 5-branes with 7-branes.

\subsubsection{Webs of 5-branes with 7-branes and log Calabi--Yau surfaces}
An asymptotic web of 5-branes with 7-branes $W^{\mathrm{asym}}$ is obtained by introducing 7-branes in an asymptotic web of 5-branes $\overline{W}^{\mathrm{asym}}$.
In \S\ref{section_webs_log_cy}, we construct a log Calabi--Yau surface $(Y,D)$ associated to $W^{\mathrm{asym}}$, which consists of a projective surface $Y$ together with an anticanonical divisor $D$.
Denoting by $(\overline{Y},\overline{D},\overline{L})$ the polarized toric surface associated to $\overline{W}^{\mathrm{asym}}$, we 
obtain $Y$
by blowing up a smooth point on $\overline{D} \subset \overline{Y}$ for each 7-brane, and let $D$ to be the strict transform of $\overline{D}$ in $Y$. We also define a line bundle $L$ on $Y$ from the data of $\overline{L}$ and the configuration of the 7-branes. Then, building on 
\cite{GHK_birational, GHK1},
we explain that different webs of 5-branes with 7-branes related by Hanany--Witten moves, define the same log Calabi--Yau surface with line bundle $(Y,D,L)$, up to blow-ups of the 0-dimensional strata of $D$, but correspond to different ways to present $(Y,D)$ as a blow-up 
\begin{equation} \label{eq_p}
p:(Y,D) \longrightarrow (\overline{Y},\overline{D})
\end{equation}
of a toric surface.

\subsubsection{Consistent webs of 5-branes with 7-branes and birational geometry of degenerations}
In \S\ref{section_consistent_webs_57}, we first describe \emph{supersymmetric} webs. We call a web $W^{\mathrm{asym}}$ supersymmetric if there exists a curve $C$ in $Y$ in the linear systems of the line bundle $|L|$,
that does not pass through the 0-dimensional strata of $D$. The curve $C^\circ =C \cap U$ in $U=Y \setminus D$ is the Seiberg--Witten curve of the 4d $\mathcal{N}=2$ theory obtained by compactifying on $S^1$ the 5d SCFT defined by $W^{\mathrm{asym}}$.  
A crucial observation is that the curve $C$ is not necessarily connected. We show that the existence of connected components of $C$ with negative self-intersection is related to the fact that the web of 5-branes with 7-branes can become disconnected after applying a sequence of Hanany--Witten moves. We say that a supersymmetric asymptotic web $W^{\mathrm{asym}}$ is \emph{consistent} if this does not happen, and all connected components of $C$ have nonnegative self-intersection. Moreover, we show that any 5d SCFT defined by a supersymmetric web can be obtained from a consistent web by adding free hypermultiplets. Hence, rather than studying general supersymmetric asymptotic webs we focus attention on consistent ones.

We define a notion of consistency for webs $W$ obtained as deformation of an asymptotic web $W^{\mathrm{asym}}$. To do this, we first associate to $W$ a degeneration of log Calabi--Yau surfaces 
\begin{equation} \label{eq_nu}
\nu: (\mathcal{Y},\mathcal{D},\mathcal{L}) \longrightarrow \CC\,,
\end{equation}
with general fiber $(Y,D,L)$. This degeneration is obtained by blow-ups from the toric degeneration 
\begin{equation} \label{eq_nu_bar}
\overline{\nu}: (\overline{\mathcal{Y}},\overline{\mathcal{D}},\overline{\mathcal{L}}) \longrightarrow \CC\,,
\end{equation}
defined by the web of 5-branes $\overline{W}$ obtained from $W$ by removing all 7-branes.
The central fiber $(\mathcal{Y}_0, \mathcal{D}_0, \mathcal{L}_0)$ of $\nu$
has dual intersection complex $W$, and hence its irreducible components $(Y_0^v, \partial Y_0^v)$ are indexed by the vertices $v$ of $W$. We call $W$ \emph{consistent} if $W^{\mathrm{asym}}$ 
is consistent and there exists a family 
of curves $(C_t) \in |\mathcal{L}|$, such that the curves $C_0^v=C_0 \cap Y_0^v$ in the central fiber do not pass through the 0-dimensional strata of $\partial Y_0^v$. 

Although the consistency of $W^{\mathrm{asym}}$ implies that the line bundle $L$ is nef, that is, has positive intersection with every curve in $Y$, the consistency of $W$ does not imply in general that the line bundle $\mathcal{L}$ is nef. However, as a special case of the minimal model program for degenerations, we show that it is always possible to modify $\nu$ by a sequence of flops into a new degeneration 
\begin{equation} \label{eq_nu_prime}
\nu': (\mathcal{Y}',\mathcal{D}',\mathcal{L}') \longrightarrow \CC\,,
\end{equation}
such that $\mathcal{L}'$ is nef. We explain that such flops amount to modifying $W$ into a new web $W'$ obtained by pushing the 7-branes. We prove that if $W$ is generic among consistent webs, then the corresponding modified web $W'$ is obtained by fully pushing the 7-branes until they are no longer attached to any 5-brane. 

\subsubsection{Consistent decorated polygons and Symington polygons}

In \S\ref{sec_consistent_DTP}, we reformulate the previous results on webs in the dual language of \emph{decorated toric polygons}, which are a slight variant of the Generalized Toric Polygons (GTPs) previously studied in the literature \cite{BBT, bourget2023generalized}. First, we provide an algorithm to determine when a decorated polygon with a polyhedral decomposition is \emph{consistent}, that is, dual to a generic consistent web of 5-branes with 7-branes. Then, we show that pushing the 7-branes in the web is dual to cutting and gluing operations on the decorated toric polygon, resulting in an integral affine manifold with singularity known as a \emph{Symington polygon}. Finally, we compare our notion of a consistent decorated polygon with existing s-rules and r-rules for GTPs \cite{BBT, van2020symplectic}. 
In particular, we describe in Example \ref{example_gtp} a GTP that satisfies the s-rule and r-rule of \cite{van2020symplectic} but does not correspond to a consistent decorated polygon. The corresponding web is still supersymmetric, and so defines a 5d SCFT, but the associated Seiberg--Witten curve is not connected. In particular, the rank of the 5d SCFT differs from the rank of the GTP as defined in \cite{van2020symplectic}.

\subsubsection{M-theory dual Calabi--Yau 3-folds to consistent webs of 5-branes with 7-branes}

In \S\ref{section_cy3}, we construct for every consistent asymptotic web $W^{\mathrm{asym}}$ of 5-branes with 7-branes a M-theory dual canonical 3-fold singularity $\mathcal{X}^{\mathrm{can}}$ as the mirror of the corresponding log Calabi--Yau surface with line bundle $(Y,D,L)$. This builds on recent mathematical works on mirror symmetry for log Calabi--Yau surfaces \cite{alexeev2024ksba, AAB2, engel, engel_friedman, GHK1, GScanonical, hacking2020secondary}. 
To describe $\mathcal{X}^{\mathrm{can}}$, we first consider a generic consistent web $W$ obtained by deformation of $W^{\mathrm{asym}}$. We let $\overline{W}$ be the web of 5-branes obtained from $W$ by removing the 7-branes, and denote by
\[ \overline{\pi}: \overline{\mathcal{X}} \longrightarrow \CC\]
the corresponding toric Calabi--Yau 3-fold. The pushings of 7-branes in the web $W$, or equivalently the cuts and gluings producing the dual Symington polygon, give a recipe on how to deform
non-torically the union of toric surfaces $\overline{\mathcal{X}}_0:=\overline{\pi}^{-1}(0)$ into a normal crossing union of log Calabi--Yau surfaces $\mathcal{X}_0$ -- see \S\ref{section_cy3_construction} for details. We obtain a smooth Calabi--Yau 3-fold 
\[ \pi: \mathcal{X} \longrightarrow \Delta\]
as the total space of a one-parameter smoothing of $\mathcal{X}_0$ over a disk $\Delta \subset \CC$ containing $0 \in \CC$.
The non-toric deformation from $\overline{\pi}:\overline{\mathcal{X}} \rightarrow \CC$ to $\pi: \mathcal{X} \rightarrow \Delta$ is mirror to the birational map relating the degeneration of log Calabi--Yau surface $\nu'$ in Equation \eqref{eq_nu_prime} and the toric degeneration $\overline{\nu}$ in Equation \eqref{eq_nu_bar}.
Finally, there exists a contraction 
\[ f:\mathcal{X} \longrightarrow \mathcal{X}^{\mathrm{can}} \,,\]
to a canonical 3-fold singularity $\mathcal{X}^{\mathrm{can}}$, equipped with a map 
\[\pi^{\mathrm{can}}:\mathcal{X}^{\mathrm{can}}\longrightarrow \Delta\]
such that $\pi=\pi^{\mathrm{can}}\circ f$. Moreover, $\mathcal{X}^{\mathrm{can}}$ is a non-toric deformation of the toric canonical 3-fold singularity $\overline{\mathcal{X}}^{\mathrm{can}}$ corresponding to the web of 5-branes $\overline{W}^{\mathrm{asym}}$ obtained from $W^{\mathrm{asym}}$ by removing the 7-branes. This non-toric deformation from $\overline{\mathcal{X}}^{\mathrm{can}}$ to $\mathcal{X}^{\mathrm{can}}$ is mirror to the blow-up map $p: (Y,D) \rightarrow (\overline{Y}, \overline{D})$ in Equation \eqref{eq_p}. We show that the central fiber $\mathcal{X}^{\mathrm{can}}_0=(\pi^{\mathrm{can}})^{-1}(0)$ is either a degenerate cusp, cusp, or simple elliptic singularity. Conversely, every canonical 3-fold singularity obtained as total space of a one-parameter smoothing of such a singularity is M-theory dual to a consistent web of 5-branes with 7-branes.

In \S\ref{section_scattering}, we explain how to describe $\mathcal{X}^{\mathrm{can}}$ explicitly using scattering diagrams, as in \cite{GHK1,GScanonical}. Physically, scattering diagrams capture disk worldsheet instantons corrections in the A-model side of mirror symmetry. In the particular situation, when GTPs have all 7-branes on a single edge, we show in Example \ref{example_comparison} that our construction recovers the equations for $\mathcal{X}^{\mathrm{can}}$ derived in \cite{bourget2023generalized} using D6-branes in Type IIA string theory. This situation is relatively easy to handle, as the initial disks coming out of the 7-branes do not interact.  More generally, the initial disks can interact in a very non-trivial way, and so obtaining closed-form expressions for the equations of $\mathcal{X}^{\mathrm{can }}$ is challenging in general, due to the complexity of the combinatorics of scattering diagrams. Nevertheless, we describe the scattering diagrams and explicit equations for $\mathcal{X}^{\mathrm{can }}$ in some non-trivial situations -- see Examples \ref{example_A2}-\ref{example_cubic}. 

Finally, we explain that the disk worldsheet instanton corrections have a dual interpretation as BPS states of an auxiliary rank one 4d $\mathcal{N}=2$ theory living on a D3-brane probing the configuration of 7-branes in Type IIB string theory. 
In particular, the complexity of the disk worldsheet instantons  is an illustration of the known intricacy of the BPS spectrum of 4d $\mathcal{N}=2$ theories, as illustrated in  \cite{galakhov2013wild} for example.

We provide additional examples of consistent web of 5-branes with 7-branes, Symington polygons, and M-theory dual Calabi--Yau 3-folds in \S\ref{section_further_examples}, where we also explain the relation with mirror symmetry for Fano orbifolds as studied in \cite{fanosearch, polygon_mutations, corti2023cluster}. 

\subsection{Future aspects}

\subsubsection{Orientifolds and S-folds} More general webs of branes in Type IIB string theory can be obtained by adding orientifolds and S-folds to webs of 5-branes with 7-branes. We expect the corresponding M-theory dual Calabi--Yau 3-folds to be total spaces of one-parameter smoothings of $\QQ$-Gorenstein semi-log-canonical surface singularities -- see the end of \S\ref{section_slc_sing}. We leave for future work to explore this correspondence.

\subsubsection{Topological vertex and enumerative geometry} It is proposed in the physics literature, that the Nekrasov partition function of the compactifications on $S^1$ of the 5d SCFTs engineered by webs of 5-branes with 7-branes, can be calculated by generalizing the topological vertex formalism -- see for instance \cite{MR3429464}. The Nekrasov partition function is expected to capture the Gromov--Witten invariants and refined stable pair invariants of the M-theory dual Calabi--Yau 3-folds. It would be interesting to investigate if the enumerative geometry of the non-compact Calabi--Yau 3-folds we construct, is indeed captured by a generalization of the topological vertex formalism.

\subsubsection{Quivers and derived categories}
Recently, various suggestions were made for constructing BPS quivers for the compactification on $S^1$ of the 5d SCFTs engineered by webs of 5-branes with 7-branes, generalizing the known construction via dimer models for webs of 5-branes -- see for instance \cite{Franco_twin}. Mathematically, BPS quivers are  expected to describe the derived category of coherent sheaves on the M-theory dual Calabi--Yau 3-folds. It would be interesting to describe the derived categories of coherent sheaves of the Calabi--Yau 3-folds we construct in this paper, to provide a geometric path for constructing BPS quivers.

\subsection{Acknowledgments} 
Valery Alexeev was partially supported by the NSF grant DMS-2201222.
H\"ulya Arg\"uz was partially supported by the NSF grant DMS-2302116, and Pierrick Bousseau was partially supported by the NSF grant DMS-2302117. This work was presented at the String--Math 2024 hosted by ICTP, the LMS Lecture Series 2024 hosted by Imperial College London, and the Geometry and Physics of Mirror Symmetry 2024 Conference hosted by the University of Sheffield -- we thank all the organizers for providing excellent occasions.

\section{Webs of 5-branes and toric mirror symmetry}
\label{section_webs_5_toric}

In this section, we review the well-known correspondence between webs of 5-branes and toric Calabi--Yau 3-folds \cite{aharony1998webs, leung_vafa}, and explain how it can be understood via toric mirror symmetry.

\subsection{Webs of $5$-branes, $5$d SCFTs, and dual polygons} \label{section_tropical_web}

\subsubsection{Webs of $5$-branes and $5$d SCFTs}
We review below the definition of a web of 5-branes, as an embedded graph in $\RR^2$ satisfying a balancing condition at its vertices -- such graphs are often referred to as \emph{tropical curves} (see \cite{mikhalkin, NS}). We denote the sets of vertices, edges (connecting two vertices), and legs (adjacent to a single vertex) of a graph $\overline{W}$, by $V(\overline{W})$, $E(\overline{W})$ and $L(\overline{W})$ respectively.
\begin{definition}
\label{def_web_5_branes}
A \emph{web of 5-branes} is an embedded graph $\overline{W} \subset \RR^2$, such that every edge or leg $e \in E(\overline{W}) \cup L(\overline{W})$ is endowed with a weight $w_e \in \ZZ_{\geq 1}$, and such that the following conditions are satisfied:
\begin{itemize}
        \item[i)] Every edge or leg of $\overline{W}$ is contained in an affine line of rational slope in $\RR^2$.
    \item[ii)] At every vertex $v$, the following balancing condition holds:
\[ \sum_{\substack{e \in E(\overline{W}) \cup L(\overline{W})\\ v \in e}} w_e u_e =0 \,,\]
where the sum is over the edges or legs $e$ adjacent to $v$, and $u_e \in \ZZ^2$ is the primitive integral vector in the direction of $e$ pointing away from $v$.
\end{itemize}
\end{definition}

Recall that Type IIB string theory contains 5-branes of type $(p,q)$ for every coprime $(p,q)\in \ZZ^2 \setminus \{0\}$ \cite{Polchinski}.
A web of 5-branes $\overline{W} \subset \RR^2$ as in Definition \ref{def_web_5_branes} determines a configuration of 5-branes in Type IIB string theory on 
\[ \RR^{1,4} \times \RR^2 \times \RR^3 \]
as follows.
For every edge or leg $e$ of $\overline{W}$, we place $w_e$ 5-branes of type $(p_e,q_e)$ on $\RR^{1,4} \times e \subset \RR^{1,4} \times \RR^2$, where $(p_e,q_e) \in \ZZ^2$ is the primitive integral direction of $e$. It follows from the balancing condition that this configuration of branes is supersymmetric. 

In the following sections, we frequently work with particular webs of $5$-branes, referred to as ``asymptotic'' and defined as follows.

\begin{definition} \label{def_asymptotic_5}
An \emph{asymptotic web of 5-branes} $\overline{W}^{\mathrm{asym}} \subset \RR^2$ is a web of 5-branes with a unique vertex at the origin $0 \in \RR^2$.
\end{definition}

In particular, an asymptotic web of 5-branes consists only of legs intersecting at the origin. An asymptotic web of 5-branes $\overline{W}^{\mathrm{asym}}$ is expected to define a 5d $\mathcal{N}=1$ SCFT describing the low energy dynamics of the light degrees of freedom localized on the common intersection $\RR^{1,4} \times \{0\}$ of the 5-branes.

\begin{figure}[hbt!]
\center{\scalebox{0.9}{\includegraphics{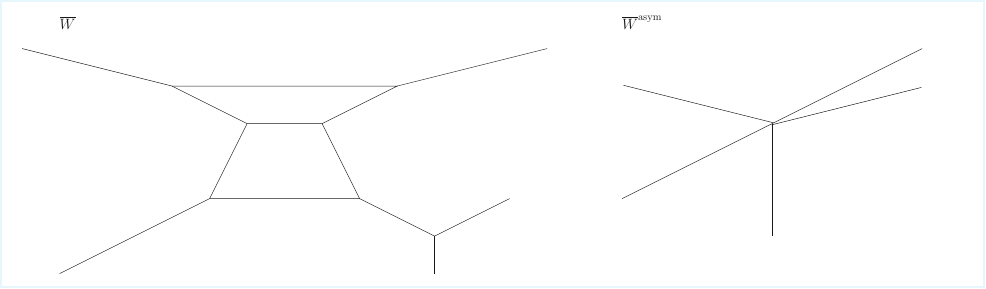}}}
\caption{A web of $5$-branes $\overline{W}$ on the left, and the associated asymptotic web of $5$-branes $\overline{W}^{\mathrm{asym}}$ on the right.}
\label{Fig18}
\end{figure}

Starting with any web of 5-branes $\overline{W}$, we obtain an associated asymptotic web $\overline{W}^{\mathrm{asym}}$, by first scaling $\overline{W}$ by $t>0$ and then taking the limit as $t \rightarrow 0$ -- this amounts to contracting all edges on $\overline{W}$, as illustrated in Figure \ref{Fig18}.
The webs $\overline{W}$ with associated asymptotic web $\overline{W}^{\mathrm{asym}}$ can be viewed as deformations of $\overline{W}^{\mathrm{asym}}$.  
From the viewpoint of the 5d SCFT, deforming the web of branes from $\overline{W}^{\mathrm{asym}}$ to $\overline{W}$ amounts to deforming the theory either by mass parameters or by moving on the Coulomb branch of the theory -- see \cite{MR4127742} for an exposition. 
The union of the Coulomb branches of all the massive deformations of the 5d SCFT can be described as the moduli space of all webs $\overline{W}$ with asymptotic web $\overline{W}^{\mathrm{asym}}$ and considered up to translations of $\RR^2$. 

\subsubsection{The dual polygon to an asymptotic web of $5$-branes} \label{sec_dual_polygon_convex}
The data of an asymptotic web of 5-branes $\overline{W}^{\mathrm{asym}}$, and so of the corresponding 5d SCFT, can be encoded in the data of a dual (compact, convex) lattice polygon $\overline{P}$. For every leg $e$ of $\overline{W}^{\mathrm{asym}}$, with weight $w_e$ and primitive integral direction $(p_e, q_e)$, $\overline{P}$ has a side of direction $(-q_e,p_e)$ and lattice length $w_e$. 
Such polygon $\overline{P}$ is uniquely determined up to integral affine linear transformations.
Conversely, every lattice polygon is dual to an asymptotic web of 5-branes. 
\begin{figure}[hbt!]
\center{\scalebox{.9}{\includegraphics{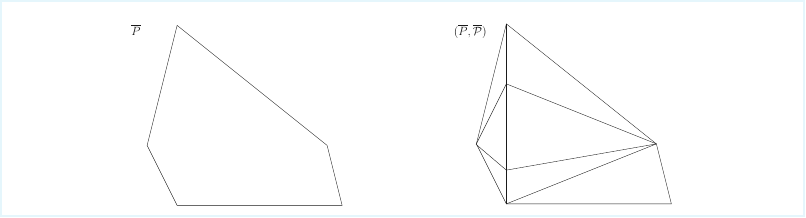}}}
\caption{The polygon $\overline{P}$ dual to the web $\overline{W}^{\mathrm{asym}}$, with polyhedral decomposition $\overline{\mathcal{P}}$ dual to the web $\overline{W}$ in Figure \ref{Fig18}}
\label{Fig20:polygons}
\end{figure}
Throughout this paper, we will assume that we are not in the degenerate situation where $\overline{W}^{\mathrm{asym}}$ is contained in a line and $\overline{P}$ is a line segment -- elsewise there won't be any intersecting branes, thus no non-trivial $5$d SCFT.

Webs of 5-branes $\overline{W}$ with associated asymptotic web  
$\overline{W}^{\mathrm{asym}}$
are dual to integral polyhedral decompositions $\mathcal{\overline{P}}$ of the polygon $\overline{P}$ 
dual to $\overline{W}^{\mathrm{asym}}$, which are \emph{regular} in the sense that their faces are the domains of linearity of a convex continuous piecewise linear function on $\overline{P}$. We will often denote by 
$(\overline{P},\overline{\mathcal{P}})$ a lattice polygon $\overline{P}$ endowed with a regular polyhedral decomposition
$\overline{\mathcal{P}}$.
For example, webs of 5-branes that are obtained as generic deformations of
$\overline{W}$
are 3-valent and are dual to regular integral triangulations of $\overline{P}$ into lattice triangles of size one, that is, with three sides of integral length one. Note that a lattice triangle of size one is integral-affinely equivalent to the standard simplex in $\ZZ^2$ with vertices $(0,0)$, $(1,0)$, and $(0,1)$.

\subsection{Asymptotic webs of 5-branes, Calabi--Yau $3$-folds, and polarized toric surfaces} \label{webs_toric_surfaces}

\subsubsection{Asymptotic webs of $5$-branes and Calabi--Yau $3$-folds}
\label{subsec: asymptotic to CY3}
Let $\overline{W}^{\mathrm{asym}}$ be an asymptotic web of 5-branes and $\overline{P}$ the dual lattice polytope as described in \S \ref{section_tropical_web}. 
It follows from the duality between Type IIB string theory on $S^1$
and M-theory on $T^2$ that the 4d $\mathcal{N}=2$ theory obtained by compactifying on $S^1$ the 5d SCFT defined by $\overline{P}$ has a low-energy dual description given by the worldvolume theory of an M5-brane on $\RR^{1,3} \times \overline{C}^\circ$ in M-theory on $\RR^{1,3} \times  (\CC^\star)^2 \times \RR^3$, 
where $\overline{C}^\circ$ is the algebraic curve
\[ \overline{C}^\circ :=\{ f=0\} \subset (\CC^\star)^2_{x,y}\,.\]
Here, $f \in \CC[x^\pm, y^\pm]$ is a Laurent polynomial with Newton polygon $\overline{P}$, that is, of the form 
\begin{equation}\label{eq_laurent}
 f=\sum_{(a,b)\in \overline{P}_\ZZ} c_{a,b}x^a y^b \,,
\end{equation}
where by $\overline{P}_\ZZ$ we denote the set of integral points in $\overline{P}$. In other words, $\overline{C}^\circ \subset (\CC^\star)^2$ is the Seiberg--Witten curve of this 4d $\mathcal{N}=2$ theory.

Using the duality between M-theory on $S^1$ and Type IIA string theory, and then T-duality between Type IIA and Type IIB string theories, we obtain another dual description of the same 4d $\mathcal{N}=2$ theory as Type IIB string theory on the non-compact 
Calabi--Yau 3-fold 
\begin{equation}\label{eq_Z}
\overline{Z} :=\{ uv=f \} \subset \CC^2_{u,v} \times (\CC^\star)^2\,.\end{equation} 
The 3-fold $\overline{Z}$ is an affine conic bundle over $(\CC^\star)^2$,  which is degenerate over the algebraic curve $\overline{C}^\circ$,
as illustrated in Figure \ref{Fig21}.
\begin{figure}[hbt!]
\center{\scalebox{1.0}{\includegraphics{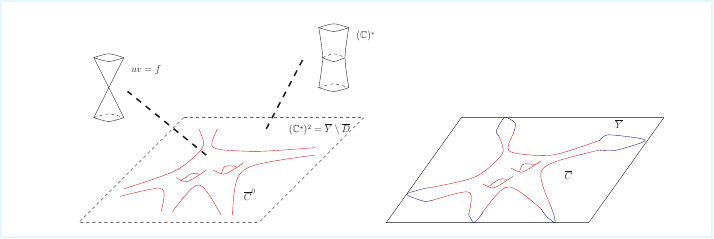}}}
\caption{The Calabi--Yau 3-fold $\overline{Z}$ on the left, and the compactification $\overline{C}$ of $\overline{C}^{\circ}$ in $Y$ on the right.}
\label{Fig21}
\end{figure}

\subsubsection{Asymptotic webs of $5$-branes and polarized toric surfaces}
\label{section_toric_surfaces}

The polygon $\overline{P}$ also determines a toric compactification of $(\CC^\star)^2$.
Indeed, by toric geometry \cite{cox_toric}, $\overline{P}$ is the momentum polytope of a polarized toric surface $(\overline{Y}, \overline{D}, \overline{L})$, that is, a toric surface $\overline{Y}$ compactifying $(\CC^\star)^2$, with its toric boundary divisor $\overline{D}=\overline{Y} \setminus (\CC^\star)^2$, and an ample line bundle $\overline{L}$. 
Explicitly, the fan of $\overline{Y}$ is the polyhedral decomposition of $\RR^2$ defined by $\overline{W}^{\mathrm{asym}}$. The toric boundary divisor $\overline{D}$ is a cycle of irreducible components $\overline{D}_e \simeq \PP^1$ indexed by the legs $e$ of $\overline{W}^{\mathrm{asym}}$, or equivalently by the sides of $\overline{P}$. Moreover, $\overline{L}$ is the unique ample line bundle on $\overline{Y}$ such that $\mathrm{deg}\, \overline{L}|_{
\overline{D}_e}=w_e$ for all $e$, where $w_e$ is the weight of the leg $e$ corresponding to $\overline{D}_e$, or equivalently, the lattice length of the corresponding side of $\overline{P}$.
Let $\mu : \overline{Y} \rightarrow \RR^2$ be the momentum map of the $T^2 \subset (\CC^\star)^2$-action on 
$\overline{Y}$ with respect to the $T^2$-equivariant symplectic form in the K\"ahler class defined by $\overline{L}$. Then, the image of $\mu$ is the polygon $\overline{P}$ and the restriction of $\mu$ to $(\CC^{*})^2 = \overline{Y}\setminus \overline{D}$ is a trivial $T^2$-fibration over the interior of $\overline{P}$.

The monomials $x^a y^b$ with $(a,b) \in \overline{P}_\ZZ$ naturally define a basis of sections of $\overline{L}$, and so the curve $\overline{C}^\circ \subset (\CC^\star)^2$ admits a natural compactification $\overline{C} \subset \overline{Y}$ in the linear system of $\overline{L}$. If the coefficients $c_{a,b}$ of $f$ as in Equation \eqref{eq_laurent} are all non-zero, 
then the curve $\overline{C}$ does not pass through the $0$-dimensional strata of $\overline{D}$ -- see Figure \ref{Fig21}.

\subsubsection{Webs of 5-branes and degenerations of polarized toric surfaces}
\label{section_degeneration_toric_surfaces}

Let $\overline{\mathcal{P}}$ be a regular polyhedral decomposition of $\overline{P}$ and $\overline{\varphi}$ a convex continuous piecewise-linear function on $\overline{P}$ defining $\overline{\mathcal{P}}$ and having integral values on the integral points of $\overline{P}$. Let $\overline{W}$ be the corresponding dual web of 5-branes. Following \cite{NS}, we associate to this data a \emph{toric degeneration} of the polarized toric surface $(\overline{Y}, \overline{D},\overline{L})$ into unions of polarized toric varieties glued along toric strata. We sketch the construction below.

Consider the polyhedral decomposition $\overline{\mathcal{P}}^\vee$ of $\RR^2$ defined by $\overline{W}$, as in \cite[\S 3]{NS}. Then, the union of the convex polyhedral cones, indexed by the cells $\Xi$ of $\overline{\mathcal{P}}^\vee$ and given by the closures in $\RR^2\times \RR$ of the cones $\RR_{\geq 0} \cdot (\Xi \times \{ 1\})$,
form a $3$-dimensional toric fan $\Sigma$. In addition, the projection map to the last coordinate defines a map of fans $\Sigma \to \Sigma_{\CC} =  \RR_{\geq 0}$ to the toric fan of $\CC$, and so a toric morphism
\[  (\overline{\mathcal{Y}},\overline{\mathcal{D}}) \longrightarrow \CC \,. \] 
The general fiber $ \overline{\mathcal{Y}}_t$ and the central fiber $\overline{\mathcal{Y}}_0$ can be described by examining the preimages of the cones $\{0\}$ and $\RR_{\geq 0}$ respectively, under the map of fans  $\Sigma \to \Sigma_{\CC}$ -- see \cite[Lemma $3.4$]{NS} and \cite[Proposition $3.5$ ]{NS} for details. It follows by construction that the preimage of the cone $\{0\}$ of $\Sigma_{\CC}$ is the \emph{asymptotic web} $\overline{W}^{\mathrm{asym}}$ associated to $\overline{W}$ (see \cite[pg 9]{NS}), and hence the general fiber is the corresponding toric surface  $\overline{Y}$. Moreover, the preimage of any point in the interior of $\RR_{\geq 0} = \Sigma_{\CC}$ is $\RR^2$ with the polyhedral decomposition $\overline{\mathcal{P}}^\vee$, which therefore can be viewed as the dual intersection complex of the central fiber $\overline{\mathcal{Y}}_0$. In particular, corresponding to each vertex $v$ of $\overline{\mathcal{P}}^\vee$, there is a toric irreducible component $(\overline{Y}_0^v, \partial \overline{Y}_0^v)$ of the central fiber $ \overline{\mathcal{Y}}_0$, where we denote by $\partial \overline{Y}_0^v$ the toric boundary divisor.

By standard toric geometry, there exists a unique line bundle $\overline{\mathcal{L}}$ on $\overline{\mathcal{Y}}$ such that, for every toric curve $C_e \simeq \mathbb{P}^1$ in $\overline{\mathcal{Y}}$ corresponding to an edge or leg $e$ of $\overline{W}$, we have 
$\mathrm{deg}\, \overline{\mathcal{L}}|_{C_e}=w_e$,
where $w_e$ is the weight of $e$ in $\overline{W}$.
The restriction of $\overline{\mathcal{L}}$ to a general fiber $(\overline{Y},\overline{D})$ is isomorphic to the ample line bundle $\overline{L}$. Moreover, $(\overline{P}, \overline{\mathcal{P}})$ is the intersection complex of the polarized central fiber, as in \cite{GSreal}. In particular, for every vertex $v$ of $\overline{\mathcal{P}}^\vee$, the momentum polygon of the polarized toric surface $(\overline{Y}_0^v, \partial{\overline{Y}}_0^v, \overline{\mathcal{L}}|_{\overline{Y}_0^v})$ is the face of $\overline{\mathcal{P}}$ corresponding to $v$.

Finally, there is a natural way to construct a family of curves $(\overline{C}_t)_{t\in \CC}$ in the linear systems $|\overline{L}_t|$
obtained by restricting $\overline{\mathcal{L}}$ to the fibers $\overline{\mathcal{Y}}_t$. Explicitly, for $t \neq 0$, $\overline{C}_t$ is the closure of the curve $\overline{C}^\circ_t =\{f_t=0\} \subset (\CC^\star)^2$, where 
\begin{equation*}
 f_t=\sum_{(a,b)\in \overline{P}_\ZZ} t^{\overline{\varphi}((a,b))}x^a y^b \,,
\end{equation*}
For $t=0$, the curve $\overline{C}_0$ is a union of curves in the irreducible components $\overline{Y}_0^v$ of the central fiber $\overline{\mathcal{Y}}_0$ and not passing through the 0-dimensional strata of $\partial \overline{Y}_0^v$. 
Moreover, the web of 5-branes $\overline{W}$ coincides with the tropicalization of the 
curves $\overline{C}^\circ_t$, that is, the limit when $t \rightarrow 0$ of the rescaled amoeba $\mathrm{Log}_t(\overline{C}_t^\circ)$, where \begin{align*}\mathrm{Log}_t: &(\CC^\star)^2 \longrightarrow \RR^2\\
&(x,y) \longmapsto (t \log |x|, t \log |y|)\,.\end{align*} 
Equivalently, the web $\overline{W}$ is the dual intersection complex of the limit curve $\overline{C}_0$, that is, vertices of $\overline{W}$ are in one-to-one correspondence with the irreducible components of $\overline{C}_0$, edges are in correspondence with contact points with the toric divisors of the irreducible components of $Y_0$, and weight of edges are contact orders at these contact points.

The degenerations of curves $(\overline{C}_t^\circ)$ induce degenerations of the corresponding Calabi--Yau 3-folds $\{uv=f_t\}$, that is, paths in the moduli space of complex structures on the Calabi--Yau 3-fold $\overline{Z}$. If the web $\overline{W}$ is generic, that is, if $\overline{\mathcal{P}}$ is a regular triangulation into lattice triangles of size one, then the corresponding degeneration of $\overline{Z}$ is called \emph{maximal} and defines a ``large complex structure" degeneration in the sense of mirror symmetry. We review mirror symmetry for these maximal degenerations in the following section \S\ref{section_toric_dual_CY3}. 
For such a degeneration, the central fiber $\overline{\mathcal{Y}}_0$ of the polarized degeneration 
$(\overline{\mathcal{Y}}, \overline{\mathcal{D}}, \overline{\mathcal{L}}
) \rightarrow \CC$ is a union of polarized projective planes $(\PP^2, \partial \PP^2, \mathcal{O}(1))$, since the triangle of size one is the momentum polytope of $(\PP^2, \partial \PP^2, \mathcal{O}(1))$.

\subsection{Toric mirror symmetry and the M-theory dual toric Calabi--Yau 3-fold}
\label{section_toric_dual_CY3}

\subsubsection{Toric mirror symmetry and canonical toric 3-fold singularities}
Let $\overline{W}^{\mathrm{asym}}$ be an asymptotic web of 5-branes, with dual lattice polygon $\overline{P}$. As reviewed in \S\ref{section_tropical_web}, the web $\overline{W}^{\mathrm{asym}}$
in Type IIB string theory defines a 5d SCFT. Another way to produce a 5d SCFT is to consider M-theory on a canonical 3-fold singularity. Therefore, it is natural to ask if the 5d SCFT defined by $\overline{W}^{\mathrm{asym}}$ admits a dual description as M-theory on a canonical 3-fold singularity $\overline{\mathcal{X}}^{\mathrm{can}}$.

As reviewed in \S \ref{webs_toric_surfaces}, the 4d $\mathcal{N}=2$ theory obtained by compactification on $S^1$ of the 5d SCFT defined by $\overline{P}$ is dual to Type IIB string theory on the Calabi--Yau 3-fold $\overline{Z}$ defined by Equation \eqref{eq_Z}. If the 5d SCFT has a dual description as M-theory on a canonical 3-fold singularity $\overline{\mathcal{X}}^{\mathrm{can}}$, then, by duality of M-theory on $S^1$ with Type IIA string theory, the same 4d $\mathcal{N}=2$ theory should be described as Type IIA string theory on $\overline{\cX}^{\mathrm{can}}$. Hence, $\overline{Z}$ and $\overline{\mathcal{X}}^{\mathrm{can}}$ should be mirror Calabi--Yau 3-folds. Therefore, we will obtain a description of the M-theory dual Calabi--Yau 3-fold $\overline{\mathcal{X}}^{\mathrm{can}}$ using toric mirror symmetry.

For the non-compact Calabi--Yau 3-fold $\overline{Z}$ as in Equation \eqref{eq_Z}, the mirror can be easily described combinatorially: 
$\overline{\mathcal{X}}^{\mathrm{can}}$
is the affine toric Calabi--Yau 3-fold with fan the cone $C(\overline{P})$
over $\overline{P} \times \{1\} \subset \RR^2 \times \RR$.
The projection $C(\overline{P}) \rightarrow \RR_{\geq 0}$ 
on the last $\RR$-factor induces a toric morphism 
\begin{equation} \label{eq_pi} 
\overline{\pi}: \overline{\mathcal{X}}^{\mathrm{can}} \longrightarrow \CC\,.
\end{equation}
The fibers 
$\overline{\mathcal{X}}^{\mathrm{can}}_t :=\overline{\pi}^{-1}(t)$ for $t \neq 0$ are all isomorphic to $(\CC^\star)^2$. On the other hand, the central fiber $\overline{\mathcal{X}}^{\mathrm{can}}_0 :=\overline{\pi}^{-1}(0)$ is an union of affine toric $\overline{X}_0^v$ surfaces, in one-to-one correspondence with the vertices of $\overline{P}$, glued along their toric divisors, in one-to-one correspondence with the sides of $\overline{P}$.  In other words,
the polyhedral decomposition of $\RR^2$ induced by the asymptotic web $\overline{W}^{\mathrm{asym}}$ is the intersection complex of $\overline{\mathcal{X}}^{\mathrm{can}}_0$, that is, the faces of this decomposition are the momentum polytopes of the irreducible components of $\overline{\mathcal{X}}^{\mathrm{can}}_0$.

The affine toric variety $\overline{\mathcal{X}}^{\mathrm{can}}$ is Gorenstein and has canonical singularities that can be described as follows.  Except when $\overline{P}$ is a triangle of size one, in which case $\overline{\mathcal{X}}^{\mathrm{can}}=\CC^3$ is smooth, $\overline{\mathcal{X}}^{\mathrm{can}}$ is singular at the common intersection point $x_0$ of all the irreducible components of $\overline{\cX}^{\mathrm{can}}_0$. Moreover, if $\overline{P}$ has a side of lattice length $m >1$, connecting two vertices $v$ and $v'$, then $\overline{\mathcal{X}}^{\mathrm{can}}$
has also a one-parameter family of $A_{m-1}$-surface singularities along $\overline{X}_0^v \cap \overline{X}_0^{v'}$. Finally, $\overline{\cX}^{\mathrm{can}}$ is smooth everywhere else. 

\subsubsection{Toric mirror symmetry and crepant resolutions}
A regular polyhedral decomposition $\mathcal{\overline{P}}$ of $\overline{P}$ defines a projective crepant partial resolution
\[ f_{\overline{\cP}}: \overline{\mathcal{X}}_{\overline{\cP}} \longrightarrow \overline{\mathcal{X}}^{\mathrm{can}}\,,\]
where $\overline{\cX}_{\overline{\cP}}$ is the toric 3-fold with fan the cone over $\mathcal{\overline{P}} \times \{1\}$ in $\RR^2 \times \RR$. 
Composing with $\overline{\pi}$ given by Equation \eqref{eq_pi}, we obtain a 
map
\[ \overline{\pi}_{\overline{\cP}}: \overline{\cX}_{\overline{\cP}} \longrightarrow \CC \,,\]
whose fibers $\overline{\cX}_{\overline{\cP},t}:= \overline{\pi}_{\overline{\cP}}^{-1}(t)$ over $t \neq 0$
are isomorphic to $(\CC^\star)^2$. 
The central fiber $\overline{\cX}_{\overline{\cP},0} := \overline{\pi}_{\overline{\cP}}^{-1}(0)$ has dual intersection complex intersection given by $\mathcal{\overline{P}}$, or equivalently intersection complex given by the polyhedral decomposition of $\RR^2$ defined by the web $\overline{W}$ dual to $\overline{\cP}$.

The partial resolutions $f_{\overline{\cP}}: \overline{\cX}_{\overline{\cP}} \rightarrow \overline{\cX}^{\mathrm{can}}$ are mirror to the one-parameter degenerations of $\overline{Z}$ induced by the one-parameter polarized toric degenerations of the polarized toric surface $(\overline{Y},\overline{D},\overline{L})$ associated to $\overline{\cP}$ as in \S\ref{section_degeneration_toric_surfaces}.
In particular, when $\overline{\cP}$ is a triangulation into triangles of size one, then $\overline{\cX}_{\overline{\cP}}$ is a smooth Calabi--Yau 3-fold, and $f_{\overline{\cP}}: \overline{\cX}_{\overline{\cP}} \rightarrow \overline{\cX}^{\mathrm{can}}$ is a projective crepant resolution that is mirror to the large complex structure degenerations of 
$\overline{Z}$ associated to $\overline{\mathcal{P}}$. 
The union of the Coulomb branches of all the massive deformations of the 5d SCFT can be described as the union of the K\"ahler cones of these projective crepant resolutions $\overline{\cX}_{\overline{\cP}} \rightarrow \overline{\cX}^{\mathrm{can}}$.

\section{Webs of 5-branes with 7-branes and log Calabi--Yau surfaces}
\label{section_webs_log_cy}

In \S\ref{sec:seven_branes}, we provide a geometric interpretation of Hanany--Witten moves.
After giving a review of webs of 5-branes with 7-branes in \S\ref{subsec:Webs of 5-branes with 7-branes}, we explain in \S\ref{section_webs_log_CY} how to attach a log Calabi--Yau surface with line bundle to an asymptotic web of 5-branes with 7-branes, and in \S\ref{sec: degeneration} how to more generally attach a degeneration of log Calabi--Yau surfaces with line bundle to webs of 5-branes with 7-branes.

\subsection{Asymptotic configurations of $7$-branes and open Calabi--Yau surfaces} \label{sec:seven_branes}
In this section, after a review of the notion of a Hanany--Witten move between configurations of 7-branes in \S\ref{section_7_branes}, we introduce log Calabi--Yau and open Calabi--Yau surfaces in 
\S\ref{section_log_open}. Then, we attach an open Calabi--Yau surface to every configuration of 7-branes and we give a geometric interpretation of Hanany--Witten moves in terms of elementary cluster birational transformations of the corresponding open Calabi--Yau surfaces in \S\ref{section_geometric_HW}. As a result, we establish in Theorem \ref{thm_7_branes} a one-to-one correspondence between configurations of $7$-branes and open Calabi--Yau surfaces. Finally, we present a string-theoretic interpretation of the open Calabi--Yau surface $U$ associated to a configuration of 7-branes in \S\ref{section_string_open_CY}: the 7-dimensional theory obtained by compactifying on $S^1$ the 8-dimensional worldvolume theory of the 7-branes admits a dual description as M-theory on the complex surface $U$.

\subsubsection{Asymptotic configurations of $7$-branes}
\label{section_7_branes}
Type IIB string theory contains 7-branes 
of type $(p,q)$ for every coprime $(p,q)\in \ZZ^2 \setminus \{0\}$ \cite{Polchinski}. Starting from Type IIB string theory on 
\[ \RR^{1,4} \times \RR^2 \times \RR^3\,,\]
placing a $(p,q)$ 7-brane on 
\[ \RR^{1,4} \times \{x\} \times \RR^3\]
introduces a focus-focus singularity in the natural integral affine structure on $\RR^2$ at the point $x \in \RR^2$. The monodromy matrix, defined by tracing a loop in the clockwise direction around $x$, is given by
\[ M_{(p,q)}= \begin{pmatrix}
1-pq & p^2 \\
-q^2 & 1+pq 
\end{pmatrix} \,.\]
In particular, the line $x+ \RR (p,q)$ is a  monodromy-invariant direction around $x$. 

We will consider particular configurations of 7-branes, defined as follows.

\begin{definition} \label{def_7_branes}
For $Q\in \ZZ_{\geq 0}$, an \emph{asymptotic configuration of $Q$-many 7-branes} is the data of a coprime vector $(p_i,q_i) \in \ZZ^2 \setminus \{0\}$ for every $1 \leq i \leq Q$, together with a point $x_i$ on the half-line $\RR_{>0}(p_i,q_i)$.
\end{definition}

The point $x_i$ in Definition \ref{def_7_branes} is viewed as the position in $\RR^2$ of a 
$(p_i,q_i)$ 7-brane, see Figure \ref{Fig34}.
An asymptotic configuration of 7-branes defines an integral affine manifold with singularities $B$ obtained from $\RR^2$ by cutting along the half-lines $x_i + \RR_{\geq 0}(p_i,q_i)$ and imposing that the clockwise monodromy around $x_i$ through the cut is given by the matrix $M_{(p_i,q_i)}$.  
The integral affine manifold with singularities $B$ is well-defined even if $(p_i,q_i)=(p_j,q_j)$ for $i \neq j$, since the matrices $M_{(p_i,q_i)}$ and $M_{(p_j,q_j)}$ are equal, and so in particular commute.

\begin{figure}[hbt!]
\center{\scalebox{0.9}{\includegraphics{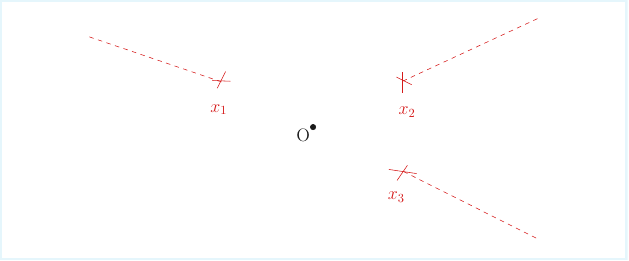}}}
\caption{An asymptotic configuration of 7-branes.}
\label{Fig34}
\end{figure}

\begin{definition}
\label{Def: equivalence 7 branes}
Two asymptotic configurations of 7-branes are called \emph{equivalent} if they are related to each other by either of the following: 
\begin{itemize}
    \item[i)] The natural common action of $SL(2,\ZZ)$ on the vectors $(p_i,q_i)$ and on $\RR^2$, or
\item[ii)] A change of the position $x_i$ of the $(p_i,q_i)$ 7-brane along its monodromy-invariant direction $\RR_{>0}(p_i,q_i)$, or
\item[iii)] Hanany--Witten moves \cite{hanany1997type}, as discussed below.
\end{itemize}
\end{definition}

A \emph{Hanany--Witten move along a $(p_i,q_i)$ $7$-brane $x_i$} is the operation of bringing the position of $x_i$ across the origin along the monodromy-invariant direction $\RR (p_i,q_i)$, so that $x_i$ is now contained in the ray $\RR_{<0} (p_i,q_i)$. In particular, this new configuration of 7-branes is no longer an asymptotic configuration of 7-branes as in Definition \ref{def_7_branes}. 
However, it can be transformed into one by viewing $x_i$ as a $(-p_i,-q_i)$ 7-brane, and rotating the cut in the clockwise direction from $x_i +\RR_{\geq 0}(p_i,q_i)$ to $x_i +\RR_{< 0}(p_i,q_i)$. 
During such a rotation, the cut of the $(p_i,q_i)$ 7-brane intersects the $(p_j,q_j)$ 7-branes with $\det((p_i,q_i),(p_j,q_j))<0$.
Such a $(p_j,q_j)$ 7-brane crosses the cut of the $(p_i,q_i)$ 7-brane in the anti-clockwise direction around $x_i$, and so becomes a $(p_j',q_j')$ 7-brane, 
where
\[ \begin{pmatrix}
 p_j' \\
 q_j' 
\end{pmatrix}= M_{(p_i,q_i)}^{-1} \begin{pmatrix}
 p_j \\
q_j 
\end{pmatrix}
=\begin{pmatrix}
1+p_i q_i & -p_i^2 \\
q_i^2 & 1-p_i q_i 
\end{pmatrix}\begin{pmatrix}
 p_j \\
q_j
\end{pmatrix} = \begin{pmatrix}
 p_j \\
q_j 
\end{pmatrix}+\mathrm{det}((p_j,q_j),(p_i,q_i)) \begin{pmatrix}
p_i \\
q_i  
\end{pmatrix}
\,.\]
The resulting asymptotic configuration of 7-branes is said to be related to the initial one by Hanany--Witten move of the $(p_i,q_i)$ 7-brane.
The new types $(p_j',q_j')$ of the 7-branes are given by 
\begin{equation} \label{eq_seed_mutation}   
(p_j',q_j') = 
     \begin{cases}
       (-p_i,-q_i) &\quad\text{if}\,\,\, j=i\\
       (p_j,q_j)+\max(\mathrm{det}((p_j,q_j),(p_i,q_i)),0)\, (p_i,q_i) &\quad\text{if}\,\,\, j \neq i \,. 
     \end{cases}
\end{equation}
In the situation $j \neq i $ in \eqref{eq_seed_mutation}, we refer to the operation of changing the direction of a $7$-brane from $(p_j,q_j)$ to $(p_j',q_j')$ as ``applying a shear'' to it.

\begin{remark}
    In the mirror symmetry literature, moving a focus-focus singularity of an integral affine manifold along its monodromy-invariant direction is sometimes referred to as ``moving worms" \cite{GHK1, KSaffine}.
\end{remark}

Our goal is to give a geometric description of the set of asymptotic configurations of 7-branes, up to the above equivalences. For this, we first describe log and open Calabi--Yau surfaces.

\subsubsection{Log and open Calabi--Yau surfaces}
\label{section_log_open}
A \emph{log Calabi--Yau surface} $(Y,D)$ consists of a normal projective surface $Y$ over $\CC$ together with a reduced, anti-canonical divisor $D$ in $Y$, such that the pair $(Y,D)$ is locally toric, that is, for every point $y \in Y$, there exists a complex analytic open neighborhood $V$ of $y$ in $Y$, an affine toric surface $(\overline{Y}, \overline{D})$, a point $\overline{y} \in \overline{Y}$, and a complex analytic neighborhood $\overline{V}$ of $\overline{y}$ in $\overline{Y}$
such that 
\[ (Y \cap V, D \cap V) \simeq (\overline{Y} \cap \overline{V}, \overline{D} \cap \overline{V})\,.\]  
We also assume that $D$ is non-empty and singular, so that $D$ is either an irreducible rational curve with a single node or a cycle of smooth rational curves,
as in \cite{GHK_moduli}. In particular, a singular point of $Y$
is also necessarily a singular point of $D$, and is given locally by an isolated toric surface singularity, that is, by a quotient of $\CC^2$ by a finite cyclic group. 

\begin{remark}
    Log Calabi--Yau surfaces $(Y,D)$ with $Y$ smooth are called \emph{Looijenga pairs} in \cite{GHK1, GHK_moduli} and \emph{anticanonical pairs} in \cite{friedman2015geometry}.
\end{remark}

Given a log Calabi--Yau surface $(Y,D)$, the complement $U=Y\setminus D$ is a smooth non-compact Calabi--Yau surface, in the sense that it admits a holomorphic volume form. We refer to such a surface as open Calabi--Yau surfaces:

\begin{definition} \label{def_open_CY}
An \emph{open Calabi--Yau surface} is a smooth algebraic surface $U$ over $\CC$, such that there exists a log Calabi--Yau surface $(Y,D)$ with $U=Y\setminus D$. We refer to $(Y,D)$ as a \emph{log Calabi--Yau compactification} of $U$.
\end{definition}

In the following sections, we will consider open Calabi--Yau surfaces up to deformation equivalence, defined as follows.

\begin{definition}
\label{Def: def eqv open CY}
Two open Calabi--Yau surfaces $U_1$ and $U_2$ are called \emph{deformation equivalent} if there exists a smooth projective family of surfaces $\pi: \mathcal{Y} \to T$ over a scheme $T$, together a divisor $\mathcal{D}$ in $\mathcal{Y}$, and points $t_1,t_2 \in T$ satisfying the following conditions:
\begin{itemize}
    \item[i)] The restriction $\pi|_{\mathcal{D}} : \mathcal{D} \to T$ is a locally trivial fibration, 
    \item[ii)] $\forall t \in T$, the fiber $(Y_t, D_t):= (\pi^{-1}(t), {\pi|_{\mathcal{D}}}^{-1}(t))$ is a log Calabi--Yau surface, and
    \item[iii)] $U_1 = Y_{t_1} \setminus D_{t_1}$ and $U_2 = Y_{t_2} \setminus D_{t_2}$.  
\end{itemize}
\end{definition}

\begin{example} 
Let $Y$ be a projective toric surface and $D$ its toric boundary divisor. Then, $(Y,D)$ is a log Calabi--Yau surface, and the corresponding open Calabi--Yau surface is $U=Y \setminus D =(\CC^\star)^2$.
\end{example}

\begin{example}
If $(Y,D)$ is a log Calabi-Yau surface and $p$ is a singular point of $D$, then, denoting $\widetilde{Y}$ the blow-up of $Y$ at $p$ and $\widetilde{D}$ the strict transform of $D$ in $\widetilde{Y}$, the pair $(\widetilde{Y}, \widetilde{D})$ is a log Calabi--Yau surface. We say that $(\widetilde{Y}, \widetilde{D})$  is obtained from $(Y,D)$ by \emph{corner blow-up}. The open Calabi--Yau surface $U=Y \setminus D$ does not change under a corner blow-up.
\end{example}

\begin{example} \label{example_toric_model}
If $(Y,D)$ is a log Calabi-Yau surface and $p$ is a smooth point of $D$, then, denoting $\widetilde{Y}$ the blow-up of $Y$ at $p$ and $\widetilde{D}$ the strict transform of $D$ in $\widetilde{Y}$, the pair $(\widetilde{Y}, \widetilde{D})$ is a log Calabi--Yau surface. We say that $(\widetilde{Y}, \widetilde{D})$  is obtained from $(Y,D)$ by \emph{interior blow-up}. The open Calabi--Yau surface $U=Y \setminus D$ changes under an interior blow-up. In particular, interior blow-ups of a toric log Calabi--Yau surface are examples of non-toric log Calabi--Yau surfaces. According to \cite[Proposition 1.3]{GHK1}, for every log Calabi--Yau surface $(Y,D)$, there exists a log Calabi--Yau surface $(\widetilde{Y},\widetilde{D})$ obtained from $(Y,D)$ by corner blow-ups, and such that $(\widetilde{Y},\widetilde{D})$ admits a \emph{toric model}, that is, a morphism $(\widetilde{Y},\widetilde{D}) \rightarrow (\overline{Y},\overline{D})$ that is a composition of interior blow-ups to a toric surface $(\overline{Y},\overline{D})$.
\end{example}

\subsubsection{Asymptotic configurations of $7$ branes and open Calabi--Yau surfaces}
\label{section_geometric_HW}
The main result of this section is Theorem 
\ref{thm_7_branes} below, that establishes  a one-to-one correspondence between asymptotic configurations of $7$-branes up to \emph{equivalence} as in Definition \ref{Def: def eqv open CY}, and open Calabi--Yau surfaces up to \emph{deformation equivalence} as in Definition \ref{Def: equivalence 7 branes}. 

We first explain how to associate a deformation class of open Calabi--Yau surfaces to every asymptotic configuration of 7-branes. 

\begin{construction}
\label{constr: open cy surface}
Consider an asymptotic configurations of $Q$-many 7-branes of type $(p_i,q_i)$, $1 \leq i \leq Q$. Let $(\overline{Y}, \overline{D})$ be a smooth projective toric surface whose fan contains all the rays $\RR_{\geq 0}(p_i,q_i)$ for $1\leq i \leq Q$. 
For every $1\leq i \leq Q$, choose a smooth point $y_i$ of $\overline{D}$ contained in the irreducible component of $\overline{D}$ corresponding to the ray $\RR_{\geq 0}(p_i,q_i)$, and such that $y_i \neq y_j$ if $i \neq j$. Let $Y$ be the blow-up of $\overline{Y}$ at the $Q$ many distinct points $y_i$, and let $D$ be the strict transform of $\overline{D}$ in $Y$. Then, $(Y,D)$ is a log Calabi--Yau surface, and we define the \emph{open Calabi--Yau surface of the asymptotic configuration of 7-branes} by $U:= Y \setminus D$. The preimage in $U$ of $(\CC^\star)^2=\overline{Y}\setminus \overline{D}$ is a dense copy of $(\CC^\star)^2$ in $U$.

In this construction, the toric surface $\overline{Y}$ is not unique. However, different choices of toric surfaces $\overline{Y}$ are related by toric blow-ups and blow-downs,
which do not affect the open Calabi--Yau surfaces $U$ or the torus $(\CC^\star)^2 \subset U$. While this construction also depends a priori on the choice of the points $y_i$ on $\overline{D}$, different choices lead to deformation-equivalent open Calabi--Yau surfaces. We conclude that the deformation class of the open Calabi--Yau surface $U$ only depends on the initial asymptotic configuration of 7-branes, and not on any additional choice.    
\end{construction}

We now present a geometric interpretation of Hanany--Witten moves in terms of the birational geometry of the corresponding open Calabi--Yau surfaces. Let $A$ be an asymptotic configuration of $Q$-many 7-branes, and let $A_i'$ be the asymptotic configuration of $Q$-many 7-branes obtained from $A$ by Hanany--Witten move along the $i$-th 7-brane for some $1\leq i \leq Q$. Construction \ref{constr: open cy surface} applied to $A$ produces an open Calabi--Yau surface $U_A = Y_A \setminus D_A$, where $(Y_A,D_A)$ is a log Calabi--Yau surface obtained by blowing up a toric surface $(\overline{Y}_A, \overline{D}_A)$, and a dense torus $(\CC^\star)^2_A$. Similarly, there is an open Calabi--Yau surface $U_{A_i'}$ and a dense torus $(\CC^\star)^2_{A_i'}$ associated to $A$. We explain below that the open Calabi--Yau surfaces $U_A$ and $U_{A_i'}$ are isomorphic, up to deformations. However, the tori $(\CC^\star)^2_A$ and $(\CC^\star)^2_{A_i'}$ in $U_A \simeq U_{A_i'}$ do not coincide but are related by a birational transformation known as en elementary cluster transformation.

\begin{figure}[hbt!]
\center{\scalebox{0.9}{\includegraphics{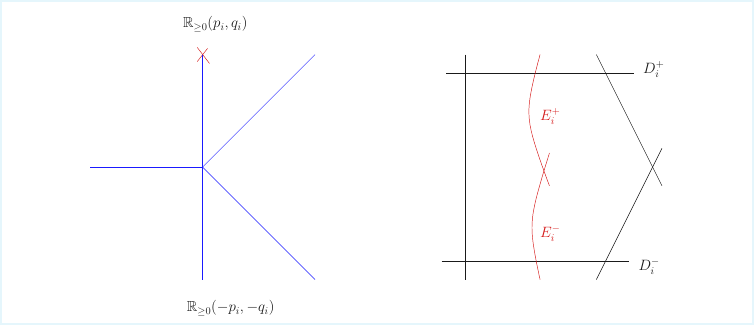}}}
\caption{Elementary cluster transformation: after a blow-up producing the exceptional curve $E_i^+$, contract the $(-1)$-curve $E_i^-$.}
\label{Fig35}
\end{figure}

Without loss of generality, we can assume that the fan of $\overline{Y}_A$ contains the line $\RR (p_i,q_i)$. Then, the projection $\RR^2 \rightarrow \RR^2/\RR(p_i,q_i) =\RR$ induces a toric morphism $\nu: \overline{Y}_A \rightarrow \PP^1$, which is generically a $\PP^1$-fibration. Moreover, the irreducible components $D_i^+$
and $D_i^-$ of $\overline{D}$ corresponding respectively to the rays $\RR_{\geq 0}(p_i,q_i)$ and $\RR_{\geq 0}(-p_i,-q_i)$ of the fan, are sections of $\nu$, see Figure \ref{Fig35}. 
By Construction \ref{constr: open cy surface}, $Y_A$ is obtained from the toric surface $\overline{Y}_A$ by blowing up points $y_j$ on the toric boundary for each $1\leq j \leq Q$. In particular, blowing up the point $y_i \in D_i^+$
creates an exceptional $(-1)$-curve $E_i^+$ in $Y_A$. The strict transform of the $\PP^1$-fiber of $\nu$ passing through $x_i$ is also a $(-1)$-curve $E_i^-$. Hence, one can construct a new toric surface $Y_{A_i'}$ from $Y_A$ by contracting all the exceptional curves coming from blowing up the points $y_j$ with $j \neq i$, and the $(-1)$-curve $E_i^-$. The induced birational map
$\overline{Y}_A \dashrightarrow \overline{Y}_{A_i}$  is an example of 
``elementary cluster transformation" as in \cite[\S 3.1]{GHK_birational}, and it follows from \cite{GHK_birational}
that the fan of the toric surface $\overline{Y}_{A_i}$ is obtained from the fan of $\overline{Y}_A$ by the piecewise-linear transformation given by \eqref{eq_seed_mutation}. In particular, the toric surface $Y_{A_i'}$ can be taken as the starting point of the Construction \ref{constr: open cy surface} for the asymptotic configurations $A_i'$ obtained from $A$ by Hanany--Witten move of the $i$-th 7-brane.
The corresponding log Calabi--Yau surface $(Y_{A_i'}, D_{A_i'})$ coincides with $(Y_A,D_A)$, and so the corresponding open Calabi--Yau surface $U_{A_i'}$ coincides with $U_A$. Moreover, the elementary cluster transformation restricted to the tori $(\CC^\star)^2_A$ and $(\CC^\star)^2_{A_i'}$
defines a birational transformation
\[ (\CC^\star)^2_A \dashrightarrow (\CC^\star)^2_{A_i'}\,,\]
given by 
\begin{align*}
    x &\longmapsto x \\
    y &\longmapsto y(1+x)^k
\end{align*}
for an appropriate choice of coordinates $x,y$ on $(\CC^\star)^2$ and for some $k \in \ZZ$. 

The above geometric interpretation of Hanany--Witten moves has a natural reformulation in terms of cluster varieties \cite{GHK_birational}.
An asymptotic configuration of $Q$ many 7-branes of type $(p_i,q_i)$ naturally defines a \emph{seed} $\mathbf{s}$, consisting of the $Q$-dimensional lattice $\ZZ^Q =\bigoplus_{i=1}^Q \ZZ e_i$, with the standard basis $(e_i)_{1\leq i \leq Q}$, and the skew-symmetric from $\omega$ defined by 
\[ \omega(e_i,e_j):=\det((p_i,q_i),(p_j,q_j)) =p_i q_j - p_j q_i \,,\]
for all $1\leq i,j \leq Q$.
It follows from \eqref{eq_seed_mutation} that Hanany--Witten moves for asymptotic configurations of 7-branes are exactly \emph{mutations} of seeds in the theory of cluster algebras \cite{FZ}. The seed $\mathbf{s}$ defines a $Q$-dimensional $\mathcal{X}$ cluster variety with a natural Poisson structure induced by $\omega$ \cite{FG} and a cluster chart $(\CC^\star)^Q \subset \mathcal{X}$.
A seed mutation does not change the $\mathcal{X}$ cluster variety but changes the cluster chart by an elementary cluster modification.
The skew-symmetric form $\omega$ has rank $2$, and so the symplectic leaves of the Poisson structure on the $\mathcal{X}$ cluster variety are 2-dimensional. By \cite{GHK_birational}, the symplectic leaves of the $\mathcal{X}$ cluster variety coincide up to codimension $2$ with the open Calabi--Yau surfaces deformation-equivalent to the open Calabi--Yau surface $U$ associated to the asymptotic configuration of 7-branes by Construction \ref{constr: open cy surface}.

\begin{remark}
    The relation between Hanany--Witten moves and quiver and polygon mutations, which are both related to cluster mutations, was previously observed in the physics literature in \cite{arias2024geometry, Franco_twin}. For the purposes of the present paper, the geometric reformulation of this relation in terms of cluster varieties and open Calabi--Yau surfaces plays an essential role.
\end{remark}

It follows from the above geometric interpretation  of Hanany--Witten moves that open Calabi--Yau surfaces obtained by applying Construction \ref{constr: open cy surface} to equivalent asymptotic configurations of 7-branes are deformation-equivalent. In particular, Construction \ref{constr: open cy surface} induces a map from the set of asymptotic configurations of 7-branes, up to equivalence, to the set of open Calabi--Yau surfaces, up to deformation equivalence.

\begin{theorem} \label{thm_7_branes}
Construction \ref{constr: open cy surface} defines a one-to-one correspondence between asymptotic configurations of 7-branes, up to equivalence, and open Calabi--Yau surfaces, up to deformation equivalence. 
\end{theorem}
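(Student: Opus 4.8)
The plan is to establish that Construction~\ref{constr: open cy surface} induces a bijection by exhibiting an inverse map, using the cluster-variety framework already set up. The discussion preceding the theorem shows that the map is well-defined on equivalence classes, so it remains to prove injectivity and surjectivity. I would organize the argument around the $\mathcal{X}$ cluster variety $\mathcal{X}_{\mathbf{s}}$ associated to the seed $\mathbf{s}$ of a configuration, together with its fixed cluster torus $(\CC^\star)^Q \subset \mathcal{X}_{\mathbf{s}}$ whose symplectic leaves recover, up to codimension $2$, the open Calabi--Yau surface $U$.

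\emph{Surjectivity.} Given an open Calabi--Yau surface $U$, I would invoke Example~\ref{example_toric_model}: after corner blow-ups, some log Calabi--Yau compactification $(\widetilde Y, \widetilde D)$ of $U$ admits a toric model $(\widetilde Y, \widetilde D) \to (\overline Y, \overline D)$, a composition of, say, $Q$ interior blow-ups at smooth points $y_1,\dots,y_Q$ of $\overline D$. Each point $y_i$ lies on the component of $\overline D$ corresponding to a ray $\RR_{\geq 0}(p_i,q_i)$ of the fan of $\overline Y$; recording these primitive vectors together with points $x_i \in \RR_{>0}(p_i,q_i)$ produces an asymptotic configuration of $7$-branes whose image under Construction~\ref{constr: open cy surface} is deformation-equivalent to $U$. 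The only subtlety is that two interior blow-ups might occur on the same boundary component or even be ``infinitely near''; corner blow-ups and the freedom in choosing the toric model let me assume the $y_i$ are distinct, matching the hypothesis of Definition~\ref{def_7_branes}, and the resulting deformation class is independent of these choices exactly as in Construction~\ref{constr: open cy surface}.

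\emph{Injectivity.} Suppose two asymptotic configurations $A$ and $A'$ yield deformation-equivalent open Calabi--Yau surfaces $U_A \sim U_{A'}$. I would argue that the number of interior blow-ups needed to reach a toric model is a deformation invariant of $U$ (it equals a suitable difference of Picard ranks, or the ``charge'' in the language of Looijenga pairs), so $A$ and $A'$ have the same number $Q$ of $7$-branes. Next, I would use the correspondence of \cite{GHK_birational} between toric models of $U$ (up to the moves of Example~\ref{example_toric_model} and corner blow-ups) and seeds in a single mutation class: a toric model of $U$ is the data of a cluster torus inside the symplectic leaves, and elementary cluster transformations between toric models are precisely Hanany--Witten moves via \eqref{eq_seed_mutation}. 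Hence the seeds $\mathbf{s}_A$ and $\mathbf{s}_{A'}$ lie in the same mutation class, which by the established dictionary means $A$ and $A'$ differ by a sequence of Hanany--Witten moves, $SL(2,\ZZ)$-transformations (reflecting the choice of lattice identification), and shifts of the $x_i$ along monodromy-invariant rays (which do not change any toric model) --- i.e.\ $A$ and $A'$ are equivalent in the sense of Definition~\ref{Def: equivalence 7 branes}.

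\emph{Main obstacle.} The delicate point is making precise the claim that ``toric models of $U$ up to elementary transformations are in bijection with seeds in a mutation class,'' and in particular that \emph{every} cluster torus in the symplectic leaves of $\mathcal{X}_{\mathbf{s}}$ arises from a toric model of some log Calabi--Yau compactification of $U$, rather than only those reachable by the specific moves we allow. This requires care: one must rule out ``exotic'' toric models not connected to the given one by Hanany--Witten moves, which amounts to knowing that the relevant birational modifications of $(Y,D)$ (preserving $U$) are generated by interior blow-downs, interior blow-ups, and corner blow-ups --- a structural fact about log Calabi--Yau surfaces that I would extract from \cite{GHK1, GHK_birational}. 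Once this connectivity statement is in hand, combined with the explicit matching of elementary cluster transformations with the formula~\eqref{eq_seed_mutation}, both injectivity and surjectivity follow, and the deformation-theoretic bookkeeping (that moving the $x_i$ and choosing different $\overline Y$ do not change the deformation class) is routine given Construction~\ref{constr: open cy surface}.
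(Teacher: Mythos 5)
Your proposal is correct and follows essentially the same route as the paper: surjectivity via a toric model obtained after corner blow-ups (Example \ref{example_toric_model}), and injectivity via the fact that any two toric models of a log Calabi--Yau surface are connected by corner blow-ups/blow-downs and elementary cluster transformations, which translate into Hanany--Witten moves. The ``main obstacle'' you correctly isolate --- the connectivity of toric models under the allowed moves --- is exactly what the paper resolves by citing \cite[Proposition 3.27]{HK_HMS} (see also \cite[Theorem 1]{Blanc}); the extra detour through the cluster variety $\mathcal{X}_{\mathbf{s}}$ and the invariance of $Q$ is not needed once that citation is in hand.
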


\begin{proof}
Denote by $F$ the map induced by Construction \ref{constr: open cy surface} from the set of asymptotic configurations of 7-branes, up to equivalence, to the set of open Calabi--Yau surfaces, up to deformation equivalence. 
We first show that the map $F$ is surjective. Let $U$ be an open Calabi--Yau surface. By Definition \ref{def_open_CY}, there exists a log Calabi--Yau surface $(Y,D)$ such that $U=Y \setminus D$. 
As reviewed in Example \ref{example_toric_model}, up to replacing $(Y,D)$ by a corner blow-up, we can assume that $(Y,D)$ admits a toric model, that is, a morphism $(Y,D) \rightarrow (\overline{Y}, \overline{D})$ to a toric surface $(\overline{Y}, \overline{D})$, obtained by successively blowing up smooth points of $\overline{D}$. Moreover, up to replacing $U$ by a deformation-equivalent open Calabi--Yau surface, one can assume that $(Y,D) \rightarrow (\overline{Y}, \overline{D})$ is the blow-up of distinct smooth points $y_1, \dots, y_Q$ of $\overline{D}$.  
For every $1 \leq i \leq Q$, let $(p_i,q_i)$ be the primitive direction of the ray in the fan of $\overline{Y}$ corresponding to the irreducible component of $\overline{D}$ containing the point $y_i$. Then, $U$ is the image by the map $F$ of the asymptotic configuration of $Q$ many 7-branes of type $(p_i,q_i)$, $1 \leq i \leq Q$. This proves the surjectivity of $F$.

To prove that the map $F$ is injective, let $A_1$ and $A_2$ be two asymptotic configurations of 7-branes such that the corresponding open Calabi--Yau surfaces $U_{A_1}$ and $U_{A_2}$ are isomorphic. For every $k\in \{1,2\}$, we have $U_{A_k}=Y_{A_k} \setminus D_{A_k}$, where  $(Y_{A_k}, D_{A_k})$ is a log Calabi--Yau surface defined via a toric model $(Y_{A_k}, D_{A_k}) \rightarrow (\overline{Y}_{A_k}, \overline{D}_{A_k})$.
By \cite[Proposition 3.27]{HK_HMS} (see also \cite[Theorem 1]{Blanc}), any two toric models of a log Calabi--Yau surface are connected by a sequence of corner blow-ups/blow-downs and elementary cluster transformations. Since elementary cluster transformations correspond to Hanany--Witten moves at the level of the asymptotic configurations of 7-branes,  
it follows that $A_1$ and $A_2$ are equivalent in the sense of Definition \ref{Def: equivalence 7 branes}. This concludes the proof that the map $F$ is injective.
\end{proof}

\begin{remark}
The problem of classifying asymptotic configurations of 7-branes up to equivalence has been extensively studied in the physics literature - see for example \cite{brain_webs} and references there. In particular, the article \cite{brain_webs}
contains explicit counter-examples to a previous conjecture of \cite{uncovering} on classifying the asymptotic configurations of 7-branes up to equivalence by the conjugacy class of their global monodromy and the index in $\ZZ^2$ of the lattice spanned by the $(p_i,q_i)$'s. Theorem \ref{thm_7_branes} provides an algebro-geometric reformulation of this subtle classification question. For so-called ``positive" open Calabi--Yau surfaces, that is, which are deformation-equivalent to affine surfaces, it is proved in \cite{mandel_rank_two} that the conjugacy class of the global monodromy uniquely determines the open Calabi--Yau surface  up to deformation, and so, by Theorem \ref{thm_7_branes}, uniquely determines the asymptotic configuration of 7-branes up to equivalence.
\end{remark}

\subsubsection{String-theoretic interpretation}
\label{section_string_open_CY}
Let $A$ be an asymptotic configuration of 7-branes, corresponding to an integral affine manifold with singularity $B_A$ and, via Construction \ref{constr: open cy surface}, to an open Calabi--Yau surface $U_A$. It follows from \cite{symington} that there exists a direct topological relationship between $U_A$ and $B_A$. Indeed, $U_A$ admits a topological $T^2$-fibration over $B_A$, which is smooth away from the singularities $\{x_i\}_{1\leq i \leq Q}$, and with nodal fibers over the singularities $\{x_i\}_{1\leq i \leq Q}$. More precisely, fixing an identification of the first homology group of the $T^2$-fiber over the origin with $\ZZ^2$, the fiber over the $(p_i,q_i)$ 7-brane $x_i$ is obtained by contracting to a point a 1-cycle of class $(p_i, q_i)$ in $T^2$.
Using the duality between Type IIB string theory on $S^1$ and M-theory on $T^2$, 
we obtain the following physics interpretation of the open log Calabi--Yau surface $U_A$: 
The compactification over $S^1$ of the 8-dimensional theory on $\RR^{1,7}$ defined by Type IIB string theory on $\RR^{1,7} \times B_A$, with $(p_i,q_i)$
7-branes on $\RR^{1,7} \times \{x_i\}$, admits a dual description as the 7-dimensional theory on $\RR^{1,6}$ defined by M-theory on $\RR^{1,6} \times U_A$. 

While the connection between 7-branes and singular $T^2$-fibrations is well-known, for example in the context of F-theory, the point of Construction \ref{constr: open cy surface} is to give an algebro-geometric description of the total space of this $T^2$-fibration for a particular choice of complex structure for which the $T^2$-fibers are Lagrangian submanifolds. In the context of F-theory, it is more usual to work with a complex structure for which the $T^2$-fibers are holomorphic subvarieties. When $U_A$ admits a complete hyperk\"ahler metric, such a complex structure can be obtained by hyperk\"ahler rotation. While $U_A$ sometimes admits a complete hyperk\"ahler metric, it is not always the case. It is in general expected that when the 7-branes are close enough of the origin, there exists a possibly incomplete hyperk\"ahler metric on an open set of $U_A$ containing the singular fibers. For instance, for a single 7-brane, the relevant hyperk\"ahler metric is the Ooguri--Vafa metric, which is incomplete \cite{GMN1, OV}. One might be concerned about the lack of existence of  a complete hyperk\"ahler metric for the UV consistency of an M-theory background. 
However, we will only use such M-theory background as a geometric description of the low-energy of some field theories, and we do not expect the incompleteness of the hyperk\"ahler metrics to be relevant for such purposes. The situation is analog to the case of 4d $\mathcal{N}=2$ theories of Class S \cite{gaiotto_class_S}, which are described at low-energy by a M5-brane wrapping an algebraic curve in the cotangent bundle $T^\star C$ of a Riemann surface $C$, whereas the non-compact Calabi--Yau surface $T^\star C$ does not admit a complete hyperk\"ahler metric in general, but only a possibly incomplete hyperk\"ahler metric on a neighborhood of the zero-section $C \subset T^\star C$ \cite{Feix, kaledin1997hyperkaehler}.

\subsection{Webs of 5-branes with 7-branes}
\label{subsec:Webs of 5-branes with 7-branes}
In this section, we explain how to modify webs of 5-branes by allowing some 5-branes to end on 7-branes \cite{BBT,DeWolfe}.
We first define a 7-brane data, and then describe how to construct a web of 5-branes with 7-branes from such data.

\begin{definition} \label{def_web_57_at_infinity}
Let $\overline{W} \subset \RR^2$ be a web of 5-branes.
A \emph{7-brane data} for $\overline{W}$ 
is an assignment of the following data, for each leg $e \in L(\overline{W})$:
\begin{itemize}
\item[i)] A nonnegative integer $n_e$, and
\item[ii)] $n_e$ positive integers $a_{e,i}$, indexed by $1\leq i \leq n_e$, such that $\sum_{i=1}^{n_e} a_{e,i} \leq w_e$, where $w_e$ is the weight of $e$ in $\overline{W}$. 
\end{itemize}
We will denote by $\mathbf{a} := ((a_{e,i})_{1\leq i \leq n_e})_{e \in L(\overline{W})}$ a 7-brane data for $\overline{W}$.
\end{definition}

\begin{figure}[hbt!]
\center{\scalebox{0.9}{\includegraphics{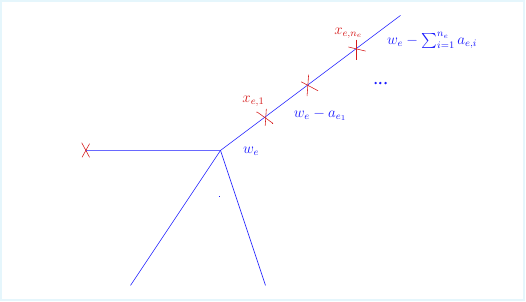}}}
\caption{A web of 5-branes with 7-branes}
\label{Fig30}
\end{figure}

\begin{construction}
Let $\overline{W} \subset \RR^2$ be a web of 5-branes, and $\mathbf{a} := ((a_{e,i})_{1\leq i \leq n_e})_{E \in L(\overline{W})}$ a 7-brane data for $\overline{W}$.
We define an integral affine manifold with singularities $B$, by introducing for every leg $e$ of $\overline{W}$, $n_e$ focus-focus singularities $x_{e,1}, \dots, x_{e,n_e}$ on $e$ with monodromy-invariant direction the direction $(p_e,q_e)$ of $e$, ordered as in Figure \ref{Fig30}. Let $W \subset B$ be the embedded graph in $B$ obtained from $\overline{W}$ as follows:
\begin{itemize}
\item[i)] First, for every leg $e=v +\RR_{\geq 0}(p_e,q_e)$ of $\overline{W}$, with incident vertex $v$ and primitive integral direction $(p_e,q_e)$, we add 2-valent vertices on $e$ at the positions $x_{e,1}, \dots, x_{e,n_e}$ of the focus-focus singularities, so that the leg $e$ is replaced by edges 
\[[v, x_{e,1}], \,[x_{e,1}, x_{e,2}], \dots, [x_{e, n_e-1}, x_{e,n_e}]\] 
and the leg $x_{e,n_e}+\RR_{\geq 0}(p_e,q_e)$. 
\item[ii) ]Moreover, if $\sum_{i=1}^{n_e} a_{e,i}=w_e$, we also remove the leg $x_{e,n_e}+\RR_{\geq 0}(p_e,q_e)$ from the graph.
\item[iii)] Finally, we assign weight $w_e$ to the edge
$[v, x_{e,1}]$, weights $w_e- \sum_{j=1}^i a_{e,j}$ to the edge $[x_{e,i}, x_{e,i+1}]$ for all $1 \leq i \leq n_e-1$, and weight $w_e- \sum_{j=1}^{n_e} a_{e,j}$ to the leg $x_{e,n_e}+\RR_{\geq 0}(p_e,q_e)$ if $\sum_{i=1}^{n_e} a_{e,i}<w_e$. 
\end{itemize}
We denote by $W \subset B$ the resulting embedded weighted graph, and we refer to $W \subset B$ as the \emph{web of 5-branes with 7-branes} defined by the web of 5-branes $\overline{W}$ and the 7-brane data $\mathbf{a} := ((a_{e,i})_{1\leq i \leq n_e})_{E \in L(\overline{W})}$.
Indeed, the integral affine manifold with singularities $B$ defines a Type IIB string theory background 
\[ \RR^{1,4} \times B \times \RR^3\,,\]
with a $(p_e,q_e)$ 7-brane on 
\[ \RR^{1,4} \times \{x_{e,i}\} \times \RR^2\]
for each of the focus-focus singularities $x_{e,i}$, $1 \leq i \leq n_e$, and the embedded graph 
$W \subset B$ determines a configuration of 5-branes
\begin{equation}
\label{Eq: R14}
     \RR^{1,4} \times W \times \{0\}
\end{equation}
in this background. 
The weights on the edges and legs of $W$ encode the fact that, for 
every $E \in L(\overline{W})$ and $1\leq i\leq n_e$, $a_{e,i}$ many 5-branes of type $(p_e,q_e)$ are ending on the 7-brane of type $(p_e,q_e)$ with position $x_{e,i}$.
\end{construction}

We also define asymptotic webs of 5-branes with 7-branes, in a way parallel to Definition \ref{def_asymptotic_5} for asymptotic webs of 5-branes:

\begin{definition} \label{def_asymptotic_57}
An \emph{asymptotic web of 5-branes with 7-branes} $W^{\mathrm{asym}}$ is a web of 5-branes with 7-branes defined by a 7-brane data for an asymptotic web of 5-branes as in Definition \ref{def_asymptotic_5}.
\end{definition}

As in \S \ref{section_tropical_web} for webs of 5-branes, every web of 5-branes with 7-branes $W$ is a deformation
of an asymptotic web of 5-branes with 7-branes $W^{\mathrm{asym}}$.

Unlike the case of webs of 5-branes, a web of 5-brane with 7-branes is not necessarily supersymmetric, and so might not define a 5d $\mathcal{N}=1$ SCFT \cite{BBT}. We will study the question of determining when a web of 5-branes with 7-branes is supersymmetric 
and defines a 5d SCFT in \S \ref{section_consistent}. 
To do this, we will first describe in \S \ref{section_webs_log_CY} how to attach a log Calabi--Yau surface and a line bundle to a web of 5-branes with 7-branes.

\subsection{Asymptotic webs of 5-branes with 7-branes and log Calabi--Yau surfaces} \label{section_webs_log_CY}

Let $W^{\mathrm{asym}}$ be an asymptotic web of 5-branes with 7-branes, defined by an asymptotic web of 5-branes $\overline{W}^{\mathrm{asym}}$ and 
a 7-brane data $\mathbf{a}= ((a_{e,i})_{1\leq i \leq n_e})_{e \in L(\overline{W}^{\mathrm{asym}})}$ as defined in \S\ref{def_web_57_at_infinity}.
In what follows, we explain how to attach to 
$W^{\mathrm{asym}}$
a log Calabi--Yau surface $(Y,D)$ and a line bundle $L$ on $Y$.

Recall from \S\ref{webs_toric_surfaces} that the asymptotic 5-brane $\overline{W}^{\mathrm{asym}}$ determines a toric surface $(\overline{Y}, \overline{D})$ with an ample line bundle $\overline{L}$. 
Every leg $e \in L(\overline{W}^{\mathrm{asym}})$
corresponds to a toric irreducible component $\overline{D}_e$ of $\overline{D}$. 
Denote by $\overline{D}_e^\circ$ the open stratum of $\overline{D}_e$, that is, the complement in $\overline{D}_e$ of the 0-dimensional strata of $\overline{D}$.
For every leg $e \in L(\overline{W}^{\mathrm{asym}})$, pick $n_{e}$ distinct points $y_{e,i}$ of
$\overline{D}_e^\circ$, where $1 \leq i \leq n_e$.
Let $Y$ be the projective surface obtained by blowing up the toric surface $\overline{Y}$ at the points $y_{e,i}$ for all $e \in L(\overline{W}^{\mathrm{asym}})$ and $1\leq i \leq n_e$. 
Then, denoting by $D$ the strict transform of $\overline{D}$ in $Y$, with irreducible components $D_e$, the pair $(Y,D)$ is a log Calabi--Yau surface. Finally, we define a line bundle $L$ on the surface $Y$ by 
\begin{equation} \label{eq_line_bundle_L}
L := (\pi^\star \overline{L}) \otimes \mathcal{O}_Y \left(-\sum_{e\in L(\overline{W}^{\mathrm{asym}})} \sum_{i=1}^{n_e} a_{e,i} E_{e,i} \right)\,,\end{equation}
where $\pi: Y \rightarrow \overline{Y}$ is the blow-up map, and $E_{e,i}=\pi^{-1}(y_{e,i})$
are the exceptional divisors in $Y$ over the points $y_{e,i}$ -- see Figure \ref{Fig31}.

\begin{figure}[hbt!]
\center{\scalebox{0.9}{\includegraphics{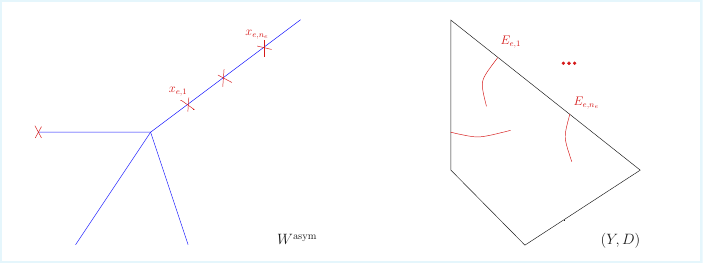}}}
\caption{The log Calabi--Yau surface associated to an asymptotic web of 5-branes with 7-branes.}
\label{Fig31}
\end{figure}

The above definition of $(Y,D)$ depends on the choice of the points $y_{e,i}$ to blow up on the toric boundary $\overline{D}$ of $\overline{Y}$. Modifying the position of points $y_{e,i}$ changes the complex structure on $Y$. 
In particular, the deformation class of the log Calabi--Yau surface with line bundle $(Y,D,L)$ is independent of any choice. However, depending on the ``type'' of a web $W^{\mathrm{asym}}$ as discussed below, we will impose certain restrictions on the point configuration $y_{e,i}$.
There are two possible types of asymptotic webs of $5$-branes with $7$-branes, $W^{\mathrm{asym}}$, depending on if all $5$ branes end on $7$-branes or not.

\begin{definition}
We call an asymptotic web of $5$-branes with $7$-branes $W^{\mathrm{asym}}$ of \emph{Type I}, respectively of \emph{Type II}, in the following situations:
\begin{itemize}
    \item[i)] \textbf{Type I}: If $W^{\mathrm{asym}}$ contains at least a leg, that is at least one 5-brane of $W^{\mathrm{asym}}$ does not end on a 7-brane. Equivalently, we have $L \cdot D >0$.
    In particular, there exists $e \in L(\overline{W}^{\mathrm{asym}})$ such that 
    \begin{equation}
    \mathrm{deg}\, L|_{D_e} = w_e - \sum_{i=1}^{n_e} a_{e,i} >0\,.
    \end{equation}
   \item[ii)] \textbf{Type II}: If $W^{\mathrm{asym}}$ does not contain any leg, that is all 5-branes end on 7-branes. Equivalently, we have $L \cdot D=0$, that is,
     \begin{equation} \label{eq_deg_0}\mathrm{deg}\, L|_{D_e} = w_e - \sum_{i=1}^{n_e} a_{e,i} =0\,,\end{equation}
     for all $e \in L(\overline{W}^{\mathrm{asym}})$. 
\end{itemize}
\end{definition}

For Type I webs $W^{\mathrm{asym}}$, we do not impose any restrictions on the configuration of the points $y_{e,i}$.
In particular, by \cite[Proposition 2.9]{GHK_moduli}, one can assume that $(Y,D)$ is \emph{generic in its deformation class}, in the sense that there does not exist any effective line bundle $L'$ on $Y$ whose restriction $L'|_{D}$ is trivial.

For Type II webs, we require that the points $y_{e,i}$ 
satisfy the \emph{Menelaus condition} with respect to $L$, that is, that the line bundle $L|_D$ is trivial -- see for example \cite{Mik_menelaus} for a previous use of this terminology. By \cite[Proposition 2.9]{GHK_moduli}, one can then assume that $(Y,D)$ is \emph{generic among its deformations preserving the Menelaus condition with respect to $L$}, in the sense that the only effective line bundles $L'$ on $Y$ with trivial restriction $L'|_{D}$ are proportional to $L$.

\begin{remark}
\label{Rem: Menelaus}
     Since $D$ is a nodal curve of arithmetic genus one, 
     the group of $\mathrm{Pic}^0(D)$ of line bundles on $D$ with degree $0$ on every irreducible component of $D$ is isomorphic to $\CC^\star$. 
     By Equation \ref{eq_deg_0}, we have $L|_D \in \mathrm{Pic}^0(D)$, and so the Menelaus condition can be written as $L|_D=1 \in \CC^\star$. Recall that the points $y_{e,i}$ lie in the interior of the big torus orbits in the boundary components $\overline{D}_e$'s of $\overline{D}$. Denote by $z_e$ a coordinate on the component $\overline{D}_e$, with the convention that we choose an orientation $\overline{D}$ and denote the coordinates of the two nodal points intersecting adjacent components by $0$ and $\infty$ respectively, following this orientation.
    Denoting by $z_e(y_{e,i}) \in \CC^*$ the coordinates of $y_{e,i}$, the Menelaus condition amounts to the requirement:
     \begin{equation}
\label{eg: Menelaus}
         \prod_{e \in  L(\overline{W}^{\mathrm{asym}})} \prod_{i=1}^{n_e} z_e(y_{e,i})^{a_{e,i}} = 1 \,.
     \end{equation}
     In particular, the Menelaus condition is a codimension one condition in the moduli space $\prod_{e \in  L(\overline{W}^{\mathrm{asym}})} (\overline{D}_e^{\circ})^{n_e}$ of configurations of points $y_{e,i}$.
\end{remark}

\begin{definition}
    \label{Def: generic}
A log Calabi--Yau surface $(Y,D)$ with a line bundle $L$ associated to an asymptotic web of $5$-branes with $7$-branes  is called \emph{generic with respect to $L$}, if $(Y,D)$ and $L$ satisfy the following:

\begin{itemize}
    \item[i)] If $W^{\mathrm{asym}}$ is of type I, then $(Y,D)$ is generic in its deformation class.
\item[ii)]  If $W^{\mathrm{asym}}$ is of type II, then $(Y,D)$ is generic among its deformations preserving the Menelaus condition with respect to $L$.
\end{itemize}
\end{definition}

In the remainder of the paper, given the data of a logarithmic Calabi--Yau surface $(Y,D)$ with a line bundle $L$, associated with $W^{\mathrm{asym}}$, we assume that $(Y,D)$ is generic with respect to $L$.

\begin{example}
Let $Y=\PP^2$ and $D$ a triangle formed by three lines $\ell_1, \ell_2$, and $\ell_3$, then three points $p_1 \in \ell_1$, $p_2 \in \ell_2$, and $p_3 \in \ell_3$ satisfy the Menelaus condition if and only if they are collinear. This is the set-up of the classical Menelaus theorem in elementary geometry.
\end{example}

\begin{remark}
\label{rem: compatibility}
    If $(Y,D)$ is the log Calabi--Yau surface corresponding to an asymptotic web of 5-branes with 7-branes $W^{\mathrm{asym}}$, then the complement $U:=Y 
    \setminus D$ is the open Calabi--Yau surface attached
    as in \S \ref{sec:seven_branes}
    to the asymptotic configuration of 7-branes in $W^{\mathrm{asym}}$. The 5-branes in $W^{\mathrm{asym}}$ determine a particular log Calabi--Yau compactification $(Y,D)$ of $U$ and a line bundle $L$ on $Y$. 
\end{remark}

\subsection{Webs of $5$-branes with $7$-branes and degenerations of log Calabi--Yau surfaces} \label{sec: degeneration}

In this section, we explain how a web of $5$-branes with $7$-branes $W$
defines a one-parameter degeneration $(\mathcal{Y},\mathcal{D}) \to \mathbb{C}$ of the log Calabi--Yau surface $(Y,D)$, associated to $W^{\mathrm{asym}}$ as in \S \ref{subsec:Webs of 5-branes with 7-branes}. We also explain how to construct a degeneration $\mathcal{L}$ of the line bundle $L$ on $Y$.

Let $W$ be a web of $5$-branes with $7$-branes, obtained from a web of $5$ branes $\overline{W}$ which is dual to an integral convex piecewise linear function as in \S\ref{section_degeneration_toric_surfaces}.
As reviewed in \S\ref{section_degeneration_toric_surfaces}, 
the web $\overline{W}$ defines a polarized toric degeneration 
\[  (\overline{\mathcal{Y}}, \overline{\mathcal{D}}, \overline{\mathcal{L}}
) \longrightarrow \CC \,,
\]
with general fiber the polarized toric surface $(\overline{Y},\overline{D},\overline{L})$ defined by the associated asymptotic web $\overline{W}^{\mathrm{asym}}$ as in \S\ref{section_toric_surfaces}.

We obtain the degeneration $\mathcal{Y} \to \CC$ from 
$\overline{\mathcal{Y}} \to \CC$ as follows. For each leg $e \in L(\overline{W})$, and for $1 \leq i \leq n_e$, pick a section $\sigma_{e,i}$ of $\overline{\mathcal{Y}} \to \CC$ satisfying the following conditions:
\begin{itemize}
    \item[i)] for all $t \in \CC \setminus \{ 0 \}$, $\sigma_{e,i}(t)$ is contained in the big torus orbit of the irreducible component $\overline{D}_{t,e} \subset \overline{D}$, corresponding to the leg in the asymptotic fan parallel to $e$.
    \item[ii)] For $t = 0$, $\sigma_{e,i}(0)$ is contained in the big torus orbit of the irreducible component $\overline{D}_{0,e}$ in the double locus of $\overline{\mathcal{Y}}_0$, corresponding to the leg parallel to $e$.
\end{itemize}

Now, we set $\mathcal{Y}$ to be the blow-up of $\overline{\mathcal{Y}}$ along the image of the sections $\sigma_{e,i}$, and denote by $\mathcal{D}$ the strict transform of $\overline{\mathcal{D}}$.
In addition, we define a line bundle $\mathcal{L}$ on $\mathcal{Y}$ by 
\begin{equation} \label{eq_line_bundle_LL}
\mathcal{L} := (p^\star \overline{\mathcal{L}}) \otimes \mathcal{O}_{\mathcal{Y}} \left(-\sum_{e\in L(\overline{W}^{\mathrm{asym}})} \sum_{i=1}^{n_e} a_{e,i} \mathcal{E}_{e,i} \right)\,,\end{equation}
where $p: \mathcal{Y} \rightarrow \overline{\mathcal{Y}}$ is the blow-up map, and $\mathcal{E}_{e,i}$
are the exceptional divisors in $\mathcal{Y}$ over the images of the sections $\sigma_{e,i}$.
By construction, the restriction of $\mathcal{L}$ to a general fiber $(\mathcal{Y}_t,\mathcal{D}_t)$, which is deformation equivalent to $(Y,D)$, 
is deformation equivalent to the line bundle $L$ defined in Equation \eqref{eq_line_bundle_L}. From now on we refer to 
\begin{equation} \label{eq: degeneration}
 (\mathcal{Y},\mathcal{D},\mathcal{L}) \longrightarrow \CC \,,
\end{equation}
as the \emph{degeneration of log Calabi--Yau surfaces associated to $W$}.

\section{Consistent webs of $5$-branes with $7$-branes}
\label{section_consistent_webs_57}

In \S\ref{section_consistent}, we provide a geometric definition of supersymmetric webs of 5-branes with 7-branes and of a more restricted class of consistent web of 5-branes with 7-branes. In \S\ref{section_geometric_classification}, we prove a geometric classification of 5d SCFTs defined by consistent webs of 5-branes with 7-branes. In \S\ref{section_consistent_webs}, we establish a relationship between consistent webs of 5-branes with 7-branes and the birational geometry of one-parameter degenerations of log Calabi--Yau surfaces. In \S\ref{sec pushing}, we reformulate the existence of minimal models for these degenerations in terms of pushing 7-branes in the web of 5-branes with 7-branes. After describing local examples of pushing 7-branes in \S\ref{sec_examples}, 
we define generic consistent webs of 5-branes with 7-branes and we give a characterization of these webs in terms of pushing 7-branes in \S\ref{section_fully_pushed}.

\subsection{Consistent asymptotic webs of $5$-branes with $7$-branes and 5d SCFTs} \label{section_consistent}
In this section, we first define a geometric notion of  \emph{supersymmetric} asymptotic webs, that are expected to define 5d SCFTs. Then, we introduce the more restrictive notion of \emph{consistent} asymptotic webs, and we argue that any 5d SCFT defined by a supersymmetric web can be obtained from a 5d SCFT defined by a consistent web by adding free hypermultiplets. 
In particular, restricting attention to consistent webs entails no loss of generality when constructing non-trivial 5d SCFTs.

Let $W^{\mathrm{asym}}$ be an asymptotic web of $5$-branes with $7$-branes, and $(Y,D,L)$ the log Calabi--Yau surface  with line bundle associated to $W^{\mathrm{asym}}$, 
so that $(Y,D)$ is generic with respect to $L$ as in Definition \ref{Def: generic}. 
Then,  $W^{\mathrm{asym}}$ is called \emph{supersymmetric} 
if the 5d theory on the common intersection $\RR^{1,4}$ of the $5$-branes 
as in Equation \eqref{Eq: R14} has $\mathcal{N}=1$ supersymmetry, that is, $8$ real supercharges. We can then expect the low-energy description of this theory to define a 5d SCFT.
We give a geometric characterization of supersymmetric asymptotic webs in Definition \ref{def_supersymmetric_asymp_web} below.

Recall that $U= Y\setminus D$ is the open Calabi--Yau surface associated to the configuration of $7$-branes in $W^{\mathrm{asym}}$. It follows from the duality between Type IIB string theory on $S^1$ and M-theory on $T^2$, and from the M-theory interpretation of $U$ given in \S\ref{section_string_open_CY} that, if the 5d theory defined by $W^{\mathrm{asym}}$ has $\mathcal{N}=1$ supersymmetry, then, the 4d $\mathcal{N}=2$ theory obtained by compactifying this 5d theory on $S^1$ should admit a dual description as the  4d $\mathcal{N}=2$ theory on $\RR^{1,3}$
defined by M-theory on $\RR^{1,3} \times U \times \RR^3$ with an M5-brane wrapped on $\RR^{1,3} \times C^\circ$, where $C^\circ$ is an algebraic curve in $U$, the \emph{Seiberg--Witten curve} of the 4d $\mathcal{N}=2$ theory. Moreover, the compactification $C$ of $C^\circ$ in $Y$ should be an algebraic curve in $Y$ contained in the linear system $|L|$ of the line bundle $L$, and not containing any 
0-dimensional stratum of $D$.
In the T-dual context of string junctions ending on 7-branes, the existence of an M-theory dual holomorphic curve as a criterion for supersymmetry has been discussed in
    \cite{constraintsBPS, Istringjunctions, KSstringjunction, KLstringnetwork, MOstringjunction,  MNS}.

This motivates the following mathematical definition of supersymmetric asymptotic webs.

\begin{definition} \label{def_supersymmetric_asymp_web}
An asymptotic web of 5-branes with 7-branes $W^{\mathrm{asym}}$
is \emph{supersymmetric} if, denoting by $(Y,D,L)$ the log Calabi--Yau surface with line bundle associated to $W^{\mathrm{asym}}$, there exists a (possibly disconnected) curve $C$ in the linear system $|L|$ and not containing any 
0-dimensional stratum of $D$.
\end{definition}

As described in \S\ref{section_webs_log_CY}, the log Calabi--Yau surface $(Y,D)$ is obtained from a toric surface 
$(\overline{Y},\overline{D})$ by interior blow-ups of points $y_{e,i}$ on $\overline{D}$, with exceptional curves $E_{e,i}$. Moreover, by Equation 
\eqref{eq_line_bundle_L}, we have $L=p^\star \overline{L} \otimes \mathcal{O}(-\sum_{e,i} a_{e,i}E_{e,i})$, where $a_{e,i} \in \ZZ_{\geq 1}$ is the number of 5-branes ending on the 7-brane $x_{e,i}$, and $p: \overline{Y} \rightarrow Y$ is the blow-up map. In particular, if $C$ is a smooth curve in $|L|$ that does not contain any of the exceptional curves $E_{e,i}$, we have $C \cdot E_{e,i}=a_{e,i}$, and the image $\overline{C}:=p(C)$ of $C$ in $\overline{Y}$ is a curve in the linear system $|\overline{L}|$ passing through the point $y_{e,i}$
with multiplicity $a_{e,i}$. In particular, $y_{e,i}$ is a singular point of $\overline{C}$ if $a_{e,i}>1$, and $p|_C: C \rightarrow \overline{C}$ is a resolution of the singularities of $\overline{C}$ that separate the $a_{e,i}$ branches of $\overline{C}$ passing through $y_{e,i}$ -- see Figure \ref{Fig33}.

\begin{remark}
    In the physics literature, the proposed Seiberg--Witten curves for the compactification on $S^1$ of a 5d SCFT defined by a web of 5-branes with 7-branes are the curves $\overline{C}^\circ = \overline{C} \cap (\CC^\star)^2$. For instance, the condition that $y_{e,i}$ is a multiple point of the curve $\overline{C}$ with multiplicity $a_{e,i}$ is equivalent to the divisibility conditions on Laurent polynomials proposed in \cite[(2.15)]{SW_curve} to characterize Seiberg--Witten curves. Instead, we propose to view the curve $C^\circ =C \cap U$ in the open Calabi--Yau surface $U$ as the proper description of the Seiberg--Witten curve.
\end{remark}

\begin{figure}[hbt!]
\center{\scalebox{0.9}{\includegraphics{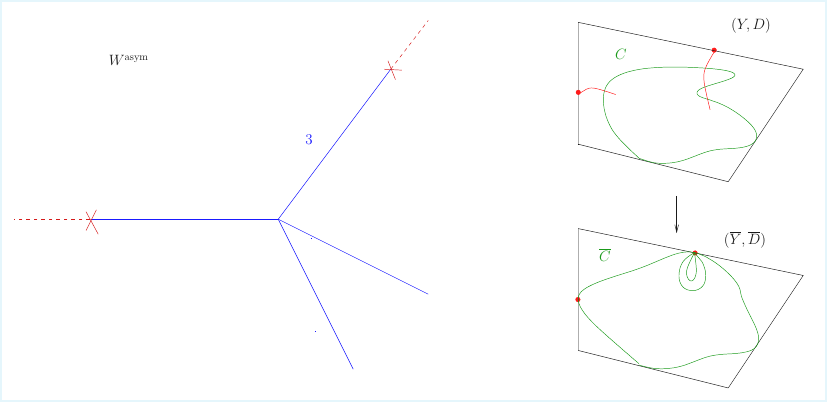}}}
\caption{A curve $C \in |L|$ in the log Calabi--Yau surface $(Y,D)$, and the corresponding curve $\overline{C} \in |\overline{L}|$ in the toric surface $(\overline{Y},\overline{D})$.}
\label{Fig33}
\end{figure}

In what follows, we compare the notion of supersymmetric asymptotic web given by Definition \ref{def_supersymmetric_asymp_web} 
with supersymmetric conditions usually formulated in the physics literature in terms of Hanany--Witten 
moves on webs of 5-branes with 7-branes.
Recall from \S\ref{sec:seven_branes} that Hanany--Witten moves on 
configurations of $7$-branes are defined by pushing a focus-focus singularity across the origin, and rotating the adjacent cut clockwise, while changing the positions of all other $7$-branes by a shear as in Equation \eqref{eq_seed_mutation}. 
Hanany--Witten moves on webs of $5$-branes with $7$-branes change in addition the configuration of $5$-branes as follows.
Consider the Hanany--Witten move obtained by moving the $(p_e, q_e)$ $7$-brane $x_{e,i}$ across the origin, for a leg $e\in L(W^{\mathrm{asym}})$ and some $1\leq i \leq n_e$. The 7-brane $x_{e,i}$ becomes a $(-p_e, -q_e)$ $7$-brane $x'_{e,i}$, and the other 7-branes are changed according to the shear transformation defined by Equation \eqref{eq_seed_mutation}. We apply the same shear transformation to all the $5$-branes that do not end at $x_{e,i}$. 
Moreover, the $5$-branes that ended in $x_{e,i}$ disappear, and the number $a_{e,i}'$ of new 5-branes passing through the origin and ending on $x_{e,i}'$ is determined by imposing the balancing condition of the 5-brane charge at the origin. The balancing condition uniquely determines $a_{e,i}'$ as an integer, but this integer is not necessarily positive. 
Denoting by $w_{-e} \in \ZZ_{\geq 0}$ the weight of the leg of direction $(-p_e, -q_e)$ in $\overline{W}^{\mathrm{asym}}$, so that the weight of the edge
connecting the origin to $x_{e,i}'$ is $w_{-e} +a_{e,i}'$, see Figure \ref{Fig:HW_web}, we are in one of the three following situations:
\begin{itemize}
    \item[i)] \textbf{Case I}: We have
    \begin{equation}
\label{Eq: case 1}
a_{e,i}' < -w_{-e}   \,.     
    \end{equation}
In this situation, the resulting configuration of $5$-branes with $7$-branes $(W^{\mathrm{asym}})'$ obtained by the Hanany--Witten move along $x_{e,i}$ contains an edge with negative multiplicity.
     \item[ii)] \textbf{Case II}: We have
   \begin{equation}
\label{Eq: case 2}
-w_{-e} \leq a_{e,i}'\leq 0 \,.
    \end{equation}   
In this situation, the 7-brane $x_{e,i}'$ can be viewed as the endpoint of $-a_{e,i}'$ 5-branes which come from infinity, or ending on another 7-brane, see Figure \ref{Fig19}. 
These 5-branes are not connected to the part of the web of 5-branes containing the origin, and so can be pushed to infinity together with the 7-brane $x_{e,i}'$. 
      \item[iii)] \textbf{Case III}: We have:
 \begin{equation}
\label{Eq: case 3}
a_{e,i}' >0  \,.
    \end{equation}       
In this situation, the resulting configuration of $5$-branes with $7$-branes $(W^{\mathrm{asym}})'$ obtained by the Hanany--Witten move along $x_{e,i}$ is another asymptotic web of 5-branes with 7-branes as in Definition \ref{def_asymptotic_57}.   
\end{itemize}

\begin{figure}[hbt!]
\center{\scalebox{1.0}{\includegraphics{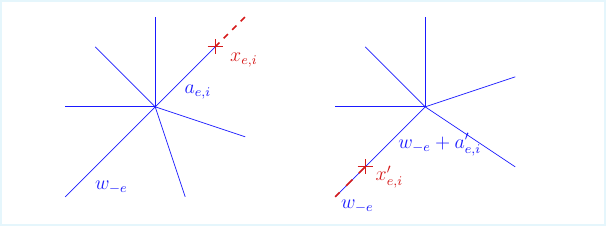}}}
\caption{Hanany--Witten move for a web of 5-branes with 7-branes.}
\label{Fig:HW_web}
\end{figure}

In the physics literature, Case I is interpreted as evidence that the web is not supersymmetric since an edge with a negative multiplicity can be viewed as a supersymmetry-breaking anti-5-brane. 
We show in Lemma \ref{lem_susy} below that this interpretation is compatible with our geometric Definition \ref{def_supersymmetric_asymp_web} of a supersymmetric web. To do that, we first explain that 
the above combinatorial description of Hanany--Witten moves at the level of webs of 5-branes with 7-branes has a natural geometric interpretation in terms of elementary cluster transformations of log Calabi--Yau surfaces, 
as discussed in \S\ref{section_geometric_HW} in the case of configurations of 7-branes.

Recall from \S \ref{section_webs_log_CY} that the log Calabi--Yau surface $(Y,D)$ is a non-toric blow-up of a toric surface $(\overline{Y},\overline{D})$ with fan given by the support of the web of 5-branes $\overline{W}^{\mathrm{asym}}$. 
Up to replacing $(\overline{Y},\overline{D})$ by a toric blow-up, one can assume that the fan of $(\overline{Y},\overline{D})$ contains both rays $\RR_{\geq 0}(p_e,q_e)$, 
and $\RR_{\geq 0}(-p_e,-q_e)$.
Then, as in \S\ref{section_geometric_HW}, 
the projection $\RR^2 \rightarrow \RR^2/\RR(p_e,q_e)=\RR$ induces a toric morphism $\nu: \overline{Y} \rightarrow \PP^1$, which is generically a $\PP^1$-fibration. 
Moreover, the irreducible components $\overline{D}_e$
and $\overline{D}_e'$ of $\overline{D}$ corresponding respectively to the rays $\RR_{\geq 0}(p_e,q_e)$ and $\RR_{\geq 0}(-p_i,-q_i)$ of the fan, are sections of $\nu$. The construction of $Y$ from $\overline{Y}$ involves blowing up the point $y_{e,i}$ on $\overline{D}_e$, which 
creates an exceptional $(-1)$-curve $E_{e,i}$ in $Y$. The strict transform of the $\PP^1$-fiber of $\nu$ passing through $y_{e,i}$ is also a $(-1)$-curve $E_{e,i}'$, which intersects the strict transform $D_e'$ of $\overline{D}_e'$. 
Hence, one can construct a new toric surface $\overline{Y}'$ from $Y$ by contracting all the exceptional curves coming from blowing up the points $y_{f,j}$ with $(f,j) \neq (e,i)$, and the $(-1)$-curve $E_{e,i}'$. 
Geometrically, the Hanany--Witten move from $x_{e,i}$ to $x_{e,i}'$ encodes the passage from describing $Y$ as a blow-up of the toric surface $\overline{Y}$ to describing $Y$ as a blow-up of the toric surface $\overline{Y}'$. 
In particular, the weight $w_{-e}$ in $W^{\mathrm{asym}}$
is given by the intersection number
\begin{equation}\label{eq_w}
w_{-e}=L \cdot D_e' \,,\,,\end{equation}
and the number $a_{e,i}'$ of 5-branes ending on the 7-brane $x_{e,i}'$ is given by the intersection number 
\begin{equation} \label{eq_a}
a_{e,i}' = L \cdot E_{e,i}'\,.\end{equation}

We can finally state how the Definition \ref{def_supersymmetric_asymp_web} of supersymmetric webs is related to Hanany--Witten moves:

\begin{lemma} \label{lem_susy}
Let $W^{\mathrm{asym}}$ be a supersymmetric asymptotic web of 5-branes with 7-branes. Then, edges of negative multiplicities are never produced by applying a sequence of Hanany--Witten moves to $W^{\mathrm{asym}}$, that is, Case I as in Equation \eqref{Eq: case 1} never occurs. 
\end{lemma}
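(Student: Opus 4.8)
The plan is to translate the combinatorial condition ``Case~I occurs'' into an intersection-theoretic statement on the log Calabi--Yau surface $(Y,D)$ and to derive a contradiction with the existence of a curve $C \in |L|$ not passing through the $0$-dimensional strata of $D$. First I would recall from the discussion preceding the lemma that performing a Hanany--Witten move along a $7$-brane $x_{e,i}$ corresponds geometrically to re-presenting $(Y,D)$ as a non-toric blow-up of a different toric surface $(\overline{Y}',\overline{D}')$, in which the relevant exceptional $(-1)$-curve $E_{e,i}'$ is the strict transform of the $\PP^1$-fiber through $y_{e,i}$, and $D_e'$ is the strict transform of the component $\overline{D}_e'$ meeting $E_{e,i}'$. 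By Equations \eqref{eq_w} and \eqref{eq_a}, the quantities appearing in Case~I are $w_{-e} = L \cdot D_e'$ and $a_{e,i}' = L \cdot E_{e,i}'$, so the inequality \eqref{Eq: case 1} reads $L \cdot (E_{e,i}' + D_e') < 0$.

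Next I would observe that $E_{e,i}'$ and $D_e'$ meet transversally at one point (the fiber $E_{e,i}'$ is a section-complementary $(-1)$-curve meeting the section $D_e'$), so $E_{e,i}' + D_e'$ is an effective, connected, reduced curve, and its class is fixed along the deformation. The key point is that any Hanany--Witten move leaves $(Y,D,L)$ unchanged up to deformation equivalence (and corner blow-ups), so the curve $C \in |L|$ still exists on the surface presented via $(\overline{Y}',\overline{D}')$. If $C$ does not contain $E_{e,i}'$ or $D_e'$ as a component, then $L \cdot (E_{e,i}' + D_e') = C \cdot (E_{e,i}' + D_e') \geq 0$ since $C$ and $E_{e,i}' + D_e'$ share no components, contradicting \eqref{Eq: case 1}. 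So it remains to rule out the case where $C$ contains one of these curves. Since $C$ does not meet the $0$-dimensional strata of $D$, it cannot contain the boundary component $D_e'$; and if $C \supset E_{e,i}'$, then writing $C = E_{e,i}' + C'$ with $C'$ effective, I would argue that $C'$ still does not meet the $0$-dimensional strata of $D$ and use induction on the number of such exceptional components contained in $C$ (each removal strictly decreases $L \cdot C$ only by a nonnegative amount bounded appropriately, or more simply decreases the total degree) to reduce to the case where $C$ contains none of the $E_{f,j}'$, at which point the previous paragraph applies. One should be slightly careful that after removing $E_{e,i}'$ the leftover class is still effective and still avoids the strata; this follows because $E_{e,i}'$ meets $D$ only along the smooth locus of the component $D_e'$.

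Alternatively, and perhaps more cleanly, I would argue directly on the original presentation: a sequence of Hanany--Witten moves corresponds to a sequence of elementary cluster transformations, and composing them exhibits $E_{e,i}'$ as (the strict transform of) an effective curve class on the fixed surface $Y$ whose intersection with $D$ is supported away from the nodes; since $L$ is the class of the effective divisor $C$ avoiding the nodes of $D$, and $C$ can be chosen (by genericity of $(Y,D)$ with respect to $L$, Definition~\ref{Def: generic}) to not contain any fixed curve avoiding the boundary $0$-strata, we get $L \cdot (E_{e,i}' + D_e') \geq 0$. The main obstacle I anticipate is precisely the bookkeeping of which curves $C$ might be forced to contain after a long sequence of moves, i.e.\ making the ``$C$ shares no component with $E_{e,i}' + D_e'$'' step rigorous; the genericity hypothesis on $(Y,D,L)$ and the fact that $D_e'$ is a boundary component (so automatically not a component of $C$) are what make this work, and I would lean on Definition~\ref{Def: generic} and the structure of $|L|$ to handle the $E_{f,j}'$ components.
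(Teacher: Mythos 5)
Your overall approach is the same as the paper's: translate the Case~I condition into the intersection-theoretic inequality $L \cdot (E_{e,i}' + D_e') < 0$ via Equations~\eqref{eq_w}--\eqref{eq_a}, then contradict it using the curve $C \in |L|$. You also correctly identify the two points that need care: $C$ cannot contain $D_e'$ (it avoids the $0$-strata of $D$, hence contains no component of $D$), and one must handle the possibility that $C$ contains copies of $E_{e,i}'$.

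Where you diverge from the paper's argument is exactly the piece you flag as "the main obstacle": handling $E_{e,i}' \subset C$. You propose an induction on the number of exceptional components in $C$, arguing that each removal "strictly decreases $L \cdot C$ ... by a nonnegative amount." This claim is not correct as stated: $L \cdot E_{e,i}' = a_{e,i}'$, which is precisely the quantity that could be negative in Case~I, so peeling off a copy of $E_{e,i}'$ may \emph{increase} $L \cdot C'$. The induction can be repaired (e.g.\ by inducting on the multiplicity of $E_{e,i}'$ in $C$, or on degree with respect to a fixed ample class), but it is unnecessary. The paper's cleaner observation, which you do not make explicit, is that no induction is needed at all: writing $C = C' + wE_{e,i}'$ with $C'$ the union of components of $C$ distinct from $E_{e,i}'$, the contribution of $wE_{e,i}'$ to $C\cdot(E_{e,i}'+D_e')$ is $w\bigl((E_{e,i}')^2 + E_{e,i}'\cdot D_e'\bigr) = w(-1+1)=0$, an exact cancellation. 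Hence $C\cdot(E_{e,i}'+D_e') = C'\cdot(E_{e,i}'+D_e') \geq 0$ immediately, since $C'$ shares no component with $E_{e,i}'+D_e'$. You also speculate that genericity of $(Y,D,L)$ (Definition~\ref{Def: generic}) is needed to run the argument; it is not — the paper's proof is a pure intersection computation making no use of genericity. Once you see the $(E_{e,i}')^2 + E_{e,i}'\cdot D_e' = 0$ cancellation, the entire bookkeeping you were worried about disappears.
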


\begin{proof}
Since $W^{\mathrm{asym}}$ is supersymmetric, it follows from Definition \ref{def_supersymmetric_asymp_web} that there exists a curve $C$ in $|L|$ not containing any 0-dimensional stratum of $D$. Keeping the notation used in the discussion of the geometric interpretation of Hanany--Witten moves, let $E_{e,i}'$ be the $(-1)$-curve corresponding to the 7-brane $x_{e,i}'$.  
Denote by $C'$ the union of irreducible components of $C$ not containing $E_{e,i}'$, so that $C=C'+wE_{e,i}'$ for some $w \in \ZZ_{\geq 0}$.
By Equations \eqref{eq_a} and \eqref{eq_w}, we have 
\[ a_{e,i}'+w_{-e}= C \cdot E_{e,i}' + C \cdot D_e'=C' \cdot E_{e,i}'-w +C' \cdot D_{e}' +w = C' \cdot E_{e,i}' +C' \cdot D_{e}'\,.\]
Since $C'$ does not contain $E_{e,i}'$, we have $ C' \cdot E_{e,i}' \geq 0$. On the other hand, since $C'$ does not contain any 0-dimensional stratum of $D$, we also have $C' \cdot D_{e}'\geq 0$. Hence, we conclude that 
$a_{e,i}'+w_{-e} \geq 0$, that is, we are not in Case I as in Equation \eqref{Eq: case 1}.
\end{proof}

Finally, we define below a \emph{consistent} asymptotic web of 5-branes with 7-branes as a supersymmetric web such that, for any sequence of Hanany--Witten moves, we are in Case III and never in Case II, that is, no 7-brane and no 5-brane can be detached from the web and moved freely to infinity after any sequence of Hanany--Witten moves. We first give a geometric reformulation of this condition in Lemma \ref{lem_caseIII_consistent} below using the following terminology: An \emph{interior} (-1)-curve in a log Calabi--Yau surface $(Y,D)$ is a rational curve $E$ in $Y$ such that $E^2=-1$ and which is not contained in $D$, so that $E \cdot D=1$ by the adjunction formula.
An \emph{internal} (-2)-curve in $(Y,D)$ is a rational curve $E$ in $Y$ such that $E^2=-2$ and which is not contained in $D$, so that $E\cdot D=0$ by the adjunction formula.

\begin{lemma} \label{lem_caseIII_consistent}
    Let $W^{\mathrm{asym}}$ be a supersymmetric asymptotic web of 5-branes with 7-branes, with associated log Calabi--Yau surface with line bundle $(Y,D,L)$. Then, the following are equivalent:
    \begin{itemize}
        \item[i)] For any sequence of Hanany--Witten moves applied to $W^{\mathrm{asym}}$, we are always in Case III as in Equation \eqref{Eq: case 3}, and never in Case II as in Equation \eqref{Eq: case 2}.
        \item[ii)] For every curve $E$ in $Y$ whose strict transform in a corner blow-up of $Y$ is either an interior $(-1)$-curve or an internal $(-2)$-curve, we have $L \cdot E>0$.
    \end{itemize}
\end{lemma}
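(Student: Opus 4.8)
The plan is to prove the equivalence by analyzing the geometry of Hanany--Witten moves in terms of the intersection-theoretic quantities \eqref{eq_w} and \eqref{eq_a}, combined with the observation that the $(-1)$-curves $E_{e,i}'$ produced by pushing a $7$-brane across the origin are exactly the interior $(-1)$-curves arising from blow-ups, and that internal $(-2)$-curves arise in the limiting case where $\deg L|_{D_e'} = 0$.

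First I would establish $(i) \Rightarrow (ii)$ in contrapositive form. Suppose there exists a curve $E$ in $Y$ whose strict transform in some corner blow-up is either an interior $(-1)$-curve or an internal $(-2)$-curve, with $L \cdot E \leq 0$. Since $E$ is not contained in $D$ and meets $D$ in at most one point, the pair $(E, E\cdot D)$ behaves like the exceptional locus of a single interior blow-up; using the results of \S\ref{section_geometric_HW} on toric models and elementary cluster transformations, I would argue that one can choose a toric model $(Y,D) \to (\overline{Y}, \overline{D})$ and a sequence of Hanany--Witten moves so that $E$ becomes the distinguished $(-1)$-curve $E_{e,i}'$ (in the $(-1)$ case) or so that $E$ becomes an internal $(-2)$-curve meeting a component $D_e'$ with $\deg L|_{D_e'}=0$ (in the $(-2)$ case). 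In the first case, \eqref{eq_a} gives $a_{e,i}' = L \cdot E_{e,i}' = L \cdot E \leq 0$, so we are in Case I or Case II; but Case I is excluded by Lemma \ref{lem_susy} since $W^{\mathrm{asym}}$ is supersymmetric, so we land in Case II, contradicting $(i)$. In the $(-2)$ case, $E$ together with the fiber component realizes a configuration where, after the move, the $-a_{e,i}'$ branes on the new $7$-brane exactly saturate $w_{-e}$, i.e.\ $a_{e,i}' + w_{-e} = 0$, which is the boundary of Case II; I would check this also forces the detachment phenomenon of Case II. Either way $(i)$ fails.

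For $(ii) \Rightarrow (i)$, I would reverse the analysis: suppose some sequence of Hanany--Witten moves lands in Case II, so $-w_{-e} \leq a_{e,i}' \leq 0$ for the relevant move, i.e.\ $0 \leq a_{e,i}' + w_{-e} \leq w_{-e}$. By the geometric dictionary, after the preceding moves we have presented $(Y,D)$ via a toric model in which $E_{e,i}'$ is an interior $(-1)$-curve and $D_e'$ is a boundary component with $\deg L|_{D_e'} = w_{-e}$. From \eqref{eq_a}, $L \cdot E_{e,i}' = a_{e,i}' \leq 0$. If $a_{e,i}' < 0$ this already violates $(ii)$ directly via the interior $(-1)$-curve $E_{e,i}'$. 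If $a_{e,i}' = 0$, then $L \cdot E_{e,i}' = 0$; here I would need to produce an internal $(-2)$-curve: the point is that $E_{e,i}'$ meets $D_e'$, and since the whole chain of $-a_{e,i}'$ branes is detached, one can perform a corner blow-up and contract to exhibit a $(-2)$-curve (the strict transform of $E_{e,i}' \cup D_e'$-adjacent fiber components, or directly $E_{e,i}'$ after absorbing its intersection with $D$ into a corner blow-up) with $L$-degree zero, again contradicting $(ii)$. So no sequence of moves produces Case II, and since Case I is already excluded by supersymmetry (Lemma \ref{lem_susy}), every move is Case III, which is $(i)$.

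The main obstacle I anticipate is the bookkeeping in the boundary case $a_{e,i}'=0$ / internal $(-2)$-curves: making precise the claim that \emph{every} interior $(-1)$-curve or internal $(-2)$-curve (in some corner blow-up) can be realized as the distinguished curve $E_{e,i}'$ for an appropriate toric model and Hanany--Witten sequence. This requires invoking the connectivity of toric models under corner blow-ups and elementary cluster transformations (as used in the proof of Theorem \ref{thm_7_branes}, via \cite[Proposition 3.27]{HK_HMS}) together with a careful tracking of the line bundle $L$ through these birational operations, since $L$ is not intrinsic to $(Y,D)$ but determined by the brane data. A secondary subtlety is ensuring that in Case II the detached $5$-branes genuinely correspond to a curve class that decomposes off a component of any member of $|L|$, which ties back to the same intersection computation as in Lemma \ref{lem_susy}.
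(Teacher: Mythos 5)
Your proposal is correct and follows essentially the same route as the paper: both directions rest on the dictionary $a_{e,i}'=L\cdot E_{e,i}'$ and $w_{-e}=L\cdot D_e'$ from \eqref{eq_a}--\eqref{eq_w}, with the key input being \cite[Proposition 3.27]{HK_HMS} (or \cite[Theorem 1]{Blanc}) that any interior $(-1)$-curve can be realized as the distinguished exceptional curve of a toric model reachable by Hanany--Witten moves; the paper proves (i)$\Rightarrow$(ii) directly from the Case III inequality where you argue contrapositively via Lemma \ref{lem_susy} to exclude Case I, which is an equivalent bookkeeping. The one unnecessary detour is your handling of $a_{e,i}'=0$ in (ii)$\Rightarrow$(i): an interior $(-1)$-curve with $L\cdot E_{e,i}'=0$ already violates condition (ii) as stated, so no internal $(-2)$-curve needs to be produced there, and the internal $(-2)$-curve case genuinely enters only in (i)$\Rightarrow$(ii), where your sketch is at the same level of detail as the paper's own ``one can argue similarly.''
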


\begin{proof}
The fact that ii) implies i) follows from the above geometric interpretation of Cases I-III in \eqref{Eq: case 1}--\eqref{Eq: case 3}
for Hanany--Witten moves. Hence, it remains to
show that i) implies ii). Let $E$ be a curve in $Y$ whose strict transform in a corner blow-up of $Y$ is an irreducible $(-1)$-curve. By \cite[Proposition 3.27]{HK_HMS} (see also \cite[Theorem 1]{Blanc}), any two toric models of a log Calabi--Yau surface are connected by a sequence of corner blow-ups/blow-downs and elementary cluster transformations. Hence, there exists a sequence of Hanany--Witten moves on $W^{\mathrm{asym}}$ such that $E$ is the 
$(-1)$-curve $E_{e,i}'$ corresponding to the pushed 7-brane $x_{e,i}'$. The inequality \eqref{Eq: case 3} for this sequence of Hanany--Witten moves implies that $L \cdot E>0$. One can argue similarly for an internal $(-2)$-curve and this concludes the proof.
\end{proof}

Finally, we are ready to define the notion of ``consistency'' for asymptotic webs of $5$-branes with $7$-branes.

\begin{definition}
\label{def: consistency}
An asymptotic web of 5-branes with 7-branes $W^{\mathrm{asym}}$
is \emph{consistent} if it satisfies the following two conditions:
\begin{itemize}
    \item[i)] In the associated log Calabi--Yau surface with line bundle $(Y,D,L)$, there exists a curve $C\in |L|$  not containing any 0-dimensional stratum of $D$, that is, $W^{\mathrm{asym}}$ is supersymmetric.
    \item[ii)] For every curve $E$ in $Y$ whose strict transform in a corner blow-up of $Y$ is an interior $(-1)$-curve or an internal $(-2)$-curve, we have $L \cdot E>0$.
\end{itemize}
\end{definition}

If $W^{\mathrm{asym}}$ is a supersymmetric but not consistent asymptotic web, then, after Hanany--Witten moves, some 7-branes and 5-branes can be detached from the remainder part of the web. In terms of the log Calabi--Yau surface with line bundle $(Y,D,L)$, there are several possibilities if we are in Case II as in Equation \eqref{Eq: case 2}:
\begin{itemize}
    \item[i)] A 7-brane is sent to infinity as on the left of Figure \ref{Fig19}. In this case, there exists an interior $(-1)$-curve $E$ such that $L \cdot E=0$ and that can be contracted to a point.
    \item[ii)] A 5-brane attached to a 7-brane is sent to infinity as in the middle of Figure \ref{Fig19}. 
    In this case, there is an interior $(-1)$-curve $E$ such as a multiple $kE$ is a connected component of $C$, and so $C$ can be replaced by $C\setminus kE$.
    \item[iii)] A 5-brane connected to two 7-branes is sent to infinity as on the right of Figure \ref{Fig19}. 
    In this case, there is an internal $(-2)$-curve $E$ such as a multiple $kE$ is a connected component of $C$, and so $C$ can be replaced by $C\setminus kE$.
\end{itemize}
Each instance of case i) strictly decreases the Picard rank of $Y$, and thus case i) can occur only finitely many times. Cases (ii) and (iii) are also limited in number, since the number of connected components of a curve in the linear system $|L|$ is bounded. Therefore, iterating these operations finitely many times, we finally obtain a log Calabi--Yau surface with line bundle $(Y',D',L')$ corresponding to a consistent asymptotic web. 
The curve $C \in |L|$
is isomorphic to the disjoint union of a curve isomorphic to a curve $C'\in |L'|$ and of finitely many disjoint multiple interior $(-1)$-curves and internal $(-2)$-curves in $Y$. Physically, every 7-brane or 5-brane sent to infinity corresponds to a free hypermultiplet, as described in \cite[Appendix A.3]{BRGE1}.
Hence, the 5d SCFT defined by a supersymmetric asymptotic web of 5-branes with 7-branes can always be obtained from the 5d SCFT defined by a consistent web by adding free hypermultiplets. Therefore, in what follows, we focus our study on consistent webs.

\begin{figure}[hbt!]
\center{\scalebox{.9}{\includegraphics{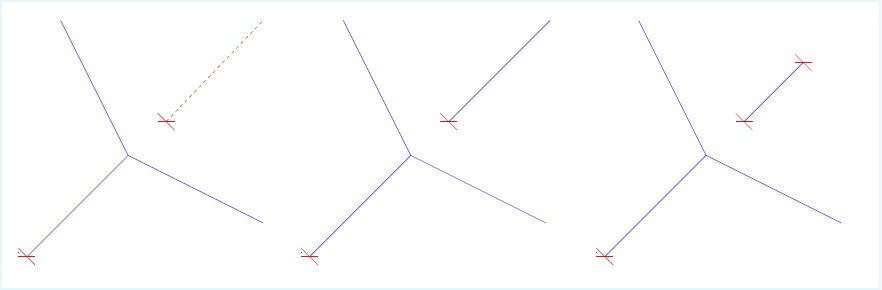}}}
\caption{Possible results of a Hanany--Witten move in Case II as in Equation
\eqref{Eq: case 2}.}
\label{Fig19}
\end{figure}

\subsection{Geometric classification of consistent asymptotic webs of 5-branes with 7-branes}
\label{section_geometric_classification}
We first show that the consistency of webs implies positivity properties of the corresponding line bundle $L$, namely the non-negativity of the self-intersection $L^2$ and the nefness of $L$.

\begin{lemma} \label{lem_L_nef}
    Let $W^{\mathrm{asym}}$ be a consistent asymptotic web of 5-branes with 7-branes, with associated log Calabi--Yau surface $(Y,D)$ and line bundle $L$. Then, the following conditions hold:
    \begin{itemize}
    \item[i)] for every irreducible component $C_i$ of a general curve $C \in |L|$, we have $C_i^2 \geq 0$,
    \item[ii)] we have $L^2 \geq 0$ and the line bundle $L$ is \emph{nef}, that is, we have $L \cdot C' \geq 0$ for every curve $C'$ in $Y$,
    \item[iii)] the linear system $|L|$ has no fixed component, that is, there is no curve $C'$ in $Y$ such that every $C \in |L|$ contains $C'$ as an irreducible component.
    \end{itemize}
\end{lemma}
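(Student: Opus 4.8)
The plan is to prove the three statements essentially at once, using the defining properties of consistency together with the structure of the log Calabi--Yau surface $(Y,D)$ as an interior blow-up of the toric surface $(\overline{Y},\overline{D})$. The key geometric input is that any irreducible curve $C_i$ in $Y$ with $C_i^2 < 0$ and $C_i$ not contained in $D$ is, up to corner blow-ups, either an interior $(-1)$-curve or an internal $(-2)$-curve; more precisely, by the theory of log Calabi--Yau surfaces (see the toric model statement in Example \ref{example_toric_model} and \cite{GHK1, HK_HMS}), the only irreducible curves of negative self-intersection meeting $U = Y \setminus D$ nontrivially that can appear are of this kind, since an interior blow-up of a toric surface followed by corner modifications produces negative curves only along exceptional loci of cluster type. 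So the consistency condition (ii) in Definition \ref{def: consistency} is exactly tailored to exclude such curves from being components of a member of $|L|$.

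First I would prove (i). Let $C = \sum_i C_i$ be a general member of $|L|$, which exists because $W^{\mathrm{asym}}$ is supersymmetric. Suppose some irreducible component $C_i$ has $C_i^2 < 0$. If $C_i \subset D$, then $C_i$ is one of the $D_e$; but a general $C \in |L|$ does not pass through the $0$-dimensional strata of $D$, so it cannot contain a boundary component $D_e$ (which meets the strata), giving a contradiction — alternatively, genericity of $(Y,D,L)$ in its deformation class rules this out. Hence $C_i$ is not contained in $D$. Then $C_i$ has negative self-intersection and meets $U$; by the classification of negative curves on log Calabi--Yau surfaces, the strict transform of $C_i$ in some corner blow-up of $Y$ is an interior $(-1)$-curve or an internal $(-2)$-curve. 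But then consistency condition (ii) gives $L \cdot C_i > 0$, while $C_i$ being a component of $C \in |L|$ forces $C \cdot C_i = L \cdot C_i$ to be bounded above by... here one must be slightly careful: write $C = C_i + C'$ with $C'$ effective; then $L \cdot C_i = C \cdot C_i = C_i^2 + C' \cdot C_i$, and $C_i^2 < 0$. This alone does not yet contradict $L \cdot C_i > 0$. The cleaner route is to use genericity: for a \emph{general} $C \in |L|$, if $C_i$ is a fixed component it appears in every member, and then $C - C_i$ moves in $|L - C_i|$; iterating, one reduces to the case where $C$ has no negative fixed component, and then one shows the moving part has all components of nonnegative self-intersection because a moving irreducible curve $C_i$ satisfies $C_i^2 \geq 0$ (two general members of the pencil it moves in meet in $C_i^2 \geq 0$ points). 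So the argument for (i) splits into: fixed components (handled by (ii), which says $L$ has positive intersection with them, forcing them \emph{not} to be fixed — a contradiction showing there are none of negative type) and moving components (automatically nonnegative). This also proves (iii): any fixed component $C'$ of $|L|$ that is irreducible and not in $D$ would have $C'^2 \geq 0$ or be of interior $(-1)$/internal $(-2)$ type; in the latter case (ii) gives $L \cdot C' > 0$, and one derives a contradiction with $C'$ being fixed by the same pencil-moving argument; in the former case, a nonnegative self-intersection effective curve on a surface with $|L|$ nonempty cannot be a fixed component of $|L|$ unless $L \cdot C' = 0$... so the no-fixed-component claim really does need (ii) to kill the exceptional cases and a separate argument (or an appeal to \cite{GHK_moduli, GHK1}) for the rest. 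I expect the precise bookkeeping here — disentangling fixed from moving components and invoking the right genericity statement from \cite{GHK_moduli} — to be the main obstacle.

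For (ii), once (iii) is established, $L$ is \emph{base-point free in codimension one}, so a general $C \in |L|$ is reduced with no fixed component, and $L^2 = C^2 = \sum_{i,j} C_i \cdot C_j \geq \sum_i C_i^2 \geq 0$ using (i) and $C_i \cdot C_j \geq 0$ for $i \neq j$ (distinct components of an effective divisor meet nonnegatively). For nefness, let $C'$ be any irreducible curve in $Y$. If $C'$ is not a component of a general $C \in |L|$, then $L \cdot C' = C \cdot C' \geq 0$ since $C$ and $C'$ share no component. If $C'$ \emph{is} a component of every $C \in |L|$, it is a fixed component, excluded by (iii). This covers all cases, so $L$ is nef. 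The only remaining point is to make sure the reduction to "general $C$ has no fixed component" is legitimate, which is exactly (iii); so the logical order is (iii) first (via the exceptional-curve dichotomy and genericity), then (i) for the moving/reduced general member, then (ii) assembling $L^2 \geq 0$ and nefness from intersection-theoretic positivity of effective divisors. I would organize the write-up in that order to avoid circularity.
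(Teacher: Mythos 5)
There is a genuine gap, and you have in fact flagged it yourself: your entire logical order runs (iii) $\Rightarrow$ (i) $\Rightarrow$ (ii), but your argument for (iii) does not close. Knowing that $L\cdot F>0$ for an interior $(-1)$-curve or internal $(-2)$-curve $F$ (consistency condition ii) of Definition \ref{def: consistency}) does not prevent $F$ from being a fixed component of $|L|$ — a linear system can perfectly well have positive degree on its fixed part — and for components of nonnegative self-intersection you concede that "a separate argument" is needed. That separate argument is the heart of the lemma, and without it neither (i) nor (ii) follows. Worse, the natural tool for controlling fixed components on anticanonical pairs, \cite[Theorem 4.12]{friedman2015geometry}, takes $L$ \emph{big and nef} as a hypothesis, so it cannot be invoked before nefness is known; your proposed order is therefore not just incomplete but structurally blocked.

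The paper reverses the order and proves nefness \emph{first}, without (iii). Using genericity of $(Y,D)$ in the sense of Definition \ref{Def: generic}, every negative component of a general $C\in|L|$ is an interior $(-1)$-curve, so one writes $C=C'+\sum_i a_iE_i$ with all components of $C'$ of nonnegative self-intersection and the $E_i$ interior $(-1)$-curves, which are pairwise disjoint because $C$ is general (if $E_i\cap E_j\neq\emptyset$ then $E_i+E_j$ would move). Then $\mathcal{O}(C')$ is nef by \cite[Lemma 4.14]{friedman2015geometry}, and consistency gives $C\cdot E_i=C'\cdot E_i-a_i>0$, so $L$ is nef; the same decomposition yields $L^2=(C')^2+\sum_i a_i(C'\cdot E_i)+\sum_i a_i(C'\cdot E_i-a_i)\geq 0$. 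Only now, with $L$ big and nef (when $L^2>0$), does \cite[Theorem 4.12]{friedman2015geometry} apply: the only alternative to $|L|$ having no fixed component is $C=C'+E$ with $C\cdot E=0$, which consistency excludes; this gives (i) and (iii), and the case $L^2=0$ is handled by \cite[Theorem 4.19]{friedman2015geometry}. So the missing idea in your proposal is precisely this decomposition-first proof of nefness together with the appeal to Friedman's structure theorems for the fixed part; your intersection-theoretic deductions of (i) and (ii) from (iii) are fine, but they rest on the one statement you have not proved.
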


\begin{proof}
FFirst assume that $W^{\mathrm{asym}}$ is of Type I, so that $(Y,D)$ is generic. Let $C$ be a general curve in $|L|$. Since $(Y,D)$ is generic, 
every reduced irreducible component of $C$ with negative self-intersection is necessarily an interior $(-1)$-curve. Hence, one can write
$C=C'+\sum_i a_i E_i$, where all irreducible components of $C'$ have nonnegative self-intersection, $w_i \in \ZZ_{\geq 0}$, and $E_i$ are interior $(-1)$-curves. If $E_i \cap E_{j}\neq 0$, then $(E_i+E_j)^2 \geq 0$, and the curve $E_i+ E_j$ can be deformed non-trivially in $Y$, in contradiction with the assumption that $C$ is general in $|L|$. Hence, the curves $E_i$ are disjoint. Since all irreducible components of $C'$ have nonnegative self-intersection, $\mathcal{O}(C')$ is nef by \cite[Lemma 4.14]{friedman2015geometry}. On the other hand, we have $C \cdot E_i = C' \cdot E_i -a_i>0$ for all $i$ since $W^{\mathrm{asym}}$ is consistent, and so $L=\mathcal{O}(C)$ is also nef. Moreover, we have 
\[C^2=(C')^2+ 2 \sum_i a_i (C' \cdot E_i)-\sum_i a_i^2=(C')^2+\sum_i a_i (C' \cdot E_i)
+\sum_i a_i((C' \cdot E_i)-a_i) \geq 0 \,,
\]
which proves Lemma \ref{lem_L_nef} ii).

If $C^2>0$, then $L=\mathcal{O}(C)$ is big and nef, and so, by \cite[Theorem 4.12]{friedman2015geometry}, we have the following possibilities:
\begin{itemize}
    \item[i)] $L=\mathcal{O}(C)$ does not have a fixed component. Then $C=C'$ and $a_i=0$.
    \item[ii)] $C=C'+E$ for an interior $(-1)$-curve such that $C' \cdot E =1$. In particular, we have $C \cdot E=0$, which contradicts the consistency of $W^{\mathrm{asym}}$.
\end{itemize}
Therefore, we are always in Case i) and Lemma \ref{lem_L_nef} i) and iii) follow.

If $C^2=0$, then we necessarily have $a_i=0$ for all $i$, and so Lemma \ref{lem_L_nef} i) also follows in this case. Finally, if $C^2=0$, Lemma \ref{lem_L_nef} iii) follows by \cite[Theorem 4.19]{friedman2015geometry}. Finally, the situation where $(Y,D)$ is of Type II can be treated similarly.
\end{proof}

The following result gives a formula for the self-intersection $L^2$.

\begin{lemma} \label{lem_L2}
Let $W^{\mathrm{asym}}$ be an asymptotic web of 
5-branes with 7-branes, defined by an asymptotic web of 5-branes $\overline{W}^{\mathrm{asym}}$ and a 7-brane data $\mathbf{a}=((a_{e,i})_{1\leq i\leq n_e})_{e\in L(\overline{W}^{\mathrm{asym}})}$. Let $(Y,D,L)$ be the corresponding log Calabi--Yau surface with line bundle, and $\overline{P}$ the lattice polygon dual to $\overline{W}^{\mathrm{asym}}$. Then, we have 
\begin{equation} \label{eq_L2}
L^2 = 2\, \mathrm{Area}(\overline{P})\, - \sum_{e\in L(\overline{W}^{\mathrm{asym}})} \sum_{i=1}^{n_e} a_{e,i}^2 \,.\end{equation}
\end{lemma}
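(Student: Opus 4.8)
The plan is to compute $L^2$ directly from the defining Equation \eqref{eq_line_bundle_L}, $L = (\pi^\star \overline{L}) \otimes \mathcal{O}_Y\bigl(-\sum_{e,i} a_{e,i} E_{e,i}\bigr)$, using the standard intersection theory of a blow-up. First I would recall that $\pi : Y \to \overline{Y}$ is the blow-up of the smooth toric surface $\overline{Y}$ at the distinct points $y_{e,i}$, with exceptional divisors $E_{e,i}$, so that in $\mathrm{Pic}(Y)$ we have the orthogonality relations $(\pi^\star \overline{L}) \cdot E_{e,i} = 0$, $E_{e,i} \cdot E_{f,j} = -\delta_{(e,i),(f,j)}$, and $(\pi^\star \overline{L})^2 = \overline{L}^2$.

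Expanding the square then gives
\[
L^2 = (\pi^\star \overline{L})^2 - 2\sum_{e,i} a_{e,i}\,(\pi^\star\overline{L}) \cdot E_{e,i} + \sum_{e,i}\sum_{f,j} a_{e,i} a_{f,j}\, E_{e,i} \cdot E_{f,j}
= \overline{L}^2 - \sum_{e \in L(\overline{W}^{\mathrm{asym}})} \sum_{i=1}^{n_e} a_{e,i}^2 \,.
\]
It then remains to identify $\overline{L}^2 = 2\,\mathrm{Area}(\overline{P})$. This is the classical fact that for a polarized toric surface $(\overline{Y}, \overline{D}, \overline{L})$ with momentum polytope $\overline{P}$, the top self-intersection of the ample line bundle equals $(\dim = 2)! = 2$ times the normalized (Euclidean) volume of the polytope — equivalently, $\overline{L}^2$ counts twice the lattice area of $\overline{P}$, which is also the leading coefficient in the Ehrhart/Hilbert polynomial $\chi(\overline{Y}, \overline{L}^{\otimes k}) = \#(k\overline{P})_{\ZZ}$. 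I would cite \cite{cox_toric} for this, or alternatively derive it by writing $\overline{L} = \sum_e w_e \overline{D}_e$ as a nef toric divisor and using the combinatorial formula for intersection numbers of toric boundary divisors in terms of the fan, checking it is insensitive to the corner blow-ups of $\overline{Y}$ used in Construction in \S\ref{section_webs_log_CY} (such blow-ups replace a cone by two cones and leave both $\overline{L}^2$ and $\mathrm{Area}(\overline{P})$ unchanged, since the extra exceptional curve has degree zero against $\overline{L}$).

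There is essentially no obstacle here: the only mild subtlety is book-keeping with the choice of toric model $\overline{Y}$ (which is not unique), but as just noted both sides of \eqref{eq_L2} are manifestly independent of toric blow-ups, so one may freely pass to any smooth $\overline{Y}$ whose fan contains all the relevant rays. One should also note that $L^2$ depends only on the deformation class of $(Y,D,L)$ and hence only on $\overline{W}^{\mathrm{asym}}$ and the $7$-brane data $\mathbf{a}$, consistent with the right-hand side of \eqref{eq_L2}; in particular the formula is manifestly invariant under Hanany–Witten moves, since $\mathrm{Area}(\overline P)$ is an $SL(2,\ZZ)$-invariant of $\overline P$ and the multiset $\{a_{e,i}\}$ is permuted but not changed in size by the shear operation of \eqref{eq_seed_mutation}. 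This gives a useful sanity check on the statement.
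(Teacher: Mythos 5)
Your proof is correct and follows exactly the paper's argument: expand $L^2$ from Equation \eqref{eq_line_bundle_L} using that the $E_{e,i}$ are disjoint $(-1)$-curves orthogonal to $\pi^\star\overline{L}$, then invoke the standard toric identity $\overline{L}^2 = 2\,\mathrm{Area}(\overline{P})$. The additional remarks on independence of the toric model and Hanany--Witten invariance are sound but not needed for the statement.
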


\begin{proof}
Using that the exceptional curves $E_{e,i}$ are $(-1)$-curves, 
it follows from the definition of $L$ in Equation \eqref{eq_line_bundle_L} that we have 
\[ L^2 = \overline{L}^2 - \sum_{e,i} a_{e,i}^2 \,,\]
where $\overline{L}$ is the ample line bundle on the toric surface $(\overline{Y},\overline{D})$ defined by the lattice polygon $\overline{P}$ associated to $\overline{W}^{\mathrm{asym}}$ as in \S \ref{webs_toric_surfaces}. Finally, we have $\overline{L}^2=2\, \mathrm{Area}(\overline{P})$
by standard toric geometry.
\end{proof}

\begin{lemma}
\label{lem: connected}
Let $W^{\mathrm{asym}}$ be a consistent asymptotic web of 5-branes with 7-branes, with associated log Calabi--Yau surface with line bundle $(Y,D,L)$. Then, the following holds:
\begin{itemize}
    \item[i)] If $L^2 >0$, then a general $C\in |L|$ is a smooth connected curve of genus 
    \begin{equation} \label{eq: genus}
    g= \frac{1}{2}(L^2 - L \cdot D) +1 \,.\end{equation}
    \item[ii)] If $L^2=0$, then we have $L \cdot D=0$, and a general 
    $C \in |L|$ is a disjoint union of $k \geq 1$ smooth genus one curves.
\end{itemize}
\end{lemma}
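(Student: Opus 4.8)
The plan is to deduce both parts from the positivity already established in Lemma~\ref{lem_L_nef} --- that $L$ is nef, that $|L|$ has no fixed component, and that every irreducible component of a general $C\in|L|$ has nonnegative self-intersection --- combined with Bertini's theorems in characteristic zero, the adjunction formula (using $K_Y\sim -D$), and the structure theory of linear systems on anticanonical pairs from \cite{friedman2015geometry}. The genus formula itself is pure adjunction: for a smooth $C\in|L|$ one has $2g(C)-2=C\cdot(C+K_Y)=L^2-L\cdot D$, and the same identity holds componentwise when $C$ is smooth but disconnected; so the content is the smoothness of a general member together with the shape of its connected components.

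For part (i), assume $L^2>0$. By Lemma~\ref{lem_L_nef} the line bundle $L$ is nef with no fixed component, hence big and nef. I would then appeal to \cite[Theorem~4.12]{friedman2015geometry} and its proof: excluding the exceptional alternative $L\cong\cO_Y(C'+E)$ with $E$ an interior $(-1)$-curve and $C'\cdot E=1$ --- which is ruled out by the consistency of $W^{\mathrm{asym}}$ exactly as in the proof of Lemma~\ref{lem_L_nef}, since it would force $C\cdot E=0$ --- one concludes that a general $C\in|L|$ is a smooth connected curve. The genus is then given by the adjunction computation above.

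For part (ii), assume $L^2=0$ and write a general member as $C=\sum_j m_jC_j$ with the $C_j$ distinct and irreducible. By Lemma~\ref{lem_L_nef}~i) each $C_j^2\geq 0$, and since also $C_j\cdot C_k\geq 0$ for $j\neq k$, the identity $0=L^2=\sum_{j,k}m_jm_k\,C_j\cdot C_k$ forces $C_j^2=0$ and $C_j\cdot C_k=0$ for all $j\neq k$: the components are pairwise disjoint square-zero curves. By Bertini each $C_j$ is smooth, $m_j=1$ for all $j$ since a nonreduced component would be fixed, and no $C_j$ lies in $D$ since a component of a general member contained in $D$ would again be fixed, each contrary to Lemma~\ref{lem_L_nef}~iii); also $k\geq1$ since $L\neq0$. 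Hence $C_j\cdot D\geq 0$ and adjunction gives $g(C_j)=1-\tfrac12\,C_j\cdot D\in\{0,1\}$, so it suffices to show $L\cdot D=0$, equivalently that no $C_j$ has genus $0$. Suppose some $C_j$ is a smooth rational curve with $C_j^2=0$. Then $|C_j|$ is a base-point free pencil defining a $\PP^1$-fibration $f\colon Y\to\PP^1$ with $C_j$ a general fibre, with $L$ numerically a positive multiple of the fibre class, and with $D$ meeting the general fibre in $C_j\cdot D=2$ points. Examining how the anticanonical cycle $D$ maps to $\PP^1$ --- either two components of $D$ are sections of $f$, or one is a bisection, plus possibly components of $D$ contained in fibres --- one finds a node of $D$ whose corner blow-up turns a fibre component of $f$ into an interior $(-1)$-curve, or an internal $(-2)$-curve, $E$; since $E$ maps into a fibre, $L\cdot E=0$, contradicting condition ii) of Definition~\ref{def: consistency}. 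Therefore every $g(C_j)=1$, so $C_j\cdot D=0$ for each $j$, whence $L\cdot D=\sum_j C_j\cdot D=0$, and $C$ is a disjoint union of $k\geq1$ smooth genus one curves. Alternatively, this last step may be read off from the classification of square-zero nef line bundles on anticanonical pairs in \cite[Theorem~4.19]{friedman2015geometry}.

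The step I expect to be the main obstacle is precisely this exclusion of rational components in part (ii) --- equivalently the implication $L^2=0\Rightarrow L\cdot D=0$. It is the only place where one must go past the purely numerical positivity of Lemma~\ref{lem_L_nef} and genuinely use the consistency condition on interior $(-1)$- and internal $(-2)$-curves \emph{after corner blow-ups}, rather than on curves already present in $Y$; making this precise requires analysing how the auxiliary $\PP^1$-fibration interacts with $D$, which is the technical heart of the argument.
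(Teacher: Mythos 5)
Your proof of part~(i) is essentially the paper's: Lemma~\ref{lem_L_nef} supplies nefness, absence of fixed components, and $C_i^2\geq 0$ for the components, after which the connectedness is read off from the proof of \cite[Theorem~4.12]{friedman2015geometry} (the exceptional alternative $C=C'+E$ being excluded by consistency, exactly as already done in the proof of Lemma~\ref{lem_L_nef}), and the genus is pure adjunction with $K_Y\sim -D$. For part~(ii) the paper simply invokes \cite[Theorem~4.19]{friedman2015geometry}, which is also your stated fallback, so the two proofs coincide at that level; the added value of your version is the self-contained argument that $L^2=0$ forces the components to be pairwise disjoint square-zero curves and that consistency excludes genus-zero components. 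That exclusion is the right mechanism (it is the same mechanism as in the paper's local $w_3=h$ analysis in \S\ref{sec_3_valent}, where the fiber through a node of $D$ becomes an interior $(-1)$-curve after a corner blow-up with $L\cdot E=0$), but as written it is a sketch: the claim that \emph{some} node of $D$ yields such an $E$ requires a case analysis of how the ruling meets $D$ (fibers tangent to a branch of $D$, reducible fibers, components of $D$ inside fibers), and your assertion that a nonreduced component of a general member ``would be fixed'' is not quite the right justification for reducedness --- reducedness of the general member follows instead from Bertini once base-point-freeness is known. Since you flag the gap and supply the citation the paper itself relies on, the proposal is correct; it just does optional extra work whose hardest step is delegated back to \cite{friedman2015geometry}.
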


\begin{proof}
In case i), using that $L^2>0$, $L$ is nef, and all connected components $C_i$ of $C$ satisfy $C_i^2 \geq 0$ by Lemma \ref{lem_L_nef}, the Case III of the proof of \cite[Theorem 4.12]{friedman2015geometry} implies that the general curve $C \in |L|$ is connected. Equation \eqref{eq: genus} follows from the adjunction formula for curves on surfaces and from the fact that $D$ is an anticanonical divisor on $Y$. In case ii), the result follows from \cite[Theorem 4.19]{friedman2015geometry}.
\end{proof}

\begin{remark} \label{remark_r_rule}
If $L^2>0$, that is, in case i) of Lemma \ref{lem: connected}, we necessarily have $g = \frac{1}{2}(L^2-L \cdot D)+1 \geq 0$
since the genus of a smooth connected curve is non-negative. The non-negative integer $g$ is the rank, that is, the dimension of the Coulomb branch of the corresponding 5d SCFT defined by the consistent 
asymptotic web $W^{\mathrm{asym}}$.
This numerical constraint on webs has been discussed in \cite{BRGE1,constraintsBPS, Istringjunctions,  van2020symplectic, van20215d} and is referred to as the ``r-rule" in \cite{ van2020symplectic, van20215d}. However, contrary to general expectations in the physics literature,  if $W^{\mathrm{asym}}$
is a supersymmetric but not necessarily consistent asymptotic web, then $g = \frac{1}{2}(L^2-L \cdot D)+1$ is not in general the rank of the 5d SCFT defined by $W^{\mathrm{asym}}$ and does not have to be nonnegative. Indeed, while $g = \frac{1}{2}(L^2-L \cdot D)+1$ is always the \emph{arithmetic} genus of a curve $C \in |L|$, such a curve is not necessarily connected if the web is not consistent. In general, $C$ is a disjoint union of a smooth connected curve of genus $g'\geq 0$ and of $k\geq 0$ smooth curves of genus $0$, so that $g'=g+k$, and the 5d SCFT defined by the supersymmetric web is obtained by adding $k$ free hypermultiplets to a rank $g'$ 5d SCFT defined by a consistent web, as explained below Definition \ref{def: consistency}. We present a detailed example of this phenomenon in Example \ref{example_gtp}.

If $L^2=0$, that is, we are in case ii) of  Lemma \ref{lem: connected}, the rank of the corresponding 5d SCFT is equal to the number $k$ of disjoint genus one curves in $|L|$.
\end{remark}

\begin{lemma}
\label{lem: consist nef}
Let $W^{\mathrm{asym}}$ be an asymptotic web of $5$-branes with $7$-branes, and $(Y,D,L)$ the associated log Calabi--Yau surface with line bundle. 
\begin{itemize}
    \item[i)] If $L^2 > 0$, then $W^{\mathrm{asym}}$ is consistent if and only if $L$ is nef, the linear system $|L|$ has no fixed component, and, for every curve $E$ in $Y$ whose strict transform in a corner blow-up of $Y$ is an interior $(-1)$-curve or an internal $(-2)$-curve, we have $L \cdot E>0$. 
     \item[ii)] If $L^2 = 0$, then $W^{\mathrm{asym}}$ is consistent if and only if $L$ is nef and the linear system $|L|$ is non-empty and has no fixed component.
\end{itemize}
\end{lemma}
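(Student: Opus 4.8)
The plan is to deduce Lemma \ref{lem: consist nef} from the preceding three lemmas, which already package almost all of the content. The key point is that Definition \ref{def: consistency} of consistency has two parts: (i) supersymmetry, i.e.\ the existence of a curve $C \in |L|$ avoiding the $0$-dimensional strata of $D$, and (ii) positivity of $L$ on all interior $(-1)$-curves and internal $(-2)$-curves (after corner blow-up). So to prove each ``if and only if'' I need to show that, under the stated hypotheses on $L^2$, conditions (i) and (ii) of Definition \ref{def: consistency} are together equivalent to the conditions listed in the Lemma. The forward direction (consistency implies the stated conditions) is immediate from Lemma \ref{lem_L_nef}: nefness of $L$ and the absence of a fixed component in $|L|$ are exactly parts (ii) and (iii) there, and the $(-1)$/$(-2)$-curve positivity is just condition (ii) of Definition \ref{def: consistency} restated. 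So the real work is the converse in each case.

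For part i), assume $L^2 > 0$, that $L$ is nef, that $|L|$ has no fixed component, and that $L \cdot E > 0$ for every curve $E$ whose strict transform in a corner blow-up is an interior $(-1)$-curve or internal $(-2)$-curve. The last hypothesis is literally condition (ii) of Definition \ref{def: consistency}, so it remains only to verify supersymmetry, i.e.\ that there is a curve $C \in |L|$ not meeting the $0$-dimensional strata of $D$. First, $|L|$ is non-empty: a big and nef line bundle on a surface has $h^0 > 0$ (e.g.\ by Riemann--Roch together with Kawamata--Viehweg vanishing, or directly from $L^2>0$ and nefness). Since $|L|$ has no fixed component, a general $C \in |L|$ has no common component with $D$, so $C \cdot D_e \geq 0$ is finite for each component $D_e$ of $D$; I then need to arrange that $C$ avoids the nodes of $D$. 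The cleanest way is to note that the base locus of $|L|$, if non-empty, is a finite set of points (no fixed component, $L^2>0$), and to argue using the genericity of $(Y,D)$ with respect to $L$ (Definition \ref{Def: generic}, in force throughout by the convention stated after that definition) that a general member of $|L|$ can be chosen to miss any prescribed finite set of points of $D$, in particular the nodes. Alternatively, and perhaps more robustly, one can invoke \cite[Theorem 4.12]{friedman2015geometry} directly: under ``$L$ big and nef'', that theorem describes $|L|$ precisely, and the only way a general member is forced through a $0$-dimensional stratum is the exceptional case $C = C' + E$ with $E$ an interior $(-1)$-curve and $C' \cdot E = 1$ — but then $L \cdot E = 0$, contradicting the hypothesis on interior $(-1)$-curves. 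Hence a general $C \in |L|$ avoids the nodes of $D$, so $W^{\mathrm{asym}}$ is supersymmetric, and together with condition (ii) this gives consistency.

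For part ii), assume $L^2 = 0$, $L$ nef, $|L|$ non-empty with no fixed component. I must produce both condition (i) and condition (ii) of Definition \ref{def: consistency}. For (i): a general $C \in |L|$ has no fixed component, and since $L^2 = 0$ and $L$ is nef, $\dim |L| \leq 1$ and every component $C_i$ of a general $C$ satisfies $L \cdot C_i = 0$, so in particular $C$ is a disjoint union of curves moving in a basepoint-free (on each component) pencil; invoking \cite[Theorem 4.19]{friedman2015geometry} — which governs exactly the nef, $L^2=0$, no fixed component situation — a general $C$ is a disjoint union of smooth genus-one curves not meeting the $0$-dimensional strata of $D$, giving supersymmetry. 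For (ii): let $E$ be an interior $(-1)$-curve or internal $(-2)$-curve (up to corner blow-up); since $L$ is nef, $L \cdot E \geq 0$, and I must rule out $L \cdot E = 0$. If $L \cdot E = 0$ then, combined with $L^2 = 0$ and the Hodge index theorem, $E$ would be numerically proportional to $L$ up to the degenerate possibilities controlled by the structure of $|L|$ from \cite[Theorem 4.19]{friedman2015geometry}; the presence of the fixed-component-free pencil of elliptic curves then forces a contradiction with $E^2 = -1$ or $E^2 = -2$ (an elliptic fiber class has self-intersection $0$ and $E$ would have to be a sub-curve of a fiber, impossible for a $(-1)$ or $(-2)$-curve that is not contained in $D$ and meets $D$ in degree $\leq 1$). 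Thus $L \cdot E > 0$, completing condition (ii).

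The main obstacle is the converse direction of part i): carefully extracting supersymmetry — i.e.\ a member of $|L|$ avoiding the nodes of $D$ — from nefness, bigness, and positivity on $(-1)$- and $(-2)$-curves. The cleanest route is to lean on the detailed description of linear systems on anticanonical pairs in \cite[Theorems 4.12 and 4.19]{friedman2015geometry}, using the hypothesis ``$L \cdot E > 0$ on interior $(-1)$-curves'' precisely to exclude the exceptional case $C = C' + E$ in Theorem 4.12 in which a general member is forced through a $0$-dimensional stratum; I would follow essentially the same case analysis as in the proof of Lemma \ref{lem_L_nef}, only now run in reverse. A secondary subtlety is bookkeeping the Type I versus Type II distinction and the genericity assumption on $(Y,D)$ with respect to $L$, but as in the proofs of Lemmas \ref{lem_L_nef} and \ref{lem: connected} these are handled uniformly and can be dispatched with a remark that the Type II case ``is treated similarly''.
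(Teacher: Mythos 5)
Your proposal is correct and follows essentially the same route as the paper: the ``only if'' direction from Lemma \ref{lem_L_nef}, and the ``if'' direction via the structure of linear systems on anticanonical pairs in \cite[Theorems 4.12 and 4.19]{friedman2015geometry}, with the hypothesis $L\cdot E>0$ on interior $(-1)$-curves excluding the exceptional case of Theorem 4.12. Your observation that in part ii) one must also verify condition ii) of Definition \ref{def: consistency} (which is not among the stated hypotheses) is a point the paper's proof passes over silently, and your sketch of why it holds (using genericity and the elliptic fibration structure) is sound.
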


\begin{proof}
In both i) and ii), the ``only if" direction follows from the Definition 
\ref{def: consistency} of a consistent asymptotic web and from Lemma \ref{lem_L_nef}.

To prove the ``if" direction of i),
we first note that, if $L^2>0$, $L$ is nef, and no component of $D$ is a fixed component of $|L|$, then the linear system $|L|$
is non-empty by \cite[Lemma 4.10]{friedman2015geometry}.
If $|L|$ has no fixed component, then, by \cite[Theorem 4.12]{friedman2015geometry}, $|L|$ has also no base point, except if $L \cdot \widetilde{D}=1$, in which case the unique smooth point $p \in D$ such that 
$L|_{D}=\cO_{D}(p)$ is the unique base point of $|L|$. If $|L|$ is base point free, then the general element of $|\widetilde{L}|$ is smooth by Bertini theorem \cite[Corollary III.10.9]{Hartshorne}. If $|L|$ has a base point $p$, then, since $L\cdot D=1$, $p$ is a smooth point of every element of $|L|$, and after blowing up $p$, we obtain a base point free linear system, whose general element is smooth by Bertini theorem \cite[Corollary III.10.9]{Hartshorne}. In any case, the general curve $C \in |L|$ is smooth. Since $|L|$ has no fixed component, it follows that every connected component $C_i$ of $C$ satisfies $C_i^2 \geq 0$, and so $W^{\mathrm{asym}}$ is consistent.
Finally, the ``if" direction in ii) follows from \cite[Theorem 4.19]{friedman2015geometry}.
\end{proof}

Finally, we are ready to provide a geometric classification of consistent asymptotic webs $W^{\mathrm{asym}}$, up to Hanany--Witten moves. 
To do this, denoting by $(Y,D,L)$ the associated log Calabi--Yau surface with line bundle, we analyze the image of the contraction map $c$ on $Y$, which contracts all curves that have intersection number $0$ with $L$. 
The existence of such a contraction map $c$ is ensured by the nefness of $L$ shown in Lemma \ref{lem_L_nef} and the \emph{basepoint free theorem} in birational geometry \cite[Theorem 3.3]{KM}. By Lemma \ref{lem_L_nef}, we have one of the three following cases to analyze:
\begin{itemize}
    \item[i)] $L^2> 0$ and $L\cdot D> 0$: In this case, the image of $(Y,D,L)$ under the contraction map $c$ is a \emph{polarized log Calabi--Yau surface} $(Y^{\mathrm{pol}},D^{\mathrm{pol}},L^{\mathrm{pol}})$, that is, $L^{\mathrm{pol}}$ is ample. Moreover, the linear system $|L^{\mathrm{pol}}|$ has no fixed component.
     \item[ii)] $L^2> 0$ and $L\cdot D = 0$:  In this case, $D$ is contracted to a point that corresponds to a cusp singularity in the image of the contraction $c$. We will also denote the images of $Y$ and $L$ by $Y^{\mathrm{pol}}$ and $L^{\mathrm{pol}}$, respectively.
     The line bundle $L^{\mathrm{pol}}$ is ample and $|L^{\mathrm{pol}}|$ has no fixed component. We refer to the tuple $(Y^{\mathrm{pol}},L^{\mathrm{pol}})$ as a \emph{polarized Calabi--Yau surface with a cusp singularity}.
      \item[iii)] $L^2 = 0$: In this case,  $L\cdot D = 0$ by of Lemma \ref{lem: connected}, ii) and the contraction map $c$ corresponds to an elliptic fibration
      \[ c: Y \longrightarrow \PP^1 \,,\]
such that $D$ is a singular fiber of $c$, and $L=\mathcal{O}(kF)$, where $F$ is a general fiber of $c$, and $k\geq 1$. 
\end{itemize}
Hence, as in \S\ref{section_geometric_HW}, we arrive at the following conclusion: If two consistent asymptotic webs of $5$-branes with $7$-branes are related by Hanany--Witten moves, then; in case i) above the images of the contraction maps applied to the associated log Calabi--Yaus with line bundles, correspond to isomorphic polarized log Calabi--Yau's $(Y^{\mathrm{pol}},D^{\mathrm{pol}},L^{\mathrm{pol}})$, up to deformation equivalence. Similarly, in case ii), the images of the contraction maps correspond to isomorphic polarized Calabi--Yau surfaces with cusp singularities, up to deformation equivalence. Finally, in case iii) above we get isomorphic elliptic fibrations, up to deformation. 
Before stating the main result of this section, we introduce a couple of notations. In the situation i), denote by

$\mathbf{CWebs}_I$: the set of all consistent asymptotic webs such that the associated log Calabi--Yau with line bundle $(Y,D,L)$ satisfies $L^2>0$ and $L \cdot D > 0$, and;

$\mathbf{PolLogCY}$: the set of all polarized log Calabi--Yau surfaces $(Y^{\mathrm{pol}},D^{\mathrm{pol}},L^{\mathrm{pol}})$ such that $|L^{\mathrm{pol}}|$ has no fixed component, up to deformation equivalence. Define
\begin{align}
\label{eq FI}
F_I : \mathbf{CWebs}_I & \longrightarrow \mathbf{PolLogCY} \\
\nonumber
W^{\mathrm{asym}} & \longmapsto (Y^{\mathrm{pol}},D^{\mathrm{pol}},L^{\mathrm{pol}})
\end{align}
where $(Y^{\mathrm{pol}},D^{\mathrm{pol}},L^{\mathrm{pol}})$ is the polarized log Calabi--Yau surface corresponding to the image of the contraction map $c$, as discussed above. Similarly, in the situation ii), denote by

$\mathbf{CWebs}_{II,+}$: the set of all consistent asymptotic webs such that the associated log Calabi--Yau with line bundle $(Y,D,L)$ satisfies $L^2>0$ and $L \cdot D = 0$, and;

$\mathbf{PolCY^{cusp}}$: the set of all polarized Calabi--Yau surfaces with cusp singularities
$(Y^{\mathrm{pol}},L^{\mathrm{pol}})$ such that $|L^{\mathrm{pol}}|$ has no fixed component, up to deformations preserving the ampleness of $L^{\mathrm{pol}}$. Define
\begin{align}
\label{eq FII+}
F_{II,+} : \mathbf{CWebs}_{II,+} & \longrightarrow \mathbf{PolCY^{cusp}} \\
\nonumber
W^{\mathrm{asym}} & \longmapsto (Y^{\mathrm{pol}},L^{\mathrm{pol}})
\end{align}
where $(Y^{\mathrm{pol}},L^{\mathrm{pol}})$ is the image of the contraction map $c$, as discussed above. Finally, in situation iii), denote by

$\mathbf{CWebs}_{II,0}$: the set of all consistent asymptotic webs such that the associated log Calabi--Yau with line bundle $(Y,D,L)$ satisfies $L^2=0$ and hence $L \cdot D = 0$, and;

$\mathbf{EllLogCY}$: the set of all log Calabi--Yau surfaces $(Y,D)$, with a line bundle $L$, such that $|L|$ defines an elliptic fibration $Y \to \PP^1$, up to deformations preserving the elliptic fibration. These log Calabi--Yau surfaces can be explicitly classified. The fibers of $c$ away from $D$ have at worst $I_1$ singular fibers for general deformations. On the other hand, $D$ is a cycle of rational curves. Since $Y$ is a rational elliptic surface, 
if $Q$ denotes the number of $I_1$ fibers away from $D$, then $D$ is a cycle of 
$12-Q$ irreducible rational curves.  
One can show that we necessarily have $3 \leq Q \leq 11$. Moreover, for every such $3\leq Q\leq 11$ with $Q\neq 4$, there exists a unique deformation class of rational elliptic surfaces, and if $Q=4$, there exist exactly two classes of deformation of rational surfaces \cite{persson}. If $L=\mathcal{O}(kF)$, where $k\geq 1$ and $F$ a general fiber of $c$, the corresponding 5d SCFT is the rank $k$ version of the $E_{Q-3}$ theory if $n \neq 4$, or of the $E_1$ and $\widetilde{E}_1$ theories if $Q=4$ \cite{closset_rank_one, DeWolfe, uncovering, morrison_seiberg, seiberg1996five, yamada_yang}.

Define
\begin{align}
\label{eq FII0}
F_{II,0} : \mathbf{CWebs}_{II,0} & \longrightarrow \mathbf{EllLogCY} \\
\nonumber
W^{\mathrm{asym}} & \longmapsto (c: (Y,D,L) \to \PP^1)
\end{align}
where $c: (Y,D,L) \to \PP^1$ is the contraction map $c$, as discussed above.

\begin{theorem}
\label{thm: geo classification}
The maps $F_I, F_{II,+}$ and $F_{II,0}$ in equations \eqref{eq FI}, \eqref{eq FII+} and \eqref{eq FII0} respectively, are one-to-one correspondences. 
\end{theorem}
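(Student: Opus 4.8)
The plan is to prove that each of the three maps $F_I$, $F_{II,+}$, $F_{II,0}$ is a bijection by exhibiting an inverse, and the construction of the inverse is essentially forced by the analysis of the contraction $c \colon Y \to Y^{\mathrm{pol}}$ that precedes the theorem. The key observation is that running the contraction in reverse is precisely the operation of undoing Hanany--Witten moves: given a polarized log Calabi--Yau surface $(Y^{\mathrm{pol}}, D^{\mathrm{pol}}, L^{\mathrm{pol}})$, one resolves its singularities (the contracted $(-1)$- and $(-2)$-chains), producing a log Calabi--Yau surface $(Y,D)$ with a nef line bundle $L$, and then by Theorem \ref{thm_7_branes} together with the toric-model/elementary-cluster-transformation machinery of \S\ref{section_webs_log_CY}, one reads off an asymptotic web $W^{\mathrm{asym}}$ from a choice of toric model. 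One must check this $W^{\mathrm{asym}}$ is consistent (Lemma \ref{lem: consist nef} gives exactly the criterion: $L$ nef, $|L|$ with no fixed component, and positivity on interior $(-1)$- and internal $(-2)$-curves — all of which hold because we built $L$ by pulling back an \emph{ample} class and the contracted curves are exactly those we resolved) and that $F_I(W^{\mathrm{asym}})$ recovers the original polarized surface up to deformation equivalence.

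Concretely, I would organize the argument in three parallel blocks, one per case. \textbf{Surjectivity.} In case i), start with $(Y^{\mathrm{pol}}, D^{\mathrm{pol}}, L^{\mathrm{pol}}) \in \mathbf{PolLogCY}$; since $L^{\mathrm{pol}}$ is ample and $|L^{\mathrm{pol}}|$ has no fixed component, the minimal resolution $(Y,D) \to (Y^{\mathrm{pol}}, D^{\mathrm{pol}})$ yields a log Calabi--Yau surface with $L := $ (pullback of $L^{\mathrm{pol}}$) nef, $L^2 = (L^{\mathrm{pol}})^2 > 0$, $L\cdot D = L^{\mathrm{pol}}\cdot D^{\mathrm{pol}} > 0$, and $L\cdot E > 0$ for every interior $(-1)$- or internal $(-2)$-curve $E$ that is not contracted — and the contracted ones were resolved away, so after taking a corner blow-up as in Lemma \ref{lem_caseIII_consistent}(ii) the condition holds on the nose; a toric model of $(Y,D)$ exists after corner blow-ups (Example \ref{example_toric_model}), and its interior blow-up data is exactly a 7-brane data for the web $\overline{W}^{\mathrm{asym}}$ dual to the momentum polygon of the toric surface, producing a consistent $W^{\mathrm{asym}}$ with $F_I(W^{\mathrm{asym}}) = (Y^{\mathrm{pol}}, D^{\mathrm{pol}}, L^{\mathrm{pol}})$. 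Case ii) is the same with $D$ contracted to a cusp (so $L\cdot D = 0$), and case iii) replaces "ample $L^{\mathrm{pol}}$" by "$L$ defines an elliptic fibration" — here surjectivity follows from the explicit classification of rational elliptic surfaces $(Y,D)$ with $D$ a cycle of $12 - Q$ rational curves recalled just before the theorem statement. \textbf{Injectivity.} If $W^{\mathrm{asym}}_1$ and $W^{\mathrm{asym}}_2$ are consistent and $F(W^{\mathrm{asym}}_1) \cong F(W^{\mathrm{asym}}_2)$ up to deformation, then their log Calabi--Yau surfaces $(Y_1, D_1, L_1)$ and $(Y_2, D_2, L_2)$ have isomorphic (up to deformation) images under $c$; since $c$ contracts exactly the $L$-trivial curves, $(Y_1, D_1)$ and $(Y_2, D_2)$ are related by corner blow-ups and (since they share the same open Calabi--Yau surface $U = Y\setminus D$, by uniqueness of the polarized minimal model) by elementary cluster transformations, hence — via \cite[Proposition 3.27]{HK_HMS} and Theorem \ref{thm_7_branes} — $W^{\mathrm{asym}}_1$ and $W^{\mathrm{asym}}_2$ are related by Hanany--Witten moves, which is the identification implicit in the domains $\mathbf{CWebs}_{\bullet}$ (webs are taken up to Hanany--Witten moves, as in the discussion of the contraction map).

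The main obstacle I anticipate is \textbf{well-definedness of the inverse map in the deformation-equivalence quotient}, i.e., checking that the web $W^{\mathrm{asym}}$ extracted from $(Y^{\mathrm{pol}}, D^{\mathrm{pol}}, L^{\mathrm{pol}})$ does not depend — up to Hanany--Witten moves — on the auxiliary choices (the minimal resolution is canonical, but the subsequent corner blow-ups and the choice of toric model are not, and one must confirm that different choices differ by elementary cluster transformations and corner blow-ups, hence by Hanany--Witten moves). This is exactly the content already extracted in \S\ref{section_geometric_HW} and in the proof of Theorem \ref{thm_7_branes}, so the argument is to invoke that machinery rather than redo it; the genuinely new points are (a) that the line bundle $L$, and not just the pair $(Y,D)$, transforms correctly under elementary cluster transformations — which follows because such a transformation is an isomorphism away from codimension $2$ and $L$ is determined by its restriction to $D$ together with the interior blow-up data — and (b) in case iii), matching the combinatorial classification of elliptic fibrations with the web description, which is a finite check over $3 \le Q \le 11$. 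A secondary subtlety is the base-point analysis when $L\cdot D = 1$ in case i): here $|L|$ has a unique base point, but this does not affect the contraction $c$ or the bijection, only the smoothness discussion, which was already handled in Lemma \ref{lem: consist nef}.
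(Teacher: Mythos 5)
Your injectivity argument matches the paper's (via \cite[Proposition 3.27]{HK\_HMS} and Theorem~\ref{thm_7_branes}), and your overall strategy for surjectivity — resolve, find a toric model, read off a web — is also the right one. But there is a genuine gap in the surjectivity step. You write that "a toric model of $(Y,D)$ exists after corner blow-ups (Example~\ref{example_toric_model}), and its interior blow-up data is exactly a 7-brane data for the web $\overline{W}^{\mathrm{asym}}$." The first clause is fine, but the second is false for an \emph{arbitrary} toric model: producing a web of 5-branes with 7-branes from a toric model $p\colon (Y,D)\to(\overline Y,\overline D)$ requires more than the existence of $p$ — you need the line bundle $L$ to decompose as $L = p^\star\overline L\otimes\mathcal O(-\sum_i a_i E_i)$ with $\overline L$ \emph{ample} on $\overline Y$, all $a_i\in\ZZ_{\geq 0}$, and $\sum_{i} a_{e,i}\leq w_e$ on each boundary component (Definition~\ref{def_web_57_at_infinity}). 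A generic toric model gives no control over how $L$ interacts with the exceptional locus; pushing $L$ forward may produce a non-nef or non-ample class on $\overline Y$, or negative $a_i$. Establishing that a \emph{polarized} toric model exists is exactly where the paper invokes the proof of \cite[Theorem~5.4]{engel\_friedman} (and an analogue of \cite[Proposition~1.5]{engel\_friedman} in the elliptic case), and this is the crux of surjectivity; your argument simply asserts the conclusion without supplying this input.

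A secondary discrepancy: for $F_{II,0}$ you propose to finish surjectivity by a "finite check over $3\leq Q\leq 11$" against the Persson classification. That classification tells you the isomorphism types of elliptic log Calabi--Yau pairs, but does not on its own hand you a toric model compatible with $L = \mathcal O(kF)$; the paper again runs the \cite{engel\_friedman}-style argument here rather than a case check. Your observation that $L$ is well-behaved under elementary cluster transformations because these are isomorphisms in codimension $\geq 2$ is correct and is what makes the injectivity step (and well-definedness of the inverse) go through, so that part is sound.
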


\begin{proof}
Injectivity follows as in the proof of Theorem \ref{thm_7_branes}.
Indeed, if two webs of 5-branes with 7-branes define two toric models of the same log Calabi--Yau surface with line bundle up to corner blow-ups and blow-downs, then, by \cite[Proposition 3.27]{HK_HMS} these two toric models are connected by a sequence of corner blow-ups/blow-downs, and so the corresponding webs are connected by a sequence of Hanany--Witten moves.

To prove the surjectivity of $F_I$ or $F_{II,+}$, let $(Y,D,L)$ be a smooth resolution of $(Y^{\mathrm{pol}}, D^{\mathrm{pol}}, L^{\mathrm{pol}}) 
\in \mathbf{PolLogCY}$ or $(Y^{\mathrm{pol}}, L^{\mathrm{pol}})   
\in \mathbf{PolCY^{cusp}}$. Modulo additional corner blow-ups, we can assume that $D$ has at least three irreducible components. Since $L^{\mathrm{pol}}$ is ample, the line bundle $L$ is big and nef, and so it follows from the proof of \cite[Theorem 5.4]{engel_friedman} that there exists a toric model $p: (Y,D) \rightarrow (\overline{Y}, \overline{D})$, with interior exceptional curves $E_i$, and an ample line bundle $\overline{L}$ on $\overline{Y}$ such that
$L=p^\star \overline{L} \otimes \mathcal{O}(-\sum_{i}a_iE_i)$
with $a_i \in \ZZ_{\geq 0}$. Let $W^{\mathrm{asym}}$ be the web of 5-branes with 7-branes obtained from the web of 5-branes associated to the polarized toric surface $(\overline{Y}, \overline{D}, \overline{L})$ by ending $a_i$ 5-branes on a $(p_i,q_i)$
7-brane, where $\RR_{\geq 0}(p_i,q_i)$ is the ray of $\overline{W}$
corresponding to the toric divisor of $\overline{Y}$ containing the point $p(E_i)$. Then, by construction, we have $F_I(W^{\mathrm{asym}})=(Y^{\mathrm{pol}}, D^{\mathrm{pol}}, L^{\mathrm{pol}})$ or $F_{II,+}(W^{\mathrm{asym}})=(Y^{\mathrm{pol}}, L^{\mathrm{pol}})$.
Finally, it remains to prove the surjectivity of $F_{II,0}$. For  $(Y,D,L) \in \mathbf{EllLogCY}$, we have $L=\mathcal{O}(kD)$ for some $k\in \ZZ_{\geq 1}$. Hence, the analog of \cite[Proposition 1.5]{engel_friedman} holds, and so one can run the argument in the proof of \cite[Theorem 5.4]{engel_friedman}, and conclude the surjectivity of $F_{II,0}$ as done above for $F_I$ and $F_{II,+}$.
\end{proof}

It follows from Theorem \ref{thm: geo classification} that the maps 
$F_I$, $F_{II,+}$, and $F_{II,0}$ define a one-to-one correspondence between the set of 5d SCFTs defined by consistent asymptotic webs of 5-branes with 7-branes and the set 
\[ \mathbf{PolLogCY} \cup \mathbf{PolCY^{cusp}} \cup \mathbf{EllLogCY}\,, \]
containing polarized log Calabi--Yau surfaces, polarized Calabi--Yau surfaces with a cusp singularity, and rational elliptic log Calabi--Yau surfaces. This generalizes the geometric classification of 5d SCFTs defined by webs of 5-branes in terms of polarized toric surfaces reviewed in \S\ref{webs_toric_surfaces}.

\subsection{Consistent webs of 5-branes with 7-branes and birational geometry of degenerations}
\label{section_consistent_webs}
Throughout this section we denote $W$ a general web of $5$-branes with $7$-branes, and  $W^{\mathrm{asym}}$ the associated asymptotic web, as in Definition \ref{def_asymptotic_5}. 

As explained in \S\ref{sec: degeneration} -\eqref{eq: degeneration}, there is a degeneration of log Calabi--Yau surfaces with line bundles
\begin{equation}\label{eq_degeneration}
\pi: (\mathcal{Y},\mathcal{D},\mathcal{L}) \longrightarrow \CC\end{equation}
associated to $W$, whose general fiber $(Y,D,L)$ is the log Calabi--Yau surface with line bundle associated to $W^{\mathrm{asym}}$ (see \S\ref{section_webs_log_CY}). The dual intersection complex of the central fiber $(\mathcal{Y}_0,\mathcal{D}_0)$ is the polyhedral decomposition on $\RR^2$ defined by the web of $5$-branes $\overline{W}$. In particular, for every vertex $v$ of $\overline{W}$, there is a corresponding irreducible component $Y_0^v$ of $\mathcal{Y}_0$. We denote by $\partial Y_0^v$ the union of the intersections with $Y_0^v$ of $\mathcal{D}_0$ and of the double locus of $\mathcal{Y}_0$. Moreover, for every $t\in \CC$, we denote $(\mathcal{Y}_t,\mathcal{D}_t,\mathcal{L}_t):=\pi^{-1}(t)$.
 
\begin{definition}
\label{def: W consistent}
A web of $5$-branes with $7$-branes $W$ is called \emph{consistent} if $W^{\mathrm{asym}}$ is consistent, and there exists a family of curves $(C_t)_{t\in \CC}$ in $\pi: (\mathcal{Y},\mathcal{D},\mathcal{L}) \longrightarrow \CC$ such that the following conditions hold:
\begin{itemize}
    \item[i)] For every $t \in \CC$, we have $C_t \in |\mathcal{L}_t|$.
    \item[ii)] For every vertex $v$ of $W$, the curve $C_0^v := C_0 \cap Y_0^v$ in $Y_0^v$ does not intersect the 0-dimensional strata of $\partial Y_0^v$.
\end{itemize}
\end{definition}

If $W$ is consistent,
then the line bundle $\mathcal{L}_t$ is nef for $t \neq 0$ 
by Lemma \ref{lem_L_nef} since $W^{\mathrm{asym}}$ is consistent.
However, the line bundle $\mathcal{L}_0$ on the central fiber $\mathcal{Y}_0$ might not be nef in general -- hence, generally the line bundle $\mathcal{L}$ on $\mathcal{Y}$ is not nef. However, we show below that after a sequence of ``M1-flops'', we can ensure that it is nef as well. To do this, we first recall the definition of an M1 flop applied to $\mathcal{Y}$.

Let $E$ be an interior $(-1)$-curve in an irreducible component $Y_0^v$ of $\mathcal{Y}_0$, corresponding to a vertex $v$ of $W$. Assume that 
the intersection point $p=E \cap \partial Y_0^v$ is not contained in the $\mathcal{D}_0$. Thus, there exists another component $Y_0^{v'}$ of $\mathcal{Y}_0$, corresponding to a vertex $v'$ of $W$, such that  $p \in Y_0^{v}  \cap Y_0^{v'}$. 
An M1 flop of $E$ applied to $\mathcal{Y}_0$ is the blow-down of $E$ in $Y_0^v$, followed by the blow-up of the point $p$ in $Y_0^{v'}$. We denote by $E'$ the exceptional curve in the blow-up of $Y_0^{v'}$. 

An M1 flop of $E$ applied to $\mathcal{Y}_0$ uniquely extends to a birational modification of $\mathcal{Y}$, which is the identity away from $\mathcal{Y}_0$. Indeed, the normal bundle of $E \simeq \PP^1$ in $\mathcal{Y}$ is $\mathcal{O}_{\PP^1}(-1) \oplus \mathcal{O}_{\PP^1}(-1)$ and the M1 flop on $\mathcal{Y}$ is the Atiyah flop corresponding to $E$ -- see \cite{atiyah1958analytic}. 
We refer to this Atiyah flop as the M1 flop of $E$ applied to $\mathcal{Y}$, following the terminology of \cite{alexeev2024ksba}.
Since an M1 flop is an isomorphism on codimension $2$, denoting by $\mathcal{Y}'$ the resulting 3-fold obtained from $\mathcal{Y}$ by an M1 flop, we have a natural 
isomorphism of the Picard groups $\mathrm{Pic}(\mathcal{Y}) \simeq  \mathrm{Pic}(\mathcal{Y}') $. Hence, the line bundle $\mathcal{L}$ corresponds to a line bundle $\mathcal{L}'$ on $\mathcal{Y}'$. A key feature of an M1 flop is that it changes the intersection numbers by:
\begin{equation}
\label{Eq: intersection numbers}
    \mathcal{L} \cdot E = -  \mathcal{L}' \cdot E' \,. 
\end{equation}

\begin{lemma} \label{lem_consistent_flop}
Let $W$ be a consistent web of $5$-branes with $7$-branes. Then, there exists a finite sequence of M1-flops such that the line bundle $\mathcal{L}$ on $\mathcal{Y}$ becomes nef.
\end{lemma}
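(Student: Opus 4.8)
The plan is to run an iterative procedure that, at each step, either terminates with $\mathcal{L}$ nef or decreases a well-chosen non-negative integer invariant, and to use the consistency of $W$ to control the geometry of the curves involved so that the procedure behaves well. First I would observe that, since $W^{\mathrm{asym}}$ is consistent, the general fiber $(Y,D,L)$ has $L$ nef by Lemma~\ref{lem_L_nef}, so $\mathcal{L}_t$ is nef for $t\neq 0$; hence if $\mathcal{L}$ fails to be nef, the obstruction is concentrated on the central fiber $\mathcal{Y}_0$. Any curve $Z$ with $\mathcal{L}\cdot Z<0$ must therefore lie in some irreducible component $Y_0^v$, and by the cone theorem applied to the relative situation over $\CC$ I may take $Z$ to be an extremal curve; using that $Y_0^v$ is a rational surface with $\partial Y_0^v$ anticanonical and that the degeneration is obtained by interior blow-ups from the toric degeneration $\overline{\nu}$, the only such $Z$ with negative $\mathcal{L}$-degree and with $Z\cap\partial Y_0^v\not\subset\mathcal{D}_0$ are (strict transforms of) interior $(-1)$-curves $E\subset Y_0^v$ meeting the double locus — precisely the curves to which an M1 flop applies.

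The second step is to show that such a flop strictly improves the situation. By Equation~\eqref{Eq: intersection numbers}, after the M1 flop of $E$ we get $\mathcal{L}'\cdot E' = -\mathcal{L}\cdot E>0$, so the particular non-nef curve we flopped is removed; the point is that flopping does not create infinitely many new non-nef curves. Here I would invoke the consistency of $W$ via the family of curves $(C_t)$ of Definition~\ref{def: W consistent}: the condition that $C_0^v$ avoids the $0$-dimensional strata of $\partial Y_0^v$ for every $v$ forces $C_0\cdot E = C_0^v\cdot E \geq 0$ with equality only in controlled cases, and M1 flops transform this family into an analogous family for $\mathcal{Y}'$ (flops act on curves not containing the flopping curve in a codimension-$2$-transparent way), so the invariant ``$\sum_v$ (number of interior $(-1)$-curves $E$ in $Y_0^v$ with $\mathcal{L}\cdot E<0$)'' — or better, a suitable lexicographic invariant built from the $\mathcal{L}$-degrees of components of $C_0$ — strictly decreases. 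This is essentially an instance of the minimal model program for the degeneration $\pi$, specialized to the surface-fibered situation where all relevant flops are the simple Atiyah/M1 flops; the termination is the analogue of termination of flops in this low-dimensional, combinatorially controlled setting, and it can alternatively be read off from the termination of the corresponding sequence of ``pushings of 7-branes'' on the web side, which is finite because each push strictly simplifies the polyhedral data of $W$.

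The third step is bookkeeping: after finitely many M1 flops we reach $\nu':(\mathcal{Y}',\mathcal{D}',\mathcal{L}')\to\CC$ with no interior $(-1)$-curve of negative $\mathcal{L}'$-degree in any component of $\mathcal{Y}'_0$; combined with the nefness of $\mathcal{L}'$ on the double locus and on $\mathcal{D}'_0$ (inherited, since flops are isomorphisms near those loci and $\mathcal{L}|_{\partial Y_0^v}$ is pulled back from $\overline{\nu}$ where it is nef by toric positivity) and on the components themselves (by the analogue of Lemma~\ref{lem_L_nef} applied componentwise, using that $C_0^v$ has all components of non-negative self-intersection), one concludes $\mathcal{L}'$ is nef on $\mathcal{Y}'_0$, hence on $\mathcal{Y}'$ since $\mathcal{L}'_t$ is nef for $t\neq 0$. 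The main obstacle is the termination argument: one must pin down the correct monovariant and verify that an M1 flop genuinely decreases it without reintroducing negativity elsewhere — this is where the consistency hypothesis on $W$ (the existence of the good family $(C_t)$) is indispensable, since without it the degeneration of a non-consistent web could admit curves of negative self-intersection in $\mathcal{Y}_0$ forcing an infinite sequence of flops, exactly as in the failure of nefness discussed after Definition~\ref{def: W consistent}.
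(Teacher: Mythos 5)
Your overall strategy matches the paper's: iterate M1 flops of interior $(-1)$-curves of negative $\mathcal{L}$-degree, using the family $(C_t)$ from consistency to control the geometry, until $\mathcal{L}$ becomes nef. Where you diverge — and where there is a genuine gap — is the termination argument. You propose either the count of negative interior $(-1)$-curves or an unspecified lexicographic invariant, and then fall back on appeals to "termination of flops in this low-dimensional setting" or to finiteness of 7-brane pushings. None of these is proven to decrease. In particular, the count of negative interior $(-1)$-curves is not monotone: an M1 flop of $E\subset Y_0^v$ involves a blow-up of a point on the neighboring component $Y_0^{v'}$, which can create new interior $(-1)$-curves there, so nothing in your argument rules out the count going up. The appeal to the dual picture of pushing 7-branes is circular in this context, since the equivalence between flops and pushings is established \emph{after} the geometry is understood (in \S\ref{sec pushing}).

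The paper's termination invariant is different and makes the argument close: it is the total intersection number of $\mathcal{L}$ with the double locus of $\mathcal{Y}_0$. This is nonnegative because $C_0\in|\mathcal{L}_0|$ never contains a component of the double locus (by consistency), and it strictly decreases under each M1 flop, because if $D_E$ is the double-locus component meeting $E$ then $\mathcal{L}'\cdot D_E=\mathcal{L}\cdot D_E+\mathcal{L}\cdot E<\mathcal{L}\cdot D_E$ (cf.\ \cite[Lemma 1.1]{SB}), while all other intersections with the double locus are unchanged. A secondary issue: your identification of the negative extremal curve as an interior $(-1)$-curve, via the relative cone theorem, is more roundabout than needed — the paper gets this directly from \cite[Lemma 1.4]{SB}, using only that $C_0$ is an effective divisor in $|\mathcal{L}_0|$ avoiding $\mathcal{D}_0$ and the double locus, so any curve of negative $\mathcal{L}$-degree is a component of $C_0$ inside a single $Y_0^v$. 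Finally, your step-three claim that $\mathcal{L}|_{\partial Y_0^v}$ is "pulled back from $\overline{\nu}$ where it is nef by toric positivity" is not correct as stated: $\mathcal{L}$ is not a pullback (exceptional divisors are subtracted), and indeed the intersection of $\mathcal{L}$ with the double locus can be negative before flopping — this is exactly what the M1 flops repair.
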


\begin{proof}
Since $W$ is consistent, there exists a family of curves $(C_t)_t$ in $\mathcal{Y}$, such that the central fiber $C_0 \in |\mathcal{L}_0|$ in $\mathcal{Y}_0$ does not contain any irreducible component of $\mathcal{D}_0$ or of the double locus of $\mathcal{Y}_0$.
Assume that $\mathcal{L}$ is not nef, and let $E$ be an irreducible curve such that $\mathcal{L} \cdot E < 0$.  Since $C_0$ does not contain any irreducible component of $\mathcal{D}_0$ or of the double locus of $\mathcal{Y}_0$, $E$ must be an interior $(-1)$-curve by \cite[Lemma 1.4]{SB}. Applying an M1 flop of $E$ and denoting by $\mathcal{Y}'$ the resulting 3-fold with line bundle $\mathcal{L}'$ as before, we get a (-1)-curve $E'$ in $\mathcal{Y}'$, which by \eqref{Eq: intersection numbers} has positive intersection with $\mathcal{L}'$. The resulting curve $C_0'$ after the flop still does not contain any irreducible component of the double locus of $\mathcal{Y}_0'$ or of $\mathcal{D}_0'$.
On the other hand, denoting by $D_E$ the irreducible component of the double locus of $\mathcal{Y}_0$ intersecting $E$, it follows from \cite[Lemma 1.1]{SB} that 
\begin{equation} \label{eq_flop_decreases}
\mathcal{L}' \cdot D_E = \mathcal{L} \cdot D_E +\mathcal{L}\cdot E < \mathcal{L}\cdot D_E\,,\end{equation}
and so the total intersection number of $\mathcal{L}$
with the double locus of $\mathcal{Y}_0$ strictly decreases.

The iteration of all the M1-flops of all the curves obstructing the nefness of $\mathcal{L}$ terminates in finitely many steps. Indeed, since the curve $C_0$ never contains an irreducible component of the double locus of $\mathcal{Y}_0$, the total intersection number of $\mathcal{L}$ with this double locus is always nonnegative. On the other hand, this intersection number strictly decreases at each step by \eqref{eq_flop_decreases}.
\end{proof}

\begin{remark}
    If $(\mathcal{Y}',\mathcal{D}',\mathcal{L}')$ is obtained from $(\mathcal{Y},\mathcal{D},\mathcal{L})$ by a sequence of flops and $\mathcal{L}'$ is nef, then $(\mathcal{Y}',\mathcal{D}',\mathcal{L}')$ is a \emph{minimal model} of $(\mathcal{Y},\mathcal{D},\mathcal{L})$ in the sense of birational geometry. In particular, Lemma \ref{lem_consistent_flop} is a particular case of more general results on the existence of minimal models in the Minimal Model Program. Our proof of Lemma \ref{lem_consistent_flop} is a variation on the proof in \cite{SB} of a similar result
    in the context of degenerations of K3 surfaces.
\end{remark}

\subsection{Flops and pushing 7-branes}
\label{sec pushing}
We introduce modifications of webs of $5$-branes with $7$-branes, obtained by pushing $7$-branes along their monodromy invariant directions. We then discuss their geometric interpretation in terms of flops of degenerations of log Calabi--Yau surfaces.

We first define the operation ``pushing 7-branes'' in an asymptotic web of 5-branes with 7-branes.
Let $W^{\mathrm{asym}}$ be an asymptotic web of 5-branes with 7-branes. Consider a $(p_i,q_i)$ 7-brane of position $x_{e,i} \in \RR_{>0}(p_i,q_i)$ in $W^{\mathrm{asym}}$. Then, as in the description of Hanany--Witten moves in \S\ref{section_consistent}, we can move the 7-brane along its monodromy invariant direction until it crosses the origin. We denote by $x_{e,i}' \in \RR_{<0}(p_i,q_i)$ the new position of the 7-brane. Note that the monodromy cut from $x_{e,i}'$ covers the union of the line segment connecting $x_{e,i}'$ to the origin and of the ray $\RR_{\geq 0}(p_i,q_i)$. 
Moreover, we modify the web of 5-branes at follows. Denote by $w_{-e}$ the weight of $\RR_{<0}(p_i,q_i)$ in $W^{\mathrm{asym}}$
if $\RR_{<0}(p_i,q_i)$ is an edge of $W^{\mathrm{asym}}$, and set $w_{-e}=0$ else. Then, after moving the 7-brane across the origin, the multiplicity on the half-line $x_{e,i}'+\RR_{<0}(p_i,q_i)$ remains $w_{-e}$. 
On the other hand, imposing the balancing condition at the origin fixes the weight $\mu$ of the line segment connecting $x_{e,i}'$ to the origin. We could either have:
\begin{itemize}
    \item[i)] \textbf{Case I}: We have
    \begin{equation}
\label{Eq: case 01}
\mu <0   \,.     
    \end{equation}
In this situation, the web of 5-branes with 7-branes $W^{\mathrm{asym}}$ is not supersymmetric and we say that pushing the 7-brane $x_{e,i}$ across the origin is \emph{forbidden}.
     \item[ii)] \textbf{Case II}: We have
   \begin{equation}
\label{Eq: case 02}
0 \leq \mu \leq w_{-e} \,.
    \end{equation}   
In this situation, one can interpret the 7-brane $x_{e,i}'$ as being the endpoint of $w_{-e}-\mu$ 5-branes on the half-line $x_{e,i}' +\RR_{<0}(p_i,q_i)$. These 5-branes are not connected to the part of the web of 5-branes containing the origin, and so can be freely moved  together with the 7-brane $x_{e,i}'$ on the half-line $\RR_{<0}(p_i,q_i)$. 
      \item[iii)] \textbf{Case III}: We have:
 \begin{equation}
\label{Eq: case 03}
\mu >w_{-e}\,.
    \end{equation} 
In this situation, one can interpret the 7-brane $x_{e,i}'$ as being the endpoints of $\mu -w_{-e}$ 5-branes on the line segment connecting $x_{e,i}'$ to the origin. 
\end{itemize}
In both Case II and Case III, we say that pushing the 7-brane $x_{e,i}$ across the origin is \emph{allowed}, and we call the resulting configuration of 5-branes and 7-branes a \emph{modified web of 5-branes with 7-branes} obtained from $W^{\mathrm{asym}}$ by pushing the 7-brane $x_{e,i}$ across the origin.

We now define the operation of pushing 7-branes in a web of 5-branes with 7-branes.  Let $W$ be a web of 5-branes with 7-branes. 
For every vertex $v$ of $W$, we define a \emph{local asymptotic web of 5-branes with 7-branes} $W_v$ centered at $v$ by keeping every edge of $W$ adjacent to $v$ and every 7-brane where some 5-branes adjacent to $v$ are ending. 

\begin{definition}
A \emph{modified web of 5-branes with 7-branes} is a configuration of 5-branes with 7-branes obtained from a web of 5-branes with 7-branes $W$ by a finite sequence of pushing operations of 7-branes in the asymptotic webs of 5-branes with 7-branes $W_v$ centered at the vertices $v$ of $W$.
\end{definition}

We now discuss the geometric interpretation of 
modified webs of 5-branes with 7-branes. 
Let $\pi: (\mathcal{Y}, \mathcal{D}, \mathcal{L}) \rightarrow \CC$ be the degeneration of log Calabi--Yau surfaces associated to a web of 5-branes with 7-branes $W$. The irreducible components $Y_0^v$ of the central fiber $\mathcal{Y}_0=\pi^{-1}(0)$ are in one-to-one correspondence with the vertices $v$ of $W$.
Let $\partial Y_0^v$ be the intersection with $Y_0^v$ of the union of the double locus of $\mathcal{Y}$ and of the divisor $\mathcal{D}$, and let $L_0^v$ the restriction of $\mathcal{L}$ to $Y_0^v$. Then, 
$(Y_0^v, \partial Y_0^v, L_0^v)$ is the log Calabi--Yau surface with line bundle associated to the local asymptotic web of 5-branes with 7-branes $W_v$ 
centered at $v$ as in \S \ref{section_webs_log_CY}. Let $W'$ be a modified web of 5-branes with 7-branes obtained from $W$ by pushing operations of 7-branes. Then, $W'$ corresponds to a degeneration of log Calabi--Yau surfaces 
\[ \pi': (\mathcal{Y}',\mathcal{D}',\mathcal{L}') \longrightarrow \CC\,,\]
obtained from  $\pi: (\mathcal{Y}, \mathcal{D}, \mathcal{L}) \rightarrow \CC$ by 
a sequence of M1 flops. Indeed, pushing a 7-brane $x_{e,i}$ across $v$ in $W_v$ as in Case II-\eqref{Eq: case 02} with $w_{-e} \geq 1$ 
amounts to flopping the $(-1)$-curve $E_{e,i}'$ as in \S\ref{section_consistent}. On the other hand, pushing a 7-brane $x_{e,i}$ across $v$ in $W_v$ as in 
Case II-\eqref{Eq: case 02} with $w_{-e}=\mu=0$, or 
as in
Case III-\eqref{Eq: case 03} does not change the degeneration, but only the toric model of the log Calabi--Yau surface $(Y_0^v, \partial Y_0^v, L_0^v)$, as in the geometric description of Hanany--Witten moves in  \S\ref{section_consistent}.

\begin{lemma} \label{lem: consistent_pushed}
Let $W$ be a consistent web of 5-branes with 7-branes. Then, there exists a modified web $W'$ of 5-branes with 7-branes obtained from $W$ by pushing operations of 7-branes, such that all local asymptotic webs of 5-branes with 7-branes $W_v'$ centered at the vertices $v$ of $W'$ are consistent.
\end{lemma}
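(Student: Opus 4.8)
The plan is to reduce the consistency of $W$ to the consistency of the local webs $W_v'$ via the M1-flop analysis of \S\ref{sec pushing} and Lemma \ref{lem_consistent_flop}. First I would apply Lemma \ref{lem_consistent_flop} to the degeneration $\pi:(\mathcal{Y},\mathcal{D},\mathcal{L})\to\CC$ associated to $W$: since $W$ is consistent, there is a finite sequence of M1 flops producing $(\mathcal{Y}',\mathcal{D}',\mathcal{L}')$ with $\mathcal{L}'$ nef. By the dictionary established just before this lemma, that sequence of M1 flops is exactly realized by a sequence of pushing operations of 7-branes in the local asymptotic webs $W_v$, so it yields a modified web $W'$ whose associated degeneration is $\pi':(\mathcal{Y}',\mathcal{D}',\mathcal{L}')\to\CC$. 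The central fiber still admits a family of curves $(C_t')\in|\mathcal{L}'_t|$ with $C_0'$ not containing any component of the double locus or of $\mathcal{D}_0'$, since M1 flops preserve this property (this is checked inside the proof of Lemma \ref{lem_consistent_flop}).

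Next I would extract consistency of each local web $W_v'$ from the nefness of $\mathcal{L}'$. For each vertex $v$ of $W'$, the restriction $(Y_0'^{\,v},\partial Y_0'^{\,v},L_0'^{\,v})$ is precisely the log Calabi--Yau surface with line bundle attached to the local asymptotic web $W_v'$. Nefness of $\mathcal{L}'$ on $\mathcal{Y}'$ implies nefness of $L_0'^{\,v}=\mathcal{L}'|_{Y_0'^{\,v}}$ on each component, so each $W_v'$ satisfies part of the consistency requirement. For the interior $(-1)$-curve / internal $(-2)$-curve positivity condition of Definition \ref{def: consistency}(ii), the point is that if some interior $(-1)$-curve $E\subset Y_0'^{\,v}$ had $L_0'^{\,v}\cdot E=0$, then after at most one further corner-blow-up and M1 flop one could flop $E$; but the flop-termination argument (the strict decrease in $\mathcal{L}'$ times the double locus in \eqref{eq_flop_decreases}, combined with nefness) shows that after the maximal sequence of flops no such obstruction survives --- equivalently, one pushes further until the $W_v'$ become consistent, and termination is exactly Lemma \ref{lem_consistent_flop}. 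The supersymmetry part (existence of $C_0'^{\,v}\in|L_0'^{\,v}|$ avoiding the $0$-strata of $\partial Y_0'^{\,v}$) is provided directly by the components $C_0'^{\,v}=C_0'\cap Y_0'^{\,v}$ of the degenerate curve, which avoid the $0$-strata by the flopped analogue of Definition \ref{def: W consistent}(ii).

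Finally I would assemble these: choosing $W'$ to be the modified web obtained by running the pushing operations all the way to a state where $\mathcal{L}'$ is nef and no interior $(-1)$- or internal $(-2)$-curve in any component of $\mathcal{Y}_0'$ pairs to zero with $\mathcal{L}'$ --- which is possible by the termination in Lemma \ref{lem_consistent_flop} together with the finiteness of corner-blow-up-then-flop steps (bounded by the Picard rank, as in the discussion after Definition \ref{def: consistency}) --- each local web $W_v'$ then satisfies both conditions (i) and (ii) of Definition \ref{def: consistency}, i.e.\ is consistent. The main obstacle I anticipate is bookkeeping the equivalence between ``M1 flops of interior $(-1)$-curves in central-fiber components'' and ``pushing 7-branes in the $W_v$'', including the subcase (Case II with $w_{-e}=\mu=0$, and Case III) where the push changes only the toric model and not the degeneration; one must verify that performing the flop-terminating sequence on $\mathcal{Y}$ never requires an operation outside the allowed pushes of Definition in \S\ref{sec pushing} (in particular never lands in the forbidden Case I), which follows from supersymmetry of $W^{\mathrm{asym}}$ via Lemma \ref{lem_susy} applied componentwise.
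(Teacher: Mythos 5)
Your overall strategy is exactly the paper's: apply Lemma \ref{lem_consistent_flop} to produce the flop sequence making $\mathcal{L}'$ nef, and then argue that this sequence of M1 flops is realized by pushing operations of 7-branes, yielding the modified web $W'$. Your additional care in extracting consistency of each local web $W_v'$ from nefness of the restrictions $(L')_0^v$ is fine and in the spirit of Lemma \ref{lem: consist nef}.

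The one genuine gap is in how you justify the step you correctly single out as the main obstacle, namely that every M1 flop demanded by Lemma \ref{lem_consistent_flop} is actually an allowed pushing operation. You appeal to Lemma \ref{lem_susy} ``applied componentwise,'' but that lemma only rules out landing in the forbidden Case I (negative multiplicities); it does not address the real issue, which is whether the specific interior $(-1)$-curve $E \subset Y_0^v$ with $\mathcal{L}\cdot E<0$ is reachable at all as the curve $E_{e,i}'$ associated to a pushed 7-brane. A priori $E$ could be an interior $(-1)$-curve invisible from the given toric model of $(Y_0^v,\partial Y_0^v)$. The paper closes this by invoking the connectivity of toric models of a log Calabi--Yau surface under corner blow-ups/blow-downs and elementary cluster transformations (\cite[Proposition 3.27]{HK_HMS}, \cite[Theorem 1]{Blanc}), exactly as in the proof of Lemma \ref{lem_caseIII_consistent}: this guarantees that any interior $(-1)$-curve arises, after a finite sequence of pushing operations, as the exceptional curve of a pushed 7-brane. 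With that citation substituted for the appeal to Lemma \ref{lem_susy}, your argument is complete.
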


\begin{proof}
Since $W$ is consistent, Lemma \ref{lem_consistent_flop} implies the existence of a degeneration of log Calabi--Yau surfaces 
$\pi': (\mathcal{Y}',\mathcal{D}',\mathcal{L}') \rightarrow \CC$,
obtained from  $\pi: (\mathcal{Y}, \mathcal{D}, \mathcal{L}) \rightarrow \CC$ by a sequence of M1 flops, such that all line bundles $(L')_0^v$ are nef.
This sequence of M1 flops can be realized through successive pushing operations of 7-branes, 
yielding a modified web $W'$. Indeed, as in the proof of 
Lemma \ref{lem_caseIII_consistent}, by \cite[Proposition 3.27]{HK_HMS} (see also \cite[Theorem 1]{Blanc}), any two toric models of a log Calabi--Yau surface are related 
by a sequence of corner blow-ups/blow-downs and elementary cluster transformations.
Consequently, any interior $(-1)$-curve can be obtained, after a finite sequence of pushing operations of 7-branes, as the 
$(-1)$-curve $E_{e,i}'$ corresponding to a pushed 7-brane $x_{e,i}'$.
\end{proof}

\subsection{Pushing 7-branes: local examples}
\label{sec_examples}

In this section, we describe two key local examples of a pushing operation of a 7-brane, through a 3-valent and a 4-valent vertex respectively.

\subsubsection{Pushing a 7-brane through a 3-valent vertex}
\label{sec_3_valent}

Let $\overline{W}$ be a web of 5-branes consisting of a single 3-valent vertex at the origin. Denote by $\overline{v}_1$, 
$\overline{v}_2$, $\overline{v}_3$ the primitive directions of the 3-edges pointing away from the origin. 
Let $w_1, w_2, w_3$
be the weights of the edges, and denote $v_1=w_1 \overline{v}_1$, $v_2=w_2 \overline{v}_2$, $v_3=w_3 \overline{v}_3$. By the balancing condition, we have $v_1+v_2+v_3=0$. Finally, denote by $m$ the multiplicity of the 3-valent vertex, that is,
\[ m=|\det(v_1,v_2)|=|\det(v_2,v_3)|=|\det(v_3,v_1)|\,.\]
Let $W$ be the web of 5-branes with 7-branes obtained from $\overline{W}$ by requiring that all the $w_3$ 5-branes of direction $v_3$ end on the same 7-brane, see Figure \ref{Fig:3_valent_web}.

\begin{figure}[hbt!]
\center{\scalebox{.9}{\includegraphics{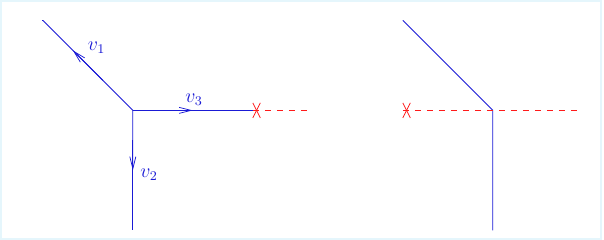}}}
\caption{Before and after pushing a 7-brane through a 3-valent vertex.}
\label{Fig:3_valent_web}
\end{figure}

After moving the 7-brane across the origin, the new balancing condition at the origin is $v_1'+v_2'+v_3'=0$, where $v_2'$ is obtained from $v_2$ by crossing the monodromy cut:
\[ v_2'=v_2+\mathrm{det}(v_2, \overline{v}_3) \overline{v}_3\,,\]
and so $v_3'$ is given by 
\[ v_3' = v_3 -\mathrm{det}(v_2, \overline{v}_3) \overline{v}_3 = \left(w_3 - \frac{m}{w_3}\right) \overline{v}_3 = 
(w_3^2 - m) \frac{\overline{v}_3}{w_3} \,.\]
It follows that we are in Cases I-II-III as in \eqref{Eq: case 01}-\eqref{Eq: case 02}-\eqref{Eq: case 03} if and only if $w_3^2>m$, $w_3^2=m$, and $w_3^2 <m$ respectively. In particular, we are in Case II if and only if $w_3^2=m$, in which case the 7-brane can be pushed through the origin and is no longer attached to any 5-brane, as in Figure \ref{Fig:3_valent_web}.
The condition $w_3^2=m$ can be rephrased in terms of the lattice triangle $\overline{P}$ dual to $\overline{W}$. First note that the weights $w_1, w_2, w_3$ are the lattice lengths of the edges $e_1, e_2, e_3$ of $\overline{P}$
corresponding to the edges of $\overline{W}$. Denoting by $h$ the \emph{height} of $\overline{P}$ with respect to $e_3$ the lattice distance between $e_3$ and the opposite vertex of $\overline{P}$, we have $m=w_3 h$. It follows that the condition $w_3^2=m$ is equivalent to $w_3=h$.

We now describe the geometric interpretation of pushing the 7-brane through the origin $w_3=h$. 
Let $(\overline{Y}, \overline{D}, \overline{L})$ be the polarized toric surface corresponding to $\overline{W}$.
The toric boundary divisor $\overline{D}$ has three irreducible toric divisors $\overline{D}_1,\overline{D}_2, \overline{D}_3$ corresponding to $e_1, e_2, e_3$, and the log Calabi--Yau surface $Y$ is obtained from $\overline{Y}$ by blowing up a smooth point $x_3$ of the toric boundary on $\overline{D}_3$, see Figure \ref{Fig:contraction_3_valent}. 
Moreover, denoting by $E$ the exceptional divisor, we have $L=\overline{L} \otimes \mathcal{O}(-w_3 E)$. Denote by $H$ the image in $Y$ of the strict transform $E'$ of the fiber passing through $x_3$ of the $\PP^1$-fibration on the toric blow-up of $\overline{Y}$ obtained by adding the ray $\RR_{<0}v_3$ to the fan. One can check that $L \cdot H =h-w_3$. In particular, when $w_3=h$, we have $L \cdot H=0$ and so $L$ is nef. On the other hand, we have $H^2=0$ since $(E')^2=-1$ and the toric divisor corresponding to $\RR_{<0}v_3$ is contracted in $Y$. It follows that there is a fibration of $Y$ with fibers of class $H$. Finally, using that $L \cdot E=w_3>0$, one concludes that the map contracting all curves in $Y$ having zero-intersection with $L$ is a map $c: Y \rightarrow \PP^1$. The fact that the surface $Y$ is contracted onto $\PP^1$ is the geometric realization of the fact that, after pushing the 7-brane across the origin, the 3-valent vertex of the web disappears and the remaining 5-branes form a straight line (taking into account the monodromy cut), which can be viewed as the fan of $\PP^1$.

\begin{figure}[hbt!]
\center{\scalebox{.9}{\includegraphics{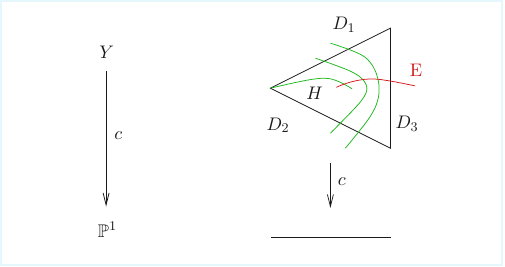}}}
\caption{The contraction map $c:Y \rightarrow \PP^1$.}
\label{Fig:contraction_3_valent}
\end{figure}

For example, when $\overline{P}$ is the lattice triangle of size one, then $Y$ is the blow-up of $\PP^2$ at one point, the one-parameter family of curves of class $H$ is the pencil of strict transform of lines in $\PP^2$ passing through the blown-up point, that is, the natural $\PP^1$-fibration on $Y \simeq \mathbb{F}_1$. The contraction $c: Y \simeq \mathbb{F}_1 \rightarrow \PP^1$ is then the contraction onto the base of this fibration.

For later purposes, we introduce the notion of ``elementary" triangle, by imposing not only the condition $w_3=h$, but also the condition $w_1=w_2=1$.

\begin{definition} \label{def_elementary_triangle}
    A lattice triangle $\overline{P}$ is called \emph{elementary} if there exists an edge $e_3$ of $P$ whose lattice length is equal to the height of $P$ with respect to $e_3$, and such that the remaining edges $e_1$ and $e_2$ both have lattice length one.
\end{definition}

\begin{figure}[hbt!]
\center{\scalebox{.9}{\includegraphics{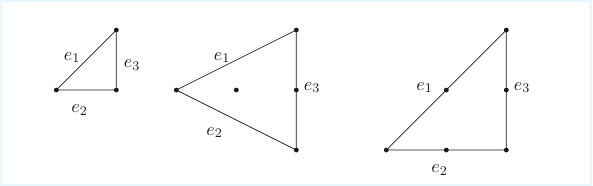}}}
\caption{Examples of lattice triangles with $w_3=h$.}
\label{Fig:basic_triangles}
\end{figure}

\begin{remark}
    If $w_3=h$, then we automatically have $w_1=w_2$. This results from the fact that the web is balanced after pushing the 
    7-brane (see also \cite[Proposition 2.12]{alexeev2024ksba} for a direct proof). It follows that, in the Definition \ref{def_elementary_triangle} of an elementary triangle, it is enough to require that one of the two remaining edges has lattice length one.
\end{remark}

\begin{example}
The left and middle lattice triangles in Figure \ref{Fig:basic_triangles} are elementary. The right triangle in Figure \ref{Fig:basic_triangles} satisfies $w_3=h$ but is not elementary.
\end{example}

\subsubsection{Pushing a 7-brane through a 4-valent vertex}

Let $\overline{W}$ be a web of 5-branes consisting of a single 4-valent vertex at the origin with edges of primitive 
directions of the form $\overline{v}_1$, $\overline{v}_2$, 
$\overline{v}_3$, $\overline{v}_4=-\overline{v}_3$. Let $w_1, w_2, w_3, w_4$ be the weights of the edges, and denote $v_i=w_i \overline{v}_i$ for $1 \leq i \leq 4$. By the balancing condition, we have $\sum_{i=1}^4 v_i=0$.
Let $W$ be the web of 5-branes with 7-branes obtained from $\overline{W}$ by requiring that all the 5-branes in direction $v_3$ all end on the same 7-brane, see Figure \ref{Fig:pushing_4_valent}.

\begin{figure}[hbt!]
\center{\scalebox{.9}{\includegraphics{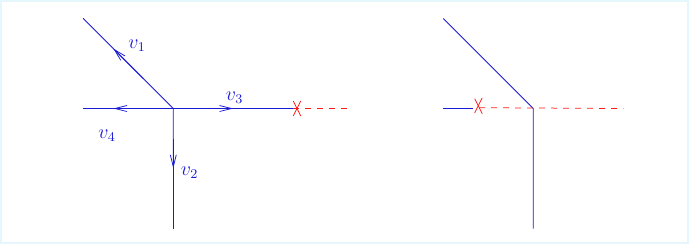}}}
\caption{Before and after pushing a 7-brane through a 4-valent vertex.}
\label{Fig:pushing_4_valent}
\end{figure}

Arguing as in \S \ref{sec_3_valent}, we obtain that, after moving the 7-brane through the origin, the number of 5-branes starting at the origin and ending on the 7-brane is 
$|\det(v_2, \overline{v}_3)|+w_4-w_3$. Note that, by the balancing condition, we have $|\det(v_2, \overline{v}_3)|=|\det(v_1, \overline{v}_3)|$.
Hence, we are in Cases I-II-III as in \eqref{Eq: case 01}-\eqref{Eq: case 02}-\eqref{Eq: case 03} if and only if $w_3-w_4>|\det(v_2, \overline{v}_3)|$, $w_3-w_4=|\det(v_2, \overline{v}_3)|$, and $w_3 -w_4<|\det(v_2, \overline{v}_3)|$ respectively. In particular, we are in Case II if and only if $w_3-w_4=|\det(v_2, \overline{v}_3)|$, in which case the 7-brane can be pushed through the origin and is no longer attached to 5-branes connected to the origin -- see Figure \ref{Fig:pushing_4_valent}.
The condition $w_3-w_4=|\det(v_2, \overline{v}_3)|$ can be rephrased in terms of the lattice trapezoid $\overline{P}$ dual to $\overline{W}$. First note that the weights $w_1, w_2, w_3, w_4$ are the weights of the edges $e_1, e_2, e_3$ of $\overline{P}$
corresponding to the edges of $\overline{W}$.
Then, we have $|\det(v_2, \overline{v}_3)|=|\det(v_1, \overline{v}_3)|=h$, where $h$ is the height of $\overline{P}$ with respect to $e_3$, that is, the lattice distance between $e_3$ and $e_4$.
It follows that the condition  $w_3-w_4=|\det(v_2, \overline{v}_3)|$ is equivalent to $w_3-w_4=h$.

Finally, we describe the geometric interpretation of pushing the 7-brane through the origin when $w_3-w_4=h$. 
Let $(\overline{Y}, \overline{D}, \overline{L})$ be the polarized toric surface corresponding to $\overline{W}$.
The toric boundary divisor $\overline{D}$ has four irreducible toric divisors $\overline{D}_1,\overline{D}_2, \overline{D}_3, \overline{D}_4$ corresponding to $e_1, e_2, e_3, e_4$, and the log Calabi--Yau surface $Y$ is obtained from $\overline{Y}$ by blowing up a smooth point $x_3$ of the toric boundary on $\overline{D}_3$, see Figure \ref{Fig:Y_4_valent}. Moreover, denoting by $E$ the exceptional divisor, we have $L=\overline{L} \otimes \mathcal{O}(-w_3 E)$. Denote by $E'$ the strict transform of the fiber passing through $x_3$ of the $\PP^1$-fibration on $\overline{Y}$ induced by the fact that both $\RR_{\geq 0}v_3$ and $\RR_{\leq 0}v_3$ are rays of the fan of $\overline{Y}$. One can check that $L \cdot H =h-w_3$. 
In particular, when $w_3-w_4=h$, we have $L \cdot H = -w_4$ and $L$ is not nef.
When $W$ is part of a bigger consistent web, one can flop $E'$, and the surface $Y$ becomes $Y'=\mathbb{F}_0=\PP^1 \times \PP^1$. In addition, the resulting line bundle $L'$ satisfies $L' \cdot D_3'=L'\cdot D_4'=0$ and $L'\cdot D_1'=L'\cdot D_2'=h$, and so is nef.
Contracting all curves with zero-intersection with $L'$
produces a map $c: Y'=\PP^1 \times \PP^1 \rightarrow \PP^1$.
 The fact that the surface $Y'$ is contracted onto $\PP^1$ is the geometric realization of the fact that, after pushing the 7-brane across the origin, the 4-valent vertex at the origin of the web disappears and the remaining 5-branes form a straight line (taking into account the monodromy cut), which can be viewed as the fan of $\PP^1$.

\begin{figure}[hbt!]
\center{\scalebox{.9}{\includegraphics{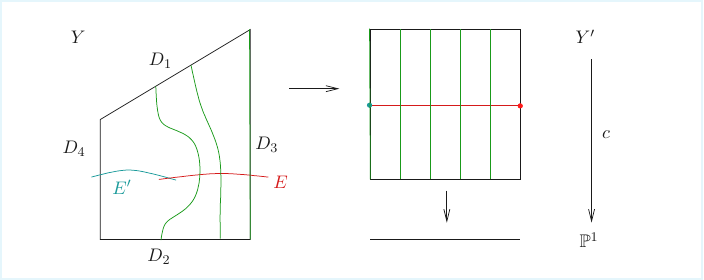}}}
\caption{The log Calabi--Yau surface $(Y,D)$ and the contraction map $c: Y' \rightarrow \PP^1$.}
\label{Fig:Y_4_valent}
\end{figure}

For later purposes, we introduce the notion of
``elementary" lattice trapezoid, by imposing the condition $w_3-w_4=h$, but also the condition $w_1=w_2=1$.

\begin{definition} \label{def_elementary_trapezoid}
A lattice trapezoid $\overline{P}$ is called \emph{elementary} if there exist parallel edges $e$ and $e'$ of $P$ such that the difference $|e|-|e'|$ between the lattice lengths of $e$ and $e'$ is equal to the height of $P$ with respect to $e$, and such that the remaining edges both have lattice length one.
\end{definition}

\begin{figure}[hbt!]
\center{\scalebox{.9}{\includegraphics{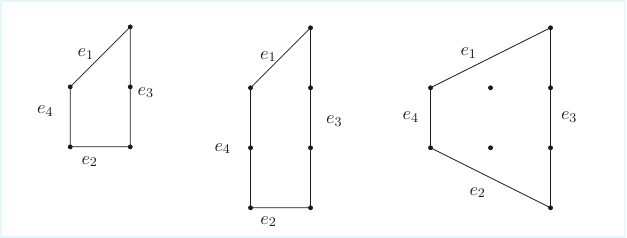}}}
\caption{Examples of elementary lattice trapezoids.}
\label{Fig:basic_trapezoids}
\end{figure}

\begin{example}
The three lattice trapezoids in Figure \ref{Fig:basic_trapezoids} are elementary.
\end{example}

\begin{remark} \label{remark_elementary_subdivision}
    If $w_3-w_4=h$, then we have automatically $w_1=w_2$. This results from the fact that the web is balanced after pushing the 
    7-brane (see also \cite[Proposition 2.12]{alexeev2024ksba} for a direct proof). It follows that, in the Definition \ref{def_elementary_trapezoid} of an elementary trapezoid, it is enough to require that one of the two remaining edges has lattice length one. It also follows that any lattice triangle with $w_3=h$, or any lattice trapezoid with $w_3-w_4=h$, can be naturally subdivided into a union of elementary triangles and elementary trapezoids, see Figure \ref{Fig:elementary_subdivision} for an example. 
\end{remark}

\begin{figure}[hbt!]
\center{\scalebox{.9}{\includegraphics{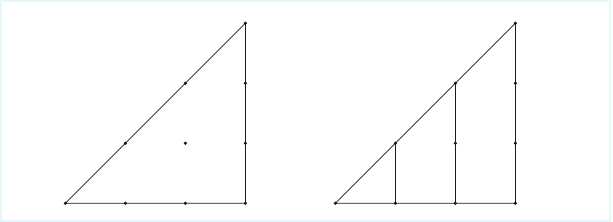}}}
\caption{On the left, a non-elementary triangle. On the right, a subdivision into one elementary triangle and two elementary trapezoids.}
\label{Fig:elementary_subdivision}
\end{figure}

\subsection{Generic consistent webs and fully pushed 7-branes}
\label{section_fully_pushed}

In this section, we introduce the notion of a generic consistent web of 5-branes with 7-branes, and describe generic consistent webs in terms of modified webs with ``fully pushed" 7-branes.

There is a natural notion of perturbation for webs of 5-branes $\overline{W}$ 
(see \cite[Definition 2.17]{mikhalkin}): we say that $\overline{W}'$ is a perturbation of $\overline{W}$ if there exists a one-parameter family of webs of 5-branes $(\overline{W}'_t)_{1\geq t>0}$ with the same combinatorial type as $\overline{W}'$, such that $\overline{W}'_{1}=\overline{W}'$, and $\overline{W} =\mathrm{lim}_{t \rightarrow 0} \overline{W}'_t$, where the limit is taken in the sense of weighted graphs, that is, with weights adding up when edges coincide in the limit. We generalize this notion of perturbation for webs of 5-branes with 7-branes:

\begin{definition}
Let $W$ and $W'$ be two webs of 5-branes with 7-branes defined by webs of 5-branes $\overline{W}$ and $\overline{W}'$, and 7-brane data $\mathbf{a} := ((a_{e,i})_{1\leq i \leq n_e})_{e \in L(\overline{W})}$ and $\mathbf{a}' := ((a'_{e',i'})_{1\leq i' \leq n_{e'}})_{e' \in L(\overline{W}')}$ respectively.
We say that $W'$ is a perturbation of $W$ if $\overline{W}'$ is a perturbation of $\overline{W}$, and for every leg $e \in L(\overline{W})$, there exists a bijection $i \mapsto i'$ between 
$\{1,\cdots,n_e\}$ and $\sqcup_{e' \in L_e} \{1,\cdots, n_{e'}\}$ such that $a_{e,i}=a_{e',i'}$, where $L_e$ is the set of legs $e' \in L(\overline{W}')$ obtained by perturbation of $e$.
\end{definition}

\begin{definition}
\label{def: generic}
    A consistent web of $5$-branes with $7$-branes $W$ is called \emph{generic} if no perturbation of $W$ with a different combinatorial type is consistent.
\end{definition}

\begin{example}
A web of 5-branes $\overline{W}$ is generic if and only if the polyhedral decomposition to $\overline{W}$
is a triangulation into triangles of size one, that is, if $\overline{W}$ is 3-valent and every vertex of $\overline{W}$
is of multiplicity one (see \cite[Proposition 2.19]{mikhalkin}).
\end{example}

\begin{example} \label{ex_basic_generic}
There exist generic webs of 5-branes with 7-branes with 3-valent vertices of multiplicity $>1$ or 4-valent vertices. 
For instance, 3-valent vertices dual to elementary triangles and 4-valent vertices dual to elementary trapezoids as in \S \ref{sec_examples} are generic.
Indeed, for a 3-valent vertex dual to an elementary triangle, using the notation of \ref{sec_3_valent}, the vertex adjacent to the $w_3$ 5-branes ending on the 7-brane in a non-trivial perturbation of the web is necessarily 3-valent with multiplicity $m'<m=w_3$, and so would not be consistent.

A 3-valent (resp. 4-valent) vertex, whose dual triangle (resp. trapezoid) satisfies $w_3=h$ (resp. $w_3-w_4=h$) but is non-elementary, is non-generic. Indeed, in this case, the dual polygon admits by Remark \ref{remark_elementary_subdivision} a non-trivial subdivision into elementary triangles and trapezoids, which induce a non-trivial perturbation.
We will show in Theorem \ref{thm_generic_vertices} that every vertex of a generic web of 5-branes with 7-branes is in fact dual to either an elementary triangle or an elementary trapezoid.
\end{example}

Before proving the main results of this section, we provide a geometric interpretation of generic consistent webs of 5-branes, in terms of degenerations of log Calabi--Yau surfaces. 
Let $W$ be a consistent web of 5-branes with 7-branes, and $\pi: (\mathcal{Y},\mathcal{D},\mathcal{L}) \rightarrow \CC$ the corresponding degeneration of log Calabi--Yau surfaces with general fiber $(Y,D,L)$.
By Lemma \ref{lem_consistent_flop}, there exists a degeneration of log Calabi--Yau surfaces 
$\pi': (\mathcal{Y}',\mathcal{D}',\mathcal{L}') \rightarrow \CC$,
obtained from  $\pi: (\mathcal{Y}, \mathcal{D}, \mathcal{L}) \rightarrow \CC$ by a sequence of M1 flops, such that $\mathcal{L}'$ is nef. By the base point free theorem, there exists a morphism
$c: (\mathcal{Y}',\mathcal{D}',\mathcal{L}') \rightarrow ((\mathcal{Y}')^{\mathrm{pol}},(\mathcal{D}')^{\mathrm{pol}},(\mathcal{L}')^{\mathrm{pol}})$
contracting all curves $C'$ in $\mathcal{Y}'$ such that 
$\mathcal{L}' \cdot C'=0$, so that the line bundle $(\mathcal{L}')^{\mathrm{pol}}$ is ample. The following result describes the central fiber $((\mathcal{Y}')^{\mathrm{pol}}_0,(\mathcal{D}')^{\mathrm{pol}}_0,(\mathcal{L}')^{\mathrm{pol}}_0)$ of 
$((\mathcal{Y}')^{\mathrm{pol}},(\mathcal{D}')^{\mathrm{pol}},(\mathcal{L}')^{\mathrm{pol}})$ when $W$ is generic.

\begin{lemma} \label{lem: generic_central_fiber}
Let $W$ be a generic consistent web of 5-branes with 7-branes. Then, if $L^2>0$, the irreducible components of the central fiber $((\mathcal{Y}')^{\mathrm{pol}}_0,(\mathcal{D}')^{\mathrm{pol}}_0,(\mathcal{L}')^{\mathrm{pol}}_0)$ are all isomorphic to $(\PP^2, \partial \PP^2, \mathcal{O}(1))$, where $\partial \PP^2$ is the toric boundary divisor of $\PP^2$. If $L^2=0$, then the irreducible components of the central fiber $((\mathcal{Y}')^{\mathrm{pol}}_0,(\mathcal{D}')^{\mathrm{pol}}_0,(\mathcal{L}')^{\mathrm{pol}}_0)$ are all isomorphic to $(\PP^1, \partial \PP^1, \mathcal{O}(1))$, where $\partial \PP^1$ is the toric boundary divisor of $\PP^1$.
\end{lemma}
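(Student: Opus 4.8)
The plan is to reduce everything to the local analysis of \S\ref{sec_examples} together with the characterization of generic consistent webs by ``fully pushed'' $7$-branes. First I would invoke the description of $W'$ as a fully pushed modified web (the genericity of $W$ forces, via the results of \S\ref{section_fully_pushed}, that in $W'$ every $7$-brane has been pushed off all $5$-branes, so each local asymptotic web $W_v'$ is a web of $5$-branes \emph{without} $7$-branes attached). Consequently each irreducible component $(Y_0')^v$ of the central fiber of $\pi'$, before the contraction $c$, is a \emph{toric} log Calabi--Yau surface, namely the toric surface with momentum polygon the face $P_v$ of the dual polyhedral decomposition corresponding to $v$, and $(\mathcal{L}')_0^v$ is the restriction of the nef line bundle, with polygon $P_v$. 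The point of Lemma \ref{lem: generic_central_fiber} is to identify these polygons $P_v$.

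Next I would feed in the computation of \S\ref{sec_examples}: by Theorem \ref{thm_generic_vertices} (cited in Example \ref{ex_basic_generic}) every vertex of a generic web of $5$-branes with $7$-branes is dual to an elementary triangle or an elementary trapezoid. Under the pushing operations that produce $W'$, each such vertex with its attached $7$-brane becomes — after the $7$-brane crosses the vertex and is detached — a straight line through the origin (taking the monodromy cut into account), i.e.\ locally the fan of $\PP^1$; this is exactly the geometric content worked out at the ends of \S\ref{sec_3_valent} and of the $4$-valent case, where the contraction $c$ sends the component $Y_0^v$ (an $\mathbb{F}_1$, resp.\ $\mathbb{F}_0$) onto $\PP^1$. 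So after contraction the polygon $P_v$ degenerates: when $L^2>0$ the resulting contracted component is a toric surface whose polygon is a size-one triangle (the only polygon giving a del Pezzo toric surface of the right degree with no further contractible curves), hence $((\mathcal{Y}')^{\mathrm{pol}})_0^v \simeq (\PP^2,\partial\PP^2,\mathcal{O}(1))$; when $L^2=0$ the component is one-dimensional, with polygon a segment of lattice length one, hence $\simeq(\PP^1,\partial\PP^1,\mathcal{O}(1))$. The dichotomy $L^2>0$ versus $L^2=0$ is governed by Lemma \ref{lem: connected}: in the second case $L\cdot D=0$ and $|L|$ is a multiple of an elliptic fibration, which on the degeneration forces each $(\mathcal{L}')_0^v$ to have self-intersection $0$ and degree $1$ on each of its two boundary components, i.e.\ a unit segment.

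To make the second step precise I would argue as follows. Genericity of $W$ means no perturbation of a different combinatorial type is consistent; combined with Lemma \ref{lem: consistent_pushed} and the local analysis, this says the fully pushed $W'$ has all local webs $W_v'$ consistent, and by Example \ref{ex_basic_generic} and Theorem \ref{thm_generic_vertices} each vertex polygon of the \emph{unpushed} $W$ is elementary. After the pushings realizing the M1-flops of Lemma \ref{lem_consistent_flop}, the local picture at each vertex is exactly Figure \ref{Fig:contraction_3_valent} or Figure \ref{Fig:Y_4_valent}: a ruled surface $\mathbb{F}_1$ or $\mathbb{F}_0$ together with a nef $L$ with $L^2\in\{0,1\}$ locally, contracting via $c$ either to a point of $\PP^1$-worth of fibers (the elliptic/$L^2=0$ case) or — when the global $L^2>0$ — keeping the component two-dimensional. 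In the latter case the contracted component is a normal toric surface $(\overline X^v,\partial\overline X^v, \overline{\mathcal{L}}^v)$ with $\overline{\mathcal{L}}^v$ ample and having no curve of $\overline{\mathcal{L}}^v$-degree $0$; tracking the polygon through the pushing and contraction shows it is unimodularly equivalent to the standard size-one simplex, giving $(\PP^2,\partial\PP^2,\mathcal{O}(1))$; in the former case the polygon collapses to a unit segment, giving $(\PP^1,\partial\PP^1,\mathcal{O}(1))$.

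\textbf{Main obstacle.} The technical heart is verifying that after the pushing operations and the contraction $c$, the polygon attached to every central-fiber component really is the \emph{size-one} simplex (or unit segment), with no residual $\overline{\mathcal{L}}$-degree-zero curves to contract further — equivalently, that genericity of $W$ propagates to genericity (``full pushedness'') of $W'$ locally at every vertex simultaneously, not just one vertex at a time. This requires knowing that the sequence of flops in Lemma \ref{lem_consistent_flop} can be arranged to make \emph{all} the $(\mathcal{L}')_0^v$ relatively minimal at once, and that the elementary-triangle/elementary-trapezoid classification of Theorem \ref{thm_generic_vertices} is compatible with the global flop sequence; this is where I expect to spend the most care, likely by induction on the total $\mathcal{L}$-degree of the double locus as in the proof of Lemma \ref{lem_consistent_flop}.
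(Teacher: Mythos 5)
Your proposal has a circularity problem, and the paper avoids it by taking a different, much shorter route.

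You lean on two inputs: (a) the characterization of $W'$ as a ``fully pushed'' web, and (b) Theorem~\ref{thm_generic_vertices}, asserting that every vertex of $W$ has dual polygon an elementary triangle or elementary trapezoid. But in the paper, (a) is exactly the content of Theorem~\ref{thm: generic push in}, whose proof invokes Lemma~\ref{lem: generic_central_fiber}; and (b), Theorem~\ref{thm_generic_vertices}, is proved from Theorem~\ref{thm: generic push in}. So both of your main tools are downstream of the very lemma you are trying to prove. What Lemma~\ref{lem: consistent_pushed} alone gives you (from consistency of $W$, no genericity required) is only that the local webs $W_v'$ are consistent — not that they are fully pushed, nor that the contracted polygons are size-one simplices. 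The ``main obstacle'' you flag at the end — propagating genericity to simultaneous full pushedness at every vertex and ruling out residual $\mathcal{L}$-degree-zero curves — is precisely the gap, and the induction you sketch does not close it.

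The paper's proof sidesteps all of this. Each irreducible component $(Y_i,D_i,L_i)$ of $((\mathcal{Y}')^{\mathrm{pol}}_0,(\mathcal{D}')^{\mathrm{pol}}_0,(\mathcal{L}')^{\mathrm{pol}}_0)$ is automatically a polarized log Calabi--Yau surface (by construction via the base point free theorem, with no toricity claim needed). The key input is then the external result~\cite[\S2.4]{alexeev2024ksba}: any such $(Y_i,D_i,L_i)$ not isomorphic to $(\PP^2,\partial\PP^2,\mathcal{O}(1))$ admits a nontrivial degeneration to a union of copies of $(\PP^2,\partial\PP^2,\mathcal{O}(1))$. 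Inserting that degeneration into the family produces a nontrivial consistent perturbation of $W$, contradicting genericity directly. The $L^2=0$ case is analogous with $\PP^1$. This is a one-step contradiction argument whose ``work'' is outsourced to a cited degeneration result — quite unlike your bottom-up toric polygon-tracking approach, and crucially, it does not presuppose Theorems~\ref{thm: generic push in} or~\ref{thm_generic_vertices}. If you want to preserve the order of implications in the paper, you must replace your appeals to those theorems with an argument of this ``maximal degeneration'' type.
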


\begin{proof}
We first prove the result for $L^2>0$. An irreducible component of $((\mathcal{Y}')^{\mathrm{pol}}_0,(\mathcal{D}')^{\mathrm{pol}}_0,(\mathcal{L}')^{\mathrm{pol}}_0)$ is a polarized log Calabi--Yau surface $(Y_i, D_i, L_i)$. If $(Y_i, D_i, L_i)$ is not isomorphic to $(\PP^2, \partial \PP^2, \mathcal{O}(1))$, then, 
by \cite[\S 2.4]{alexeev2024ksba}, there exists a non-trivial degeneration of $(Y_i, D_i, L_i)$ to a union of copies of $(\PP^2, \partial \PP^2, \mathcal{O}(1))$.
The existence of such a degeneration 
implies that the web $W$ admits a non-trivial consistent perturbation, contradicting the assumption that $W$ is generic. The case where $L^2=0$ can be treated similarly.
\end{proof}

In order to state our main criterion for generic consistency of webs, we introduce the notion of  ``fully pushed" 7-branes:

\begin{definition}
\label{def_fully_pushed}
Let $W$ be a web of 5-branes with 7-branes and $W'$ a modified web of 5-branes with 7-branes obtained from $W$ by pushing operations of 7-branes, as explained in \S \ref{sec pushing}.
We say that 7-branes of $W'$ are  \emph{fully pushed} if the following conditions hold:
\begin{itemize}
    \item[i)] All the 7-branes are disjoint from the web $W'$, that is, no 5-brane of $W'$ is ending on a 7-brane. 
    \item[ii)] For every $(p_i,q_i)$ 7-brane $x_i$, the half-line 
    $x_i +\RR_{<0}(p_i,q_i)$ intersects the web $W'$, that is, the 7-brane $x_i$ cannot be pushed further without crossing $W'$.
\end{itemize}
\end{definition}

We can now state the main result of this section:

\begin{theorem}
\label{thm: generic push in}
Let $W$ be a web of $5$-branes with $7$-branes. Then, $W$ is a generic consistent web if and only if there exists a modified web of 5-branes with 7-branes $W'$, obtained from $W$ by a finite sequence of pushing operation of 7-branes, such that the following conditions hold:
\begin{itemize}
\item[i)]
The 7-branes of $W'$ are fully pushed, as in Definition \ref{def_fully_pushed}.
\item[ii)] All the vertices of $W'$ are 3-valent of multiplicity one.
\end{itemize}
\end{theorem}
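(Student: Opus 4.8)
The strategy is to use the degeneration $\pi\colon(\mathcal{Y},\mathcal{D},\mathcal{L})\to\CC$ attached to $W$ in \S\ref{sec: degeneration} as the bridge between the combinatorics of webs and the birational geometry of log Calabi--Yau surfaces, exactly as prepared in \S\ref{section_consistent_webs}--\S\ref{section_fully_pushed}. For the ``only if'' direction, I would start from a generic consistent web $W$. By Lemma \ref{lem: consistent_pushed} there is a modified web $W'$ obtained by pushing 7-branes so that every local web $W_v'$ is consistent, and geometrically this realizes a sequence of M1 flops producing $\pi'\colon(\mathcal{Y}',\mathcal{D}',\mathcal{L}')\to\CC$ with $\mathcal{L}'$ nef (Lemma \ref{lem_consistent_flop}). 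Lemma \ref{lem: generic_central_fiber} then says that after contracting $\mathcal{L}'$-trivial curves, every irreducible component of the central fiber is $(\PP^2,\partial\PP^2,\mathcal{O}(1))$ (or $(\PP^1,\partial\PP^1,\mathcal{O}(1))$ if $L^2=0$); this forces each vertex of $W'$ to be 3-valent of multiplicity one, because the momentum polytope of such a component — which is a face of the intersection complex of the central fiber, i.e. a face of the polyhedral decomposition dual to $W'$ — must be a triangle of size one (resp.\ a segment of length one). This gives condition (ii). For condition (i), I would argue that if some 7-brane of $W'$ were still attached to a 5-brane, then the corresponding local web $W_v'$ would contain a 7-brane ending on some 5-brane, and since $W_v'$ is now consistent with its log CY surface having only size-one triangle components, the interior $(-1)$-curve created by that 7-brane would violate nefness or the consistency inequality $L\cdot E>0$ after one more push; concretely, an attached 7-brane in a 3-valent multiplicity-one vertex can always be pushed across to produce a configuration still satisfying the balancing and positivity constraints (this is the local analysis of \S\ref{sec_3_valent}--\S\ref{sec_examples} in the elementary case), contradicting either that no further push is possible or using the characterization of fully pushed webs. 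So one keeps pushing until all 7-branes are detached and cannot be pushed further, yielding a fully pushed $W'$.

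For the ``if'' direction, suppose $W'$ is obtained from $W$ by pushing operations, with all 7-branes fully pushed and all vertices 3-valent of multiplicity one. I would first observe that pushing operations correspond to M1 flops (or mere changes of toric model) on the associated degeneration, hence do not change the deformation class of the general fiber $(Y,D,L)$ nor, by the analysis in \S\ref{sec pushing}, the birational class of the degeneration over $\CC$. Since all vertices of $W'$ are 3-valent of multiplicity one and all 7-branes are detached, the central fiber of the corresponding degeneration $\pi'$ is a union of $(\PP^2,\partial\PP^2,\mathcal{O}(1))$'s glued along lines, with line bundle $\mathcal{L}'$ that is nef (indeed trivial-or-positive on every component, since the detached 7-branes no longer subtract from $\overline{\mathcal{L}}$ on any component containing a vertex of $W'$). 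The curve family: take on each $\PP^2$ a generic line through none of the torus-fixed points; these glue to a family $(C_t')$ with $C_0'^v$ avoiding the $0$-strata, so $W'$, hence $W$, is consistent by Definition \ref{def: W consistent}. Finally, for genericity, any non-trivial perturbation of $W$ would correspond to a non-trivial degeneration of some component away from $(\PP^2,\partial\PP^2,\mathcal{O}(1))$; but a size-one triangle admits no non-trivial regular subdivision, so no perturbation can change the combinatorial type while staying consistent — this gives genericity via Definition \ref{def: generic}, using Lemma \ref{lem: generic_central_fiber} in the contrapositive.

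\textbf{Main obstacle.} The delicate point is the ``only if'' direction's step that a fully pushed configuration exists \emph{and} simultaneously has all vertices 3-valent of multiplicity one. Lemmas \ref{lem: consistent_pushed} and \ref{lem: generic_central_fiber} separately give local consistency of the $W_v'$ and the size-one-triangle structure of the polarized central fiber, but one must check that the \emph{same} modified web $W'$ achieves both: that the pushes realizing the minimal model of the degeneration also push every 7-brane off every vertex, and that no vertex of higher valence or multiplicity survives. Here I would invoke Example \ref{ex_basic_generic} and the subdivision remark (Remark \ref{remark_elementary_subdivision}): a generic web has every vertex dual to an elementary triangle or trapezoid (this is Theorem \ref{thm_generic_vertices}, cited just above), elementary trapezoids get resolved into size-one triangles after the 4-valent push of \S\ref{sec_examples}, and elementary triangles become size-one after their 3-valent push detaches the 7-brane. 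Ensuring the pushing operations can be scheduled consistently across all vertices — i.e.\ that pushing at one vertex does not re-attach a 7-brane at an adjacent vertex in a way that obstructs termination — is the technical heart, and I expect it to follow from the monotonicity argument in the proof of Lemma \ref{lem_consistent_flop} (strict decrease of $\mathcal{L}$-intersection with the double locus) combined with the finiteness of Picard rank as in the discussion after Definition \ref{def: consistency}.
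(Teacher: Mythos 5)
Your overall strategy matches the paper's proof exactly: use the degeneration of \S\ref{sec: degeneration}, apply Lemma \ref{lem: consistent_pushed} and Lemma \ref{lem_consistent_flop} to reach a degeneration with nef line bundle, use Lemma \ref{lem: generic_central_fiber} to identify the components of the polarized central fiber with $(\PP^2,\partial\PP^2,\mathcal{O}(1))$, and read off the combinatorics of $W'$ from that; conversely build a curve on the central fiber from lines in the $\PP^2$'s and propagate it. This is the paper's argument, and your identification of the monotonicity argument in Lemma \ref{lem_consistent_flop} as what schedules the flops is also correct.

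However, two steps in your proposal are either confused or incomplete. In the \emph{only if} direction, the argument for ``fully pushed'' conflates the two conditions of Definition \ref{def_fully_pushed}. Condition i) of that definition (7-branes disjoint from $W'$) follows \emph{immediately} from Lemma \ref{lem: generic_central_fiber}, exactly as the 3-valency does, because an attached 7-brane would make the corresponding component a non-toric blow-up rather than $\PP^2$; no pushing argument is needed. The non-trivial part is condition ii) of Definition \ref{def_fully_pushed} (the 7-brane cannot be pushed further backward without meeting $W'$). The paper's argument is: if $x_i + \RR_{<0}(p_i,q_i)$ missed $W'$, the $(-1)$-curve $E$ tropicalizing along that ray would satisfy $L\cdot E = 0$, contradicting Definition \ref{def: consistency} ii). Your text gestures at ``the consistency inequality $L\cdot E > 0$'' but embeds it inside a convoluted argument about pushing across vertices, which does not cleanly isolate this step.

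In the \emph{if} direction, two things are missing. First, gluing generic lines in the $\PP^2$'s produces $C_0$ on the central fiber, but it does not, by itself, produce the one-parameter family $(C_t)$ that Definition \ref{def: W consistent} requires; the paper gets this via log smooth deformation theory in the sense of \cite{NS}, and you should invoke that. Second, ``consistent'' for $W$ requires $W^{\mathrm{asym}}$ to be consistent in the sense of Definition \ref{def: consistency}, which has \emph{two} conditions; your construction of the curve family addresses only condition i), while condition ii) (positivity of $L$ on interior $(-1)$- and internal $(-2)$-curves) is the step where the ``fully pushed'' hypothesis in the form of Definition \ref{def_fully_pushed} ii) is actually used, and you never verify it.
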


\begin{proof}
Let $W$ be a generic consistent web of 5-branes with 7-branes. By Lemma \ref{lem: consistent_pushed}, since $W$ is consistent, there exists a modified web of 5-branes with 7-branes $W'$, obtained from $W$ by a finite sequence of pushing operation of 7-branes, such that the corresponding line bundle $\mathcal{L}'$ on the degeneration of log Calabi--Yau surfaces $(\mathcal{Y}',\mathcal{D}',\mathcal{L}') \rightarrow \CC$ is nef. By the base point free theorem, there exists a morphism
$c: (\mathcal{Y}',\mathcal{D}',\mathcal{L}') \rightarrow ((\mathcal{Y}')^{\mathrm{pol}},(\mathcal{D}')^{\mathrm{pol}},(\mathcal{L}')^{\mathrm{pol}})$
contracting all curves $C'$ in $\mathcal{Y}'$ such that 
$\mathcal{L}' \cdot C'=0$, and so that the line bundle $(\mathcal{L}')^{\mathrm{pol}}$ is ample. The polyhedral decomposition defined by the web $W'$
is the dual intersection complex of the central fiber
$((\mathcal{Y}')^{\mathrm{pol}}_0,(\mathcal{D}')^{\mathrm{pol}}_0,(\mathcal{L}')^{\mathrm{pol}}_0)$.
Since $W'$ is generic, the irreducible components of
$((\mathcal{Y}')^{\mathrm{pol}}_0,(\mathcal{D}')^{\mathrm{pol}}_0,(\mathcal{L}')^{\mathrm{pol}}_0)$ are described by Lemma \ref{lem: generic_central_fiber}, and it follows that the 7-branes are disjoint from the web $W'$, and that the vertices of $W'$ are 3-valent of multiplicity one. To end the proof that the 7-branes are fully pushed, it remains to check condition ii) in Definition \ref{def_fully_pushed}. 
If this condition were not satisfied, there would exist a $(p_i,q_i)$ 7-brane $x_i$ such that the half-line $x_i +\RR_{<0}(p_i,q_i)$ does not intersect $W'$. Then, there would exist an interior $(-1)$-curve $E$ in a corner blow-up of $Y$ whose tropicalization is $x_i +\RR_{<0}(p_i,q_i)$, and such that $L \cdot E=0$. But since $W$ is a consistent web, we have $L \cdot E>0$ by Definition \ref{def: consistency} ii), which is a contradiction.

Conversely, assume that $W$ is a web such that there exists a modified web $W'$ as in Theorem \ref{thm: generic push in}. Then, there exists a degeneration of log Calabi--Yau surfaces $(\mathcal{Y}',\mathcal{D}',\mathcal{L}') \rightarrow \CC$, obtained from $(\mathcal{Y},\mathcal{D},\mathcal{L}) \rightarrow \CC$ by a sequence of M1-flops, such that there exists a contraction $c: (\mathcal{Y}',\mathcal{D}',\mathcal{L}') \rightarrow ((\mathcal{Y}')^{\mathrm{pol}},(\mathcal{D}')^{\mathrm{pol}},(\mathcal{L}')^{\mathrm{pol}})$, such that the irreducible components of the central fiber
$((\mathcal{Y}')^{\mathrm{pol}}_0,(\mathcal{D}')^{\mathrm{pol}}_0,(\mathcal{L}')^{\mathrm{pol}}_0)$ are all isomorphic to $(\PP^2,\partial \PP^2, \mathcal{O}(1))$. There exists a curve $C_0$ in $(\mathcal{Y}')^{\mathrm{pol}}_0$, with dual intersection graph $W'$, and whose irreducible components are lines in $(\PP^2,\partial \PP^2, \mathcal{O}(1))$ not contained in $\partial \PP^2$.
By log smooth deformation theory as in \cite{NS}, the curve $C_0$ deforms into a family of curves in $(\mathcal{Y}')^{\mathrm{pol}}$, whose generic fiber $C$ is a smooth curve in $Y$ contained in the linear system $L$ and which does not contain any irreducible component of $D$. This shows that 
conditions i) and ii) in the Definition \ref{def: W consistent} of the consistency of $W$, and that the first condition i) in the Definition \ref{def: consistency} of the consistency 
$W^{\mathrm{asym}}$ are satisfied.
The second condition ii) in Definition \ref{def: consistency}
follows from the assumption that the 7-branes of $W'$ are fully pushed and Definition \ref{def_fully_pushed} ii). This concludes the proof that $W^{\mathrm{asym}}$ and $W$ are consistent.
Finally, since all vertices of $W'$ are 3-valent of multiplicity one, $W'$ does not admit any non-trivial perturbation, and so $W$ is generic as in Definition \ref{def: generic}.
\end{proof}

Recall from Example \ref{ex_basic_generic} that 3-valent vertices dual to elementary triangles and 4-valent vertices dual to elementary trapezoids as in Definitions 
\ref{def_elementary_triangle}-\ref{def_elementary_trapezoid} 
are generic vertices. We show below that all the vertices of a generic consistent web are actually of this form.

\begin{theorem} \label{thm_generic_vertices}
Let $W$ be a generic consistent web of 5-branes with 7-branes. Then, the dual lattice polygon of every vertex of $W$ is either an elementary lattice triangle or an elementary lattice trapezoid.    
\end{theorem}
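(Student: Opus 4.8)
The plan is to reduce Theorem~\ref{thm_generic_vertices} to a purely local statement about a single vertex $v$ of a generic consistent web $W$, and then classify the possibilities. First I would invoke Theorem~\ref{thm: generic push in} to produce a modified web $W'$ obtained from $W$ by pushing operations of $7$-branes, all of whose vertices are $3$-valent of multiplicity one, with all $7$-branes fully pushed. The key point is that passing from $W$ to $W'$ only involves pushing $7$-branes \emph{through} vertices, and---crucially---each pushing operation through a vertex $v$ is a purely local modification of the local asymptotic web $W_v$ centered at $v$ (see \S\ref{sec pushing}). Therefore the dual lattice polygon $\overline{P}_v$ of $v$ in $W$ is not changed under those pushings that do not touch $v$, and is changed in a controlled way by those that do. Reversing the procedure, $\overline{P}_v$ must be obtainable by \emph{gluing together} the dual polygons of the vertices of $W'$ that are ``swept out'' as a $7$-brane is pushed across $v$, i.e.\ by undoing a sequence of the local moves of \S\ref{sec_3_valent} and \S\ref{sec_examples} (pushing through $3$- and $4$-valent vertices).

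The heart of the argument is then the following local claim: if $\overline{P}_v$ is the dual polygon of a vertex of a \emph{generic} consistent web, then after pushing one $7$-brane through $v$ the result has all vertices $3$-valent of multiplicity one, and the only way this can happen is if $\overline{P}_v$ was already an elementary triangle or an elementary trapezoid. I would establish this by a combinatorial/geometric dichotomy on the local toric picture. Consider the local asymptotic web $W_v$ and the corresponding log Calabi--Yau surface with line bundle $(Y_0^v,\partial Y_0^v,L_0^v)$, obtained from the toric surface $(\overline Y, \overline D, \overline L)$ with momentum polygon $\overline{P}_v$ by interior blow-ups at points $y_{e,i}$ with weights $a_{e,i}$. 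By Lemma~\ref{lem: generic_central_fiber}, genericity forces the image under the contraction $c$ of each component of the central fiber $(\mathcal Y')^{\mathrm{pol}}_0$ to be $(\PP^2,\partial\PP^2,\mathcal O(1))$ (or $(\PP^1,\partial\PP^1,\mathcal O(1))$ when $L^2=0$); i.e.\ $Y_0^v$ itself, \emph{after} the flops realizing the pushing and the contraction, must be either $\PP^2$ (contributing nothing new after contraction) or, before contracting, a surface that contracts onto $\PP^1$. Running the local analyses of \S\ref{sec_3_valent} and the $4$-valent case in reverse: the ``contracts onto $\PP^1$'' outcome forces $w_3=h$ in the triangle case (resp.\ $w_3-w_4=h$ in the trapezoid case), and genericity, via Remark~\ref{remark_elementary_subdivision} and Example~\ref{ex_basic_generic}, forbids $\overline P_v$ from admitting a non-trivial subdivision into smaller elementary pieces---which pins down $w_1=w_2=1$ and hence elementariness. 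Conversely, $\overline P_v$ cannot have more than four sides: a pentagon or higher, being Gorenstein Fano-like, would either fail to contract onto $\PP^1$ under any sequence of pushings (so $\mathcal L'$ could not become nef with all components $\PP^2$), or it would admit a non-trivial subdivision, contradicting genericity. So the only surviving shapes are the elementary triangle and the elementary trapezoid.

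More precisely, here are the steps in order. (1) Fix a generic consistent web $W$ and a vertex $v$; reduce to studying the local asymptotic web $W_v$ and its dual polygon $\overline P_v$. (2) Apply Theorem~\ref{thm: generic push in} to obtain the fully pushed $W'$ with all vertices $3$-valent of multiplicity one, and track how the pushing operations that cross $v$ act on $\overline P_v$. (3) Show that undoing these local moves expresses $\overline P_v$ as a union of elementary triangles and trapezoids glued along edges (this is the content of Remark~\ref{remark_elementary_subdivision}, read backwards, together with the $3$- and $4$-valent local pictures). (4) Invoke genericity (no non-trivial consistent perturbation of $W$, hence none of $W_v$) to conclude that this union is trivial, i.e.\ a single elementary triangle or trapezoid. (5) Separately rule out polygons with $\geq 5$ sides and non-elementary triangles/trapezoids, using Lemma~\ref{lem: generic_central_fiber} together with the fact that such polygons admit a non-trivial regular subdivision and hence a non-trivial consistent perturbation. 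I expect the main obstacle to be Step~(3)--(4): one must argue carefully that \emph{every} vertex of $\overline P_v$ that is not already ``flattened'' forces a genuine extra interior blow-up incompatible with the contraction to $\PP^2$ after flopping, and that the subdivision produced by pushing is the \emph{only} way the $7$-branes could have been fully pushed, so that genericity rules it out unless $\overline P_v$ is already irreducible in this sense. Controlling all the Hanany--Witten shears that occur during pushing (and checking they don't produce spurious consistent perturbations) is the delicate bookkeeping I would need to nail down.
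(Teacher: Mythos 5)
Your proposal takes the right starting point (Theorem~\ref{thm: generic push in}, then local analysis), but it is missing the transversality argument that makes the paper's proof work cleanly. The paper's key step is the observation that, by genericity, once the $7$-branes in $W'$ are fully pushed, no triple intersections of three monodromy cuts occur, no two cuts meet an edge of $W'$ at a common point, and no cut passes through a vertex of $W'$. This is proved by perturbing the $7$-branes transversely to their monodromy directions and then pushing back, which would produce a non-trivial consistent perturbation of $W$ if any of these coincidences occurred. This single claim immediately bounds how many monodromy cuts pass through a given vertex $v$ of $W$ (at most one if $v$ sits on an edge of $W'$, at most two if $v$ is in a $2$-cell of $W'$), and hence bounds the valency of $v$ at either $3$ or $4$. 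It also forces the crucial fact that $v$ disappeared after exactly \emph{one} $7$-brane crossed it, so the local pictures of \S\ref{sec_examples} apply directly, and Remark~\ref{remark_elementary_subdivision} plus genericity then pin down elementariness.

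Your steps (3)--(5) do not supply a replacement for that counting. The ``undoing/gluing'' picture implicitly allows several $7$-branes to cross $v$, which would produce dual polygons with more than four sides; you try to rule this out with Lemma~\ref{lem: generic_central_fiber}, but that lemma describes the irreducible components of the contracted central fiber, not the valency of a vertex of $W$, and does not by itself forbid a generic vertex $v$ from being $5$-valent, say, with two different $7$-branes pushed through it. You flag this yourself as the delicate bookkeeping, and it is indeed the gap: without the monodromy-cut transversality lemma, the claim that ``$\overline P_v$ must be a union of elementary pieces glued trivially'' is not established. Once you add that lemma, your argument would essentially collapse into the paper's case analysis ($v$ survives in $W'$, $v$ lies on an edge of $W'$, $v$ lies in the interior of a face of $W'$), so the cleanest fix is simply to prove the transversality claim and proceed as the paper does.
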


\begin{proof}
By Theorem \ref{thm: generic push in},  there exists a modified web of 5-branes with 7-branes $W'$, obtained from $W$ by a finite sequence of pushing operation of 7-branes, such that
the 7-branes of $W'$ are fully pushed, and all the vertices of $W'$ are 3-valent of multiplicity one. We claim that, once the 7-branes are fully pushed, there are no triple intersections of three monodromy cuts, no triple intersection of two monodromy cuts with an edge of $W'$, and no intersection of a monodromy cut with a vertex of $W'$.
Indeed, all these situations are non-generic with respect to small perturbations of 7-branes in directions transverse to their monodromy directions. If we had such a situation, perturbing the 7-branes and monodromy cuts in directions transverse to the monodromy directions, and then pushing back the 7-brane would produce a consistent non-trivial perturbation of $W$, which contradicts the assumption that $W$ is generic.

Let $v$ be a vertex of $W$. We have to prove that the dual polygon of $v$ is either an elementary triangle or an elementary trapezoid. If $v$ is still a vertex of $W'$, then, since no monodromy cut passes through $v$, the local web around $v$ in $W$ coincides with the local web around  of $W'$ around $v$.
Hence, the vertex $v$ in $W$ is 3-valent of multiplicity one, and so with dual given by a lattice triangle of size one, which is in particular elementary. 

If $v$ is no longer a vertex of $W'$, it means that $v$ disappeared after pushing operations of 7-branes. 
If $v$ is contained in an edge of $W'$, then, there is at most one monodromy cut passing through $v$, and so $v$
in $W$ is either 3-valent or 4-valent. Moreover, this vertex disappeared after a single 7-brane crossed $v$, and so it follows from the local analysis in \S\ref{sec_examples}
and from Remark \ref{remark_elementary_subdivision}
that the dual polygon is either an elementary 
triangle or an elementary trapezoid.
Similarly, if $v$ is not contained in $W'$, then there are at most two monodromy cuts passing through $v$, and it follows that $v$
in $W$ is either 3-valent or 4-valent. Moreover, this vertex disappeared after a single 7-brane crossed $v$, and so it follows as previously that the dual polygon is either an elementary 
triangle or an elementary trapezoid.
\end{proof}

\section{Consistent decorated toric polygons and Symington polygons}
\label{section_DTP}

In \S\ref{sec_consistent_DTP}, we introduce the concept of a consistent decorated toric polygon with a polyhedral decomposition and we prove that it is dual to the notion of a generic consistent web of 5-branes with 7-branes.
In \S\ref{section_symington}, we show that consistent decorated toric polygons can be turned by cutting and gluing operations into an integral affine manifold with singularities known as a Symington polygon. 
Finally,
we compare in \S\ref{section_comparison} consistent decorated toric polygons with Generalized Toric Polygons (GTPs) satisfying ``s-rules" previously introduced in the physics literature.

\subsection{Consistent decorated toric polygons}
\label{sec_consistent_DTP}

As in \S\ref{sec_dual_polygon_convex}, a polyhedral decomposition  $\overline{\mathcal{P}}$ of a lattice polytope $\overline{P}$ is said to be \emph{regular} if there exists a convex continuous piecewise linear function on $\overline{P}$ whose domains of linearity coincide with the faces of $\overline{\mathcal{P}}$.

\begin{definition}
\label{defn:dtp}
A decorated toric polygon $(\overline{P}, (L_i)_{1\leq i\leq Q})$ is a lattice polygon $\overline{P}$ in $\RR^2$, with a choice of non-overlapping integral line segments $L_i \subset \partial \overline{P}$ indexed by $1\leq i \leq Q$, where $Q \in \ZZ_{\geq 0}$. 
A \emph{regular polyhedral decomposition} $\overline{\mathcal{P}}$ of a decorated toric polygon $(\overline{P}, (L_i)_{1\leq i\leq Q})$ is a regular polyhedral decomposition of $\overline{P}$ such that the line segments $L_i$ are edges of $\overline{\mathcal{P}}$. We denote by $\overline{\mathcal{P}}^{[0]}$, $\overline{\mathcal{P}}^{[1]}$ and $\overline{\mathcal{P}}^{[2]}$ respectively the set of vertices, edges and faces of $\overline{\mathcal{P}}$.  
\end{definition}

Let $(\overline{P}, (L_i)_{1\leq i\leq Q})$ be a decorated toric polygon equipped with a regular polyhedral decomposition $\overline{\mathcal{P}}$. 
We define a web of 5-branes with 7-branes $W$ to be \emph{dual} to $(\overline{P}, (L_i)_{1\leq i\leq Q},\overline{\mathcal{P}})$ if 
there exists a convex continuous piecewise-linear function $\overline{\varphi}$ on $\overline{P}$ with domains of linearity $\overline{\mathcal{P}}$,
such that, denoting by  $\overline{W}$ the web of $5$-branes dual to 
$(\overline{\mathcal{P}},\overline{\varphi})$, 
$W$ is obtained from $\overline{W}$ by adding for every $1\leq i \leq Q$ a 7-brane ending the leg of $\overline{W}$ corresponding to $L_i$.

Recall from \S\ref{section_webs_log_CY} that any web of $5$-branes with $7$-branes determines a log Calabi surface with line bundle, denoted $(Y,D,L)$.
In what follows, the expression \emph{the log Calabi surface with line bundle associated to a decorated toric polygon with a regular polyhedral decomposition} refers to the corresponding data $(Y,D,L)$ associated with the dual web.

\begin{remark} \label{remark_GTP}
The notion of decorated toric polygon is closely related to the notion of generalized toric polygons (GTP) in the terminology of \cite{bourget2023generalized}. 
A GTP is a lattice polygon whose boundary integral points are colored into either black or white, with the condition that the vertices are colored in black. 
Every decorated toric polygon $(\overline{P}, (L_i)_{1\leq i\leq Q})$ defines a GTP by coloring in white the boundary integral points contained in the interior of one of the line segments $L_i$. 
Every GTP comes from a decorated toric polygon this way, but not in a unique way since line segments $L_i$ of integral length one do not give rise to any white dot. 
In \cite{bourget2023generalized}, GTPs are thought of as dual to webs with no unbounded 5-branes, whereas the notion of decorated toric polygon allows for unbounded 5-branes.
\end{remark}

To define the notion of ``consistency'' for a decorated toric polygon with a polyhedral decomposition, we first introduce the notion of a state.
\begin{definition}
\label{defn: state}
A \emph{state} 
 $\Psi=(\sim_{\Psi}, ({\mathcal{F}_{\Psi,i})}_{1\leq i \leq Q},  (A_{\Psi,i})_{1\leq i \leq Q} )$
on a a decorated toric polygon $(\overline{P}, (L_i)_i)$ with a polyhedral decomposition $\overline{\mathcal{P}}$ consists of:
\begin{itemize}
    \item[i)] An equivalence relation $\sim_{\Psi}$ on the set of edges $\overline{\mathcal{P}}^{[1]}$ of $\overline{\mathcal{P}}$, preserving the lattice length of edges. 
    \item[ii)] For every $1 \leq i \leq Q$, a set of faces
$\mathcal{F}_{\Psi,i} \subseteq \overline{\mathcal{P}}^{[2]}$. We refer to a face in $\mathcal{F}_{\Psi,i}$ as a ``$(\Psi,i)$-colored face'', and require that $\mathcal{F}_{\Psi,i} \cap \mathcal{F}_{\Psi,j} = \emptyset$, if $i\neq j$.
\item[iii)] For every $1 \leq i \leq Q$, an element $A_{\Psi,i} \in \overline{\mathcal{P}}^{[0]} \cup \overline{\mathcal{P}}^{[1]}$, referred to as the ``$(\Psi,i)$-active cell''.
\end{itemize}
\end{definition}

Now, we define the ``initial state'' $\Psi_{\mathrm{in}}=(\sim_{\Psi_{\mathrm{in}}}, ({\mathcal{F}_{\Psi_{\mathrm{in}},i})}_{1\leq i \leq Q},  (A_{\Psi_{\mathrm{in}},i})_{1\leq i \leq Q} )$ as follows:
\begin{itemize}
    \item[i)] $\sim_{\Psi_{\mathrm{in}}}$ is the trivial equivalence relation, that is, every edge is only equivalent to itself,
    \item[ii)] ${\mathcal{F}_{\Psi_{\mathrm{in}},i}} = \emptyset$, and;
    \item[iii)] The $(\Psi,i)$-active cell $A_{\Psi_{\mathrm{in}},i}$ is the edge $L_i$.
\end{itemize}

We define the following two moves on the set of states.

\begin{itemize}
    \item[1)] \emph{Triangle-coloring move}: Let $\Psi$ be a state, and let $1\leq i\leq Q$ be such that the $(\Psi,i)$-active cell $A_{\Psi,i}$ is an edge of $\mathcal{P}$.
Assume that there exists an edge $e_{\mathrm{in}}$ such that $e_{\mathrm{in}} \sim_{\Psi} A_{\Psi,i}$ and that there exists a face $\sigma$ of $\mathcal{P}$ such that:
\begin{itemize}
    \item[i)] $e_{\mathrm{in}}$ is an edge of $\sigma$.
    \item[ii)] $\sigma$ is an elementary triangle, of height $|e_{in}|$ with respect to the edge $e_{in}$, and with $|e_r|=|e_l|=1$, where $e_r$, $e_l$ are the two sides of $\sigma$ distinct of $e_{\mathrm{in}}$.
      \item[iii)] $\sigma$ is not $(\Psi,j)$-colored for any $1\leq j \leq Q$.
\end{itemize}
Then, we define a new state $\Psi'$, obtained from $\Psi$ by \emph{triangle-coloring} as follows:
\begin{itemize}
    \item[i)] We define $\sim_{\Psi'}$ as the equivalence relation generated by $\sim_{\Psi}$ and $e_l \sim_{\Psi'} e_r$.
    \item[ii)] We define $\mathcal{F}_{\Psi',i}
    =\mathcal{F}_{\Psi,i} \cup \{\sigma\}$ and $\mathcal{F}_{\Psi',j}=\mathcal{F}_{\Psi,j}$ for $j \neq i$, that is, $\sigma$ is $(\Psi',i)$-colored.
    \item[iii)] We define $A_{\Psi',i}$ as the vertex of $\sigma$ not contained in $e_{\mathrm{in}}$.
\end{itemize}

\item[2)] \emph{Trapezoid-coloring move}: Let $\Psi$ be a state, and let $1\leq i\leq Q$ be such that the $(\Psi,i)$-active cell $A_{\Psi,i}$ is an edge of $\mathcal{P}$.
Assume that there exists an edge $e_{\mathrm{in}}$ such that $e_{\mathrm{in}} \sim_{\Psi} A_{\Psi,i}$ and that there exists a face $\sigma$ of $\mathcal{P}$ such that:
\begin{itemize}
    \item[i)] $e_{\mathrm{in}}$ is an edge of $\sigma$.
    \item[ii)] $\sigma$ is an elementary trapezoid, such that, denoting by $e_{\mathrm{out}}$ the side of $\sigma$ parallel and opposite to $e_{\mathrm{in}}$, we have $|e_{\mathrm{out}}|=|e_{\mathrm{in}}|-h$, where $h$ is the lattice height of $\sigma$ with respect to $e_{\mathrm{in}}$, and such that $|e_r|=|e_l|=1$, where $e_r$, $e_l$ are the two other sides of $\sigma$.
      \item[iii)] $\sigma$ is not $(\Psi,j)$-colored for any $1\leq j \leq Q$.
\end{itemize}
Then, we define a new state $\Psi'$, obtained from $\Psi$ by \emph{trapezoid-coloring} as follows:
\begin{itemize}
    \item[i)] We define $\sim_{\Psi'}$ as the equivalence relation generated by $\sim_{\Psi}$ and $e_l \sim_{\Psi'} e_r$.
    \item[ii)] We define $\mathcal{F}_{\Psi',i}
    =\mathcal{F}_{\Psi,i} \cup \{\sigma\}$ and $\mathcal{F}_{\Psi',j}=\mathcal{F}_{\Psi,j}$ for $j \neq i$, that is, $\sigma$ is $(\Psi',i)$-colored.
    \item[iii)] We define $A_{\Psi',i}:=e_{\mathrm{out}}$.
\end{itemize}
\end{itemize}

Note that, for either a triangle-coloring or a trapezoid-coloring move, there exists a unique matrix $M \in SL(2,\ZZ)$ fixing the direction of $e_{\mathrm{in}}$ and sending the primitive integral direction of $e_l$ on the primitive integral direction of $e_r$. More generally, if $\Psi$ is a state obtained from the initial state by applying a finite sequence of triangle-coloring and trapezoid-coloring moves, and if $a$ and $b$ are two vertices of $\overline{\mathcal{P}}$ such that $a \sim_\Psi b$, then the composition of these matrices in $SL(2,\ZZ)$ or their inverse following a decomposition of $a \sim_\Psi b$ into ``elementary" equivalences induced by the moves defines a matrice $M_{ab}\in SL(2,\ZZ)$.

\begin{definition}
\label{Defn: DTP consistent}
A decorated toric polygon $(\overline{P}, (L_i)_{1\leq i\leq Q})$ with a regular polyhedral decomposition $\overline{\mathcal{P}}$ is \emph{consistent} if there exists a state $\Psi$ obtained from the initial state on $(\overline{P}, (L_i)_{1\leq i\leq Q})$ by a finite sequence of triangle-coloring and trapezoid-coloring moves, such that:
\begin{itemize}
    \item[i)] The $(\Psi,i)$-active cells $A_{\Psi, i}$ are vertices for all $1\leq i \leq Q$. Moreover, if not every face of $\mathcal{P}$ is $\Psi$-colored, we require that there exists a vertex $b$ of $\overline{\mathcal{P}}$ such that $A \sim_{\Psi} b$, and, denoting by $u$ the direction of the base $e_{\mathrm{in}}$ of the elementary triangle whose $\Psi$-coloring made $A_{\Psi,i}$ active, the intersection 
    \[ (b+\RR\, M_{A_{\Psi,i} b}\, u) \cap P\]
    is not entirely $\Psi$-colored.
    \item[ii)] All non-$\Psi$-colored faces (that is, elements of $\overline{\mathcal{P}}^{[2]} \setminus \cup_{1\leq i\leq Q} \mathcal{F}_{\Psi,i}$) are triangles of size one. 
    \end{itemize}
 We refer to such a state $\Psi$ as a \emph{final state}.   
\end{definition}

Given an initial state, there are finitely many possible sequences of triangle-coloring or trapezoid-coloring moves. Therefore, it follows from Definition \ref{Defn: DTP consistent} that we can check algorithmically if a decorated toric polygon with a regular polyhedral decomposition is consistent or not. 

Before proceeding, we discuss some properties of a final state. 
\begin{lemma} \label{lem_final_state}
Let $\Psi$ be a final state on a consistent decorated toric polygon $(\overline{P}, (L_i)_{1\leq i\leq Q})$ with a regular polyhedral decomposition $\overline{\mathcal{P}}$.
Then, for every $1\leq i \leq Q$, the union of $(\Psi,i)$-colored cells is of the form $\sigma_1 \cup \sigma_k$, where:
\begin{itemize}
\item[i)]
for all $1 \leq j \leq k-1$, $\sigma_j$ is an elementary trapezoid, with two parallel edges $e_{in,j}$ and $e_{out,j}$, which were once active cells, and with two other edges $e_{l,j}$ and $e_{r,j}$ such that $e_{l,j} \sim_{\Psi} e_{r,j}$.
\item[ii)] $\sigma_k$ is an elementary triangle, with an edge $e_{in, k}$, which was once an active cell, and with two other edges $e_{l,k}$ and $e_{r,k}$ such that 
$e_{l,k} \sim_{\Psi} e_{r,k}$.
\end{itemize}
Moreover, we have $|e_{in,1}|=|L_i|$, and, for every $1 \leq j \leq k-1$, we have 
$|e_{out,j}|=|e_{in, j+1}|$. 
\end{lemma}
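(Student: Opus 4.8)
\textbf{Proof strategy for Lemma \ref{lem_final_state}.}

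The plan is to unwind the definition of a final state by induction on the sequence of coloring moves that produced it, tracking for each fixed index $i$ the history of the $(\Psi,i)$-active cell. First I would observe that, by the very definition of the triangle-coloring and trapezoid-coloring moves, a face $\sigma$ is added to $\mathcal{F}_{\Psi,i}$ only when the current $(\Psi,i)$-active cell is an \emph{edge} $e_{\mathrm{in}}$ (up to $\sim_\Psi$-equivalence) that is a side of $\sigma$; a trapezoid-coloring move then moves the active cell to the opposite parallel edge $e_{\mathrm{out}}$, while a triangle-coloring move moves it to the apex vertex. Since in a final state all active cells $A_{\Psi,i}$ are vertices (Definition \ref{Defn: DTP consistent} i)), and since only a triangle-coloring move ever produces a vertex active cell (a trapezoid-coloring move always produces an edge), the last move affecting index $i$ must be a triangle-coloring move, contributing an elementary triangle $\sigma_k$ with base edge $e_{\mathrm{in},k}$ and the other two sides $e_{l,k}, e_{r,k}$ identified under $\sim_\Psi$. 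All earlier moves affecting index $i$ must be trapezoid-coloring moves (again because a triangle-coloring move is terminal for that index, as the active cell becomes a vertex and cannot be the $e_{\mathrm{in}}$ of any subsequent move for the same $i$). This yields exactly the claimed chain $\sigma_1,\dots,\sigma_{k-1}$ of elementary trapezoids followed by the elementary triangle $\sigma_k$.

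Next I would verify the adjacency and length bookkeeping. The $(\Psi,i)$-active cell starts, by the definition of the initial state, as $A_{\Psi_{\mathrm{in}},i} = L_i$, so the first move affecting $i$ uses an edge $e_{\mathrm{in},1}$ with $e_{\mathrm{in},1} \sim_\Psi L_i$; since $\sim_\Psi$ preserves lattice lengths (Definition \ref{defn: state} i)) we get $|e_{\mathrm{in},1}| = |L_i|$. After the $j$-th trapezoid-coloring move the active cell becomes $A_{\Psi',i} := e_{\mathrm{out},j}$, and the $(j{+}1)$-st move affecting $i$ then picks $e_{\mathrm{in},j+1}$ with $e_{\mathrm{in},j+1}\sim_\Psi e_{\mathrm{out},j}$, so again by length-preservation $|e_{\mathrm{out},j}| = |e_{\mathrm{in},j+1}|$ for $1\le j\le k-1$. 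The condition that $e_{\mathrm{in},j}$ and $e_{\mathrm{out},j}$ "were once active cells" is immediate from this description: $e_{\mathrm{in},j}$ equalled the active cell just before the $j$-th move (as the chosen $e_{\mathrm{in}}$, which is $\sim_\Psi$-equivalent to it — if one wants the active cell itself rather than a representative, note the move's output is the literal edge/vertex, and for $j\ge 2$ the input $e_{\mathrm{in},j}$ is $\sim_\Psi e_{\mathrm{out},j-1}=A$ after the previous move; for $j=1$ it is $\sim_\Psi L_i = A$), and $e_{\mathrm{out},j}$ becomes the active cell immediately after it. The equivalences $e_{l,j}\sim_\Psi e_{r,j}$ for $1\le j\le k$ are exactly the "elementary" equivalences adjoined to the relation in step i) of each coloring move, and they survive in the final $\sim_\Psi$ since the relation only grows.

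Finally I would check that the union of $(\Psi,i)$-colored faces is \emph{precisely} $\sigma_1\cup\dots\cup\sigma_k$ with no repetitions: a face can be colored by at most one index (the $\mathcal{F}_{\Psi,i}$ are pairwise disjoint, and within a single $i$ each coloring move requires $\sigma$ to be not yet $(\Psi,j)$-colored for any $j$, so in particular not already $(\Psi,i)$-colored), hence the list $\sigma_1,\dots,\sigma_k$ has distinct entries and exhausts $\mathcal{F}_{\Psi,i}$. The main (and really the only) subtlety is the clean separation of the history into "a run of trapezoids capped by one triangle", i.e., ruling out that a triangle-coloring move for index $i$ is ever followed by a further move for the same $i$; this is handled by the observation above that a triangle-coloring move leaves $A_{\Psi',i}$ a vertex, whereas both coloring moves require the active cell $A_{\Psi,i}$ to be an \emph{edge} to be applicable — so once a triangle is colored for index $i$, index $i$ is frozen, and since the final state has $A_{\Psi,i}$ a vertex, such a triangle must have been colored. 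Everything else is a direct transcription of the move definitions; no nontrivial geometric input beyond what is already recorded in the elementary triangle/trapezoid conditions (Definitions \ref{def_elementary_triangle}, \ref{def_elementary_trapezoid}) is needed.
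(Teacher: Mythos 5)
Your proof is correct and takes essentially the same approach as the paper's (two-sentence) proof: for each index $i$, the coloring history must be a run of trapezoid-coloring moves, which keep the active cell an edge, terminated by a single triangle-coloring move, which turns it into a vertex as required by the final-state condition. You make explicit the length-preservation of $\sim_\Psi$ and the disjointness bookkeeping that the paper's proof leaves implicit.
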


\begin{proof}
For every $1\leq i\leq Q$, the union of $(\Psi,i)$-colored cells is necessarily obtained by first applying a sequence of trapezoid coloring moves, and then finally applying a triangle-coloring move. Hence, the result follows from the description of trapezoid-coloring and triangle-coloring moves.
\end{proof}

\begin{lemma} \label{lem_colored_area}
Let $\Psi$ be a final state on a consistent decorated toric polygon $(\overline{P}, (L_i)_{1\leq i\leq Q})$ with a regular polyhedral decomposition $\overline{\mathcal{P}}$. Then, the area of the $(\Psi,i)$-colored part of $\overline{P}$ is $\frac{1}{2}|L_i|^2$.
\end{lemma}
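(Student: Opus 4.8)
The plan is to compute the area of the $(\Psi,i)$-colored region by unfolding the chain of elementary trapezoids and the terminal elementary triangle described in Lemma \ref{lem_final_state} into a single large elementary triangle, and then using the standard formula for the area of an elementary triangle. First I would recall from Lemma \ref{lem_final_state} that the $(\Psi,i)$-colored cells form a sequence $\sigma_1, \dots, \sigma_k$ where $\sigma_1, \dots, \sigma_{k-1}$ are elementary trapezoids and $\sigma_k$ is an elementary triangle, glued along the edges $e_{\mathrm{out},j} = e_{\mathrm{in},j+1}$, with $|e_{\mathrm{in},1}| = |L_i|$. Since each $\sigma_j$ is an elementary trapezoid with lattice height $h_j$ with respect to $e_{\mathrm{in},j}$, and satisfies $|e_{\mathrm{out},j}| = |e_{\mathrm{in},j}| - h_j$, the lattice lengths $\ell_j := |e_{\mathrm{in},j}|$ satisfy $\ell_{j+1} = \ell_j - h_j$, with $\ell_1 = |L_i|$ and $\ell_k = |e_{\mathrm{in},k}|$ equal to the height of the terminal elementary triangle $\sigma_k$.

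The key computation is then the area of each piece. An elementary trapezoid $\sigma_j$ with parallel sides of lattice lengths $\ell_j$ and $\ell_{j+1}$ and lattice height $h_j = \ell_j - \ell_{j+1}$ has Euclidean area (normalized so a size-one triangle has area $\tfrac12$) equal to $\tfrac12 (\ell_j + \ell_{j+1}) h_j = \tfrac12(\ell_j^2 - \ell_{j+1}^2)$. Similarly, the terminal elementary triangle $\sigma_k$ has base $e_{\mathrm{in},k}$ of lattice length $\ell_k$ and height $\ell_k$ (by Definition \ref{def_elementary_triangle}), so its area is $\tfrac12 \ell_k^2$. Summing telescopes:
\[
\mathrm{Area}\Big(\bigcup_{j=1}^k \sigma_j\Big) = \sum_{j=1}^{k-1} \tfrac12(\ell_j^2 - \ell_{j+1}^2) + \tfrac12 \ell_k^2 = \tfrac12 \ell_1^2 = \tfrac12 |L_i|^2\,,
\]
which is the claimed formula. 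I would present the two elementary area computations either by a direct lattice-geometry argument (a size-one triangle has area $\tfrac12$, and an elementary trapezoid decomposes via Remark \ref{remark_elementary_subdivision} into elementary triangles and trapezoids, ultimately into size-one triangles whose count is $\ell_j^2 - \ell_{j+1}^2$), or more cleanly by applying a unimodular transformation putting $e_{\mathrm{in},j}$ horizontal and computing directly.

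The main obstacle I expect is not the telescoping itself but justifying the two building-block area formulas carefully in the normalization used throughout the paper, in particular verifying that an elementary trapezoid with parallel edges of lengths $\ell$ and $\ell - h$ and height $h$ indeed has the shape forcing its area to be $\tfrac12(\ell^2 - (\ell-h)^2)$ — this uses that the two non-parallel sides have lattice length one, which rigidifies the trapezoid up to unimodular equivalence. An alternative, perhaps slicker, route would bypass Lemma \ref{lem_final_state} entirely: by Theorem \ref{thm: generic push in} and the duality between consistent decorated polygons and generic consistent webs, the $(\Psi,i)$-colored region is exactly the region swept out when the $i$-th $7$-brane is fully pushed across the vertices dual to $\sigma_1, \dots, \sigma_k$, and one could instead invoke Lemma \ref{lem_L2} applied to the local web $W_v$ dual to the union of these cells — there $L^2 = 2\,\mathrm{Area}(\text{region}) - a_i^2$ with $a_i = |L_i|$, and consistency with the terminal fibration structure (where $L^2 = 0$, cf. \S\ref{sec_3_valent}) forces $2\,\mathrm{Area} = |L_i|^2$. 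I would go with the direct telescoping argument as the primary proof since it is self-contained and transparent, relegating the web-theoretic interpretation to a remark.
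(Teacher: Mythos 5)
Your proposal is correct and takes essentially the same approach as the paper: both decompose the $(\Psi,i)$-colored region into the chain $\sigma_1,\dots,\sigma_k$ from Lemma \ref{lem_final_state}, compute the normalized area of each trapezoid and the terminal triangle, and sum. Your telescoping form $\tfrac12(\ell_j^2-\ell_{j+1}^2)$ is algebraically identical to the paper's $\tfrac12(2\ell_j-h_j)h_j$ with $\ell_j = |L_i|-\sum_{l<j}h_l$, so the two presentations differ only in bookkeeping, not in substance.
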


\begin{proof}
Using the notations introduced in Lemma \ref{lem_final_state}, 
for every $1\leq j \leq k$, we denote by $h_j$ the height of $\sigma_j$. Note that $\sum_{j=1}^k h_j=L_i$.
Then, the normalized area of the $(\Psi,i)$-colored part is
\begin{align*} \frac{1}{2}\left((2L_i-h_1)h_1+(2L_i-2h_1-h_2)h_2+\cdots \right)
&= \frac{1}{2}
\left(2L_i \sum_{j=1}^k h_j - \left(\sum_{j=1}^k h_j\right)^2 \right)\\
&= 
\frac{1}{2}(2L_i^2-L_i^2) = \frac{1}{2} L_i^2 \,. 
\end{align*}
\end{proof}

Finally, we show that a decorated toric polygon with a regular polyhedral decomposition is consistent as in Definition \ref{Defn: DTP consistent}, if and only if the associated dual web is consistent.

\begin{theorem} \label{thm_polytopes_webs}
    Let $(\overline{P}, (L_i)_{1\leq i\leq Q})$ be a decorated toric polygon with a regular polyhedral decomposition $\overline{\mathcal{P}}$, and let $W$ be a web of 5-branes with 7-branes dual to $(\overline{P}, (L_i)_{1\leq i\leq Q},\overline{\mathcal{P}})$. Then, 
    the decorated toric polygon 
    $(\overline{P}, (L_i)_{1\leq i\leq Q})$ with regular polyhedral decomposition  $\overline{\mathcal{P}}$ is consistent if and only if $W$ is a generic consistent web of 5-branes with 7-branes.
\end{theorem}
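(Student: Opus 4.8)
The strategy is to translate both sides of the equivalence into the same geometric statement about the degeneration of log Calabi--Yau surfaces $\pi:(\mathcal{Y},\mathcal{D},\mathcal{L})\to\CC$ attached to $W$, and then match the combinatorics of the coloring moves with the geometry of M1 flops and contractions established in \S\ref{section_consistent_webs}--\S\ref{section_fully_pushed}. Concretely, I would first observe that a regular polyhedral decomposition $\overline{\mathcal{P}}$ of $(\overline{P},(L_i)_i)$ together with a defining piecewise-linear function $\overline{\varphi}$ produces, via \S\ref{section_degeneration_toric_surfaces} and \S\ref{sec: degeneration}, the degeneration $\pi:(\mathcal{Y},\mathcal{D},\mathcal{L})\to\CC$ whose central fiber $(\mathcal{Y}_0,\mathcal{D}_0,\mathcal{L}_0)$ has dual intersection complex the web $\overline{W}$ dual to $(\overline{\mathcal{P}},\overline{\varphi})$, with a $7$-brane ending the leg corresponding to each $L_i$. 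The irreducible components $Y_0^v$ are the polarized toric surfaces dual to the faces $\sigma\in\overline{\mathcal{P}}^{[2]}$, and each $7$-brane endpoint corresponds to an interior blow-up on a boundary component of one such surface.

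\textbf{Dictionary between coloring moves and flops.} The heart of the argument is a lemma identifying the state $\Psi$ of Definition \ref{Defn: DTP consistent} with the combinatorial bookkeeping of pushing a single $7$-brane through a chain of vertices of $W$. I would prove: a triangle-coloring move applied to the $i$-th active cell corresponds exactly to pushing the $i$-th $7$-brane across a $3$-valent vertex dual to an elementary triangle (as in \S\ref{sec_3_valent}, the condition $w_3=h$, $w_1=w_2=1$ being precisely the defining data of an \emph{elementary} triangle in Definition \ref{def_elementary_triangle}); a trapezoid-coloring move corresponds to pushing the $7$-brane across a $4$-valent vertex dual to an elementary trapezoid (\S\ref{sec_3_valent}'s $4$-valent analogue, condition $w_3-w_4=h$, Definition \ref{def_elementary_trapezoid}). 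The equivalence relation $\sim_\Psi$ on edges of $\overline{\mathcal{P}}$ records which boundary curves of the surfaces $Y_0^v$ get glued under the monodromy shears, and the matrices $M_{ab}\in SL(2,\ZZ)$ are exactly the shear transformations \eqref{eq_seed_mutation}; the "active cell" tracks the position of the edge on which the subsequent $7$-brane still ends, i.e. the $(-1)$-curve $E_{e,i}'$ of \S\ref{section_consistent}. Under this dictionary, a \emph{final state} $\Psi$ (Definition \ref{Defn: DTP consistent}) corresponds to a modified web $W'$, obtained from $W$ by pushing operations of $7$-branes, in which all the $7$-branes are fully pushed (Definition \ref{def_fully_pushed}) and all remaining vertices are $3$-valent of multiplicity one: condition ii) of Definition \ref{Defn: DTP consistent} (all non-colored faces are triangles of size one) is exactly the statement that the remaining vertices of $W'$ are $3$-valent of multiplicity one, while condition i) — that the active cells become vertices and that the line $(b+\RR M_{A_{\Psi,i}b}u)\cap P$ is not entirely $\Psi$-colored — is the dual of condition ii) of Definition \ref{def_fully_pushed}, namely that the half-line $x_i+\RR_{<0}(p_i,q_i)$ does meet the web $W'$ (equivalently, that the interior $(-1)$-curve tropicalizing to that half-line has positive intersection with $L$, by Definition \ref{def: consistency} ii)). Lemmas \ref{lem_final_state} and \ref{lem_colored_area} are exactly what one needs to see that each colored region assembles into such a chain of elementary trapezoids capped by an elementary triangle, and that the areas match $\tfrac12|L_i|^2$, which via Lemma \ref{lem_L2} is the bookkeeping $\sum a_{e,i}^2$ hidden in $L^2$.

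\textbf{Conclusion of the equivalence.} With the dictionary in hand, the forward direction runs: if $(\overline{P},(L_i)_i,\overline{\mathcal{P}})$ is consistent, take a final state $\Psi$; it produces a modified web $W'$ with fully pushed $7$-branes and all vertices $3$-valent of multiplicity one, so by the "if" direction of Theorem \ref{thm: generic push in}, $W$ is a generic consistent web. For the converse, if $W$ is a generic consistent web, apply the "only if" direction of Theorem \ref{thm: generic push in} to get a modified web $W'$ with fully pushed $7$-branes and $3$-valent multiplicity-one vertices; by Theorem \ref{thm_generic_vertices}, every vertex of $W$ is dual to an elementary triangle or trapezoid, so the sequence of $7$-brane pushings producing $W'$ is, under the dictionary, a well-defined sequence of triangle- and trapezoid-coloring moves starting from the initial state, terminating in a state $\Psi$ satisfying conditions i) and ii) of Definition \ref{Defn: DTP consistent} — condition ii) from $3$-valence of multiplicity one, condition i) from the "fully pushed" condition ii) of Definition \ref{def_fully_pushed}. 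Hence $(\overline{P},(L_i)_i,\overline{\mathcal{P}})$ is consistent.

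\textbf{Main obstacle.} The delicate point is verifying that the coloring moves are "local enough" to be dual to pushing a single $7$-brane: one must check that when two distinct $7$-branes' monodromy cuts or colored regions would interact, the genericity hypothesis (no triple intersections, established in the proof of Theorem \ref{thm_generic_vertices}) guarantees the colored regions $\mathcal{F}_{\Psi,i}$ stay disjoint, so that the moves can be performed in any order and the matrices $M_{ab}$ are unambiguous. In other words, the real work is showing that the apparently rigid combinatorial recipe of Definition \ref{Defn: DTP consistent} is an exact shadow of the flexible geometric operation of pushing $7$-branes, and that the "active cell" mechanism correctly reproduces how the weight $\mu$ on the segment from the origin to the pushed $7$-brane evolves (Cases I--III of \S\ref{sec pushing}); I would isolate this as a separate lemma comparing the weight arithmetic of \S\ref{sec pushing} with the edge-length arithmetic $|e_{out,j}|=|e_{in,j+1}|$ of Lemma \ref{lem_final_state}.
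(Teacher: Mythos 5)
Your proposal is correct and follows essentially the same route as the paper: both directions are reduced to Theorem \ref{thm: generic push in} (generic consistency $\Leftrightarrow$ existence of a modified web with fully pushed 7-branes and 3-valent multiplicity-one vertices), with Theorem \ref{thm_generic_vertices} supplying the dictionary identifying pushing a 7-brane through an elementary triangle or trapezoid with a triangle- or trapezoid-coloring move, condition i) of Definition \ref{Defn: DTP consistent} matching "fully pushed," and condition ii) matching 3-valence of multiplicity one. The paper's own proof is considerably terser and leaves implicit the locality and ordering issues you flag as the "main obstacle"; your elaboration of those points is consistent with the genericity analysis already carried out in the proof of Theorem \ref{thm_generic_vertices}.
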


\begin{proof}
Recall by Theorem \ref{thm: generic push in}, a web $W$ of $5$-branes with $7$-branes is a generic consistent web if and only if there is a modified web $W'$ obtained from $W$ such that all $7$-branes in $W'$ are fully pushed as in Definition \ref{def_fully_pushed}, and furthermore all vertices of $W'$ are 3-valent. 

Let $W$ be a web of $5$-branes with $7$-branes dual to a decorated toric polygon with a regular polyhedral decomposition $(\overline{P}, (L_i)_{1\leq i\leq Q}, \overline{\mathcal{P}})$.  
By Theorem \ref{thm_generic_vertices}, the operation of pushing $7$ branes in $W$ amounts to triangle-coloring and trapezoid-coloring moves on the initial state of $(\overline{P}, (L_i)_{1\leq i\leq Q}, \overline{\mathcal{P}})$. Thus, the modified web $W'$ is dual to a decorated toric polygon with a final state obtained from the initial state by applying consecutively triangle and trapezoid coloring moves. Condition i) in Definition \ref{Defn: DTP consistent} of a final state is dual to the fact that the 7-branes in $W'$ are fully pushed. Finally, condition ii) in Definition \ref{Defn: DTP consistent} of a final state is dual to the fact that all vertices of $W'$ are 3-valent, and this concludes the proof. 
\end{proof}

\begin{example}
    Figure \ref{Fig37} depicts a consistent decorated toric polygon with a regular polyhedral decomposition, along with a final state. This final state is obtained by first applying a trapezoid-coloring move, and then a triangle-coloring move. The trapezoid-coloring move is dual to the 7-brane crossing a 4-valent vertex of the web, whereas the triangle-coloring move is dual to the 7-brane crossing a 3-valent vertex of the web.
\end{example}

\begin{figure}[hbt!]
\center{\scalebox{.9}{\includegraphics{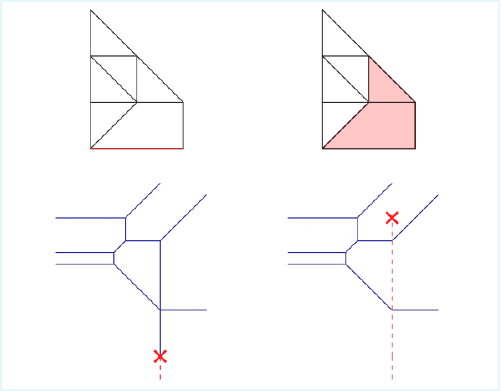}}}
\caption{On the left, a consistent decorated toric polygon with a regular polyhedral decomposition and its dual web. On the right, a corresponding coloring and the dual web with fully pushed 7-branes.}
\label{Fig37}
\end{figure}

\subsection{Symington polygons}
 \label{section_symington}
A ``Symington polygon'' is a polygon endowed with an integral affine structure with singularities, as introduced in \cite{symington} where such polygons are referred to as \emph{almost toric bases}. In this section, we provide a recipe to construct a Symington polygon from a consistent decorated toric polygon with a regular polyhedral decomposition. We also explain how to view such a Symington polygon as the base of an almost toric fibration for the associated log Calabi--Yau surface with line bundle $(Y,D,L)$.

We explain in a moment how to construct the Symington polygon by some cutting-gluing operations on a decorated toric polygon with a regular polyhedral decomposition. Let $(\overline{P}, (L_i)_{1\leq i\leq Q})$ be a consistent decorated toric polygon with a regular polyhedral decomposition $\mathcal{\overline{P}}$. Denote by $\Psi$ a final state on $(\overline{P}, (L_i)_{1\leq i\leq Q}, \mathcal{\overline{P}})$. 
Associated to the data of $(\overline{P}, (L_i)_{1\leq i\leq Q}, \mathcal{\overline{P}})$ and the final state $\Psi$, we define the Symington polygon, which we denote by $P$, as the integral affine manifold with singularities obtained as follows: 
First, cut out all the line segments $L_i$ and the interiors of the unions of the $(\Psi,i)$-colored faces. Then, glue the remaining edges which are equivalent according to the equivalence relation $\sim_{\Psi}$, that is;
\begin{equation}
\label{eq: identify}
    P:= \left(  \overline{P} \setminus \left( \bigcup_{i=1}^Q \mathrm{Int}(\mathcal{C}_i) \cup L_i \right) \right) \Big/ \sim_{\Psi} \,,
\end{equation}
where $\mathcal{C}_i = \bigcup_{\sigma \in \mathcal{F}_{\Psi,i}} \sigma$ is the union of all $(\Psi,i)$-colored faces, and $\mathrm{Int}(\mathcal{C}_i)$ denotes the interior of $\mathcal{C}_i$. 
The equivalence relation $\sim_{\Psi}$ is still well-defined after the cuts, by Lemma \ref{lem_final_state}. Moreover, the matrices $M_{ab}\in SL(2,\ZZ)$ introduced above Definition \ref{Defn: DTP consistent} for pairs of vertices $a$ and $b$ such that $a \sim_{\Psi} b$ are the matrices that identify the integral tangent space at $a$ with the integral tangent space at $b$.

The singularities of the integral affine structure on $P$ are located at the images of the active vertices $A_{\Psi, i}$ of $\mathcal{\overline{P}}$ in $P$, under the identification \eqref{eq: identify}. On the other hand, the polyhedral decomposition $\overline{\mathcal{P}}$ of $\overline{P}$ naturally induces
a polyhedral decomposition $\mathcal{P}$ on the Symington polygon $P$.
By construction, all singularities of the affine structure on $P$ are at vertices of $\mathcal{P}$. Furthermore, all faces of $\mathcal{P}$ are lattice triangles of size one, by the definition of consistency of a decorated toric polygon (see item ii) in Definition \ref{Defn: DTP consistent}). In particular, the number of faces of $\mathcal{P}$ is equal to $2 \mathrm{Area}(P)$, where the area $\mathrm{Area}(P)$ of $P$ is given by the following result:

\begin{lemma} \label{lemma_area}
Let $\Psi$ be a final state on a consistent decorated toric polygon $(\overline{P}, (L_i)_{1\leq i\leq Q})$ with a regular polyhedral decomposition $\overline{\mathcal{P}}$, and let $(P,\mathcal{P})$ be the corresponding Symington polygon. Then,
denoting by $(Y,D,L)$ the associated log Calabi--Yau surface with line bundle,
we have 
\begin{equation} \label{eq_area_symington}
\mathrm{Area}(P)=\frac{1}{2}L^2 \,.\end{equation}
In particular, $P$ is an interval if $L^2=0$. Moreover, if $L^2 >0$, then $P$ is homeomorphic to a disk if $L \cdot D>0$, and $P$ is homeomorphic to a sphere if $L\cdot D=0$.
\end{lemma}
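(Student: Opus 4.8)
The plan is to compute the area of the Symington polygon $P$ directly from the cutting-gluing construction, relating it to the area of the original lattice polygon $\overline{P}$ and to the combinatorics of the final state $\Psi$. The cutting and gluing operations that produce $P$ from $(\overline{P},(L_i)_i)$ preserve the affine volume form away from the singularities: gluing along equivalent edges via the matrices $M_{ab}\in SL(2,\ZZ)$ is area-preserving since $SL(2,\ZZ)$ acts by volume-preserving transformations, and no area is created. Hence $\mathrm{Area}(P)$ equals the area of $\overline{P}$ minus the total area cut out, namely the union $\bigcup_{i=1}^Q \mathcal{C}_i$ of the $(\Psi,i)$-colored faces (the line segments $L_i$ have measure zero). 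By Lemma \ref{lem_colored_area}, the area of the $(\Psi,i)$-colored part of $\overline{P}$ is $\frac{1}{2}|L_i|^2$, and these colored regions are pairwise disjoint (by condition ii) in Definition \ref{defn: state}). Therefore
\[
\mathrm{Area}(P) = \mathrm{Area}(\overline{P}) - \sum_{i=1}^Q \frac{1}{2}|L_i|^2 = \frac{1}{2}\Big(2\,\mathrm{Area}(\overline{P}) - \sum_{i=1}^Q |L_i|^2\Big).
\]
Now I would invoke Lemma \ref{lem_L2}: the right-hand side is precisely $\frac{1}{2}L^2$, once one observes that the integral length $|L_i|$ of the decorating segment $L_i$ equals $\sum_{j} a_{e,j}$ for the corresponding leg $e$ and that the single multiplicity $a_{e,i}$ in the formula of Lemma \ref{lem_L2} is refined by the decorated-polygon bookkeeping — more precisely, running through the dictionary of \S\ref{sec_consistent_DTP}, the quantity $\sum_{e}\sum_{i} a_{e,i}^2$ matches $\sum_{i=1}^Q |L_i|^2$ by Theorem \ref{thm_polytopes_webs}. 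This gives \eqref{eq_area_symington}.

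For the topological statements, I would analyze the boundary of $P$ after the cuts and gluings. If $L^2 = 0$, then by Lemma \ref{lem: connected} (ii) we have $L\cdot D = 0$ and a general $C\in|L|$ is a disjoint union of genus one curves; combinatorially this forces $\overline{P}$ to be one-dimensional up to the coloring, so $P$ has zero area and is an interval. If $L^2 > 0$ and $L\cdot D > 0$, then $W^{\mathrm{asym}}$ is of Type I, so the web $W'$ with fully pushed 7-branes still has unbounded legs; dually, $P$ retains a genuine boundary coming from the uncut sides of $\overline{P}$, and since $P$ is obtained from a polygon by removing finitely many open convex colored regions touching the boundary and then gluing boundary edges in pairs, a straightforward Euler-characteristic/identification argument shows $P$ is homeomorphic to a closed disk. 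If $L^2 > 0$ and $L\cdot D = 0$, then $W^{\mathrm{asym}}$ is of Type II, all boundary edges of $\overline{P}$ are entirely consumed by the colored regions $\mathcal{C}_i$, and the gluing of the remaining edges $e_{l,j}\sim_\Psi e_{r,j}$ closes up $P$ into a surface without boundary; counting that the total boundary gets identified in pairs with matching lattice lengths (Lemma \ref{lem_final_state}) and that $P$ is connected and orientable forces $P\cong S^2$.

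\textbf{Main obstacle.} The delicate point is verifying that the cutting-gluing operations are genuinely area-preserving and that, after removing the colored regions $\mathcal{C}_i$, the remaining edges along the boundary of $\overline{P}\setminus\bigcup \mathrm{Int}(\mathcal{C}_i)$ pair up correctly under $\sim_\Psi$ so that no area is double-counted or lost — this relies crucially on the structure of a final state established in Lemma \ref{lem_final_state}, especially the chaining relations $|e_{out,j}| = |e_{in,j+1}|$ which guarantee that consecutive trapezoids in a colored region fit together exactly and the terminal triangle closes it off. The second subtlety is the topology in the $L\cdot D = 0$ case: one must check that the identifications of the $e_{l,j}$ with $e_{r,j}$ produce an orientable surface (rather than a non-orientable one), which follows because all the gluing matrices $M_{ab}$ lie in $SL(2,\ZZ)$ and hence are orientation-preserving. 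Neither of these is deep, but both require careful unwinding of the definitions in \S\ref{sec_consistent_DTP}; I would expect the bookkeeping to be the bulk of the written-out proof.
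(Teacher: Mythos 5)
Your proof is correct and follows essentially the same route as the paper's: compute $2\,\mathrm{Area}(P)=2\,\mathrm{Area}(\overline{P})-\sum_{i=1}^Q|L_i|^2$ via Lemma \ref{lem_colored_area}, identify the right-hand side with $L^2$ via Lemma \ref{lem_L2}, and then read off the topology of $P$ from whether the boundary of $\overline{P}$ survives, is contracted, or the polygon collapses to an interval. The only quibble is a bookkeeping slip in your dictionary ($|L_i|$ equals the single multiplicity $a_{e,j}$ of the one 7-brane corresponding to the segment $L_i$, not a sum over all 7-branes on the leg $e$), but you arrive at the correct identity $\sum_{e,i}a_{e,i}^2=\sum_{i=1}^Q|L_i|^2$, and your treatment of the topological cases is if anything more careful than the paper's.
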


\begin{proof}
Since $P$ is obtained from $\overline{P}$ by cutting the $(\Psi,i)$-colored cells for all $1\leq i \leq Q$, it follows from 
Lemma \ref{lem_colored_area} that $2\mathrm{Area}(P)=2\mathrm{Area}(\overline{P})-\sum_{i=1}^Q |L_i|^2$. Hence, Equation \eqref{eq_area_symington} follows from Lemma 
\ref{lem_L2}. Finally, if $L^2>0$, then, depending on whether $L\cdot D=0$ or $L\cdot D>0$, the boundary of $\overline{P}$ is either fully contracted to a point or remains an interval of positive length in $P$, and so $P$ is either homeomorphic to a disk or an interval.
\end{proof}

Let $(P,\mathcal{P})$ be the Symington polygon corresponding to a final state of a consistent decorated toric polygon with a regular polyhedral decomposition, with associated log Calabi--Yau surface $(Y,D,L)$. By Theorem \ref{thm_polytopes_webs}, the corresponding dual web of 5-branes with 7-branes $W$ is generic and consistent.
If $L^2>0$, then, as explained in \S\ref{section_fully_pushed}, 
the web $W$ defines a degeneration of polarized log Calabi--Yau surfaces $((\mathcal{Y}')^{\mathrm{pol}}, (\mathcal{D}')^{\mathrm{pol}}, (\mathcal{L}')^{\mathrm{pol}}) \rightarrow \CC$, whose dual intersection complex of the central fiber is the modified web $W'$, and whose general fiber $(Y^{\mathrm{pol}}, D^{\mathrm{pol}}, L^{\mathrm{pol}})$ is the polarized 
log Calabi--Yau surface obtained from $(Y,D,L)$
by contracting all the curves having intersection number zero with $L$. The Symington polygon $(P,\mathcal{P})$ can be viewed as the intersection complex of the central fiber of this degeneration
(see for instance \cite{GSreal}). In particular, the faces of $\mathcal{P}$ are lattice triangles of size one that are the momentum polytopes of the polarized irreducible toric components $(\PP^2, \partial \PP^2, \mathcal{O}(1))$ of the central fiber described in Lemma \ref{lem: generic_central_fiber}.

As in \cite[Lemma 5.6]{engel_friedman}, 
the integral affine structure with singularities on $P$ can be perturbed to obtain an integral affine manifold $\widetilde{P}$ 
with only focus-focus singularities. Then, as explained in \cite[\S5.4]{symington}, one can view $\widetilde{P}$ as the base of an \emph{almost toric fibration} on $(Y^{\mathrm{pol}},D^{\mathrm{pol}},L^{\mathrm{pol}})$, that is, a fibration whose general fibers are tori $T^2$, and singular fibers are pinched tori. Moreover, the fibers are Lagrangian submanifolds with respect to a symplectic form $\omega$ on $Y$ with class in $c_1(L^{\mathrm{pol}}) \in H^2(Y^{\mathrm{pol}},\RR)$. 
This fibration is obtained by surgeries introducing singular fibers in the toric momentum map fibration 
$(\overline{Y},\overline{D},\overline{L})\rightarrow\overline{P}$ on the polarized toric surface $(\overline{Y},\overline{D},\overline{L})$ with momentum polytope $\overline{P}$.
An alternative way to understand such almost toric fibrations is to consider the
degeneration 
$((\mathcal{Y}')^{\mathrm{pol}}, (\mathcal{D}')^{\mathrm{pol}}, (\mathcal{L}')^{\mathrm{pol}}) \rightarrow \CC$
of $(Y^{\mathrm{pol}},D^{\mathrm{pol}},L^{\mathrm{pol}})$, with central fiber given by a union of polarized toric surfaces $(\PP^2, \partial \PP^2, \mathcal{O}(1))$, and having $(P,\mathcal{P})$ as intersection complex. By appropriate scalings and refinements, one can ensure that after perturbing the decomposition $\widetilde{\mathcal{P}}$ on $\widetilde{P}$, all singularities still lie inside some edges. Then, as explained in \cite{arguz_KN}, the momentum maps on the irreducible components 
$(\PP^2, \partial \PP^2, \mathcal{O}(1))$ of the central fiber glue to a fibration with base the intersection complex $(\widetilde{P},\widetilde{\mathcal{P}})$. By \cite[Theorem 5.7]{arguz_KN}, the almost toric fibration on $(Y^{\mathrm{pol}},D^{\mathrm{pol}},L^{\mathrm{pol}
})$ can be obtained as a deformation of this fibration over $(\widetilde{P},\widetilde{\mathcal{P}})$.

\subsection{Comparison with existing s-rules}
\label{section_comparison}
Up to date, two proposals have been made
in the physics literature, providing so-called ``s-rules", for characterizing generalized toric polygons (GTPs) corresponding to 5d SCFTs \cite{BBT,van20215d}. Recall that decorated toric polygons are similar to such GTPs, though they encode slightly more further information -- see Remark \ref{remark_GTP} explaining how to translate between decorated toric polygons and GTP's. We provide below some concrete examples illustrating how each of the existing proposals in the literature differs from our notion of consistency.

The first proposed ``s-rule" can be found in \cite[\S 3.2]{BBT}. If a decorated lattice polygon $(\overline{P}, (L_i)_{1\leq i\leq Q})$ satisfies the s-rule of \cite[\S 3.2]{BBT}, then there exists in particular a polyhedral decomposition of $(\overline{P}, (L_i)_{1\leq i\leq Q})$ whose faces are either trapezoids or triangles whose three edges of the same lattice lengths. In particular, elementary triangles such as the one represented in the middle of Figure \ref{Fig:basic_triangles} are not allowed in the polyhedral decompositions considered in \cite[\S 3.2]{BBT}. 
It follows that the decorated polygon in Figure \ref{Fig29} does not satisfy the ``s-rule" of \cite[\S 3.2]{BBT}, whereas it admits a consistent regular polyhedral decomposition in the sense of Definition \ref{Defn: DTP consistent}. The M-theory dual Calabi--Yau 3-fold to the corresponding web of $5$-branes with $7$-branes for this example is described in Example \ref{Ex: fig 29}.

A modified version of the s-rule can be found in \cite[\S 3.3]{van2020symplectic}. In \cite{van2020symplectic} it is proposed
that a GTP defines a 5d SCFT if the s-rule in \cite[\S 3.3]{van2020symplectic} and the r-rule in 
\cite[\S 3.4]{van2020symplectic} are satisfied. 
Elementary triangles and elementary trapezoids are tiles in the sense of \cite[Definition 5]{van2020symplectic}, and so it follows that a consistent decorated toric polygon $(P, (L_i)_{1\leq i \leq Q})$ with a regular polyhedral decomposition $\overline{\mathcal{P}}$ satisfies the s-rule of \cite[\S 3.3]{van2020symplectic}. It also satisfies the 
r-rule $r \geq 0$ in 
\cite[\S 3.4]{van2020symplectic} since, by Lemma \ref{lemma_area}, the rank $r$ of the corresponding GTP is equal to $\mathrm{Area}(P)+1$, where $\mathrm{Area}(P)$ is the area of the associated Symington polygon. However, there are GTPs that satisfy both the s-rule and the r-rule but such that the corresponding 
decorated toric polygon is not consistent for any regular polyhedral decomposition. Moreover, even when the dual web is supersymmetric in the sense of Definition 
\ref{def_supersymmetric_asymp_web}, the rank of the corresponding 5d SCFT is not necessarily equal to the rank $r$ of the GTP, for the general reason presented in
Remark \ref{remark_r_rule} and contrary to the expectation of \cite{van2020symplectic} -- see example below. It is an interesting open question to determine if the web dual to a GTP that satisfies both the s-and the r-rule of \cite{van2020symplectic} is always supersymmetric in the sense of Definition \ref{def_supersymmetric_asymp_web}.

\begin{example}\label{example_gtp} Consider the GTP $\overline{P}$ on the right of Figure \ref{Fig16}. The polyhedral decomposition of $\overline{P}$
represented in Figure \ref{Fig16} satisfies the s-rule. The r-rule is also satisfied: the area of $\overline{P}$ is $\frac{37}{2}$ whereas the sum of squares of the parts of the partitions determined by the white dots is 
\[ 4^2+1^2+1^2+1^2+4^2+1^2+1^2=37\,,\]
 and so we have 
$r=\frac{37}{2}+1-\frac{37}{2}=1$ by \cite[Equation 3.47]{van2020symplectic}. Therefore, \cite{van2020symplectic} predicts that the GTP $\overline{P}$ defines a rank one 5d SCFT. However, we argue that it is not the case as can be seen from the web picture. 
Pushing 7-branes as on the right of Figure \ref{Fig17} shows that the general curve $C \in |L|$ in the corresponding log Calabi--Yau surface with line bundle $(Y,D,L)$ is the disjoint union of a connected genus two curve and of a genus zero $(-2)$-curve. 
The arithmetic genus of this disconnected curve $C$ is indeed equal to $r=1$, but $C$ is not a connected genus one curve. Since $C$ is smooth, the web is supersymmetric according to Definition \ref{def_supersymmetric_asymp_web}, and so we also expect the existence of a 5d SCFT.
However, this 5d SCFT is not of rank one, but is obtained by adding one free hypermultiplet to a rank two SCFT, as explained following Definition \ref{def: consistency}. In fact, since a component of $C$ has negative self-intersection, it follows from Lemma \ref{lem_L_nef}
that the corresponding decorated toric polygon is not consistent -- this can also be checked directly by inspecting the possible polyhedral decompositions.  
\end{example}

\begin{figure}[hbt!]
\center{\scalebox{0.9}{\includegraphics{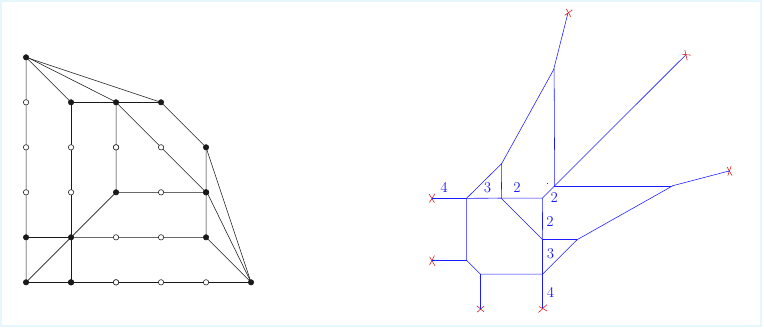}}}
\caption{A non-consistent GTP that satisfies both the s-rule and the r-rule of \cite{van2020symplectic}.}
\label{Fig16}
\end{figure}

\begin{figure}[hbt!]
\center{\scalebox{0.9}{\includegraphics{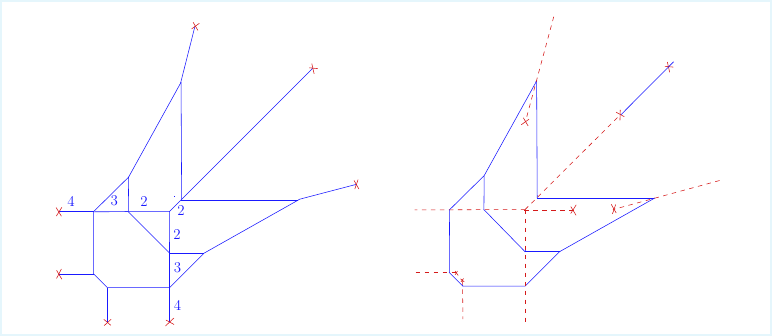}}}
\caption{The general curve $C \in |L|$ is a disjoint union of a genus two curve and of a genus zero $(-2)$-curve.}
\label{Fig17}
\end{figure}

\section{M-theory dual Calabi--Yau 3-folds to consistent webs of 5-branes with 7-branes}
\label{section_cy3}

In \S\ref{section_cy3_mirror}, we explain that the M-theory dual canonical 3-fold $\mathcal{X}^{\mathrm{can}}$ to a consistent web of 5-branes with 7-branes is the mirror to a Calabi--Yau 3-fold $Z$ obtained from the log Calabi--Yau surface with line bundle
$(Y,D,L)$ associated to the web. 
In \S\ref{section_cy3_construction}, we construct this mirror, by first constructing non-compact Calabi--Yau 3-folds $\mathcal{X}$ associated to generic consistent webs of 5-branes with 7-branes by non-toric deformations of toric Calabi--Yau 3-folds, and then proving the existence of a contraction $\mathcal{X}\rightarrow \mathcal{X}^{\mathrm{can}}$. 
In \S\ref{section_slc_sing}, we characterize the canonical 3-fold singularities
produced this way as total space of one-parameter smoothings of degenerate cusp, cusp, and simple elliptic surface singularities. In \S\ref{section_scattering}, we demonstrate how to obtain explicit equations for 
$\mathcal{X}^{\mathrm{can}}$ using a combinatorial objects called scattering diagrams. Finally, we explain in \S\ref{section_instanton_BPS} that scattering diagrams capture contributions of disk worldsheet instantons of the A-model, which can also be interpreted as contributions of BPS states of an auxiliary 4d $\mathcal{N}=2$
rank one quantum field theory.

\subsection{Calabi-Yau 3-folds and mirror symmetry for log Calabi--Yau surfaces}
\label{section_cy3_mirror}
As explained in \S\ref{section_consistent}, a consistent asymptotic web of 5-branes in 7-branes $W^{\mathrm{asym}}$ in Type IIB string theory is expected to define a 5d SCFT. In this section, we describe a general strategy to construct a canonical 3-fold singularity $\mathcal{X}^{\mathrm{can}}$ such that this 5d SCFT admits a dual description as M-theory on $\mathcal{X}^{\mathrm{can}}$. The detailed construction is presented in the following section \S\ref{section_cy3_construction}.

As in the toric case reviewed in \S \ref{section_toric_dual_CY3}, we will obtain $\mathcal{X}^{\mathrm{can}}$ as the result of a mirror construction. To explain this, we use as in the toric case the following chain of string dualities:

\begin{itemize}
    \item[i)] Denote by $(Y,D,L)$ the log Calabi--Yau surface with line bundle associated to $W^{\mathrm{asym}}$, $C$ a general curve in $|L|$, $U=Y \setminus D$ and $C^\circ= C \cap U$.
    Then, as discussed in \S \ref{section_consistent}, the duality between Type IIB string theory on $S^1$ and M-theory on $T^2$ implies that the $S^1$-compactification of the 5d theory defined by $W^{\mathrm{asym}}$ in Type IIB string theory has a dual description as the $4d$
    $\mathcal{N}=2$ theory defined by an M5-brane on $\RR^4 \times C^\circ$ in M-theory on $\RR^4 \times U \times \RR^3$. 
    \item[ii)] Compactifying a transverse direction to the M5-brane on a $S^1$, and using that M-theory on $S^1$ is equivalent to Type IIA string theory, the $4d$
    $\mathcal{N}=2$ theory can be described using an NS5-brane on $\RR^4 \times C^\circ$ in Type IIA string theory on $\RR^4 \times U \times \RR^2$. 
    \item[iii)] Compactifying another transverse direction on $S^1$, and applying T-duality between Type IIA and Type IIB string theory, the $4d$
    $\mathcal{N}=2$ theory can be described as Type IIB string theory compactified on the Calabi--Yau 3-fold $Z$, defined by 
    \[ Z = \{ uv=f\} \subset \mathrm{Tot}(\mathcal{O}_U \oplus L|_U)\,,\]
    where $\mathrm{Tot}(\mathcal{O}_U \oplus L|_U)$ is the total space of the rank $2$ vector bundle $\mathcal{O}_U \oplus L|_U$ over $U$, with fiber coordinates $u,v$, and where $f$ is the equation defining $C^\circ$.
    \item[iv)] If we could construct the mirror Calabi--Yau 3-fold $\mathcal{X}^{\mathrm{can}}$ to $Z$, then the 4d $\mathcal{N}=2$ theory would be described as Type IIA string theory on $\mathcal{X}^{\mathrm{can}}$, that is, M-theory on $\mathcal{X}^{\mathrm{can}} \times S^1$. 
\end{itemize}

Combining the previous steps i) to iv) shows that the $S^1$-compactifications of both 5d SCFTs defined by the web $W^{\mathrm{asym}}$ in Type IIB string theory and by M-theory on the Calabi--Yau 3-fold $\mathcal{X}^{\mathrm{can}}$ coincide. 
Decompactifying $S^1$ leads to the expectation that these 5d SCFTs are the same.

Therefore, our task is to construct the mirror $\mathcal{X}^{\mathrm{can}}$ to the Calabi--Yau 3-fold $Z$. Since $Z$ is determined by the log Calabi--Yau surface with line bundle $(Y,D,L)$, we will refer to $\mathcal{X}^{\mathrm{can}}$ as the mirror to $(Y,D,L)$.
In the following section \S\ref{section_cy3_construction}, we describe how to construct a canonical 3-fold singularity $\mathcal{X}^{\mathrm{can}}$ mirror to $(Y,D,L)$
following recent mathematical developments in mirror symmetry for log Calabi--Yau surfaces \cite{alexeev2024ksba, AAB2, engel, engel_friedman, GHK1}. We will first construct the smooth Calabi--Yau 3-folds $\mathcal{X}$ which are crepant resolutions of $\mathcal{X}^{\mathrm{can}}$ and which are
mirror to degenerations of log Calabi--Yau surfaces $(\mathcal{Y}',\mathcal{D}',\mathcal{L}') \rightarrow \CC$ associated to modified webs obtained from generic consistent perturbations $W$ of $W^{\mathrm{asym}}$ by fully pushing 7-branes, as in \S\ref{section_fully_pushed}. The smooth Calabi--Yau 3-fold $\mathcal{X}$ will be defined by smoothing the toric Calabi--Yau 3-fold $\overline{\mathcal{X}}$ combinatorially constructed as in 
\S\ref{section_toric_dual_CY3} from the web of 5-branes $\overline{W}$ obtained from $W$ by removing the 7-branes.

\subsection{Calabi--Yau 3-folds as non-toric deformations of toric Calabi--Yau 3-folds}
\label{section_cy3_construction}

Let $W$ be a generic consistent web of 5-branes with 7-branes, defined by a web of 5-branes $\overline{W}$ and a 7-brane data. As explained in \S\ref{sec: degeneration}, $\overline{W}$ defines a degeneration $(\overline{\mathcal{Y}}, \overline{\mathcal{D}},\overline{\mathcal{L}}) \rightarrow \CC$ of polarized log Calabi--Yau surfaces, and $W$ determines a degeneration $(\mathcal{Y}, \mathcal{D},\mathcal{L}) \rightarrow \CC$ of log Calabi--Yau surfaces obtained from $(\overline{\mathcal{Y}}, \overline{\mathcal{D}},\overline{\mathcal{L}}) \rightarrow \CC$ by blow-ups. 
By \S\ref{section_consistent_webs}, although the line bundle $\mathcal{L}$ is generally not nef, 
one can perform a finite sequence of M1-flop to produce a degeneration $(\overline{\mathcal{Y}}', \overline{\mathcal{D}}',\overline{\mathcal{L}}') \rightarrow \CC$ of log Calabi--Yau surface with nef $\mathcal{L}'$. 
According to \S\ref{section_fully_pushed}, this sequence of M1-flops is encoded into a modified web of 5-branes with 7-branes $W'$ obtained from $W$ by
``fully pushing" the 7-branes. As described in \S\ref{section_DTP}, the dual $W'$ is a Symington polygon $P$ with polyhedral decomposition $\mathcal{P}$, obtained from the lattice polygon $\overline{P}$ with polyhedral decomposition 
$\overline{\mathcal{P}}$ dual to $\overline{W}$ by a series of cutting operations.

Following \cite{alexeev2024ksba} (see also \cite{engel_friedman}), we define below a Calabi--Yau 3-fold $\mathcal{X}$ associated to $W$, with a morphism $\pi: \mathcal{X} \rightarrow \Delta$ where $\Delta \subset \CC$ is an analytic disk containing $0$, such that the central fiber $\mathcal{X}_0:= \pi^{-1}(0)$ has dual intersection complex $(P,\mathcal{P})$, that is, intersection complex given by the polyhedral decomposition of $\RR^2$ defined by the modified web $W'$. We proceed in three steps described below. 
First, for each vertex $v$ of $\mathcal{P}$,
we construct  a log Calabi--Yau surface $(X_0^v, \partial X_0^v)$.
Next, we glue these log Calabi--Yau surfaces together to form a normal crossing surface $\mathcal{X}_0$.
Finally, we construct a Calabi--Yau 3-fold $\mathcal{X}$ as the total space of a one-parameter smoothing of $\mathcal{X}_0$.

\begin{figure}[hbt!]
\center{\scalebox{0.9}{\includegraphics{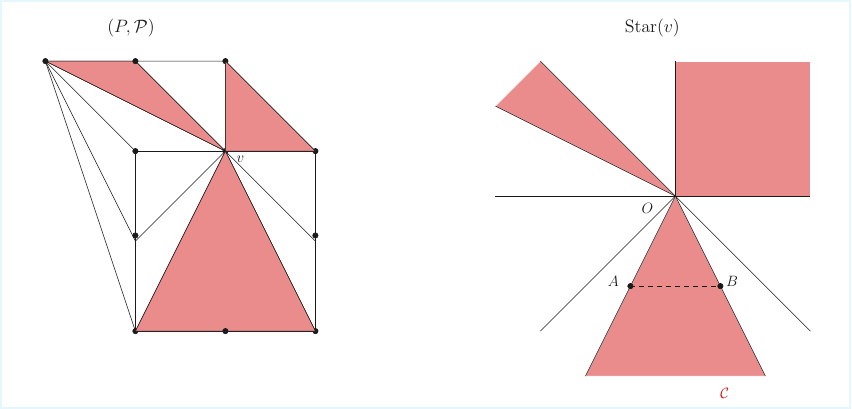}}}
\caption{On the left, the star around a vertex $v$ of the polyhedral decomposition $\mathcal{P}$ of a Symington polygon $P$ depicted on the right.}
\label{Fig:9}
\end{figure}

STEP I: \emph{Construction of the log Calabi--Yau surfaces $(X_0^v, \partial X_0^v)$.} Let $v$ be a vertex of $\mathcal{P}$. If $v$ is not a singular point of the integral affine structure on the Symington polygon $\mathcal{P}$, then the star of $\mathcal{P}$ at $v$ can be viewed as a toric fan, and we denote by  $(\overline{X}_0^v, \partial \overline{X}_0^v)$ the corresponding toric surface with its toric boundary divisor. If $v$ is a singular point of the integral affine structure on $\mathcal{P}$, then, the star of $\mathcal{P}$ at $v$ is obtained from the star of $v$ viewed as a vertex of $\overline{\mathcal{P}}$ by cutting out some cones, see Figure \ref{Fig:9}. Let $(\overline{X}_0^v, \partial \overline{X}_0^v)$
be the toric surface whose fan is the star of $\overline{\mathcal{P}}$ at $v$. Every cut cone $\mathcal{C}$ corresponds to a 0-dimensional stratum $x_{\mathcal{C}}$ of the toric boundary $\partial \overline{X}_0^v$. By construction of $(P,\mathcal{P})$ in 
\S \ref{section_symington}, every cut cone $\mathcal{C}$ is a cone over an elementary triangle. Explicitly, if $O$ is the vertex of $\mathcal{C}$, and $A$, $B$ are the primitive integral points of the two rays of $\mathcal{C}$, then the triangle $OAB$ is an elementary triangle: the lattice length of the line segment $AB$ is equal to the lattice distance of $AB$ from $O$, see Figure \ref{Fig:9}. 
Every 0-dimensional stratum $x_{\mathcal{C}}$ in $(\overline{X}_0^v, \partial \overline{X}_0^v)$ admits a neighborhood isomorphic to the affine toric surface with $(\overline{X}_\mathcal{C}, \partial{\overline{X}}_{\mathcal{C}})$
with cone $\mathcal{C}$. 
By \cite[\S 7.3]{Altmann},   $(\overline{X}_\mathcal{C}, \partial{\overline{X}}_{\mathcal{C}})$
is a quotient singularity of the form 
\[ \CC^2/(\ZZ/n^2 \ZZ) \,,\]
where the weights of the $\ZZ/n^2 \ZZ$-action are of the form $(1, an-1)$
with $\mathrm{gcd}(a,n)=1$. Moreover, the integer $n$ is the common lattice height and lattice base of the elementary triangle spanning the cone $\mathcal{C}$.

If $n=1$, then $(\overline{X}_\mathcal{C}, \partial{\overline{X}}_{\mathcal{C}})$ is simply $\CC^2_{x,y}$
with its toric boundary $xy=0$.
If $n>1$, then $(\overline{X}_\mathcal{C}, \partial{\overline{X}}_{\mathcal{C}})$ is an \emph{elementary T-singularity} in the terminology of \cite[Proposition 3.10]{KSB}, also known as a \emph{Wahl singularity}.
Using the change of variables $x=u^n$, $y=v^n$, $z=uv$, this singularity can be equivalently described as the quotient of the hypersurface 
\begin{equation}\label{eq_hyper}
xy=z^n\end{equation} in $\CC^3$ by the group $\ZZ/n \ZZ$ acting with weights $(1,-1,a)$ on $\CC^3$.
It follows from this description that $(\overline{X}_\mathcal{C}, \partial{\overline{X}}_{\mathcal{C}})$ admits a natural one-parameter deformation given by quotienting the hypersurface 
\begin{equation}\label{eq_hyper_smoothing}
xy=z^n+t\end{equation} by the action of the group $\ZZ/n \ZZ$ acting with weights $(1,-1,a)$ on $\CC^3$.
For $n=1$, this is just smoothing the normal crossing divisor $xy=0$ into $xy=t$. If $n>1$, then, both the singular toric surface $\overline{X}_\mathcal{C}$ and its toric boundary divisor $\partial \overline{X}_\mathcal{C}$  become smooth for $t \neq 0$.

According to \cite[Lemma 4.1]{alexeev2024ksba}, there is no obstruction to applying this deformation simultaneously to all neighborhoods $(\overline{X}_\mathcal{C}, \partial{\overline{X}}_{\mathcal{C}})$ of the 0-dimensional strata $x_{\mathcal{C}}$.
This yields a global deformation of the toric surface $(\overline{X}_0^v, \partial \overline{X}_0^v)$, and we denote the resulting log Calabi--Yau surface by $(X_0^v, 
\partial X_0^v)$. This surface is smooth since all the possibly singular 0-dimensional strata of $(\overline{X}_0^v, \partial \overline{X}_0^v)$ become smooth in the deformation.
Here, we adopt a slightly broader interpretation of ``log Calabi--Yau" surface than in \S\ref{section_log_open}. First, the surface  $(\overline{X}_0^v, \partial \overline{X}_0^v)$ may be non-compact, 
as occurs when $v$ lies on the boundary of $P$. 
Moreover, the boundary divisor $\partial \overline{X}_0^v$ may be smooth.
which occurs precisely when all cones in the star of $v$ have been cut out; this situation arises exactly when $L^2 = 0$ -- see Example \ref{Ex: fig 25}.

In the web picture, the vertex $v$ of $\mathcal{P}$ corresponds to a face of the polyhedral decomposition defined by $W'$. 
This face arises from a corresponding face of the polyhedral decomposition defined by $W$ through a series of  ``corner smoothings" induced by pushing-in 7-branes, as illustrated in Figure \ref{Fig:10}.

\begin{figure}[hbt!]
\center{\scalebox{.9}{\includegraphics{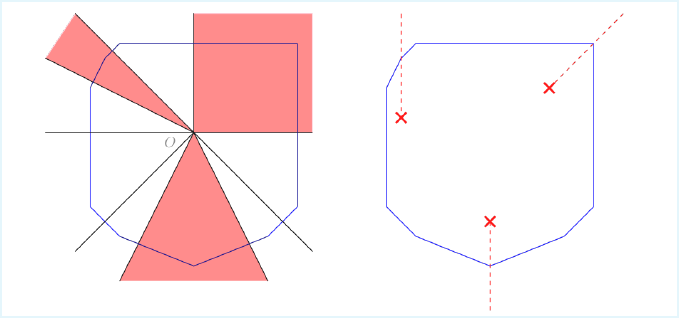}}}
\caption{The log Calabi--Yau surface 
$(X_0^v, \partial X_0^v)$ obtained by ``smoothing corners" of the toric surface $(\overline{X}_0^v, \partial \overline{X}_0^v)$.}
\label{Fig:10}
\end{figure}

When not all cones in the star of $v$ are cut out, the log Calabi--Yau surface $(X_0^v, \partial X_0^v)$ admits a more explicit description. 
The singularity of the integral affine manifold $P$ at $v$ can be viewed as several focus-focus singularities brought together, one for each cut triangle with tip at $v$. Since not all cones of the star of $v$ are cut out, these focus-focus singularities can be slightly displaced along their monodromy invariant directions, effectively smoothing the integral affine structure at $v$.
After this smoothing, the star of $\mathcal{P}$ at $v$ defines the fan of a toric variety. The log Calabi-Yau surface $(X_0^v, \partial X_0^v)$ is then obtained from this toric variety as follows:
\begin{itemize}
\item[i)] First, perform toric blow-ups so that every half-line emanating from $v$ and passing through a focus-focus singularity is incorporated into the fan.
\item[ii)] Then, for each focus-focus singularity, 
perform a non-toric blow-up at a point on the toric boundary component corresponding to the relevant half-line.
\item[iii)] Finally, contract all boundary components that do not correspond to edges of $\mathcal{P}$. 
\end{itemize}
In general, there are multiple ways to decompose the integral affine singularity at $v$ into focus-focus singularities, corresponding to distinct toric models of the log Calabi--Yau surface $(X_0^v, \partial X_0^v)$. 

When all cones in the star of $v$ are cut out, then we have $L^2=0$. As shown in Example \ref{example_L2_0}, the resulting surface $X_0^v$ is 
a smooth del Pezzo surface, and $\partial X_0^v$ is a smooth anticanonical divisor in $X_0^v$ -- see also Example \ref{Ex: fig 25}.

Step II: \emph{Construction of the normal crossing surface $\mathcal{X}_0$.} 
By Step I, a smooth log Calabi--Yau surface $(X_0^v, 
\partial X_0^v)$ has been assigned to every vertex of $\mathcal{P}$, with dual intersection given by the star of $\mathcal{P}$ at $v$. In particular, the edges of $\mathcal{P}$ adjacent to $v$ correspond to irreducible components of the boundary divisor $\partial X_0^v$. 
Gluing the surfaces $(X_0^v, 
\partial X_0^v)$ along boundary components associated with the same edges of $\mathcal{P}$ yields a singular surface $\mathcal{X}_0$, whose irreducible components are the various $(X_0^v, 
\partial X_0^v)$ and whose intersection complex is $\mathcal{P}$. 
Equivalently, the intersection complex of 
$\mathcal{X}_0$ coincides with the polyhedral decomposition determined by the web $W'$. Since each face of of $\mathcal{P}$ is a triangle, reflecting the 3-valency of $W'$, there are at most triple intersections among the irreducible components of $(X_0^v, 
\partial X_0^v)$. 
As each $X_0^v$ is smooth, it follows that $\mathcal{X}_0$ is a normal crossing surface.

Step III: \emph{Construction of the Calabi--Yau 3-fold $\mathcal{X}$.} We obtain a smooth non-compact Calabi--Yau 3-fold $\mathcal{X}$ as the total space of a one-parameter smoothing of the normal crossing surface.
Let $(Y,D,L)$ be the log Calabi--Yau surface associated to the asymptotic web of 5-branes with 7-branes $W^{\mathrm{asym}}$.
As established in \cite{engel} for the case $L \cdot D = 0$, and in \cite[Theorem 3.36]{alexeev2024ksba} for $L \cdot D > 0$, there exists a smooth non-compact Calabi--Yau 3-fold $\mathcal{X}$ and a holomorphic map 
\begin{equation}\pi: \mathcal{X} \longrightarrow \Delta\,,\end{equation} 
where $\Delta$ is an open disk in $\CC$ containing $0 \in \CC$, such that the central fiber $\pi^{-1}(0)$ is a reduced normal crossing divisor isomorphic to $\mathcal{X}_0$. 
Furthermore, it is shown in \cite[Proposition 3.14]{engel_friedman} for the case $L \cdot D = 0$, and in \cite{AAB2} for $L \cdot D > 0$, that for $t \neq 0$, the general fiber $\pi^{-1}(t)$ is diffeomorphic to $U = Y \setminus D$.
Finally, by \cite[Theorem 3.29]{alexeev2024ksba}, there exists a unique holomorphic map 
\begin{equation} f: \mathcal{X} \longrightarrow \mathcal{X}^{\mathrm{can}}\end{equation}
contracting all compact holomorphic curves of $\mathcal{X}$ to points.
The map $f$ is an isomorphism in the complement of the $\mathcal{X}_0$, and so there is an induced holomorphic map 
\begin{equation} \label{eq_pi_can}
\pi^{\mathrm{can}}: \mathcal{X}^{\mathrm{can}} \longrightarrow \Delta \,,\end{equation}
such that $\pi = \pi^{\mathrm{can}} \circ f$.
Moreover, $\mathcal{X}^{\mathrm{can}}$ is non-compact Calabi--Yau 3-fold with Gorenstein canonical singularities, and $f: \mathcal{X} \rightarrow \mathcal{X}^{\mathrm{can}}$ is a crepant resolution. 
By construction, $\mathcal{X}^{\mathrm{can}}$ is a generally non-toric deformation of the toric 3-fold singularity $\overline{\mathcal{X}}^{\mathrm{can}}$ associated 
as in \S \ref{section_toric_dual_CY3} to the 
asymptotic web of 5-branes $\overline{W}^{\mathrm{can}}$.
Moreover, the crepant resolution $f:\mathcal{X} \rightarrow \overline{X}^{\mathrm{can}}$ is a deformation of the crepant toric partial resolution $\overline{\mathcal{X}} \rightarrow \overline{\mathcal{X}}^{\mathrm{can}}$ associated to the web of 5-branes $\overline{W}$. The toric Calabi--Yau 3-fold $\overline{\mathcal{X}}$ is singular in general since $\overline{W}$ might contain 3-valent vertices of multiplicity $>1$ or 4-valent vertices. Thus, $\mathcal{X}$ is a smoothing of $\overline{\mathcal{X}}$.

It is proved in \cite{AAB2} that the 3-fold canonical singularity $\mathcal{X}^{\mathrm{can}}$ depends only, up to locally trivial deformations of its crepant resolutions, on the consistent asymptotic web $W^{\mathrm{asym}}$. Different generic consistent webs $W$ with the same asymptotic web $W^{\mathrm{asym}}$ produce different crepant resolutions $f: \mathcal{X} \rightarrow \mathcal{X}^{\mathrm{can}}$ of $\cX^{\mathrm{can}}$, which are related by flops. Actually, $\mathcal{X}^{\mathrm{can}}$ does not change when $W^{\mathrm{asym}}$ changes by Hanany--Witten moves, and only depends on the associated log Calabi--Yau surface with line bundle $(Y,D,L)$, that is, on the associated 5d SCFT. Different asymptotic webs related by Hanany--Witten moves correspond to different toric models of $(Y,D,L)$, which are mirrors to different ways to realize the same non-toric canonical 3-fold $\mathcal{X}^{\mathrm{can}}$ as deformations of different toric 3-fold singularities $\overline{\mathcal{X}}^{\mathrm{can}}$.
It follows from the reasoning in \S \ref{section_cy3_mirror} that the 5d SCFT defined by the consistent web of 5-branes with 7-branes $W^{\mathrm{asym}}$ in Type IIB string theory coincides with the 5d SCFT defined by M-theory on the canonical 3-fold singularity $\mathcal{X}^{\mathrm{can}}$.

\begin{example} \label{example_L2_0}
Assume that $L^2=0$, that is, as reviewed above Equation \eqref{eq FII0} that there exists an elliptic fibration $c: Y \rightarrow \PP^1$ with $D=c^{-1}(\infty)$ and $L=\mathcal{O}(kD)$.
In particular a general curve $C \in |L|$ consists of $k$ disjoint elliptic fibers. 
In this case, the web $W'$ obtained after fully pushing  the 7-branes does not contain any vertex, and is topologically the union of $k$ concentric circles, which can be viewed as the tropicalization of the $k$ elliptic connected components of $C$. If $k=1$, then the complement of $W'$ contains a unique bounded region, corresponding to a compact component of $\cX_0$, which is necessarily a del Pezzo surface $S$. It follows that $\mathcal{X}$ is the total space $K_S$ of the canonical divisor of $S$, that is, a ``local del Pezzos surface", and the map $\pi: \mathcal{X} \rightarrow \Delta$ is induced by a general anticanonical section, whose zero-locus is a smooth genus one curve $E$. 
If $(Y,D,L)$ corresponds to the $E_{Q-3}$ 5d SCFT, then $S$ is the corresponding del Pezzo surface obtained from $\PP^2$ by blowing up $Q-3$ points in general position. On the other hand, if $(Y,D,L)$ corresponds to the $\widetilde{E}_{1}$ 5d SCFT, then $S$ is the corresponding del Pezzo surface $\mathbb{F}_0=\PP^1 \times \PP^1$. In particular, we recover the description of $E_n$ and $\widetilde{E}_n$ 5d SCFTs as M-theory on the canonical 3-fold singularity obtained by contracting to a point the zero-section of a local del Pezzo surface \cite{morrison_seiberg}. If $k>1$, then $\mathcal{X}$ is a crepant resolution of the order $k$ base change $t \mapsto t^k$ of $K_S \rightarrow \Delta$. The base change creates a family of $A_{k-1}$-singularities parametrized by $E$, whose resolution produces $k-1$ additional irreducible components in $\mathcal{X}_0$, all isomorphic to $\PP^1$-bundles over $E$.
\end{example}

\begin{remark}
While the previous construction defines a M-theory dual canonical 3-fold singularity for every consistent asymptotic web of 5-branes with 7-branes, we do not expect the existence of a M-theory dual geometry for possibly non-consistent supersymmetric asymptotic webs. Indeed, our mirror construction crucially uses that the line bundle $L$ on the log Calabi--Yau surface $(Y,D)$ associated to the web is nef, and so defines a
large volume limit where mirror symmetry can be expected to be geometric. For a general supersymmetric web, the line bundle $L$ might fail to be nef, there is no natural large volume limit to consider, and so we do not expect to have a geometric mirror. From the point of view of engineering 5d SCFTs, this is not a restriction since, as described below Definition \ref{def: consistency}, any 5d SCFT defined by a supersymmetric web is obtained from a 5d SCFT defined by a consistent web by adding free hypermultiplets.
\end{remark}

\subsection{Calabi--Yau 3-folds and smoothings of slc singularities}
\label{section_slc_sing}

In this section, we present a characterization of canonical 3-fold singularities which are M-theory duals to webs of 5-branes with 7-branes, that is,
which appear as a result of the construction presented in \S\ref{section_cy3_construction}.

Let $W^{\mathrm{asym}}$ be a consistent asymptotic web of 5-branes with 7-branes and $\mathcal{X}^{\mathrm{can}}$. By \eqref{eq_pi_can}, the corresponding 
canonical 3-fold singularity comes with a holomorphic map $\pi^{\mathrm{can}}: \mathcal{X}^{\mathrm{can}} 
\rightarrow \CC$. The central fiber $\mathcal{X}^{\mathrm{can}}_0:= (\pi^{\mathrm{can}})^{-1}(0)$ is a surface, containing a particular point $x_0$, which is the image by $f: \cX \rightarrow \cX^{\mathrm{can}}$ of all compact surfaces and rigid compact curves in $\cX$. Using the explicit construction of $\mathcal{X}$ in \S\ref{section_cy3_construction}, we give below a complete description of the surface $\cX_0^{\mathrm{can}}$.

\begin{figure}[hbt!]
\center{\scalebox{.9}{\includegraphics{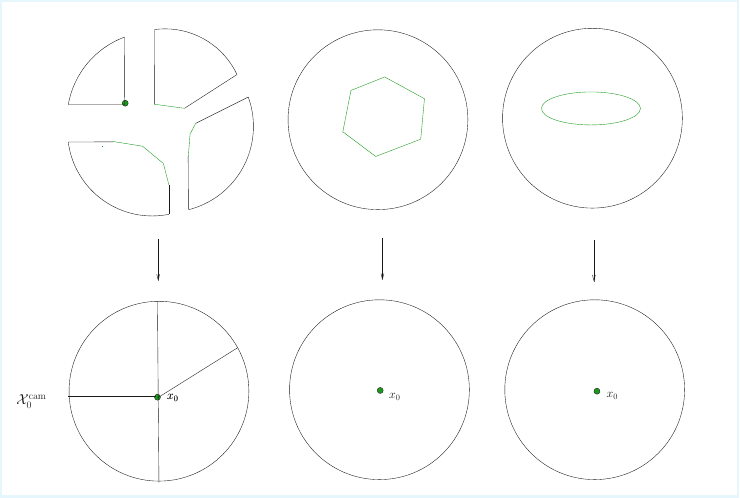}}}
\caption{The slc surface $\mathcal{X}_0^{\mathrm{can}}$ and its minimal resolution: degenerate cusp on the left, cusp in the middle, and simple elliptic singularity on the right.}
\label{Fig15}
\end{figure}

Let $(Y,D,L)$ be the log Calabi--Yau surface associated to $W^{\mathrm{asym}}$. If $L^2 >0$ and $L \cdot D>0$, that is $W^{\mathrm{asym}} \in \mathbf{CWebs}_I$, then we can consider the corresponding polarized log Calabi--Yau surface $(Y^{\mathrm{pol}}, D^{\mathrm{pol}}, L^{\mathrm{pol}})$ as in Equation \eqref{eq FI}.
The 0-dimensional strata $x_i$ of $D^{\mathrm{pol}}$ are in one-to-one correspondence with the vertices $v_i$ of a Symington polytope $P$ attached to $W^{\mathrm{asym}}$. Moreover, each 0-dimensional strata admits a neighborhood isomorphic to an affine toric surface $Y_i$, with fan given by the dual cone of the cone spanned by the edges of $P$ at the vertex $v_i$. Let $X_i$ be the dual affine toric surface, with fan given by the cone spanned by the edges of $P$ at the vertex $v_i$. Then, we have 
\[ \cX_0^{\mathrm{can}} = \bigcup_i X_i\,,\]
that is, the irreducible components of the surface
$\cX_0^{\mathrm{can}}$ are naturally indexed by the vertices of $P$, and are isomorphic to the toric surfaces $X_i$. Moreover, these surfaces are glued together along their toric divisors according to the cyclic order defined by the boundary of $P$, and all intersect together at the point $x_0$, see Figure \ref{Fig15}. Each irreducible component $X_i$ is a quotient cyclic singularity,
and so in this case $\mathcal{X}_0^{\mathrm{can}}$ is a \emph{degenerate cusp surface singularity} in the sense of \cite[\S 1]{SB2}.

If $L^2>0$ and $L \cdot D=0$, that is $W^{\mathrm{asym}} \in \mathbf{CWebs}_{II,+}$, then we can consider the corresponding Calabi--Yau surface $(Y^{\mathrm{pol}}, L^{\mathrm{pol}})$ with a cusp singularity obtained by contracting the cycle of rational curves $D$ to a point as in 
Equation \eqref{eq FII+}. 
Recall that a cusp singularity is a normal surface singularity whose exceptional divisor in the minimal resolution is a cycle of rational curves. Cusp singularities come in dual pairs \cite{nakamura} and we refer to \cite[\S 2]{looijenga}, \cite[\S 7.1]{GHK1}, \cite[\S 2]{engel} for the explicit description of this duality.
In this case, the surface $\cX_0^{\mathrm{can}}$ is normal, with an isolated singularity at the point $x_0$, which is the \emph{cusp surface singularity} dual to the cusp surface singularity of $Y^{\mathrm{pol}}$.

Finally, if $L^2=0$, that is, $W^{\mathrm{asym}}\in\mathbf{CWebs}_{II,0}$, then the surface $\mathcal{X}_0^{\mathrm{can}}$ is also normal, with an isolated singularity at the point $x_0$, which is a \emph{simple elliptic surface singularity}, that is, a normal surface singularity whose exceptional divisor in the minimal resolution is a smooth genus one curve.

Thus, we can summarize the previous discussion into the following result.

\begin{theorem} \label{thm_cusps}
Let $W^{\mathrm{asym}}$ be a consistent asymptotic web of 5-branes with 7-branes. Then, 
depending if $W^{\mathrm{asym}}$ is in
$\mathbf{CWebs}_{I}$, $\mathbf{CWebs}_{II,+}$, or $\mathbf{CWebs}_{II,0}$, the surface $\cX_0^{\mathrm{can}}$ is either a degenerate cusp, cusp, or simple elliptic singularity. Moreover, $\pi^{\mathrm{can}}: \cX^{\mathrm{can}} \rightarrow \Delta$ is a one-parameter smoothing of $\cX_0^{\mathrm{can}}=(\pi^{\mathrm{can}})^{-1}(0)$, that is, a flat family with smooth fibers
$(\pi^{\mathrm{can}})^{-1}(t)$ for $t\neq 0$, whose total space $\cX^{\mathrm{can}}$ has canonical singularities, which moreover admits semistable resolutions $f: \cX \rightarrow \cX^{\mathrm{can}}$, that is, with reduced central fiber $\cX_0:=\pi^{-1}(0)$, where $\pi=\pi^{\mathrm{can}} \circ f$.
\end{theorem}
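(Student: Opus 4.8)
The statement is essentially a bookkeeping corollary of the detailed construction carried out in \S\ref{section_cy3_construction} together with the trichotomy of consistent webs recorded in Theorem \ref{thm: geo classification}. The plan is therefore to assemble the already-established facts rather than to prove anything genuinely new. First I would recall that, by Theorem \ref{thm: geo classification}, a consistent asymptotic web $W^{\mathrm{asym}}$ falls into exactly one of the three classes $\mathbf{CWebs}_I$, $\mathbf{CWebs}_{II,+}$, $\mathbf{CWebs}_{II,0}$, according to whether $(L^2>0,\ L\cdot D>0)$, $(L^2>0,\ L\cdot D=0)$, or $L^2=0$; and that in each case the contraction map $c$ produces respectively a polarized log Calabi--Yau surface, a polarized Calabi--Yau surface with a cusp singularity, or a rational elliptic log Calabi--Yau surface. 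This is the input that organizes the whole argument.

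Next I would go through the three cases and identify the central fiber $\mathcal{X}_0^{\mathrm{can}}=(\pi^{\mathrm{can}})^{-1}(0)$ explicitly, exactly as in the discussion preceding the theorem. For $W^{\mathrm{asym}}\in\mathbf{CWebs}_I$: using the explicit Step~I--III construction, $\mathcal{X}_0^{\mathrm{can}}=\bigcup_i X_i$ is a union of affine toric surfaces $X_i$ — one for each vertex $v_i$ of a Symington polygon $P$ attached to $W^{\mathrm{asym}}$, with fan the cone spanned by the edges of $P$ at $v_i$ (dual to the toric structure near the $0$-dimensional stratum $x_i$ of $D^{\mathrm{pol}}$) — glued cyclically along their toric divisors and all meeting at $x_0$; each $X_i$ is a cyclic quotient singularity, so $\mathcal{X}_0^{\mathrm{can}}$ is a degenerate cusp singularity in the sense of \cite[\S 1]{SB2}. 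For $W^{\mathrm{asym}}\in\mathbf{CWebs}_{II,+}$: $\mathcal{X}_0^{\mathrm{can}}$ is normal with a single singular point at $x_0$, which is the cusp surface singularity \emph{dual} (in the sense of Nakamura--Looijenga, cf.\ \cite[\S 7.1]{GHK1}, \cite[\S 2]{engel}) to the cusp singularity of $Y^{\mathrm{pol}}$; here one uses that $D$ is contracted by $c$ to a cusp and that the mirror construction of \cite{engel, engel_friedman} produces precisely the dual cusp on the central fiber. For $W^{\mathrm{asym}}\in\mathbf{CWebs}_{II,0}$: one invokes Example \ref{example_L2_0}, where the fully pushed web $W'$ is a disjoint union of concentric circles and $\mathcal{X}$ is (a base change of) a local del Pezzo surface; the central fiber $\mathcal{X}_0^{\mathrm{can}}$ is then normal with a single simple elliptic singularity at $x_0$.

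Finally I would record the ``moreover'' part: that $\pi^{\mathrm{can}}:\mathcal{X}^{\mathrm{can}}\to\Delta$ is a one-parameter smoothing whose total space has canonical singularities and admits a semistable resolution with reduced central fiber. Flatness and the fact that $\pi^{-1}(t)$ for $t\ne0$ is smooth (indeed diffeomorphic to $U=Y\setminus D$) come directly from \cite[Theorem 3.36]{alexeev2024ksba} and \cite[Proposition 3.14]{engel_friedman}/\cite{AAB2}; that $\mathcal{X}^{\mathrm{can}}$ is Gorenstein canonical and $f:\mathcal{X}\to\mathcal{X}^{\mathrm{can}}$ a crepant resolution is \cite[Theorem 3.29]{alexeev2024ksba}; and reducedness of $\mathcal{X}_0=\pi^{-1}(0)$ (the semistability of the resolution) is built into the construction of $\pi^{-1}(0)\cong\mathcal{X}_0$ as a reduced normal crossing divisor in Step~II--III. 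The one point needing a little care — and the only place where there is anything resembling a genuine argument rather than citation — is the identification in case $\mathbf{CWebs}_{II,+}$ of the singularity on $\mathcal{X}_0^{\mathrm{can}}$ with the cusp \emph{dual} to that of $Y^{\mathrm{pol}}$: one must check that the combinatorics of the cutting operations producing the Symington polygon $P$ (equivalently, the cycle of toric surfaces $X_i$ with their self-intersection data) reproduces exactly the Inoue--Hirzebruch continued-fraction data of the dual cusp. This I would handle by comparing the boundary cycle of $P$ with the standard description of cusp duality in \cite[\S 2]{looijenga}, \cite[\S 7.1]{GHK1}; the matching is forced by the fact that $P$ is the intersection complex of the degeneration of $(Y^{\mathrm{pol}},D^{\mathrm{pol}},L^{\mathrm{pol}})$ whose mirror is $\pi^{\mathrm{can}}$, so that the monodromy/self-intersection data on the two sides are mirror-dual by construction. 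With these identifications in place the theorem follows by simply collecting the three cases.
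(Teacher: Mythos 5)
Your proposal is correct and takes essentially the same approach as the paper: the theorem is stated as a summary of the discussion that immediately precedes it in \S\ref{section_slc_sing}, which identifies $\mathcal{X}_0^{\mathrm{can}}$ case-by-case via the trichotomy of Theorem \ref{thm: geo classification} and the explicit Step~I--III construction of \S\ref{section_cy3_construction}, with the ``moreover'' clauses delegated to the cited results from \cite{alexeev2024ksba}, \cite{engel}, \cite{engel_friedman}, \cite{AAB2}. Your observation that the dual-cusp identification in case $\mathbf{CWebs}_{II,+}$ is the one spot requiring a genuine check is a fair point, though the paper itself simply asserts it with a pointer to the references on cusp duality.
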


Remarkably, the converse of Theorem \ref{thm_cusps} also holds, that is, we have the following characterization of canonical 3-fold singularities coming from webs of 5-branes with 7-branes.

\begin{theorem} \label{thm_slc}
Let $\cX^{\mathrm{can}}$ be a canonical 3-fold singularity. Then, $\cX^{\mathrm{can}}$ is the M-theory dual of a consistent asymptotic web of 5-branes with 7-branes if and only if there exists a holomorphic map $\pi^{\mathrm{can}}: \cX^{\mathrm{can}}
\rightarrow \Delta$ such that the following conditions hold:
\begin{itemize}
    \item[i)] The central fiber $\cX_0^{\mathrm{can}}:=(\pi^{\mathrm{can}})^{-1}(0)$ is either a degenerate cusp, or cusp, or simple elliptic surface singularity.
    \item[ii)] $\pi^{\mathrm{can}}: \cX^{\mathrm{can}}
\rightarrow \Delta$ is a smoothing of $\cX_0^{\mathrm{can}}$, that is, is flat and the fibers $(\pi^{\mathrm{can}})^{-1}(t)$ are smooth for $t\neq 0$.
    \item[iii)] There exists a semistable resolution $\cX \rightarrow \cX^{\mathrm{can}} \rightarrow \Delta$.
\end{itemize}
\end{theorem}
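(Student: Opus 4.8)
The plan is to prove the two directions separately, with the forward direction ($\Rightarrow$) being essentially a restatement of results already established. If $\mathcal{X}^{\mathrm{can}}$ is the M-theory dual of a consistent asymptotic web $W^{\mathrm{asym}}$, then by construction in \S\ref{section_cy3_construction} there is a map $\pi^{\mathrm{can}}:\mathcal{X}^{\mathrm{can}}\to\Delta$ as in \eqref{eq_pi_can} with a semistable resolution $f:\mathcal{X}\to\mathcal{X}^{\mathrm{can}}$, so condition (iii) holds; conditions (i) and (ii) are exactly the content of Theorem \ref{thm_cusps}. So the forward direction is immediate.

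The substantive direction is the converse ($\Leftarrow$). Suppose $\pi^{\mathrm{can}}:\mathcal{X}^{\mathrm{can}}\to\Delta$ satisfies (i)--(iii). The plan is to reconstruct a consistent asymptotic web from this data by reversing the mirror construction. First I would use the semistable resolution $\mathcal{X}\to\mathcal{X}^{\mathrm{can}}$ and the induced $\pi:\mathcal{X}\to\Delta$ to extract the intersection complex of the central fiber $\mathcal{X}_0=\pi^{-1}(0)$: this is a polyhedral complex $(P,\mathcal{P})$ which, according to the description in \S\ref{section_cy3_construction}, should carry the structure of a Symington polygon. The three cases of (i) should match the trichotomy $L^2>0, L\cdot D>0$ / $L^2>0, L\cdot D=0$ / $L^2=0$: a degenerate cusp gives $\mathcal{X}_0^{\mathrm{can}}$ a union of cyclic quotient toric surfaces glued in a cycle (this is the picture where $P$ is a disk), a cusp singularity gives the normal case with $\partial P$ contracted (so $P$ is a sphere, corresponding to $L\cdot D=0$), and a simple elliptic singularity gives $P$ an interval ($L^2=0$). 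In each case one reads off $(Y,D,L)$ by taking the mirror/reverse of the construction: the polarized log Calabi--Yau surface $(Y^{\mathrm{pol}}, D^{\mathrm{pol}}, L^{\mathrm{pol}})$ or polarized Calabi--Yau surface with cusp or rational elliptic log Calabi--Yau surface is recovered as an object whose degeneration has $(P,\mathcal{P})$ as intersection complex, and then Theorem \ref{thm: geo classification} (the bijections $F_I, F_{II,+}, F_{II,0}$) produces the consistent asymptotic web $W^{\mathrm{asym}}$. Finally one checks, using the uniqueness statement from \cite{AAB2} cited in \S\ref{section_cy3_construction}, that the M-theory dual of this $W^{\mathrm{asym}}$ is indeed the given $\mathcal{X}^{\mathrm{can}}$ up to the appropriate equivalence.

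The key steps, in order, would be: (1) from a semistable degeneration of the smooth fiber (which is an open Calabi--Yau surface $U$, or rather a smoothing of a normal crossing surface), reconstruct the dual intersection complex and recognize it as a Symington polygon with a polyhedral decomposition into size-one triangles --- here one needs that conditions (i)--(iii) force the local structure at each stratum of $\mathcal{X}_0$ to be one of the elementary $T$-singularity smoothings \eqref{eq_hyper_smoothing} analyzed in \S\ref{section_cy3_construction}; (2) verify that the resulting decorated toric polygon (obtained by ``uncutting'' the Symington polygon, i.e.\ gluing back the elementary triangle/trapezoid cones) carries a consistent regular polyhedral decomposition in the sense of Definition \ref{Defn: DTP consistent}, invoking Theorem \ref{thm_polytopes_webs}; (3) apply the classification Theorem \ref{thm: geo classification} and the construction of \S\ref{section_cy3_construction} in reverse to conclude. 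The main obstacle will be step (1): showing that the hypotheses (i)--(iii) are \emph{sufficient} to guarantee that the degeneration is of the very specific combinatorial type produced by the construction --- i.e.\ that the only semistable smoothings of degenerate cusps/cusps/simple elliptic singularities with canonical total space are those arising from Symington polygons. This requires a careful local analysis of canonical threefold singularities dominating these surface singularities, presumably combining the classification of cusp and simple elliptic singularities (via \cite{looijenga}, \cite{nakamura}, \cite{SB2}) with the structure of their $\mathbb{Q}$-Gorenstein smoothings and the constraint that the total space has canonical --- not merely log canonical --- singularities, which is precisely what pins down the elementary $T$-singularities. I expect one can reduce this to the surface results of \cite{engel_friedman} and \cite{alexeev2024ksba} on which components and smoothing directions are allowed, but making the reduction airtight is the heart of the matter.
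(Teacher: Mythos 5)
Your forward direction is exactly the paper's: it invokes Theorem \ref{thm_cusps} and the construction of \S\ref{section_cy3_construction}. For the converse, your overall strategy---recover the dual intersection complex of a semistable resolution as a Symington polygon, recognize the three cases of (i) as matching the trichotomy $L^2>0,L\cdot D>0$ / $L^2>0,L\cdot D=0$ / $L^2=0$, then run the classification Theorem \ref{thm: geo classification}---is sound and your case-matching (disk / sphere / interval) agrees with Lemma \ref{lemma_area}. Where you diverge from the paper is in the framing of the key step. You flag as the ``heart of the matter'' the question of whether conditions (i)--(iii) are sufficient to force the degeneration to be of the specific combinatorial type produced by the construction, and you propose to prove this by a local analysis of canonical threefold singularities over (degenerate) cusps. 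The paper does not carry out such an analysis: it observes that conditions (i)--(iii) are, by design, precisely the definition of an ``open Kulikov degeneration'' in \cite{alexeev2024ksba}, for which a mirror log Calabi--Yau surface with line bundle is already constructed there (degenerate cusp case), with the cusp case treated in \cite{AAB2} along the same lines, and the simple elliptic case handled via the classical correspondence between smoothings of simple elliptic singularities and anticanonical divisors on del Pezzo surfaces due to \cite{looijenga_wahl,pinkham}. So you have correctly located the mathematical content and even cite \cite{alexeev2024ksba} and \cite{engel_friedman} as the right sources, but you are proposing to re-derive what those references already prove rather than invoking them directly; in particular you miss the references \cite{looijenga_wahl,pinkham} for the simple elliptic case. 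Your plan would be considerably longer than needed, but it is not wrong---if you traced your step (1) through, you would find that you are reproducing the arguments in \cite{alexeev2024ksba,AAB2}.
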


\begin{proof}
The ``only if" direction is Theorem \ref{thm_cusps}.
The ``if" direction when $\cX_0^{\mathrm{can}}$ is a degenerate cusp singularity follows from the construction of mirrors for ``open Kulikov degenerations" in \cite{alexeev2024ksba}. 
The result follows along similar lines when $\cX_0^{\mathrm{can}}$ is a cusp singularity -- see \cite{AAB2} for details. Finally, when $\cX_0^{\mathrm{can}}$ is a simple elliptic singularity, the result follows from the relation between smoothings of simple elliptic singularities and smooth anticanonical divisors on del Pezzo surfaces established in \cite{looijenga_wahl, pinkham}.
\end{proof}

Degenerate cusp, cusp and simple elliptic singularities are examples of \emph{slc} (\emph{semi-log-canonical}) surface singularities as introduced in \cite[Definition 4.17]{KSB}. The notion of slc surface singularity is closely related to the notion of canonical 3-fold singularity. Indeed, if $\mathcal{Z}_0$ is surface singularity with a smoothing $\mathcal{Z} \rightarrow \Delta$ that admits a semi-stable resolution, then, by Theorem \cite[Theorem 5.1]{KSB}, the surface $\mathcal{Z}_0$ is slc if and only if the 3-fold $\mathcal{Z}$ has canonical singularities. In other words, slc surface singularities are the surface singularities whose one-parameter smoothings produce canonical 3-fold singularities.  
According to \cite[Theorem 4.21]{KSB}, degenerate cusp, cusp, and simple elliptic singularities are the three main classes of Gorenstein slc singularities, that is, with an invertible canonical sheaf. General slc singularities are only $\QQ$-Gorenstein in general, that is, only a positive power of their canonical sheaf is a line bundle.
The non-Gorenstein slc singularities are classified in \cite[Theorem 4.24]{KSB}: they are either $\ZZ/2\ZZ$, $\ZZ/3\ZZ$, $\ZZ/4\ZZ$ or $\ZZ/6\ZZ$ quotients of simple elliptic singularities, or $\ZZ/2\ZZ$ quotients of cusp and degenerate cusp singularities. We expect the canonical 3-fold singularities defined by one-parameter smoothings 
of these non-Gorentsein slc singularities to be M-theory duals of webs of 5-branes with 7-branes and additional orientifolds and S-folds, as in \cite{O7_planes, O5_planes} and \cite{acharya,S_fold} respectively, but we leave the study of this correspondence to future work.

\subsection{Calabi--Yau 3-folds from scattering diagrams}
\label{section_scattering}

Let $W^{\mathrm{asym}}$ be a consistent web of 5-branes with 7-branes. We described in \S\ref{section_cy3_construction} a construction of the M-theory dual canonical 3-fold singularity $\mathcal{X}^{\mathrm{can}}$ through an explicit construction of the central fiber $\cX_0$ of a crepant resolution $\cX \rightarrow \cX^{\mathrm{can}}$. In particular, the 3-fold $\cX^{\mathrm{can}}$ is obtained indirectly, by first using deformation theory to produce $\cX$ from $\cX_0$, and then a general contraction result to produce $\cX^{\mathrm{can}}$ from $\cX$. In this section, we address the question to give a more concrete description of $\cX^{\mathrm{can}}$ by explicit algebraic equations.

Let $(Y,D,L)$ be the log Calabi--Yau surface with line bundle
associated to $W^{\mathrm{asym}}$. We show in \cite{AAB2} that $\cX^{\mathrm{can}}$ is the total space of a one-parameter subfamily of the \emph{intrinsic} mirror family of $(Y,D,L)$, as described
for log Calabi--Yau surfaces in \cite{GHK1} and for higher-dimensional log Calabi--Yau varieties in  \cite{gross2019intrinsic, GScanonical}.  
It follows that $\cX^{\mathrm{can}}$ can be calculated combinatorially in terms of algebraic structures called \emph{scattering diagrams}
and originally introduced in 
\cite{GSreal, KSaffine}. 
Below is a brief overview of this construction tailored for our purposes. We refer to \cite{arguz_equations, B_explicit, GHK1, GHKScubic} for detailed expositions and additional examples. 
We discuss the enumerative and physics meaning of scattering diagrams in \S\ref{section_instanton_BPS}.

\begin{definition}
An \emph{incoming wall} is a pair $(\fod, f_\fod)$, where:
\begin{itemize}
    \item[i)] $\fod$ is a ray of rational slope through the origin, that is, $\fod=\RR_{\geq 0}(p,q)$ for coprime $(p,q)\in \ZZ^2\setminus\{0\}$.   
    \item[ii)] $f_\fod \in \ZZ[x^p y^q][\![t]\!]$ is a power series in the monomial $x^p y^q$ with coefficients polynomials in $t$ with integral coefficients.
\end{itemize}
\end{definition}

\begin{definition}
An \emph{outcoming wall} is a pair $(\fod, f_\fod)$, where:
\begin{itemize}
    \item[i)] $\fod$ is a ray of rational slope through the origin, that is, $\fod=\RR_{\geq 0}(p,q)$ for coprime $(p,q)\in \ZZ^2\setminus\{0\}$.   
    \item[ii)] $f_\fod \in \ZZ[x^{-p} y^{-q}][\![t]\!]$ is a power series in the monomial $x^{-p} y^{-q}$ with coefficients given by polynomials in $t$ with integral coefficients.
\end{itemize}
\end{definition}

\begin{definition}
The \emph{wall-crossing automorphism} of a wall $(\fod, f_\fod)$, with $\fod=\RR_{\geq 0}(p,q)$
and coprime $(p,q)\in \ZZ^2\setminus\{0\}$, is the $\CC[\![t]\!]$-algebra automorphism
\begin{align*} \Phi_{(\fod, f_\fod)}:\ZZ[x^{\pm}, y^{\pm}][\![t]\!] \longrightarrow \ZZ[x^{\pm}, y^{\pm}][\![t]\!] \\
x^a y^b \longmapsto x^a y^b f_\fod^{pb-qa} \,.
\end{align*}
\end{definition}

\begin{definition} \label{def_scattering_diagram}
A \emph{scattering diagram} $\mathfrak{D}$ is a set of incoming or outgoing walls $(\fod, f_\fod)$ such that, for every $k \in \ZZ_{\geq 1}$, there are finitely many walls $(\fod, f_\fod)$ with $f_\fod \neq 1 \mod t^k$.
\end{definition}

We consider two scattering diagrams as equivalent if they are related by a series of walls mergings, that is, of identifications of pairs of walls $(\fod, f_1)$, $(\fod, f_2)$ supported on the same ray $\fod$ with the single wall $(\fod, f_1 f_2)$.

\begin{definition}
The \emph{total wall-crossing automorphism} $\Phi_{\mathfrak{D}}$ of a scattering diagram 
$\mathfrak{D}$ is the composition of all the wall-crossing automorphisms $\Phi_{(\fod, f_\fod)}$  of the walls $(\fod, f_\fod)$
in $\mathfrak{D}$, where the walls are ordered in the anticlockwise direction around the origin.
A scattering diagram $\mathfrak{D}$ is called \emph{consistent} if its total wall-crossing automorphism
is the identity: $\Phi_{\mathfrak{D}}=\mathrm{id}$.
\end{definition}

Given an asymptotic web of 5-branes with 7-branes 
$W^{\mathrm{asym}}$, defined by a web of 5-branes $\overline{W}$ and the 7-brane data $\mathbf{a} := ((a_{e,i})_{1\leq i \leq n_e})_{E \in L(\overline{W})}$, we define a corresponding \emph{initial scattering diagram} $\mathfrak{D}_{\mathrm{in}}$ as the set of incoming walls $(\fod_e, f_{\fod_e})_{e\in L(\overline{W})}$, where, denoting by $(p_e,q_e)\in \ZZ^2$ the primitive integral generator of the leg $e$ pointing away from the origin, we have 
\[ \fod_e=e=\RR_{\geq 0}(p_e,q_e)\] and 
\begin{equation} \label{eq_wall_function}
    f_{\fod_e}=t^{w_e} \prod_{i=1}^{n_e}
    (1+t^{-a_{e,i}} x^{p_e}y^{q_e}) 
    = t^{w_e-\sum_{i=1}^{n_e} a_{e,i}}\prod_{i=1}^{n_e}
    (t^{a_{e,i}}+ x^{p_e}y^{q_e}) \,.
\end{equation}
By Definition \ref{def_web_57_at_infinity} ii), we have $\sum_{i=1}^{n_e}a_{e,i} \leq w_e$, and so $f_{\fod_e}$ is indeed a polynomial in $t$.

The initial scattering diagram is not consistent in general. However, if the asymptotic web of 5-branes with 7-branes $W^{\mathrm{asym}}$ is consistent, it follows from \cite{GHK1} that there exists a unique consistent scattering diagram $\mathfrak{D}$
obtained from $\mathfrak{D}_{\mathrm{in}}$ by adding outgoing walls. Moreover, $\mathfrak{D}$
can be algorithmically computed-- see \cite[\S 3.4]{GHK1}, \cite{GPS}
and \cite{arguz_equations, HDTV} in higher dimensions.

\begin{remark}
    The main object of study in \cite{GHK1} is the ``canonical scattering diagram", which is contained in the integral affine manifold with singularity obtained by moving all the 7-branes at the origin of $\RR^2$. By contrast, the scattering diagram $\mathcal{D}$ is contained in $\RR^2$, with the 7-branes ``located at infinity", and is the ``heart" of the canonical scattering diagram in the terminology of \cite{arguz_equations}. The scattering diagram and its heart are related by moving the 7-branes to infinity, as described in \cite{arguz_equations, HDTV, GHK1}.
\end{remark}

We now review the notion of a broken line in a scattering diagram \cite{GHK1, GHS}.

\begin{definition}
Let $\mathfrak{D}$ be the consistent scattering diagram associated to a consistent asymptotic web of 5-branes with 7-branes $W^{\mathrm{asym}}$. 
A \emph{broken line} in $\mathfrak{D}$
is a piecewise linear continuous directed path
\[ \beta: (-\infty,0] \longrightarrow \RR^2 \setminus \{0\}\,,\]
with $\beta(0) \notin \cup_{(\fod,f_\fod)}\fod$ and whose image consists of finitely many line segments $L_1, L_2, \dots , L_N$, such that no $L_i$ is contained in a wall of $\fod$ and each $L_i$ is compact except $L_1$. Further, we require that each $L_i$ is contained in a 2-dimensional cone of the fan defined by $\overline{W}$. 
Moreover, to each $L_i$ is assigned a monomial $\mu_i = a_i t^{k_i} x^{m_i} y^{n_i}$, where $a_i\in \ZZ$, $k_i\in \ZZ_{\geq 0}$, $(m_i,n_i)\in \ZZ^2 \setminus \{0\}$ with
$\beta'(t)=-(m_i,n_i)$ on $L_i$. We require 
$a_1=1$ and $k_1=0$, and we refer to $(m_1,n_1)$ as the asymptotic direction of the broken line. 
Moreover, for every $1\leq i\leq n-1$, we require that $L_i\cap L_{i+1}$ is contained in a wall $\fod =\RR_{\geq 0}(p,q)$ of $\mathfrak{D}$, with coprime $(p,q)\in \ZZ^2 \setminus\{0\}$, and that $m_{i+1}$ is a monomial in $\mu_i f_\fod^{|pn_i-qm_i|}$. We refer to $\beta(0)$ as the end-point
of $\beta$, and $\mu_N$ at the final monomial of $\beta$.
\end{definition}

Broken lines define an associative commutative algebra $\CC[\![t]\!]$-algebra $\mathcal{A}$ as follows \cite{GHK1, GHS}. As a $\CC[\![t]\!]$-module, $\mathcal{A}$ has a basis $\{\vartheta_{m,n}\}_{(m,n)\in \ZZ^2}$ indexed by $\ZZ^2$, and the product is defined by 
\begin{equation}
\label{eq_product}
\vartheta_{m,n}\vartheta_{m',n'}=\sum_{\beta,\beta'} a_{\beta}a_{\beta'} t^{k_{\beta}+k_{\beta'}} \vartheta_{m'',n''}\,,\end{equation}
where the sum is over pairs of broken lines $\beta$, $\beta'$ with asymptotic directions $(m,n)$, $(m',n')$, common end-point $\beta(0)=\beta'(0)$ close enough to $(m'',n'')$, and final monomials $a_\beta t^{\beta} x^{m''}y^{n''}$ and $a_{\beta'} t^{\beta'} x^{m''}y^{n''}$ respectively. 
The basis elements $\vartheta_{m,n}$ are referred to as \emph{theta functions}, and the algebra $\mathcal{A}$ is called the \emph{algebra of theta functions} determined by the consistent scattering diagram $\mathfrak{D}$.
According to \cite{GHK1}, the algebra of theta functions $\mathcal{A}$ is the algebra of regular functions on the formal completion of $\mathcal{X}^{\mathrm{can}}$ along $\mathcal{X}^{\mathrm{can}}_0$. In particular,  the algebra of theta functions $\mathcal{A}$ gives an explicit description of the deformation of $\mathcal{X}^{\mathrm{can}}_0$ in $\mathcal{X}^{\mathrm{can}}$.

By \cite{GHK1, GHS}, for every point $p \in \RR^2 \setminus \cup_{(\fod,f_{\fod})\in \mathfrak{D}} \fod$, we have an embedding of $\CC[\![t]\!]$-algebras 
\begin{align} \label{eq_embedding}
    &\mathcal{A} \hooklongrightarrow \CC[x^{\pm}, y^{\pm}][\![t]\!] \\ \nonumber
    &\vartheta_{m,n} \longmapsto \sum_{\beta} a_\beta t^{\beta} x^{m_\beta} y^{n_\beta}\,,
\end{align} 
where the sum is over broken lines $\beta$ with asymptotic direction $(m,n)$, end-point $\beta(0)=p$, and final monomial $a_\beta t^{\beta} x^{m_\beta} y^{n_\beta}$. To describe the algebra of theta functions $\mathcal{A}$ by generators and relations, using the embeddings given by the Equation \eqref{eq_embedding} is often more convenient than calculating the structure constants in the basis $\{\vartheta_{m,n}\}$
using Equation \eqref{eq_product}.

\begin{example} \label{example_comparison}
Let $\overline{W}^{\mathrm{asym}}$ be the asymptotic web of 5-branes consisting of $w$ coincident parallel 5-branes on the horizontal axis $\RR(1,0)$.
Let $W^{\mathrm{asym}}$ be the asymptotic web of 5-branes with 7-branes obtained from $\overline{W}$ by ending $a_1, \dots, a_n$ 5-branes on 7-branes $x_1, \dots, x_n$ on $\RR_{>0}(1,0)$. 
By Equation \eqref{eq_wall_function}, the initial scattering diagram $\mathfrak{D}_{\mathrm{in}}$
consists of the incoming wall 
\[ (\RR_{\geq 0}(1,0), t^w \prod_{i=1}^{n}
    (1+t^{-a_i} x)) \,,\]
and $(\RR_{\geq 0}(-1,0), t^w)$.
The consistent scattering diagram $\mathfrak{D}$ is obtained from $\mathfrak{D}_{\mathrm{in}}$ by adding the outgoing wall 
$(\RR_{\geq 0}(-1,0), \prod_{i=1}^{n}
    (1+t^{-a_{e,i}} x))$, or equivalently by 
replacing $(\RR_{\geq 0}(-1,0), t^w)$ by
\[ (\RR_{\geq 0}(-1,0), t^{w-\sum_{i=1}^{n} a_{i}}\prod_{i=1}^{n}
    (t^{a_{i}}+ x)) \,.\]
Pick a point $p$ in the upper half-plane, and consider the corresponding embedding of the algebra of theta functions given by Equation \eqref{eq_embedding}.
We have $\vartheta_{1,0}=x$, $\vartheta_{0,1}=y$ and 
\[\vartheta_{0,-1}=y^{-1} t^{w-\sum_{i=1}^{n} a_{i}}\prod_{i=1}^{n}
    (t^{a_{i}}+ x)\,,\]
and so the defining equation of 
the algebra of theta functions is 
\begin{equation} \label{eq_theta_ex}
    \vartheta_{0,1}\vartheta_{0,-1}
    =t^{w-\sum_{i=1}^{n} a_{i}}\prod_{i=1}^{n}
    (t^{a_{i}}+ \vartheta_{1,0})\,.
\end{equation}
This recovers the description given in \cite[(4.4)]{bourget2023generalized} of the Calabi--Yau 3-fold M-theory dual to a web of 5-branes with 7-branes where all 7-branes have parallel monodromy-invariant directions, that is, when all the white dots are introduced on a single edge of the dual GTP. 
In \cite[(4.4)]{bourget2023generalized}, all 5-branes are assumed to end on a 7-brane, so that $n=\sum_{i=1}^n a_i$, and so the factor $t^{w-\sum_{i=1}^{n}a_i}$ in Equation \eqref{eq_theta_ex} does not appear explicitly. In \cite{bourget2023generalized}, the result is obtained from a D6-brane description in a Type IIA string duality frame, which is only possible in this particular situation where all 7-branes have parallel monodromy-invariant directions.
\end{example}

\begin{figure}[hbt!]
\center{\scalebox{0.9}{\includegraphics{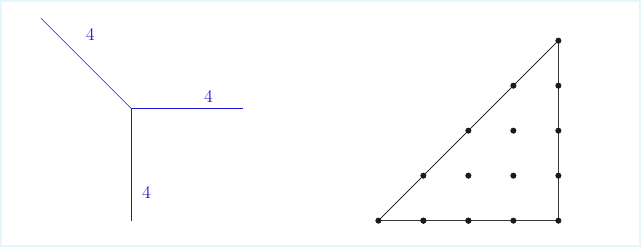}}}
\caption{Web of 5-branes and its dual polygon in Example \ref{example_triangle_toric}.}
\label{Fig11}
\end{figure}

\begin{example} \label{example_triangle_toric}
Let $\overline{W}^{\mathrm{asym}}$ be the asymptotic web of 5-branes given by the rays $\RR_{\geq 0}(1,0)$, $\RR_{\geq 0}(0,-1)$ and $\RR_{\geq 0}(-1,1)$ all endowed with weight $4$. The corresponding polarized toric log Calabi--Yau surface is $(\PP^2, \partial \PP^2, \mathcal{O}(4))$, with momentum polytope as in Figure \ref{Fig11}.
In this toric situation, the initial scattering diagram consisting of these three rays all endowed with the functions $t^4$ is already consistent. Choosing a point $p$ in the upper-right 2-dimensional cone of the polyhedral decomposition of $\RR^2$ defined by $\overline{W}^{\mathrm{asym}}$, the corresponding theta functions are 
$\vartheta_{1,0}=x$, $\vartheta_{-1,1}=x^{-1}y$, and $\vartheta_{0,-1}=y^{-1} t^4$. It follows that $\overline{\mathcal{X}}^{\mathrm{can}}$ is the toric Calabi--Yau 3-fold defined by the equation 
\begin{equation}\vartheta_{1,0}\,\vartheta_{0,-1}\,\vartheta_{-1,1}
=t^4 \,.\end{equation}
\end{example}

\begin{example} \label{example_A1}
    Let $\overline{W}^{\mathrm{asym}}$ be as in Example \ref{example_triangle_toric}, and let $W$ be the web of 5-branes with 7-branes obtained by ending two of the four $(1,0)$ 5-branes on a 7-brane on $\RR_{>0}(1,0)$. This web is consistent, see Figure \ref{Fig12} for a corresponding Symington polygon.
    The corresponding log Calabi--Yau surface $(Y,D,L)$ consists of the surface $Y$ obtained from $\PP^2$ by a non-toric blow-up of a point on the toric boundary divisor, the strict transform $D$ of the toric boundary divisor of $\PP^2$, and the ample line bundle $L=(\pi^{\star} \mathcal{O}(4))(-2E)$, where $\pi: Y \rightarrow \PP^2$ is the blow-up morphism and $E$ is the exceptional curve.
    By Equation \eqref{eq_wall_function}, the initial scattering diagram $\mathfrak{D}_{\mathrm{in}}$ consists of the three incoming walls $(\RR_{\geq 0}(1,0), t^4(1+t^{-2}x))$, $(\RR_{\geq 0}(0,-1), t^4)$, and $(\RR_{\geq 0}(-1,1), t^4)$. The consistent scattering diagram $\mathfrak{D}$ is obtained from $\mathfrak{D}_{\mathrm{in}}$ by adding the outgoing wall $(\RR_{\geq 0}(-1,0), 1+t^2 x)$.
    Choosing a point $p$ as in Figure \ref{Fig12}, the corresponding theta functions are $\vartheta_{1,0}=x$, $\vartheta_{-1,1}=x^{-1}y$, and $\vartheta_{0,-1}=y^{-1} t^4(1+t^{-2}x)$. Hence, $\mathcal{X}^{\mathrm{can}}$ is the  Calabi--Yau 3-fold defined by the equation 
\begin{equation}\vartheta_{1,0}\,\vartheta_{0,-1}\,\vartheta_{-1,1}
=t^2(t^2+\vartheta_1) \,.\end{equation}
This recovers the middle equation in \cite[Figure 5]{bourget2023generalized}.
\end{example}

\begin{figure}[hbt!]
\center{\scalebox{0.9}{\includegraphics{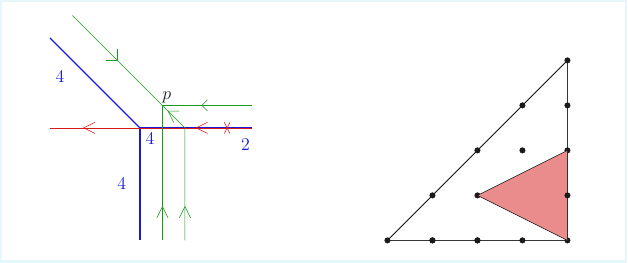}}}
\caption{Scattering diagram and Symington polygon in Example \ref{example_A1}}
\label{Fig12}
\end{figure}

\begin{figure}[hbt!]
\center{\scalebox{0.9}{\includegraphics{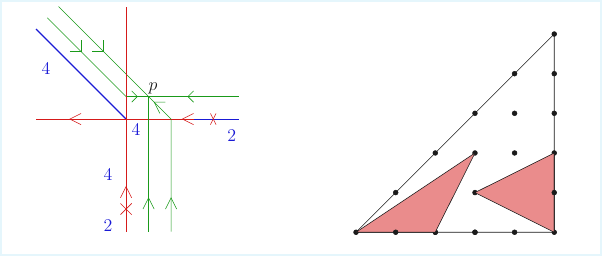}}}
\caption{Scattering diagram and Symington polygon in Example \ref{example_A2}.}
\label{Fig13}
\end{figure}

\begin{example} \label{example_A2}
Let $\overline{W}^{\mathrm{asym}}$ be the asymptotic web of 5-branes given by the rays $\RR_{\geq 0}(1,0)$, $\RR_{\geq 0}(0,-1)$ and $\RR_{\geq 0}(-1,1)$ all endowed with weight $5$.  
Let $W$ be the web of 5-branes with 7-branes obtained by ending two of the five $(1,0)$ 5-branes on a 7-brane on $\RR_{>0}(1,0)$, and two of the five $(0,-1)$ 5-branes on a 7-brane on $\RR_{>0}(0,-1)$. This web is consistent, see Figure \ref{Fig13} for a corresponding Symington polygon.
The corresponding log Calabi--Yau surface $(Y,D,L)$ consists of the surface $Y$ obtained from $\PP^2$ by the non-toric blow-up of two points on the toric boundary divisor, the strict transform $D$ of the toric boundary divisor of $\PP^2$, and the ample line bundle $L=(\pi^{\star} \mathcal{O}(5))(-2E_1-2E_2)$, where $\pi: Y \rightarrow \PP^2$ is the blow-up morphism and $E_1$, $E_2$ are the exceptional curves.
    
By Equation \eqref{eq_wall_function}, the initial scattering diagram $\mathfrak{D}_{\mathrm{in}}$ consists of the three incoming walls $(\RR_{\geq 0}(1,0), t^5(1+t^{-2}x))$, $(\RR_{\geq 0}(0,-1), t^5(1+t^{-2}y^{-1}))$, and $(\RR_{\geq 0}(-1,1), t^5)$. 
    By \cite[Figure 1.2]{GPS}, the consistent scattering diagram $\mathfrak{D}$ is obtained from $\mathfrak{D}_{\mathrm{in}}$ by adding the outgoing walls $(\RR_{\geq 0}(-1,0), 1+t^3 x)$,
    $(\RR_{\geq 0}(0,1), 1+t^3 y^{-1})$ and $(\RR_{\geq 0}(-1,1),1+tx^{-1}y )$.
    Choosing a point $p$ as in Figure \ref{Fig13}, the corresponding theta functions are $\vartheta_{1,0}=x$, $\vartheta_{-1,1}=x^{-1}y(1+t^3 y^{-1})$, and $\vartheta_{0,-1}=y^{-1} t^5(1+t^{-2}x)$. 
    We deduce that $y^{-1}=t^{-5}\frac{\vartheta_{0,1}}{1+t^{-2}\vartheta_{1,0}}$, and
    so that $\mathcal{X}^{\mathrm{can}}$ is the  Calabi--Yau 3-fold defined by the equation 
\begin{align*}\vartheta_{1,0}\,\vartheta_{0,-1}\,\vartheta_{-1,1}
&=t^5(1+t^3 y^{-1})(1+t^{-2}x)\\
&=t^5 \left(1+t^{-2} \frac{\vartheta_{0,-1}}{1+t^{-2}\vartheta_{1,0}}\right)(1+t^{-2}\vartheta_{1,0})\,,\end{align*}
that is,
\begin{equation}
\vartheta_{1,0}\,\vartheta_{0,-1}\,\vartheta_{-1,1} =t^5+t^3 \vartheta_{0,-1}+t^3 \vartheta_{1,0}\,.
\end{equation}
This result cannot be derived using the techniques of \cite{bourget2023generalized} since the two 7-branes have non-parallel monodromy directions (in the language of GTPs, there are white dots on two distinct edges).
\end{example}

\begin{example} \label{example_cubic}
    Let $\overline{W}^{\mathrm{asym}}$ be the asymptotic web of 5-branes given by the rays $\RR_{\geq 0}(1,0)$, $\RR_{\geq 0}(0,-1)$ and $\RR_{\geq 0}(-1,1)$ all endowed with weight $6$.  
Let $W$ be the web of 5-branes with 7-branes obtained by ending two pairs of $(1,0)$ 5-branes on two 7-branes on $\RR_{>0}(1,0)$, two pairs of $(0,-1)$ 5-branes on two 7-branes on $\RR_{>0}(0,-1)$, and two pairs of $(-1,1)$ 5-branes on two 7-branes on $\RR_{>0}(-1,1)$. This web is consistent, see Figure \ref{Fig14} for a corresponding Symington polygon.
Denote by $(Y,D,L)$ the corresponding log Calabi--Yau surface with line bundle. The surface $Y$
is obtained from $\PP^2$ by blowing up non-torically six points, two on each of the three toric lines, and so is a cubic surface. The divisor $D$ is the strict transform of the three toric lines in $\PP^2$. Moreover, we have $L=(\pi^\star \mathcal{O}(6))(-\sum_{i=1}^6 2E_i)$, where $\pi: Y \rightarrow \PP^2$ is the blow-up morphism, and $(E_i)_{1\leq i\leq 6}$ are the six exceptional curves.

By Equation \eqref{eq_wall_function}, the initial scattering diagram $\mathfrak{D}_{\mathrm{in}}$ consists of the three incoming walls $(\RR_{\geq 0}(1,0), t^6(1+t^{-2}x)^2)$, $(\RR_{\geq 0}(0,-1), t^6(1+t^{-2}y^{-1})^2)$, and $(\RR_{\geq 0}(-1,1), t^6(1+t^{-2}x^{-1}y)^2)$. 
By \cite{GHKScubic}, see also \cite{bousseau_skein}, the consistent scattering diagram $\mathfrak{D}$ is obtained from $\mathfrak{D}_{\mathrm{in}}$ by adding infinitely many rays of every possible rational slope. 
Moreover, the function attached to each ray can be explicitly determined.
For instance,
by \cite[Proposition 2.4]{GHKScubic}, the function $f_\fod$ attached to the added outgoing ray 
$\RR(1,0)$ is 
\[ f_\fod=\frac{(1+t^2 x^{-1})^8}{(1-t^4 x^{-2})^4}\,,\]
and the other added rays are determined by an $SL(2,\ZZ)$ symmetry described in 
\cite[Theorem 2.5]{GHKScubic}.
Even though the consistent scattering diagram contains infinitely many rays, generators and relations for the algebra of theta functions can still be determined, and it follows from \cite[Theorem 0.1]{GHKScubic}-\cite[Eq. (79)]{bousseau_skein} that $\mathcal{X}^{\mathrm{can}}$ is the Calabi--Yau 3-fold defined by the equation
\begin{equation} \vartheta_{1,0}\vartheta_{0,-1}\vartheta_{-1,1}
=t^2(\vartheta_{1,0}^2+\vartheta_{0,-1}^2
+\vartheta_{-1,1}^2)+8t^4(\vartheta_{1,0}+\vartheta_{0,-1}
+\vartheta_{-1,1})+28t^6 \,.\end{equation}
\end{example}

\begin{remark}
For a sufficiently complex web of 5-branes with 7-branes, the consistent scattering diagram $\mathfrak{D}$ can be extremely intricate. It will include every ray of rational slope, and finding a closed form for the function associated with each ray will usually be very challenging. In particular, it is in general difficult to calculate explicit equations for $\mathcal{X}^{\mathrm{can}}$ in such a situation. 
However, there are examples, such as the mirror of a degree two del Pezzo surface $dP_2$ in \cite{B_explicit}, where explicit equations can be written, even though the consistent scattering diagram is arbitrarily complicated and not fully known explicitly.
\end{remark}

\begin{figure}[hbt!]
\center{\scalebox{0.9}{\includegraphics{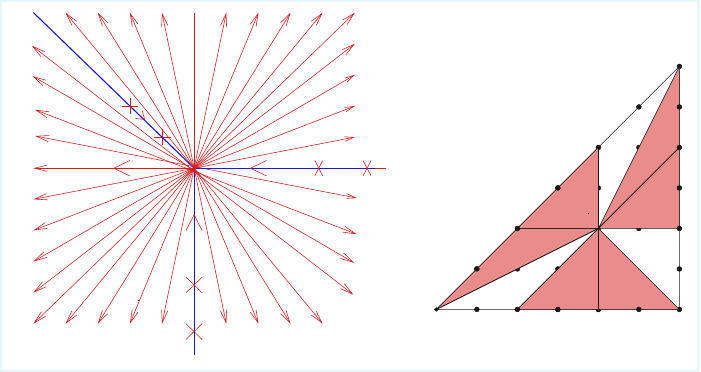}}}
\caption{Scattering diagram and Symington polygon in Example \ref{example_cubic}.}
\label{Fig14}
\end{figure}

\subsection{Disk worldsheet instantons and BPS states}
\label{section_instanton_BPS}
Let $W^{\mathrm{asym}}$ be a consistent asymptotic web of 5-branes with 7-branes and $\mathfrak{D}$ the 
corresponding consistent scattering diagram as in \S\ref{section_scattering}. Let $(Y,D,L)$
be the corresponding log Calabi--Yau surface with line bundle, obtained as in \S\ref{section_webs_log_CY} as non-toric blow-up of a polarized toric surface $(\overline{Y},\overline{D},\overline{L})$.
By \cite{GHK1}, the outgoing walls $(\fod,f_\fod)$ in $\mathfrak{D}$ added to the initial scattering diagram $\mathfrak{D}_{\mathfrak{in}}$ have an enumerative interpretation in terms of log Gromov--Witten counts of rational curves in the log Calabi--Yau surface $(Y,D)$. For every coprime $(p,q)\in \ZZ^2 \setminus \{0\}$, $k\in \ZZ_{\geq 1}$, and $\beta \in H_2(Y,\ZZ)$, one can define a log Gromov--Witten count \[N_{(kp,kq),\beta}
\in \QQ\]
of marked rational curves $f:C \rightarrow Y$
of class $\beta$, intersecting $D$ at a unique point and with contact order $(kp,kq)$ along $D$, that is, whose strict transform has contact order $k$ with the exceptional divisor in the surface obtained from $Y$ by the corner blow-up corresponding to adding the ray $\RR_{\geq 0}(p,q)$ in the fan of the toric surface $\overline{Y}$.
Then, the function $f_\fod$ associated to the outgoing ray $\fod=\RR_{\geq 0}(p,q)$
in $\mathfrak{D}$ is given by
\begin{equation}\label{eq_enum}
f_\fod = \exp \left( \sum_{k\geq 1}\sum_{\beta\in H_2(X,\ZZ)} k N_{(kp,kq),\beta} t^{L\cdot \beta} x^{-kp}  
y^{-kq}\right)\,.\end{equation}
Since the web $W^{\mathrm{asym}}$ is consistent, the line bundle $L$ is nef by Lemma \ref{lem_L_nef}, and so we have $L\cdot \beta\geq 0$
for every curve class $\beta$ represented by an algebraic curve, and so in particular for every $\beta$ such that $N_{(kp,kq),\beta} \neq 0$. In particular, all powers of $t$ in Equation \eqref{eq_enum} are indeed non-negative.

Recall from \S\ref{section_string_open_CY} that the open Calabi--Yau surface $U=Y \setminus D$ is a $T^2$-fibration over the integral affine manifold with singularity $B$ defined by the 7-branes, with a singular fiber over each $(p_i,q_i)$ 7-brane obtained by pinching a 1-cycle of class $(p_i,q_i)\in H_1(T^2,\ZZ)=\ZZ^2$ to a point. Then, $N_{(kp,kq),\beta}$ should be viewed as an algebro-geometric definition of a count of holomorphic disks in $U$ with boundary of class $(kp,kq)\in H_1(T^2,\ZZ)=\ZZ^2$ in a $T^2$-fiber. From the physics point of view, $N_{(kp,kq),\beta}$ is a count of worldsheet instantons in the 2-dimensional A-model with target $U$ and with boundary condition defined by a $T^2$-fiber. 
The walls of the initial scattering diagram $\mathfrak{D}_{\mathrm{in}}$
should be viewed as tropicalizations of elementary holomorphic disks created by the vanishing cycles of the singular fibers above the 7-branes. Moreover, the walls of the consistent scattering diagrams should be viewed as tropicalizations of 
more complicated disks obtained by gluing these elementary disks together \cite{bardwell2021scattering, GHK1, GPS}. 

Finally, these holomorphic disks have another interpretation as counts of M2-branes ending on an M5-brane wrapping a $T^2$-fiber in $U$, and so as BPS states in the 4d $\mathcal{N}=2$ theory on the $\RR^4$ part of the worldvolume of the M5-brane. 
This 4d $\mathcal{N}=2$ theory is in general distinct from the 4d $\mathcal{N}=2$ theory obtained by compactifying the 5d SCFT on $S^1$, which, as reviewed in \S\ref{section_consistent}, describes the $\RR^4$ worldvolume of an M5-brane wrapped on the curve $C^\circ$ in $U$. 
The 4d $\mathcal{N}=2$ theory defined by an M5-brane on $T^2 \subset U$ is of rank one, with Coulomb branch identified with the base $B$ of the $T^2$-fibration. Equivalently, it is the worldvolume theory on a D3-brane probing the configuration of 7-branes in Type IIB string theory, and the tropicalizations of the holomorphic disks are string junctions that realize the BPS states \cite{MNS}.

The precise relation between the consistent scattering diagram $\mathfrak{D}$ and the BPS spectrum of the 4d $\mathcal{N}=2$ theory defined by an M5-brane on $T^2 \subset U$ can be described as follows.
By Theorem 
\cite[Theorem 8.5]{bousseau_vertex}, the function $f_\fod$ associated to the outgoing ray $\fod=\RR_{\geq 0}(p,q)$ in $\mathfrak{D}$ can be uniquely written as 
\[ f_\fod=\prod_{k\geq 1}\prod_{\beta \in H_2(Y,\ZZ)} \left( 1-(-1)^k t^{L\cdot \beta} x^{-kp}  
y^{-kq}\right)^{(-1)^{k-1} k \Omega_{(kp,kq),\beta}}\,,\]
with $\Omega_{(kp,kq),\beta} \in \ZZ$. Then, the integers $\Omega_{(kp,kq),\beta}$ are the BPS indices
\cite[Eq. (1.1)]{GMN1}
near infinity of the Coulomb branch of the 4d $\mathcal{N}=2$ theory defined by an M5-brane on $T^2 \subset U$, where $(kp,kq)\in \ZZ^2$ is the electromagnetic charge with respect to the low-energy $U(1)$ gauge theory, and $\beta$ encodes the flavor charge. The consistency of 
the scattering diagram is a manifestation of the Kontsevich--Soibelman wall-crossing formula for BPS indices \cite{kontsevich2008stability}, as explained in \cite{arguz2024quivers, bousseau_scattering, bousseau_bps, bridgeland}. In particular, the fact that the consistent scattering diagram can be arbitrarily complicated is related to the inherent complexity of the BPS spectrum of 4d $\mathcal{N}=2$ theories, see \cite{galakhov2013wild} for instance.

\begin{example} \label{example_4d_rk1}
The 4d $\mathcal{N}=2$ theory defined by an M5-brane on $T^2 \subset U$ is given by:
\begin{itemize}
    \item[i)] the $\mathcal{N}=2$ $N_f=1$ $U(1)$ gauge theory in Example \ref{example_A1}.
    \item[ii)] the $A_2$ Argyres-Douglas theory in Example \ref{example_A2}.
    \item[iii)] the $\mathcal{N}=2$ $N_f=4$ $SU(2)$ gauge theory in Example \ref{example_cubic}. 
\end{itemize}
\end{example}

\begin{remark}
    The 4d $\mathcal{N}=2$ theory defined by an M5-brane on $T^2 \subset U$ is expected to define a UV complete 4d field theory when $U$ admits a complete hyperk\"ahler metric. It is the case, for example, in Examples \ref{example_4d_rk1} ii) and iii), where $U$ can be identified after hyperk\"ahler rotation with the total space of a Hitchin integrable system. As reviewed in \S \ref{section_string_open_CY}, $U$ admits only an incomplete hyperk\"ahler metric in general, and in such a case we have only a 4d low-energy effective field theory, as in \ref{example_4d_rk1} i). Nevertheless, we expect that the notion of BPS spectrum is still well defined in this situation, with a wall-crossing behavior still controlled by consistent scattering diagrams.
\end{remark}

\section{Further Examples}
\label{section_further_examples}

In this section, we provide several examples of consistent webs of 5-branes with 7-branes, along with their corresponding Symington polygons. We also describe the associated crepant resolution $\mathcal{X}\to \mathcal{X}^{\mathrm{can}}$ of the M-theory dual canonical 3-fold singularity $\mathcal{X}^{\mathrm{can}}$. We discuss examples engineering rank one 5d SCFTs in \S\ref{section_rank_one} and rank two 5d SCFTs in \S\ref{section_rank_two}. Additionally, we explain in \S\ref{section_fanosearch} how mirror symmetry for Fano orbifolds can be viewed as a particular instance of our construction.

\subsection{Rank one examples}
\label{section_rank_one}
The examples \ref{Ex:toric}-\ref{Ex: fig 23}-\ref{Ex: fig 24}-\ref{Ex: fig 25} below, as well as Example \ref{Ex: fig 29} in the following section, illustrate webs of $5$-branes with $7$-branes
engineering the rank one $E_0$ $5$d SCFT obtained by $M$-theory on local $\PP^2$.
\begin{example}
\label{Ex:toric}
Let $\overline{P}$ be the lattice polygon shown on the left of Figure \ref{Fig22}. The toric surface $(\overline{Y},\overline{D})$ with momentum polytope $\overline{P}$ is singular and has an $A_2$ singularity at each of the three 0-dimensional strata of $\overline{D}$. As discussed in \S\ref{section_degeneration_toric_surfaces}, the polyhedral decomposition $\mathcal{\overline{P}}$ of $\overline{P}$ in the middle of Figure \ref{Fig22} determines a one-parameter toric degeneration of $\overline{Y}$ into the union of three copies of $\PP^2$, corresponding to the three triangles forming $\overline{P}$.
According to \S\ref{section_toric_dual_CY3}, the mirror $\mathcal{\overline{X}}$ to this one-parameter degeneration of $\overline{Y}$ is the toric Calabi--Yau 3-fold whose fan is the cone over $(\overline{P},\mathcal{\overline{P}})$. In other words, the momentum polytope image of $\mathcal{\overline{X}}$ is the cone over the dual web illustrated in blue on the right Figure \ref{Fig22}. In this situation, we have $\mathcal{\overline{X}} = K_{\PP^2}$.
Moreover, the corresponding map $\mathcal{\overline{X}}=K_{\PP^2} \to \CC$ is determined by the section $s=xyz \in H^0(\PP^2, -K_{\PP^2})$, where $x,y,z$ denote the homogeneous coordinates of $\PP^2$. In particular, the intersection of $\PP^2$ with the other irreducible components of the central fiber $\overline{\mathcal{X}}_0$ is the 
union of the three lines forming the toric boundary of $\PP^2$ and defined by the equation $s=0$. Contracting the zero-section $\PP^2 \subset \mathcal{X}$ to a point, we obtain an affine toric canonical 3-fold singularity $\overline{\mathcal{X}}^{\mathrm{can}}$, which is the total space of a smoothing of a degenerate cusp singularity $\overline{\mathcal{X}}_0^{\mathrm{can}}$ with three irreducible components.
\begin{figure}[hbt!]
\center{\scalebox{0.9}{\includegraphics{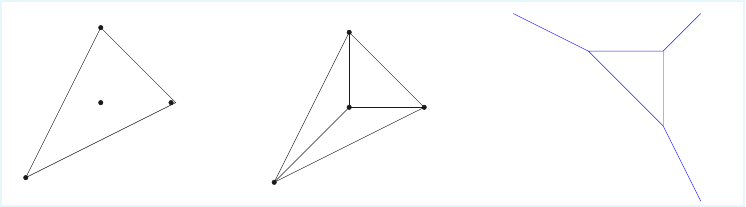}}}
\caption{On the left, the momentum polytope $\overline{P}$ for a polarized toric variety $\overline{Y}$. In the middle, a polyhedral decomposition $\overline{\mathcal{P}}$ of $\overline{P}$ defining a maximal degeneration of $\overline{Y}$.
On the right, the associated dual web of $5$-branes engineering the rank one $E_0$ 5d SCFT.}
\label{Fig22}
\end{figure}
\end{example}

\begin{example}
\label{Ex: fig 23}
Let $\overline{P}$ be the decorated lattice polygon shown on the left of Figure \ref{Fig23}. The corresponding log Calabi--Yau surface $(Y,D)$ is obtained by one interior blow-up from the toric surface $(\overline{Y},\overline{D})$ as in Example \ref{Ex:toric}.
An associated Symington polygon $P$ with a polyhedral decomposition is illustrated in the middle of Figure \ref{Fig23} and corresponds to a one-parameter degeneration of the corresponding polarized log Calabi--Yau surface $(Y^{\mathrm{pol}},D^{\mathrm{pol}})$ to a union of two copies of $\PP^2$. As described in \S\ref{section_cy3_construction}, the mirror $\mathcal{X}$ to this one-parameter degeneration is a non-toric deformation of $\overline{\mathcal{X}}=K_{\PP^2}$. In particular, the compact component of $\mathcal{X}_0$ corresponding to the bounded cell in the dual web on the right of Figure \ref{Fig23} is obtained from $\PP^2$ by smoothing one of the corners of its toric boundary. This compact component is still isomorphic to $\PP^2$, but its intersection with the other irreducible components of $\mathcal{X}_0$ is the non-toric divisor defined by the union of a line and a conic. It follows that we still have $\mathcal{X}=K_{\PP^2}$, but the map $\mathcal{X} \rightarrow\CC$ is a non-toric deformation of the toric morphism $\overline{\mathcal{X}} \rightarrow \CC$. The corresponding canonical 3-fold singularity $\overline{\mathcal{X}}^{\mathrm{can}}$ is the total space of a smoothing of a degenerate cusp singularity $\overline{\mathcal{X}}_0^{\mathrm{can}}$ with two irreducible components.

\begin{figure}[hbt!]
\center{\scalebox{0.9}{\includegraphics{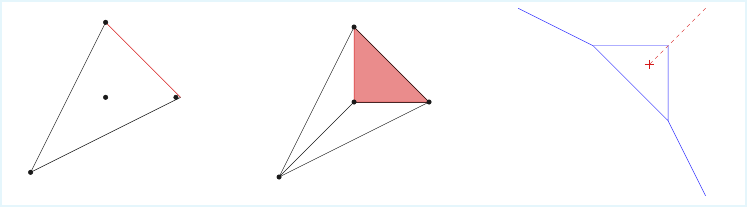}}}
\caption{On the left, the momentum polytope $\overline{P}$ for a polarized toric variety $\overline{Y}$. In the middle, a polyhedral decomposition $\overline{\mathcal{P}}$ of $\overline{P}$ defining a maximal degeneration of $\overline{Y}$.
On the right, the associated dual web of $5$-branes engineering the rank one $E_0$ 5d SCFT.}
\label{Fig23}
\end{figure}
\end{example}

\begin{example}
\label{Ex: fig 24}
Let $\overline{P}$ be the decorated lattice polygon shown on the left of Figure \ref{Fig24}. The corresponding log Calabi--Yau surface $(Y,D)$ is obtained by two interior blow-ups from the toric surface $(\overline{Y},\overline{D})$ as in Example \ref{Ex:toric}.
An associated Symington polygon $P$ with a polyhedral decomposition is illustrated in the middle of Figure \ref{Fig24} and corresponds to a one-parameter degeneration of the corresponding polarized log Calabi--Yau surface $(Y^{\mathrm{pol}},D^{\mathrm{pol}})$ to a non-normal surface whose normalization is isomorphic to $\PP^2$. As described in \S\ref{section_cy3_construction}, the mirror $\mathcal{X}$ to this one-parameter degeneration is a non-toric deformation of $\overline{\mathcal{X}}=K_{\PP^2}$. In particular, the compact component of $\mathcal{X}_0$ corresponding to the bounded cell in the dual web on the right of Figure \ref{Fig24} is obtained from $\PP^2$ by smoothing two of the corners of its toric boundary. This compact component is still isomorphic to $\PP^2$, but its intersection with the other irreducible components of $\mathcal{X}_0$ is the non-toric divisor defined by a nodal cubic curve. It follows that we still have $\mathcal{X}=K_{\PP^2}$, but the map $\mathcal{X} \rightarrow\CC$ is a non-toric deformation of the toric morphism $\overline{\mathcal{X}} \rightarrow \CC$. The corresponding canonical 3-fold singularity $\overline{\mathcal{X}}^{\mathrm{can}}$ is the total space of a smoothing of a degenerate cusp singularity $\overline{\mathcal{X}}_0^{\mathrm{can}}$ with a unique non-normal, self-intersecting, irreducible component.

\begin{figure}[hbt!]
\center{\scalebox{0.9}{\includegraphics{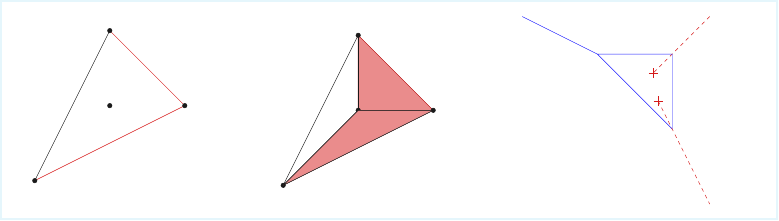}}}
\caption{On the left, the momentum polytope $\overline{P}$ for a polarized toric variety $\overline{Y}$. In the middle, a polyhedral decomposition $\overline{\mathcal{P}}$ of $\overline{P}$ defining a maximal degeneration of $\overline{Y}$.
On the right, the associated dual web of $5$-branes engineering the rank one $E_0$ 5d SCFT.}
\label{Fig24}
\end{figure}
\end{example}

\begin{example}
\label{Ex: fig 25}
Let $\overline{P}$ be the decorated lattice polygon shown on the left of Figure \ref{Fig25}. The corresponding log Calabi--Yau surface with line bundle $(Y,D, L)$ is obtained by three interior blow-ups from the toric surface $(\overline{Y},\overline{D})$ as in Example \ref{Ex:toric}.
An associated Symington polygon $P$ with a polyhedral decomposition is illustrated in the middle of Figure \ref{Fig25} and corresponds to a one-parameter degeneration of $(Y,D)$.
In this case, we have $L^2=0$, and contracting all curves having zero-intersection with $L$ produces a map $c: Y \rightarrow \CC$ which is an elliptic fibration.
As described in \S\ref{section_cy3_construction}, the mirror $\mathcal{X}$ is a non-toric deformation of $\overline{\mathcal{X}}=K_{\PP^2}$. In particular, the compact component of $\mathcal{X}_0$ corresponding to the bounded cell in the dual web on the right of Figure \ref{Fig25} is obtained from $\PP^2$ by smoothing 
the three corners of its toric boundary. This compact component is still isomorphic to $\PP^2$, but its intersection with the other irreducible components of $\mathcal{X}_0$ is the non-toric divisor defined by a smooth cubic curve. It follows that we still have $\mathcal{X}=K_{\PP^2}$, but the map $\mathcal{X} \rightarrow\CC$ is a non-toric deformation of the toric morphism $\overline{\mathcal{X}} \rightarrow \CC$. The corresponding canonical 3-fold singularity $\overline{\mathcal{X}}^{\mathrm{can}}$ is the total space of a simple elliptic singularity $\overline{\mathcal{X}}_0^{\mathrm{can}}$.

\begin{figure}[hbt!]
\center{\scalebox{.9}{\includegraphics{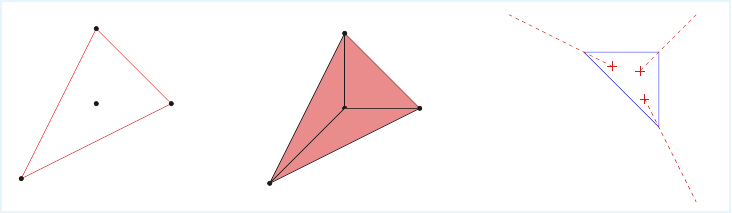}}}
\caption{On the left, the momentum polytope $\overline{P}$ for a polarized toric variety $\overline{Y}$. In the middle, a polyhedral decomposition $\overline{\mathcal{P}}$ of $\overline{P}$ defining a maximal degeneration of $\overline{Y}$.
On the right, the associated dual web of $5$-branes engineering the rank one $E_0$ 5d SCFT.}
\label{Fig25}
\end{figure}
\end{example}

The following example illustrates a web of $5$-branes with $7$-branes, describing the $\widetilde{E}_1$ $5$d SCFT obtained by $M$-theory on local $\mathbb{F}_0$.

\begin{example}
\label{Ex: fig 26}
Let $\overline{P}$ be the decorated lattice polygon shown on the left of Figure \ref{Fig26}, and $(\overline{Y},\overline{D})$ the toric surface with momentum polytope $\overline{P}$. The corresponding log Calabi--Yau surface $(Y,D)$ is obtained by a blow-up of $\overline{Y}$ along a smooth point on the component of the boundary divisor corresponding to the bottom edge of $\overline{P}$. 
An associated Symington polygon $P$ with a polyhedral decomposition is illustrated in the middle of Figure \ref{Fig26} and corresponds to a one-parameter degeneration of the corresponding polarized log Calabi--Yau surface $(Y^{\mathrm{pol}},D^{\mathrm{pol}})$ to the union of three copies of $\PP^2$. 
As described in \S\ref{section_cy3_construction}, the mirror $\mathcal{X}$ to this one-parameter degeneration is a non-toric deformation of the toric Calabi--Yau 3-fold $\overline{\mathcal{X}}=K_{\mathbb{F}_2}$. In particular, the compact component of $\mathcal{X}_0$ corresponding to the bounded cell in the dual web on the right of Figure \ref{Fig26} is obtained from $\mathbb{F}_2$ by smoothing one corner of its toric boundary. This compact component is isomorphic to $\mathbb{F}_0=\PP^1\times \PP^1$, with an intersection with the other irreducible components of $\mathcal{X}_0$ given by a non-toric divisor with three irreducible components. 
In particular, we have $\mathcal{X}=K_{\mathbb{F}_0}$, but with a non-toric morphism $\mathcal{X} \rightarrow \CC$. The corresponding canonical 3-fold singularity $\overline{\mathcal{X}}^{\mathrm{can}}$ is the total space of a smoothing of a degenerate cusp singularity $\overline{\mathcal{X}}_0^{\mathrm{can}}$ with three irreducible components.

\begin{figure}[hbt!]
\center{\scalebox{1.0}{\includegraphics{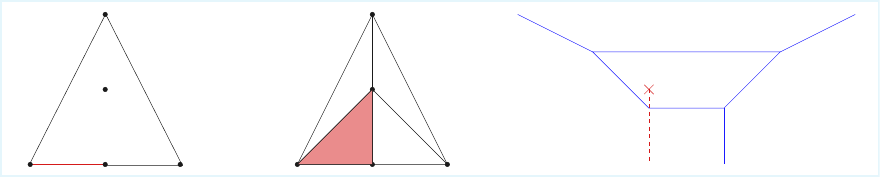}}}
\caption{The momentum polytope $\overline{P}$ for a polarized toric variety $\overline{Y}$ on the left, a polyhedral decomposition $\overline{\mathcal{P}}$ of $\overline{P}$ defining a maximal degeneration of $\overline{Y}$ in the middle, and the associated dual web of $5$-branes on the right, engineering the rank one $\widetilde{E}_1$ 5d SCFT.}
\label{Fig26}
\end{figure}

\end{example}

The following example illustrates a web of $5$-branes with $7$-branes, describing the $E_1$ $5$d SCFT obtained by $M$-theory on local $\mathbb{F}_1$.

\begin{example}
 Let $\overline{P}$ be the decorated lattice polygon shown on the left of Figure \ref{Fig36}, and $(\overline{Y},\overline{D})$ the toric surface with momentum polytope $\overline{P}$.
The corresponding log Calabi--Yau surface $(Y,D)$ is obtained by a blow-up of $\overline{Y}$ along two smooth points on two components of the boundary divisor corresponding to the most left, and bottom edges of $\overline{P}$.
 An associated Symington polygon $P$ with a polyhedral decomposition is illustrated in the middle of Figure \ref{Fig36} and corresponds to a one-parameter degeneration of the corresponding polarized log Calabi--Yau surface $(Y^{\mathrm{pol}},D^{\mathrm{pol}})$ to the union of two copies of $\PP^2$. As described in \S\ref{section_cy3_construction}, the mirror $\mathcal{X}$ to this one-parameter degeneration is a non-toric deformation of a toric Calabi--Yau 3-fold $\overline{\mathcal{X}}$. It follows from the dual web represented on the right of Figure \ref{Fig36} that the central fiber $\mathcal{X}_0$
consists of three irreducible component. One irreducible component is compact and isomorphic to $\PP^2$. The other two components are non-compact, one being toric and the other being non-toric and containing an interior $(-1)$-curve. Flopping this curve, we obtain the Calabi--Yau $3$-fold $K_{\mathbb{F}_1}$. The corresponding canonical 3-fold singularity $\overline{\mathcal{X}}^{\mathrm{can}}$ is the total space of a smoothing of a degenerate cusp singularity $\overline{\mathcal{X}}_0^{\mathrm{can}}$ with two irreducible components.

\begin{figure}[hbt!]
\center{\scalebox{1.0}{\includegraphics{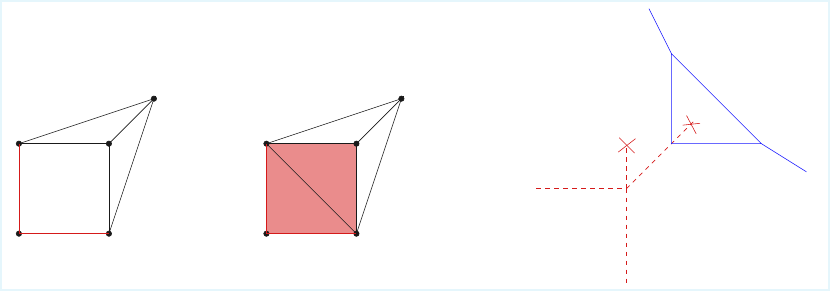}}}
\caption{The momentum polytope $\overline{P}$ for a polarized toric variety $\overline{Y}$ on the left, a polyhedral decomposition $\overline{\mathcal{P}}$ of $\overline{P}$ defining a maximal degeneration of $\overline{Y}$ in the middle, and the associated dual web of $5$-branes on the right, engineering the rank one $E_1$ 5d SCFT.}
\label{Fig36}
\end{figure}

\end{example}

\subsection{Rank two examples}
\label{section_rank_two}

\begin{example}
    Let $\overline{P}$ be the decorated lattice polygon shown on the left of Figure \ref{Fig27}, and $(\overline{Y},\overline{D})$ the toric surface with momentum polytope $\overline{P}$.
The corresponding log Calabi--Yau surface $(Y,D)$ is obtained by blowing up three smooth points on the three components of the toric boundary divisor $\overline{D}$ of $\overline{Y}$.
 An associated Symington polygon $P$ with a polyhedral decomposition is illustrated in the middle of Figure \ref{Fig27} and corresponds to a one-parameter degeneration of the corresponding polarized log Calabi--Yau surface $(Y^{\mathrm{pol}},D^{\mathrm{pol}})$ to the union of two copies of $\PP^2$. As described in \S\ref{section_cy3_construction}, the mirror $\mathcal{X}$ to this one-parameter degeneration is a non-toric deformation of a toric Calabi--Yau 3-fold $\overline{\mathcal{X}}$. It follows from the dual web represented on the right of Figure \ref{Fig27} that the central fiber $\mathcal{X}_0$ contains two compact irreducible components. 
 The compact component corresponding to the triangle in the web is isomorphic to $\PP^2$. On the other hand, the compact component corresponding to the other bounded cell of the web is isomorphic to the Hirzebruch surface $\mathbb{F}_6$.
The corresponding canonical 3-fold singularity $\overline{\mathcal{X}}^{\mathrm{can}}$ is the total space of a smoothing of a cusp singularity $\overline{\mathcal{X}}_0^{\mathrm{can}}$. 
The associated rank two 5d SCFT appears as $\PP^2 \cup \mathbb{F}_6$ in \cite[Figure 7]{classification}.
\end{example}

\begin{figure}[hbt!]
\center{\scalebox{0.9}{\includegraphics{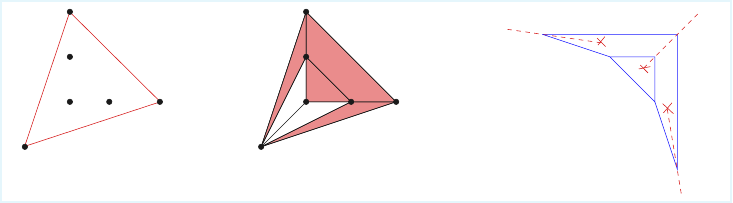}}}
\caption{The momentum polytope $\overline{P}$ for a polarized toric variety $\overline{Y}$ on the left, a polyhedral decomposition $\overline{\mathcal{P}}$ of $\overline{P}$ defining a maximal degeneration of $\overline{Y}$ in the middle, and the associated dual web of $5$-branes on the right, engineering a rank two 5d SCFT.}
\label{Fig27}
\end{figure}

\begin{example}
Let $\overline{P}$ be the decorated lattice polygon shown on the left of Figure \ref{Fig28}, and $(\overline{Y},\overline{D})$ the toric surface with momentum polytope $\overline{P}$.
The corresponding log Calabi--Yau surface $(Y,D)$ is obtained by blowing up four smooth points on the four components of the toric boundary divisor $\overline{D}$ of $\overline{Y}$.
 An associated Symington polygon $P$ with a polyhedral decomposition is illustrated in the middle of Figure \ref{Fig28} and corresponds to a one-parameter degeneration of the corresponding polarized log Calabi--Yau surface $(Y^{\mathrm{pol}},D^{\mathrm{pol}})$ to the union of two copies of $\PP^2$. As described in \S\ref{section_cy3_construction}, the mirror $\mathcal{X}$ to this one-parameter degeneration is a non-toric deformation of a toric Calabi--Yau 3-fold $\overline{\mathcal{X}}$. It follows from the dual web represented on the right of Figure \ref{Fig28} that the central fiber $\mathcal{X}_0$ contains two compact irreducible components. 
 The compact component corresponding to the quadrilatera in the web is isomorphic to $\mathbb{F}_1$. 
 On the other hand, the compact component corresponding to the other bounded cell of the web is isomorphic to $\mathbb{F}_6$.
The corresponding canonical 3-fold singularity $\overline{\mathcal{X}}^{\mathrm{can}}$ is the total space of a smoothing of a cusp singularity $\overline{\mathcal{X}}_0^{\mathrm{can}}$. 
The associated rank two 5d SCFT appears as $\mathbb{F}_1 \cup \mathbb{F}_6$ in \cite[Figure 8]{classification}, where it is also identified with the 5d $\mathcal{N}=1$ $\mathrm{Sp
}(2)_0$ gauge theory.
\end{example}

\begin{figure}[hbt!]
\center{\scalebox{0.9}{\includegraphics{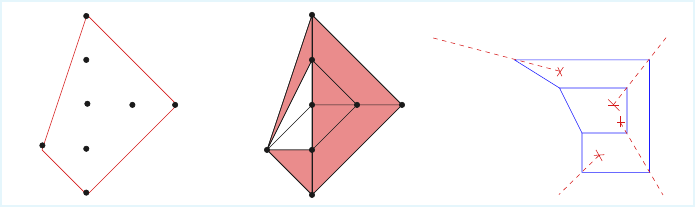}}}
\caption{The momentum polytope $\overline{P}$ for a polarized toric variety $\overline{Y}$ on the left, a polyhedral decomposition $\overline{\mathcal{P}}$ of $\overline{P}$ defining a maximal degeneration of $\overline{Y}$ in the middle, and the associated dual web of $5$-branes on the right, engineering a rank two 5d SCFT.}
\label{Fig28}
\end{figure}

\subsection{Relation to the Fanosearch program}
\label{section_fanosearch}
In \S\ref{section_cy3_construction}, we construct crepant resolutions $\mathcal{X} \rightarrow \mathcal{X}^{\mathrm{can}}$ of the M-theory dual 3-fold to a web of 5-branes with 7-branes as the mirror to degenerations of a log Calabi--Yau surface with line bundle $(Y,D,L)$, and explain that $\mathcal{X}^{\mathrm{can}}$ arises as a non-toric deformation of a toric Calabi--Yau $3$-fold $\overline{\mathcal{X}}^{\mathrm{can}}$. In addition, applying Hanany--Witten moves to the web of 5-branes gives rise to a different toric Calabi--Yau 3-folds $\overline{\mathcal{X}}^{\mathrm{can}}$ but to the same non-toric Calabi--Yau 3-fold $\mathcal{X}^{\mathrm{can}}$. In this section, we describe the relationship between this mirror construction, and mirror symmetry for 2-dimensional (possibly singular) Fano varieties, as developed in \cite{fanosearch, polygon_mutations, corti2023cluster}, as part of the Fanosearch program of Coates--Corti et al. Consequently, we explain the connection of our results with recent works in the physics literature \cite{arias2024geometry, Franco_twin}, where notions that appear in Fano mirror symmetry, such as mutations of Fano polygons, are applied to the study of 5d SCFTs defined by webs of 5-branes with 7-branes. 

We first briefly review mirror symmetry for 2-dimensional Fano varieties. 
A \emph{Fano polygon} is a lattice polygon $\overline{P}$ in $\ZZ^2$, which contains the origin $0 \in \ZZ^2$ in its strict interior, and the integral vectors $\overrightarrow{0 v}$ are primitive for all vertices $v$ of $\overline{P}$. Denote by $\overline{X}_{\overline{P}}$ the Fano toric surface whose fan is the spanning fan $\Sigma_{\overline{P}}$ of $\overline{P}$, that is, the fan in $\RR^2$ whose rays are span by the primitive vectors $\overrightarrow{0v}$, where $v$ are the vertices of $\overline{P}$. According to \cite{akhtar2014singularity}, every cone of $\Sigma_{\overline{P}}$ can be subdivided into elementary T-cones and R-cones. The elementary T-cones coincide with the cones over elementary triangles, with equal base and height as in Definition \ref{def_elementary_triangle}. On the other hand, the R-cones are the fans of toric cyclic singularities of class $R$, that is, without $\QQ$-Gorenstein smoothing.

It is shown in \cite[Lemma 6]{fanosearch} that there exists a Fano surface $X_{\overline{P}}$, whose singularities are the R-singularities corresponding to the R-cones of $\overline{P}$, and which is obtained as a $\QQ$-Gorenstein deformation of the toric surface $\overline{X}_{\overline{P}}$. The same Fano surface can be obtained from different Fano polygons. Indeed, if two Fano polygons $\overline{P}$ and $\overline{P'}$ are related by a combinatorial operation called \emph{mutation} \cite{polygon_mutations}, the resulting Fano surfaces $X_{\overline{P}}$ and $X_{\overline{P'}}$ are isomorphic, up to deformations, by \cite[Theorem 3]{fanosearch}. In other words, the same Fano surface admits two different degenerations to the two different 
Fano toric surfaces $\overline{X}_{\overline{P}}$ and $\overline{X}_{\overline{P}'}$. 

By \cite[Theorem 39]{corti2023cluster},
there exists actually a one-to-one correspondence between the set of $\QQ$-Gorenstein degenerations of the Fano surface $X_{\overline{P}}$ to Fano toric surfaces and the set of Fano polygons obtained from $\overline{P}$ by mutations. Finally, one can attach to a Fano polygon $\overline{P}$ a log Calabi--Yau surface $(Y,D)$. For this, denote by $(\overline{Y},\overline{D},\overline{L})$ the polarized toric variety with momentum polytope $\overline{P}$. Then, $(Y,D)$ is obtained from $(\overline{Y},\overline{D})$ by blowing up for each elementary T-cone of $\overline{P}$ a point on the corresponding toric divisor of $\overline{Y}$.
Finally, one can define a line bundle $L$ on $(Y,D)$ by $L=p^\star \overline{L} \otimes \mathcal{O}(-\sum_i a_i E_i)$, where $p: Y \rightarrow \overline{Y}$ is the toric morphism, $E_i$ are the exceptional curves, and $a_i$ are the common lattice height/base of the corresponding elementary T-cones. 
The set of cluster charts $(\CC^\star)^2 \subset U=Y \setminus D$ is in one-to-one correspondence with the set of mutations of $\overline{P}$, and the space of sections $H^0(Y,L)$ can be identified with the set of maximally mutable Laurent polynomials defined in \cite[Definition 4]{fanosearch}, which are expected to be Landau-Ginzburg mirror to the Fano surface $X_{\overline{P}}$.

The set-up of mirror symmetry for Fano surfaces briefly reviewed above can be viewed as a special case of the set-up considered in the present paper. Indeed, if $(Y,D,L)$ is a log Calabi--Yau surface with line bundle obtained from a Fano polygon $\overline{P}$, then, we have $L=\mathcal{O}(\sum_i a_i D_i)$, where $D_i$ are the irreducible components of $D$, and $a_i$ is the lattice distance between the origin $0$ and the side of $\overline{P}$
corresponding to $D_i$. In other words, we are in the particular case where the divisors defined by $L$ are linearly equivalent to a divisor supported on $D$. Conversely, if $(Y,D,L)$ is a log Calabi--Yau surface such that $L=\mathcal{O}(\sum_i a_i D_i)$ with $a_i \in \ZZ_{\geq 0}$, then, for every toric model given by a polarized toric surface $(\overline{Y}, \overline{D}, \overline{L})$ with momentum polytope $\overline{P}$, there exists a torus-invariant section of $\overline{L}$ with zero-divisor $\sum_i a_i \overline{D}_i$, which corresponds to a particular integral point in $\overline{P}$ which can be taken as the origin \cite{lai2022mirror}.
In this particular situation, the Fano polygon $\overline{P}$ naturally defines a Symington polygon $P$ by cutting out all elementary T-cones, so that the origin is the only singularity of the 
integral affine structure, see Figure \ref{Fig32}. The corresponding non-compact Calabi--Yau 3-fold is the total space of the canonical
line bundle $K_{X_{\overline{P}}}$
of the Fano surface $X_{\overline{P}}$. Indeed, starting from the toric Calabi--Yau 3-fold $K_{\overline{X}_{\overline{P}}}$, and applying the non-toric deformation described in \S\ref{section_cy3_construction}, the Fano toric surface $\overline{X}_{\overline{P}}$ is deformed non-torically into the Fano surface $X_{\overline{P}}$.
In particular, the corresponding canonical 3-fold singularity $\mathcal{X}^{\mathrm{can}}$ is obtained from $K_{X_{\overline{P}}}$ by contracting the zero section to a point. 
The 3-fold $K_{X_{\overline{P}}}$ is singular in general and so is only a partial crepant resolution of $\mathcal{X}^{\mathrm{can}}$. To obtain a crepant resolution, one should resolve the singularities by decomposing the Symington polygon $P$ into triangles of size one.

\begin{figure}[hbt!]
\center{\scalebox{0.9}{\includegraphics{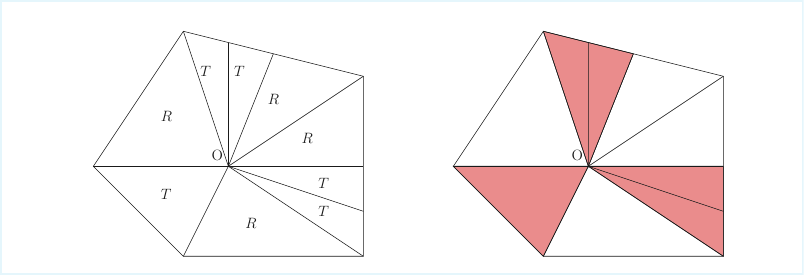}}}
\caption{On the left, a Fano polygon with a decomposition into elementary T-cones and R-cones. On the right, the corresponding Symington polygon obtained by cutting out the T-cones.}
\label{Fig32}
\end{figure}

Finally, mutations of Fano polygons correspond to Hanany--Witten moves on the web of 5-branes with 7-branes associated to the corresponding log Calabi--Yau surface $(Y,D,L)$, as observed in \cite{arias2024geometry, Franco_twin}. 
Different Fano polygons related by mutations correspond to two different ways to realize a Fano surface as a deformation of a toric Fano surface.
For general webs of 5-branes with 7-branes, corresponding to a general log Calabi--Yau surface $(Y,D,L)$, Hanany--Witten moves always have a dual description in terms of Symington polygons, which generalizes mutations of polygons in the particular case where $L$ is supported on $D$ and the Symington polygon comes from a Fano polygon. In general, different webs related by Hanany--Witten moves correspond to different ways to realize the M-theory dual Calabi--Yau 3-fold $\mathcal{X}^{\mathrm{can}}$ as a deformation of a toric Calabi--Yau 3-fold. 

In the following example, we describe how Fano polygons mirror to $\QQ$-Gorenstein degenerations of $\PP^2$ appear in the context of the present paper.

\begin{example}
\label{Ex: fig 29}
As shown in \cite{MR2581246}, all $\QQ$-Gorenstein degenerations of $\PP^2$ are weighted projective spaces $\PP(a^2,b^2,c^2)$, for Markov triples $(a,b,c)$, that is, triple of positive integers such that $a^2+b^2+c^2=3 abc$.
Correspondingly, all mutation equivalent Fano polygons to the Fano polygon $\overline{P}$ in Example \ref{Ex:toric}, can be characterized by a mutation sequence indexed by Markov triples as explained in \cite[Example 2.4]{MR3686766}. We illustrate in Figure \ref{Fig29} one of these mutation equivalent polygons $\overline{P}'$, along with a polyhedral decomposition which determines a degeneration of the toric variety $(\overline{Y}',\overline{D}')$ with momentum polytope image $\overline{P}'$. The mirror $\mathcal{X} \to \CC$ to this degeneration is a deformation of the toric Calabi--Yau 3-fold $K_{\PP(1,1,4)}$, and is still isomorphic to $K_{\PP^2}$, as in Example \ref{Ex:toric}. The compact component $\PP^2$ of $\mathcal{X}_0$ is obtained from the compact component $\PP(1,1,4)$ of $\overline{\mathcal{X}}_0$ by locally smoothing as in Equation \eqref{eq_hyper_smoothing} the elementary T-singularity $\CC^2/(\ZZ/4 \ZZ)$ corresponding to the elementary T-cone drawn in red in Figure \ref{Fig29}.

\begin{figure}[hbt!]
\center{\scalebox{0.9}{\includegraphics{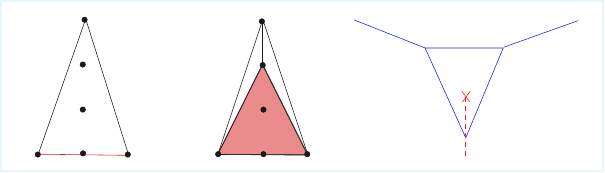}}}
\caption{The momentum polytope $\overline{P}'$ for a polarized toric variety $\overline{Y}'$ on the left, a polyhedral decomposition $\overline{\mathcal{P}}'$ of $\overline{P}'$ defining a maximal degeneration of $\overline{Y}'$ in the middle, and the associated dual web of $5$-branes on the right, engineering the rank one $E_0$ 5d SCFT.}
\label{Fig29}
\end{figure}

\end{example}

\textbf{Statements and declarations:}

\textbf{Conflict of interest statement:} The authors have no competing interest, directly or indirectly
related to this work, to declare.

\textbf{Data availability statement:} No data have been used or created for this work.

\bibliographystyle{plain}
\bibliography{bibliography}

\begin{thebibliography}{100}

\bibitem{acharya}
Bobby Acharya, Neil Lambert, Marwan Najjar, Eirik~Eik Svanes, and Jiahua Tian.
\newblock Gauging discrete symmetries of {$T_N$}-theories in five dimensions.
\newblock {\em J. High Energy Phys.}, (4):Paper No. 114, 34, 2022.

\bibitem{aharony1998webs}
Ofer Aharony, Amihay Hanany, and Barak Kol.
\newblock Webs of (p, q) 5-branes, five dimensional field theories and grid
  diagrams.
\newblock {\em Journal of High Energy Physics}, 1998(01):002, 1998.

\bibitem{fanosearch}
Mohammad Akhtar, Tom Coates, Alessio Corti, Liana Heuberger, Alexander
  Kasprzyk, Alessandro Oneto, Andrea Petracci, Thomas Prince, and Ketil
  Tveiten.
\newblock Mirror symmetry and the classification of orbifold del {P}ezzo
  surfaces.
\newblock {\em Proc. Amer. Math. Soc.}, 144(2):513--527, 2016.

\bibitem{polygon_mutations}
Mohammad Akhtar, Tom Coates, Sergey Galkin, and Alexander~M. Kasprzyk.
\newblock Minkowski polynomials and mutations.
\newblock {\em SIGMA Symmetry Integrability Geom. Methods Appl.}, 8:Paper 094,
  17, 2012.

\bibitem{akhtar2014singularity}
Mohammad Akhtar and Alexander Kasprzyk.
\newblock Singularity content.
\newblock {\em arXiv preprint arXiv:1401.5458}, 2014.

\bibitem{alexeev2024ksba}
Valery Alexeev, H{\"u}lya Arg{\"u}z, and Pierrick Bousseau.
\newblock The {KSBA} moduli space of stable log {C}alabi--{Y}au surfaces.
\newblock {\em arXiv preprint arXiv:2402.15117}, 2024.

\bibitem{AAB2}
Valery Alexeev, H{\"u}lya Arg{\"u}z, and Pierrick Bousseau.
\newblock Mirror symmetry for degenerations of log {C}alabi--{Y}au surfaces.
\newblock {\em In preparation}, 2024.

\bibitem{Altmann}
Klaus Altmann.
\newblock Minkowski sums and homogeneous deformations of toric varieties.
\newblock {\em Tohoku Math. J. (2)}, 47(2):151--184, 1995.

\bibitem{MR4069495}
Fabio Apruzzi, Craig Lawrie, Ling Lin, Sakura Sch\"afer-Nameki, and Yi-Nan
  Wang.
\newblock Fibers add flavor. {P}art {I}. {C}lassification of 5d {SCFT}s, flavor
  symmetries and {BPS} states.
\newblock {\em J. High Energy Phys.}, (11):068, 105, 2019.

\bibitem{MR4090085}
Fabio Apruzzi, Craig Lawrie, Ling Lin, Sakura Sch\"afer-Nameki, and Yi-Nan
  Wang.
\newblock Fibers add flavor. {P}art {II}. 5d {SCFT}s, gauge theories, and
  dualities.
\newblock {\em J. High Energy Phys.}, (3):052, 123, 2020.

\bibitem{MR4176441}
Fabio Apruzzi, Sakura Sch\"afer-Nameki, and Yi-Nan Wang.
\newblock 5d {SCFT}s from decoupling and gluing.
\newblock {\em J. High Energy Phys.}, (8):153, 87, 2020.

\bibitem{arguz_KN}
H\"ulya Arg\"uz.
\newblock Real loci in (log) {C}alabi-{Y}au manifolds via {K}ato-{N}akayama
  spaces of toric degenerations.
\newblock {\em Eur. J. Math.}, 7(3):869--930, 2021.

\bibitem{arguz_equations}
H\"ulya Arg\"uz.
\newblock Equations of mirrors to log {C}alabi-{Y}au pairs via the heart of
  canonical wall structures.
\newblock {\em Math. Proc. Cambridge Philos. Soc.}, 175(2):381--421, 2023.

\bibitem{arguz2024quivers}
H{\"u}lya Arg{\"u}z and Pierrick Bousseau.
\newblock Quivers and curves in higher dimension.
\newblock {\em Transactions of the American Mathematical Society}, 2024.

\bibitem{HDTV}
H\"ulya Arg\"uz and Mark Gross.
\newblock The higher-dimensional tropical vertex.
\newblock {\em Geom. Topol.}, 26(5):2135--2235, 2022.

\bibitem{arias2024geometry}
Guillermo Arias-Tamargo, Sebasti{\'a}n Franco, and Diego
  Rodr{\'\i}guez-G{\'o}mez.
\newblock The {G}eometry of {GTP}s and 5d {SCFT}s.
\newblock {\em arXiv preprint arXiv:2403.09776}, 2024.

\bibitem{brain_webs}
Guillermo Arias-Tamargo, Yang-Hui He, Elli Heyes, Edward Hirst, and Diego
  Rodriguez-Gomez.
\newblock Brain webs for brane webs.
\newblock {\em Phys. Lett. B}, 833:Paper No. 137376, 10, 2022.

\bibitem{atiyah1958analytic}
Michael~Francis Atiyah.
\newblock On analytic surfaces with double points.
\newblock {\em Proceedings of the Royal Society of London. Series A.
  Mathematical and Physical Sciences}, 247(1249):237--244, 1958.

\bibitem{bardwell2021scattering}
Sam Bardwell-Evans, Man-Wai~Mandy Cheung, Hansol Hong, and Yu-Shen Lin.
\newblock Scattering diagrams from holomorphic discs in log {C}alabi-{Y}au
  surfaces.
\newblock {\em arXiv preprint arXiv:2110.15234}, 2021.

\bibitem{B_explicit}
Lawrence~Jack Barrott.
\newblock Explicit equations for mirror families to log {C}alabi-{Y}au
  surfaces.
\newblock {\em Bull. Korean Math. Soc.}, 57(1):139--165, 2020.

\bibitem{BBT}
Francesco Benini, Sergio Benvenuti, and Yuji Tachikawa.
\newblock Webs of five-branes and {$N=2$} superconformal field theories.
\newblock {\em J. High Energy Phys.}, (9):052, 33, 2009.

\bibitem{BRGE1}
Oren Bergman and Diego Rodr\'iguez-G\'omez.
\newblock The {C}at's {C}radle: deforming the higher rank {$E_1$} and {$\tilde
  E_1$} theories.
\newblock {\em J. High Energy Phys.}, (2):Paper No. 122, 25, 2021.

\bibitem{O7_planes}
Oren Bergman and Gabi Zafrir.
\newblock 5d fixed points from brane webs on {O}7-planes.
\newblock {\em J. High Energy Phys.}, (12):163, front matter+29, 2015.

\bibitem{Blanc}
J\'er\'emy Blanc.
\newblock Symplectic birational transformations of the plane.
\newblock {\em Osaka J. Math.}, 50(2):573--590, 2013.

\bibitem{bourget2023generalized}
Antoine Bourget, Andr{\'e}s Collinucci, and Sakura Schafer-Nameki.
\newblock {G}eneralized {T}oric {P}olygons, {T}-branes, and 5d {SCFT}s.
\newblock {\em arXiv preprint arXiv:2301.05239}, 2023.

\bibitem{bousseau_vertex}
Pierrick Bousseau.
\newblock The quantum tropical vertex.
\newblock {\em Geom. Topol.}, 24(3):1297--1379, 2020.

\bibitem{bousseau_scattering}
Pierrick Bousseau.
\newblock Scattering diagrams, stability conditions, and coherent sheaves on
  {$\mathbb{P}^2$}.
\newblock {\em J. Algebraic Geom.}, 31(4):593--686, 2022.

\bibitem{bousseau_skein}
Pierrick Bousseau.
\newblock Strong positivity for the skein algebras of the 4-punctured sphere
  and of the 1-punctured torus.
\newblock {\em Comm. Math. Phys.}, 398(1):1--58, 2023.

\bibitem{bousseau_bps}
Pierrick Bousseau, Pierre Descombes, Bruno Le~Floch, and Boris Pioline.
\newblock B{PS} dendroscopy on local {$\mathbb{P}^2$}.
\newblock {\em Comm. Math. Phys.}, 405(4):Paper No. 108, 98, 2024.

\bibitem{bridgeland}
Tom Bridgeland.
\newblock Scattering diagrams, {H}all algebras and stability conditions.
\newblock {\em Algebr. Geom.}, 4(5):523--561, 2017.

\bibitem{MR4127742}
Cyril Closset, Michele Del~Zotto, and Vivek Saxena.
\newblock Five-dimensional {SCFT}s and gauge theory phases: an {M}-theory/type
  {IIA} perspective.
\newblock {\em SciPost Phys.}, 6(5):Paper No. 052, 95, 2019.

\bibitem{closset_rank_one}
Cyril Closset and Horia Magureanu.
\newblock The {$U$}-plane of rank-one 4d {$N = 2$} {KK} theories.
\newblock {\em SciPost Phys.}, 12(2):Paper No. 065, 139, 2022.

\bibitem{corti2023cluster}
Alessio Corti.
\newblock Cluster varieties and toric specializations of {F}ano varieties.
\newblock {\em arXiv preprint arXiv:2304.04141}, 2023.

\bibitem{cox_toric}
David~A. Cox, John~B. Little, and Henry~K. Schenck.
\newblock {\em Toric varieties}, volume 124 of {\em Graduate Studies in
  Mathematics}.
\newblock American Mathematical Society, Providence, RI, 2011.

\bibitem{DeWolfe}
Oliver DeWolfe, Amihay Hanany, Amer Iqbal, and Emanuel Katz.
\newblock Five-branes, seven-branes and five-dimensional {$E_n$} field
  theories.
\newblock {\em J. High Energy Phys.}, (3):Paper 6, 28, 1999.

\bibitem{constraintsBPS}
Oliver DeWolfe, Tam\'as Hauer, Amer Iqbal, and Barton Zwiebach.
\newblock Constraints on the {BPS} spectrum of {N}=2,\ d=4 theories with
  {$ADE$} flavor symmetry.
\newblock {\em Nuclear Phys. B}, 534(1-2):261--274, 1998.

\bibitem{uncovering}
Oliver DeWolfe, Tam\'as Hauer, Amer Iqbal, and Barton Zwiebach.
\newblock Uncovering the symmetries on {$[p,q]$} 7-branes: beyond the {K}odaira
  classification.
\newblock {\em Adv. Theor. Math. Phys.}, 3(6):1785--1833, 1999.

\bibitem{douglas_katz_vafa}
Michael~R. Douglas, Sheldon Katz, and Cumrun Vafa.
\newblock Small instantons, del {P}ezzo surfaces and type {I}' theory.
\newblock {\em Nuclear Phys. B}, 497(1-2):155--172, 1997.

\bibitem{engel}
Philip Engel.
\newblock Looijenga's conjecture via integral-affine geometry.
\newblock {\em J. Differential Geom.}, 109(3):467--495, 2018.

\bibitem{engel_friedman}
Philip Engel and Robert Friedman.
\newblock Smoothings and rational double point adjacencies for cusp
  singularities.
\newblock {\em J. Differential Geom.}, 118(1):23--100, 2021.

\bibitem{Feix}
Birte Feix.
\newblock Hyperk\"ahler metrics on cotangent bundles.
\newblock {\em J. Reine Angew. Math.}, 532:33--46, 2001.

\bibitem{FG}
Vladimir~V. Fock and Alexander~B. Goncharov.
\newblock Cluster ensembles, quantization and the dilogarithm.
\newblock {\em Ann. Sci. \'Ec. Norm. Sup\'er. (4)}, 42(6):865--930, 2009.

\bibitem{FZ}
Sergey Fomin and Andrei Zelevinsky.
\newblock Cluster algebras. {I}. {F}oundations.
\newblock {\em J. Amer. Math. Soc.}, 15(2):497--529, 2002.

\bibitem{Franco_twin}
Sebasti\'an Franco and Rak-Kyeong Seong.
\newblock Twin theories, polytope mutations and quivers for {GTP}s.
\newblock {\em J. High Energy Phys.}, (7):Paper No. 34, 42, 2023.

\bibitem{friedman2015geometry}
Robert Friedman.
\newblock On the geometry of anticanonical pairs.
\newblock {\em arXiv preprint arXiv:1502.02560}, 2015.

\bibitem{gaiotto_class_S}
Davide Gaiotto.
\newblock {$N=2$} dualities.
\newblock {\em J. High Energy Phys.}, (8):034, front matter + 57, 2012.

\bibitem{GMN1}
Davide Gaiotto, Gregory~W. Moore, and Andrew Neitzke.
\newblock Four-dimensional wall-crossing via three-dimensional field theory.
\newblock {\em Comm. Math. Phys.}, 299(1):163--224, 2010.

\bibitem{galakhov2013wild}
Dmitry Galakhov, Pietro Longhi, Tom Mainiero, Gregory~W Moore, and Andrew
  Neitzke.
\newblock Wild wall crossing and {BPS} giants.
\newblock {\em Journal of High Energy Physics}, 2013(11):1--101, 2013.

\bibitem{GHK_birational}
Mark Gross, Paul Hacking, and Sean Keel.
\newblock Birational geometry of cluster algebras.
\newblock {\em Algebr. Geom.}, 2(2):137--175, 2015.

\bibitem{GHK1}
Mark Gross, Paul Hacking, and Sean Keel.
\newblock Mirror symmetry for log {C}alabi-{Y}au surfaces {I}.
\newblock {\em Publ. Math. Inst. Hautes \'{E}tudes Sci.}, 122:65--168, 2015.

\bibitem{GHK_moduli}
Mark Gross, Paul Hacking, and Sean Keel.
\newblock Moduli of surfaces with an anti-canonical cycle.
\newblock {\em Compos. Math.}, 151(2):265--291, 2015.

\bibitem{GHKScubic}
Mark Gross, Paul Hacking, Sean Keel, and Bernd Siebert.
\newblock The mirror of the cubic surface.
\newblock In {\em Recent developments in algebraic geometry---to {M}iles {R}eid
  for his 70th birthday}, volume 478 of {\em London Math. Soc. Lecture Note
  Ser.}, pages 150--182. Cambridge Univ. Press, Cambridge, 2022.

\bibitem{GHS}
Mark Gross, Paul Hacking, and Bernd Siebert.
\newblock Theta functions on varieties with effective anti-canonical class.
\newblock {\em Mem. Amer. Math. Soc.}, 278(1367):xii+103, 2022.

\bibitem{GPS}
Mark Gross, Rahul Pandharipande, and Bernd Siebert.
\newblock The tropical vertex.
\newblock {\em Duke Math. J.}, 153(2):297--362, 2010.

\bibitem{GSreal}
Mark Gross and Bernd Siebert.
\newblock From real affine geometry to complex geometry.
\newblock {\em Ann. of Math. (2)}, 174(3):1301--1428, 2011.

\bibitem{gross2019intrinsic}
Mark Gross and Bernd Siebert.
\newblock Intrinsic mirror symmetry.
\newblock {\em arXiv preprint arXiv:1909.07649}, 2019.

\bibitem{GScanonical}
Mark Gross and Bernd Siebert.
\newblock The canonical wall structure and intrinsic mirror symmetry.
\newblock {\em Invent. Math.}, 229(3):1101--1202, 2022.

\bibitem{HK_HMS}
Paul Hacking and Ailsa Keating.
\newblock Homological mirror symmetry for log {C}alabi-{Y}au surfaces.
\newblock {\em Geom. Topol.}, 26(8):3747--3833, 2022.
\newblock With an appendix by Wendelin Lutz.

\bibitem{hacking2020secondary}
Paul Hacking, Sean Keel, and Tony~Yue Yu.
\newblock Secondary fan, theta functions and moduli of {C}alabi--{Y}au pairs.
\newblock {\em arXiv preprint arXiv:2008.02299}, 2020.

\bibitem{MR2581246}
Paul Hacking and Yuri Prokhorov.
\newblock Smoothable del {P}ezzo surfaces with quotient singularities.
\newblock {\em Compos. Math.}, 146(1):169--192, 2010.

\bibitem{hanany1997type}
Amihay Hanany and Edward Witten.
\newblock Type {IIB} superstrings, {BPS} monopoles, and three-dimensional gauge
  dynamics.
\newblock {\em Nuclear Phys. B}, 492(1-2):152--190, 1997.

\bibitem{Hartshorne}
Robin Hartshorne.
\newblock {\em Algebraic geometry}, volume No. 52 of {\em Graduate Texts in
  Mathematics}.
\newblock Springer-Verlag, New York-Heidelberg, 1977.

\bibitem{MR3429464}
Hirotaka Hayashi and Gianluca Zoccarato.
\newblock Topological vertex for {H}iggsed 5d {$T_N$} theories.
\newblock {\em J. High Energy Phys.}, (9):023, front matter+55, 2015.

\bibitem{intriligator}
Kenneth Intriligator, David~R. Morrison, and Nathan Seiberg.
\newblock Five-dimensional supersymmetric gauge theories and degenerations of
  {C}alabi-{Y}au spaces.
\newblock {\em Nuclear Phys. B}, 497(1-2):56--100, 1997.

\bibitem{Istringjunctions}
Amer Iqbal.
\newblock Self-intersection number of {BPS} junctions in backgrounds of three-
  and seven-branes.
\newblock {\em J. High Energy Phys.}, (10):Paper 32, 11, 1999.

\bibitem{classification}
Patrick Jefferson, Sheldon Katz, Hee-Cheol Kim, and Cumrun Vafa.
\newblock On geometric classification of 5d {SCFT}s.
\newblock {\em J. High Energy Phys.}, (4):103, front matter+74, 2018.

\bibitem{kaledin1997hyperkaehler}
Dmitry Kaledin.
\newblock Hyperkaehler structures on total spaces of holomorphic cotangent
  bundles.
\newblock {\em arXiv preprint alg-geom/9710026}, 1997.

\bibitem{MR3686766}
Alexander Kasprzyk, Benjamin Nill, and Thomas Prince.
\newblock Minimality and mutation-equivalence of polygons.
\newblock {\em Forum Math. Sigma}, 5:Paper No. e18, 48, 2017.

\bibitem{S_fold}
Hee-Cheol Kim, Sung-Soo Kim, and Kimyeong Lee.
\newblock S-foldings of 5d {SCFT}s.
\newblock {\em J. High Energy Phys.}, (5):Paper No. 178, 42, 2022.

\bibitem{SW_curve}
Sung-Soo Kim and Futoshi Yagi.
\newblock 5d {$E_n$} {S}eiberg-{W}itten curve via toric-like diagram.
\newblock {\em J. High Energy Phys.}, (6):082, front matter+61, 2015.

\bibitem{KSstringjunction}
Isao Kishimoto and Naoki Sasakura.
\newblock M-theory description of {BPS} string in {$7$}-brane background.
\newblock {\em Phys. Lett. B}, 432(3-4):305--309, 1998.

\bibitem{KM}
J\'{a}nos Koll\'{a}r and Shigefumi Mori.
\newblock {\em Birational geometry of algebraic varieties}, volume 134 of {\em
  Cambridge Tracts in Mathematics}.
\newblock Cambridge University Press, Cambridge, 1998.
\newblock With the collaboration of C. H. Clemens and A. Corti, Translated from
  the 1998 Japanese original.

\bibitem{KSB}
J\'{a}nos Koll\'{a}r and Nicholas~I. Shepherd-Barron.
\newblock Threefolds and deformations of surface singularities.
\newblock {\em Invent. Math.}, 91(2):299--338, 1988.

\bibitem{KSaffine}
Maxim Kontsevich and Yan Soibelman.
\newblock Affine structures and non-{A}rchimedean analytic spaces.
\newblock In {\em The unity of mathematics}, volume 244 of {\em Progr. Math.},
  pages 321--385. Birkh\"auser Boston, Boston, MA, 2006.

\bibitem{kontsevich2008stability}
Maxim Kontsevich and Yan Soibelman.
\newblock Stability structures, motivic {D}onaldson-{T}homas invariants and
  cluster transformations.
\newblock {\em arXiv preprint arXiv:0811.2435}, 2008.

\bibitem{KLstringnetwork}
Morten Krogh and Sangmin Lee.
\newblock String network from {M}-theory.
\newblock {\em Nuclear Phys. B}, 516(1-2):241--254, 1998.

\bibitem{lai2022mirror}
Jonathan Lai and Yan Zhou.
\newblock Mirror symmetry for log {C}alabi-{Y}au surfaces {II}.
\newblock {\em arXiv preprint arXiv:2201.12703}, 2022.

\bibitem{leung_vafa}
Naichung~Conan Leung and Cumrun Vafa.
\newblock Branes and toric geometry.
\newblock {\em Adv. Theor. Math. Phys.}, 2(1):91--118, 1998.

\bibitem{looijenga}
Eduard Looijenga.
\newblock Rational surfaces with an anticanonical cycle.
\newblock {\em Ann. of Math. (2)}, 114(2):267--322, 1981.

\bibitem{looijenga_wahl}
Eduard Looijenga and Jonathan Wahl.
\newblock Quadratic functions and smoothing surface singularities.
\newblock {\em Topology}, 25(3):261--291, 1986.

\bibitem{mandel_rank_two}
Travis Mandel.
\newblock Classification of rank 2 cluster varieties.
\newblock {\em SIGMA Symmetry Integrability Geom. Methods Appl.}, 15:Paper 042,
  32, 2019.

\bibitem{MOstringjunction}
Yutaka Matsuo and Kazumi Okuyama.
\newblock B{PS} condition of string junction from {M} theory.
\newblock {\em Phys. Lett. B}, 426(3-4):294--296, 1998.

\bibitem{MNS}
Andrei Mikhailov, Nikita Nekrasov, and Savdeep Sethi.
\newblock Geometric realizations of {BPS} states in {$N=2$} theories.
\newblock {\em Nuclear Phys. B}, 531(1-3):345--362, 1998.

\bibitem{mikhalkin}
Grigory Mikhalkin.
\newblock Enumerative tropical algebraic geometry in {$\mathbb{R}^2$}.
\newblock {\em J. Amer. Math. Soc.}, 18(2):313--377, 2005.

\bibitem{Mik_menelaus}
Grigory Mikhalkin.
\newblock Quantum indices and refined enumeration of real plane curves.
\newblock {\em Acta Math.}, 219(1):135--180, 2017.

\bibitem{morrison_seiberg}
David~R. Morrison and Nathan Seiberg.
\newblock Extremal transitions and five-dimensional supersymmetric field
  theories.
\newblock {\em Nuclear Phys. B}, 483(1-2):229--247, 1997.

\bibitem{nakamura}
Iku Nakamura.
\newblock Inoue-{H}irzebruch surfaces and a duality of hyperbolic unimodular
  singularities. {I}.
\newblock {\em Math. Ann.}, 252(3):221--235, 1980.

\bibitem{NS}
Takeo Nishinou and Bernd Siebert.
\newblock Toric degenerations of toric varieties and tropical curves.
\newblock {\em Duke Math. J.}, 135(1):1--51, 2006.

\bibitem{OV}
Hirosi Ooguri and Cumrun Vafa.
\newblock Summing up {D}irichlet instantons.
\newblock {\em Phys. Rev. Lett.}, 77(16):3296--3298, 1996.

\bibitem{persson}
Ulf Persson.
\newblock Configurations of {K}odaira fibers on rational elliptic surfaces.
\newblock {\em Math. Z.}, 205(1):1--47, 1990.

\bibitem{pinkham}
Henry~C. Pinkham.
\newblock {\em Deformations of algebraic varieties with {$G\sb{m}$} action},
  volume No. 20 of {\em Ast\'erisque}.
\newblock Soci\'et\'e{} Math\'ematique de France, Paris, 1974.

\bibitem{Polchinski}
Joseph Polchinski.
\newblock {\em String Theory Volume 2. Superstring Theory and Beyond}, volume~2
  of {\em Cambridge Monographs on Mathematical Physics}.
\newblock Cambridge University press, 1998.

\bibitem{seiberg1996five}
Nathan Seiberg.
\newblock Five-dimensional {SUSY} field theories, non-trivial fixed points and
  string dynamics.
\newblock {\em Phys. Lett. B}, 388(4):753--760, 1996.

\bibitem{SB2}
Nicholas~I. Shepherd-Barron.
\newblock Degenerations with numerically effective canonical divisor.
\newblock In {\em The birational geometry of degenerations ({C}ambridge,
  {M}ass., 1981)}, volume~29 of {\em Progr. Math.}, pages 33--84.
  Birkh\"{a}user Boston, Boston, MA, 1983.

\bibitem{SB}
Nicholas~I. Shepherd-Barron.
\newblock Extending polarizations on families of {$K3$} surfaces.
\newblock In {\em The birational geometry of degenerations ({C}ambridge,
  {M}ass., 1981)}, volume~29 of {\em Progr. Math.}, pages 135--171.
  Birkh\"{a}user, Boston, MA, 1983.

\bibitem{symington}
Margaret Symington.
\newblock Four dimensions from two in symplectic topology.
\newblock In {\em Topology and geometry of manifolds ({A}thens, {GA}, 2001)},
  volume~71 of {\em Proc. Sympos. Pure Math.}, pages 153--208. Amer. Math.
  Soc., Providence, RI, 2003.

\bibitem{van2020symplectic}
Marieke van Beest, Antoine Bourget, Julius Eckhard, and Sakura
  Sch\"afer-Nameki.
\newblock ({S}ymplectic) leaves and (5d {H}iggs) branches in the
  {P}oly(go)nesian tropical rain forest.
\newblock {\em J. High Energy Phys.}, (11):124, 57, 2020.

\bibitem{van20215d}
Marieke van Beest, Antoine Bourget, Julius Eckhard, and Sakura
  Sch\"afer-Nameki.
\newblock (5d {RG}-flow) trees in the tropical rain forest.
\newblock {\em J. High Energy Phys.}, (3):Paper No. 241, 72, 2021.

\bibitem{xie_yau}
Dan Xie and Shing-Tung Yau.
\newblock Three dimensional canonical singularity and five dimensional
  {$\mathcal{N}=1$} {SCFT}.
\newblock {\em J. High Energy Phys.}, (6):134, front matter+33, 2017.

\bibitem{yamada_yang}
Yasuhiko Yamada and Sung-Kil Yang.
\newblock Affine 7-brane backgrounds and five-dimensional {$E_N$} theories on
  {$S^1$}.
\newblock {\em Nuclear Phys. B}, 566(3):642--659, 2000.

\bibitem{O5_planes}
Gabi Zafrir.
\newblock Brane webs and {O5}-planes.
\newblock {\em J. High Energy Phys.}, 2016(3):1--36, 2016.

\end{thebibliography}

\end{document}